\newcommand\reallywidehat[1]{%
\savestack{\tmpbox}{\stretchto{%
  \scaleto{%
    \scalerel*[\widthof{\ensuremath{#1}}]{\kern-.6pt\bigwedge\kern-.6pt}%
    {\rule[-\textheight/2]{1ex}{\textheight}}
  }{\textheight}%
}{0.5ex}}%
\stackon[1pt]{#1}{\tmpbox}%
}
\DeclarePairedDelimiter{\norm}{\lVert}{\rVert}
\NewDocumentCommand{\normL}{ s O{} m }{%
  \IfBooleanTF{#1}{\norm*{#3}}{\norm[#2]{#3}}_{L_2(\Omega)}%
}
\newtheorem{theorem}{Theorem}[section]
\newtheorem{lemma}[theorem]{Lemma}
\newtheorem{assumption}{Assumption}
\newtheorem{prop}[theorem]{Proposition}
\newtheorem{corollary}{Corollary}[theorem]
\theoremstyle{definition}
\newtheorem{definition}[theorem]{Definition}
\newenvironment{detail}{
    \begin{leftbar}
    }
    {
   \end{leftbar}
   \vspace{1ex}
   }
\newcommand{\bb}{\mathbb}
\newcommand{\cl}{\mathcal}
\newcommand{\m}{\mathrm}
\newcommand{\bnorm}{\norm[\bigg]}
\newcommand{\Bnorm}{\norm[\Bigg]}
\newcommand{\rmi}{\mathrm{i}}
\newcommand{\rme}{\mathrm{e}}
\newcommand{\supp}{\mathrm{supp}}
\newcommand{\toinfty}{\xrightarrow[n\to\infty]{} }
\theoremstyle{remark}
\newtheorem{remark}[theorem]{Remark}
\newcommand{\vp}{V_{\mathrm{per}}}
\newcommand{\hp}{H_{\mathrm{per}}}
\newcommand{\hpx}{H_{\mathrm{per},\xi}}
\newcommand{\rp}{\gamma_{\mathrm{per}}}
\newcommand{\ep}{q_{\mathrm{per}}}
\newcommand{\epx}{\overline{q_{\mathrm{per}}}}
\newcommand{\rpo}{\gamma_{\mathrm{per}}^{-}}
\newcommand{\rpx}{\gamma_{\mathrm{per},\xi}}
\newcommand{\rpxo}{\gamma_{\mathrm{per},\xi}^{-}}
\newcommand{\mup}{\mu_{\mathrm{per}}}
\newcommand{\mups}{\mu_{\mathrm{per}}}
\newcommand{\mupsm}{\mu_{\mathrm{per},\mathrm{sym}}}
\newcommand{\vl}{V_{\mathrm{per},L}}
\newcommand{\vr}{V_{\mathrm{per},R}}
\newcommand{\hl}{H_{\mathrm{per},L}}
\newcommand{\hr}{H_{\mathrm{per},R}}
\newcommand{\hc}{H_{\chi}}
\newcommand{\rol}{\rho_{\mathrm{per},L}}
\newcommand{\ror}{\rho_{\mathrm{per},R}}
\newcommand{\gl}{\gamma_{\mathrm{per},L}}
\newcommand{\lpg}{L_{\mathrm{per},x}^2\left(\Gamma\right)}
\newcommand\Z{{\mathbb Z}}
\newcommand\R{{\mathbb R}}
\newcommand{\gS}{{\mathfrak{S}}}
\newcommand{\br}{\mathbf r}
\newcommand{\Spx}{\mathscr{S}_{\mathrm{per},x}}
\newcommand{\F}{\mathscr{F}}
\newcommand{\B}{\mathscr{B}}
\numberwithin{equation}{section}
\newlength\dlf  
\begin{document}

\title{Mean--field stability for the junction of quasi 1D systems with Coulomb interactions}
\author{Ling-Ling CAO\\
\small Universit\'{e} Paris-Est, CERMICS (ENPC), F-77455 Marne-la-Vall\'{e}e}
\maketitle
\begin{abstract}
Junctions appear naturally when one studies surface states or transport properties of quasi one dimensional materials such as carbon nanotubes, polymers and quantum wires. These materials can be seen as 1D systems embedded in the 3D space.
In this article, we first establish a mean--field description of reduced Hartree--Fock type for a 1D periodic system in the 3D space (a quasi 1D system), the unit cell of which is unbounded. With mild summability condition, we next show that a quasi 1D system in its ground state can be described by a mean--field Hamiltonian. We also prove that the Fermi level of this system is always negative. A junction system is described by two different infinitely extended quasi 1D systems occupying separately half spaces in 3D, where Coulombic electron-electron interactions are taken into account and without any assumption on the commensurability of the periods. We prove the existence of the ground state for a junction system, the ground state is a spectral projector of a mean--field Hamiltonian, and the ground state density is unique.
\end{abstract}
\tableofcontents
\section{Introduction}
\subsection{Physical background and mathematical models}
Atomic junctions of quasi 1D systems appear for instance when studying the surface states of one-dimensional (1D) crystals~\cite{AERTS19601063,PhysRev.56.317}, quantum thermal transport in nanostructures~\cite{Wang2008}, and p-n junctions~\cite{Nature_2014, Nature_Pospischil} which are the foundation of the modern semiconductor electronic devices. Besides, electronic transport in carbon nanotubes~\cite{RevModPhys.87.703} and in molecular wires~\cite{MolecularWire}, which recently attracted a lot of interest, is often modeled by the junction of two semi-infinite systems with different chemical potentials. In recent years, studies of various quantum Hall effects and topological insulators focussed attention on 2D materials, see~\cite{RevModPhys.82.3045} and references therein. These 2D materials often possess periodicity in one dimension and can therefore be reduced to quasi 1D materials by momentum representation in the periodic direction~\cite{Hatsugai1993b}. Furthermore, when studying edge states properties (see~\cite{Hatsugai1993b, Avila2013} and references therein) of 2D materials, they can be seen as a junction with the vacuum.

Real world materials are often described by periodic~\cite{CATTO2001687, Cances2008} or ergodically periodic~\cite{CancesLahbabiLewin2013} systems in mathematical modeling. In this article we consider a junction of two different quasi 1D periodic systems without any assumption on the commensurability of the periods. Generally speaking, there are two regimes for the junction of two different periodic systems: when the chemical potentials of the underlying periodic systems are separated by some occupied bands (non-equilibrium regime, see Fig.~\ref{junction_1}), and when the chemical potential are in a common spectral gap (equilibrium regime, see Fig.~\ref{junction_2}). The non-equilibrium regime models a persistent (non-perturbative) current in the junction system~\cite{Bruneau2015, Bruneau2016,Bruneau2016II,Cornean2012}, while the equilibrium regime can model either the ground state of the junction material or the presence of perturbative current in the linear response regime~\cite{Cornean2008}. In this article we consider the equilibrium regime, and only briefly comment on the non-equilibrium regime in Section~\ref{refSec}, as the study of this situation requires different techniques. 

The most prominent feature of quasi 1D materials is the presence of strong electron-electron interactions due to low screening effect~\cite{BrusL2010, BrusL2014} as electrons interact through the 3D space. For finite systems, one can use a $N$-body Schr\"{o}dinger model to describe the electron-electron interactions. Nevertheless, this is impossible for infinite systems. Mean--field theory is a good candidate for infinite systems: it consists in replacing the $N$-body interactions by a $1$-body interaction with an effective average field, leading to a quasi-particle description of the system. However, mean--field models are rarely available for quasi 1D periodic systems, as periodic systems are often considered either in the 3D space (see~\cite{LIEB197722, catto1998mathematical} for Thomas--Fermi type models and~\cite{CATTO2001687} for Hartree--Fock type models), or strictly in a 1D geometry (see~\cite{Blanc2002} for Thomas--Fermi type models). To our knowledge, the work on Thomas--Fermi type model~\cite{blanc2000} for polymers is the only literature available for a 1D periodic system with interactions through the 3D space. Furthermore, non-periodic infinite systems are difficult to handle mathematically, as they do not possess any symmetry, hence the usual Bloch decomposition of periodic systems~\cite{ReeSim4, CATTO2001687} is not applicable, and the definition of the ground state energy needs to be examined~\cite{BlancLeBrisLions2003}. 

In this article, we establish a mean-field model to describe the junction of two different quasi 1D periodic systems (see Fig. \ref{Junction_pic}) in the 3D space with Coulomb interactions, under the framework of the reduced Hartree--Fock~\cite{Solovej1991} (rHF) description. Remark that the rHF model is strictly convex in the density, and can be seen as a good approximation of Kohn-Sham LDA model~\cite{KS65, ANANTHARAMAN20092425, lewin19}, which is widely used in condensed matter physics. This non-linear model can be employed to describe the junction of two nanotubes, or a more realistic model of junction of two quasi 1D crystals for electronic structure calculations. It can be further explored to study the linear response with respect to different Fermi levels between two semi-infinite chains: recall that the famous Landauer--B\"{u}ttiker~\cite{Landauer, ButtikerLandauer} formalism for electronic (thermal) transport which is based on the lead-device-lead description, can be seen as the junction of two different quasi 1D systems (leads) with different chemical (thermal) potentials, and the device as a perturbation of this junction. Remark also that p-n junctions of carbon nanotubes without external battery~\cite{PhysRevLett_nanotube, LeeJ_nanotube} correspond to the equilibrium regime, and can thus be described by the model we consider. Futhermore, our model can also be easily adapted to describe 1D dislocation problems in the 3D space, while the linear 1D dislocation problems have been studied in~\cite{Korotyaev2000, Korotyaev2003} and some generalizations have been provided for higher dimensional systems~\cite{ Dohnal2011,HEMPEL2011166, Hempel2014}. 

\subsection{Summary of main results}
The organization of this article and the main results are as follows: in Section~\ref{SecRHFPeriodic} we consider a quasi 1D periodic system, which is described by nuclei arranged periodically alongside the $x$-axis (see Fig.~\ref{Junction_sym}) with electrons occupying the 3D space, as it is a building block for the junction system. 
\begin{figure}[h!]
\centering
\includegraphics[width=0.6\textwidth ]{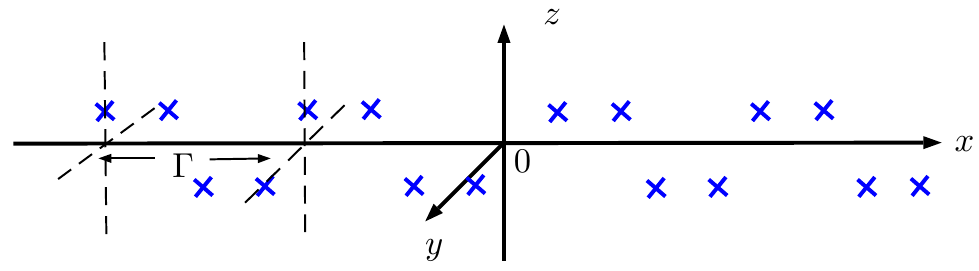}
\caption{An example of nuclei configuration of a quasi 1D periodic system.}
\label{Junction_sym}
\end{figure}
We define a periodic rHF energy functional~\eqref{defEnergyfunc} by taking into account the real Coulomb interactions in the 3D space. In Theorem~\ref{thmPeriodicExistence1} we show that this rHF functional admits minimizers, and that the ground state electronic density is unique. Remark that this is different from~\cite{CATTO2001687, Cances2008} as the system is periodic only in the $x$-direction, the unit cell $\Gamma$ being unbounded so that additional compactness proofs are needed when dealing with the ground state problem. With a mild summability condition~\eqref{mildIntegrabilityDensity} on the unique density of minimizers, we are able to obtain a mean--field Hamiltonian $\hp = -\frac{1}{2}\Delta +\vp$ to describe a quasi 1D periodic system, where $\vp$ is the mean--field potential that tends to $0$ in the $\bm r:= (y,z)$ direction. In Theorem~\ref{thmPeriodicExistence2} we prove that the Fermi level $\epsilon_F$ of the quasi 1D system, which represents the highest energy attainable by electrons under this quasi-particle description, is always negative. We also prove that the unique minimizer is a spectral projector of the mean--field Hamiltonian~$\rp = \mathds 1_{(-\infty,\epsilon_F]}\left(\hp\right) $.

In Section~\ref{SecJunc}, \begin{figure}[h!]
\centering
\includegraphics[width=0.6\textwidth ]{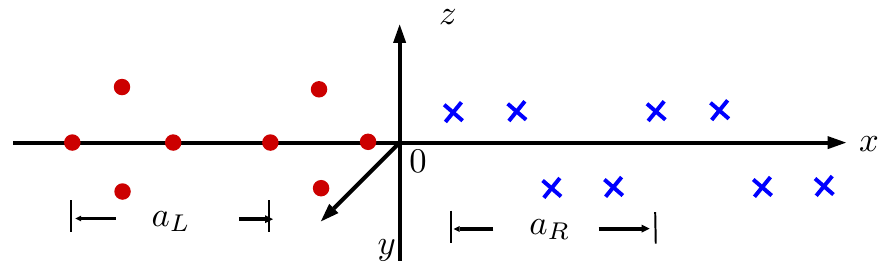}
\caption{Nuclei configuration of the junction system with period $a_{L}$ on $(-\infty,0]\times\bb R^2$ and $a_{R}$ on $(0,+\infty)\times\bb R^2$.}
\label{Junction_pic}
\end{figure}under certain symmetry assumptions on the nuclear densities $ \mu_{\mathrm{per},L }$ and $\mu_{\mathrm{per},R}$ of two different quasi 1D periodic systems, the junction system is described by considering the following nuclear configuration (see Fig.~\ref{Junction_pic}): 
$$\mu_{J}:= \mathds{1}_{x\leq 0}\cdot \mu_{\mathrm{per},L}+\mathds{1}_{x >0}\cdot \mu_{\mathrm{per},R} + v,$$
where $v$ describes how the junction is initiated. We aim at establishing a quasi-particle description of this infinitely extended junction system with Coulomb interactions and show the existence of ground state. As we do not assume any commensurability of periods of the two quasi 1D systems, the junction system does not possess any translation-invariant symmetry. The main idea is to establish a well-suited reference system based on the linear combination of periodic systems, and use perturbative techniques which have been widely used for mean-field type models~\cite{HAINZL2005TheMA, Hainzl2005, Hainzl2009, Cances2008, frank2013} to justify the construction. More precisely, we define a reference Hamiltonian $\hc = \chi^2  \hl + (1-\chi^2)\hr$ with $\chi$ a smooth cut-off function approximating $ \mathds{1}_{x\leq 0}$, where $\hl$ and $\hr$ are the mean--field Hamiltonians of the quasi 1D periodic systems. Denote by $\sigma(A) = \sigma_{\mathrm{disc}}(A)\bigcup \sigma_{\mathrm{ess}}(A)$ the spectrum of $A$, where $\sigma_{\mathrm{disc}}(A)$ (resp. $\sigma_{\mathrm{ess}}(A)$) denotes the discrete (resp. essential) spectrum of an operator $A$. Denote also by $\sigma_{\mathrm{ac}}(A)$ the purely absolutely continuous spectrum of $A$. We first show in Proposition~\ref{spectrapPpHchi} that $$\sigma_{\mathrm{ess}}\left(\hc\right) = \sigma_{\mathrm{ess}}\left(\hl\right)\bigcup  \sigma_{\mathrm{ess}}\left(\hr\right),\quad\sigma_{\mathrm{ess}}\left(\hc\right)\bigcap\, (-\infty,0] \subseteq \sigma_{\mathrm{ac}}\left(\hc\right).$$ This implies that i) the essential spectrum of the reference Hamiltonian is independent of the cut-off function $\chi$, ii) the linear junction preserves the scattering channels of the underlying systems, since the purely absolutely continuous spectrum of Hamiltonian has not been modified, hence the linear junction can be used to study the electronic conductance with the Landauer-B\"{u}ttiker formalism (see for example~\cite{Bruneau2015, Bruneau2016,Bruneau2016II} as well as the discussion following Proposition~\ref{spectrapPpHchi}).

After introducing a reference state $ \gamma_{\chi}:=\mathds 1_{(-\infty,\epsilon_F)}(\hc)$, we show in Proposition~\ref{propExist} that the electronic density $\rho_{\chi}$ is close to the linear combination of the underlying periodic electronic densities, and that the difference with these reference densities decays exponentially fast. The quasi-particle description of the nonlinear junction state can be constructed by considering $$\gamma_{J}  = \gamma_{\chi} + Q_{\chi},$$ where $Q_{\chi}$ is a trial density matrix which encodes the nonlinear effects of the junction system. Following the idea developed in~\cite{Cances2008}, we associate $Q_\chi$ with some minimization problem in Proposition~\ref{PropExistMinimizer}, and denote by $\overline{Q}_{\chi}$ a minimizer. 
We prove in Theorem~\ref{thm2} that 
$$\rho_{\gamma_{J}}  = \rho_{\chi} + \rho_{\overline{Q_{\chi}}} \,\,\text{  is independent of $\chi$. }  $$
This implies that the ground state of the junction system with Coulomb interactions exists and its density is independent of the choice of the reference state, see Corollary~\ref{junction_gs}.
\section{A reduced Hartree--Fock description of quasi 1D periodic systems}
\label{SecRHFPeriodic}
In this section, we give a mathematical description of a quasi 1D periodic system in the framework of the reduced Hartree-Fock (rHF) approach. In Section~\ref{PreliminarySec}, we introduce some mathematical preliminaries. In Section~\ref{secRHF1Dper} we construct a periodic rHF energy functional for a quasi 1D system.

Let us first introduce some notation. Unless otherwise specified, the functions on $\R^d$ considered in this article are complex-valued. Elements of $\R^3$ are denoted by $\bm x= (x,\bm r)$, where $x \in \R$. For a given separable Hilbert space $\mathfrak{H}$, we denote by $\mathcal{L}(\mathfrak{H})$ the space of bounded linear operators acting on $\mathfrak{H}$, by $\mathcal{S}(\mathfrak{H})$ the space of bounded self-adjoint operators acting on $\mathfrak{H}$, and by $\mathfrak{S}_p(\mathfrak{H})$ the Schatten class of operators acting on $\mathfrak{H}$. For $1 \leq p < \infty$, a compact operator $A$ belongs to $\mathfrak{S}_p(\mathfrak{H})$ if and only if $\norm{A}_{\mathfrak{S}_p}:= (\m{Tr}(|A|^p))^{1/p} <\infty$. Operators in $\mathfrak{S}_1(\mathfrak{H})$ and $\mathfrak{S}_2(\mathfrak{H})$ are respectively called trace-class and Hilbert--Schmidt. If $A \in \mathfrak{S}_1(L^2(\R^d))$, there exists a unique function $\rho_A \in L^1(\R^d)$ such that
\[
\forall \phi \in L^\infty(\R^d), \quad \m{Tr}(A\phi) = \int_{\R^d} \rho_A \phi.
\]
The function $\rho_A$ is called the density of the operator $A$. If the integral kernel $A(\br,\br')$ of $A$ is continuous on $\R^d \times \R^d$, then $\rho_A(\br)=A(\br,\br)$ for all $\br \in \R^d$. This relation still stands in some weaker sense for a generic trace-class operator.

An operator $A \in \mathcal{L}(L^2(\R^d))$ is called locally trace-class if the operator $\varrho A \varrho$ is trace-class for any $\varrho \in C^\infty_c(\R^d)$. The density of a locally trace-class operator $A\in \mathcal{L}(L^2(\R^d))$ is the unique function $\rho_A \in L^1_{\rm loc}(\R^d)$ such that
\[
\forall \phi \in C^\infty_c(\R^d), \quad \m{Tr}(A\phi) = \int_{\R^d} \rho_A \phi.
\]
Let $\mathscr{S}(\bb R^d)$ be the Schwartz space of rapidly decreasing functions on $\bb R^d$, and $\mathscr{S}'(\bb R^d)$ the space of tempered distributions on $\bb R^d$. We denote by $\widehat{\phi}$ (resp. $\widecheck{\phi}$) the Fourier transform (resp. inverse Fourier transform) on $\mathscr{S}'(\bb R^d)$, with the following normalization: $$\forall \phi\in L^1(\bb R^d),\qquad\widehat{\phi}(\zeta):=\frac{1}{(2\pi)^{d/2}}\int_{\bb R^d} \phi(x)\,\rme^{-\rmi \zeta x}\,dx,\quad \widecheck{\phi}(x) := \frac{1}{(2\pi)^{d/2}}\int_{\bb R^d}\phi(\zeta )\,\rme^{\rmi \zeta x}\,d\zeta .$$
The normalization ensures that the Fourier transform defines a unitary operator on $L^2(\bb R^d)$. 

\subsection{Mathematical preliminaries}
\label{PreliminarySec}
We first introduce a decomposition of the operator which is $\bb Z$-translation invariant in the $x$-direction based on the partial Bloch transform. In order to describe the 1D periodic system in the 3D space, we next introduce a mixed Fourier transform. We also introduce a Green's function which is periodic only in the $x$-direction. Finally we introduce the kinetic energy space of density matrices and Coulomb interactions for quasi 1D systems.
\paragraph{Bloch transform in the $x$-direction.} For $k \in \bb Z$, we denote by $\tau_k^x$ the translation operator in the $x$-direction acting on $ L_{\mathrm{loc}}^2(\bb R^3)$:
$$
\forall u\in L_{\mathrm{loc}}^2(\bb R^3),  \quad (\tau_k^x u)(\cdot,\bm r) = u(\cdot - k,\bm r) \quad \mbox{ for a.a. } \bm r \in \bb R^2.
$$
An operator $A$ on $L^2(\bb R^3)$ is called $\bb Z$-translation invariant in the $x$-direction if it commutes with $\tau_k^x$ for all $k \in \bb Z$. In order to decompose operators which are $\bb Z$-translation invariant in the $x$-direction, let us without loss of generality choose a unit cell $\Gamma :=[-1/2,1/2)\times \mathbb{R}^2$, and introduce the $L^p$ spaces and $H^1$ spaces of functions which are $1$-periodic in the $x$-direction: for $1\leq p\leq +\infty$,
\begin{align*}
L_{\mathrm{per},x}^p\left(\Gamma\right)&:= \left\{u \in L_{\mathrm{loc}}^p(\bb R^3)\left|\,\norm{u}_{L^p(\Gamma)}<+\infty,\, \tau_k^x u= u, \forall k\in \bb Z\right.\right\},\\
 H_{\mathrm{per},x}^1\left(\Gamma\right)&:= \left\{u \in L_{\mathrm{per},x}^2(\Gamma)\left|\, \nabla u \in \left(L_{\mathrm{per},x}^2(\Gamma)\right)^3 \right.\right\}.
\end{align*}
Let us also introduce the following constant fiber direct integral of Hilbert spaces \cite{ReeSim4}: $$L^2(\Gamma^*; \lpg):= \int_{\Gamma^*}^{\bigoplus}\lpg\, \frac{d\xi}{2\pi},$$
with the base $\Gamma^* :=[-\pi,\pi)\times \{0\}^2\equiv [-\pi,\pi)$. The partial Bloch transform $\B$ is a unitary operator from $L^2(\bb R^3)$ to $L^2(\Gamma^*;\lpg)$, defined on the dense subspace of $C_c^{\infty}(\bb R^3)$ of $L^2(\bb R^3)$:
$$\forall(x,\bm r)\in \Gamma, \, \forall \xi\in \Gamma^* ,\quad \left(\B\phi\right)_{\xi}(x,\bm r):= \sum_{k\in \bb Z}\rme^{-\rmi (x+k)\xi}\phi(x+k,\bm r).$$
Its inverse is given, for $f_\bullet = \left(f_{\xi}\right)_{\xi\in\Gamma^*}$ by
$$\forall k\in\bb Z,\,\text{ for a.a. }(x,\bm r)\in \Gamma,\quad \left(\B^{-1}f_{\bullet}\right)(x+k,\bm r) := \int_{\Gamma^*}\rme^{\rmi (k+x)\xi}f_{\xi}(x,\bm r)\,\frac{d\xi}{2\pi}. $$
The partial Bloch transform has the property that any operator $A$ on $L^2(\bb R^3)$ which commutes with $\tau_k^x$ for $k\in \bb Z$ is decomposed by $\B$: for any $A\in \mathcal{L}(L^2(\bb R^3))$ such that $\tau_k^xA=A\tau_k^x$, there exists $A_\bullet \in L^\infty(\Gamma^*;\mathcal{L}(\lpg))$ such that for all $u \in L^2(\R^3)$,
$$
(\,\B(Au))_{\xi} = A_\xi(\,\B u)_\xi \quad \mbox{ for a.a. } \xi \in \Gamma^*.
$$
We hence use the following notation for the decomposition of an operator $A$ which is $\bb Z$-translation invariant in the $x$-direction:
\[
A=\,\B^{-1}\left( \int_{\Gamma^*}^{\oplus }A_{\xi}\,\frac{d\xi}{2\pi}\right)\,\B.
\]
In addition, $\norm{A}_{\mathcal{L}(L^2(\bb R^3))} = \norm[\big]{\norm{A_\bullet}_{\mathcal{L}(\lpg)}}_{L^{\infty}(\Gamma^*)}$. In particular, if $A$ is positive and locally trace-class, then for almost all $\xi \in \Gamma^*$, $A_\xi$ is locally trace-class. The densities of these operators are related by the formula
\begin{equation}
\rho_A(\bm x) = \frac{1}{2\pi} \int_{\Gamma^*} \rho_{A_\xi}(\bm x) \, d\xi.
\label{def_density_dmatrix}
\end{equation}
If $A$ is a (not necessarily bounded) self-adjoint operator such that $\tau_k^x(A+\rmi)^{-1}=(A+\rmi)^{-1}\tau_k^x$ for all $k \in \bb Z$, then $A$ is decomposed by $\,\mathcal{U}$ (see~\cite[Theorems XIII.84 and~XIII.85]{ReeSim4}). In particular, denoting by $\Delta$ the Laplace operator acting on $L^2(\bb R^3)$, the kinetic energy operator $-  \frac12\Delta$ on $L^2(\R^3)$ is decomposed by $\B$ as follows:
\begin{equation}
- \frac12\Delta = \B^{-1}\left(\int_{\Gamma^*}- \frac12\Delta_{\xi} \,\frac{d\xi}{2\pi}\right)\B,\quad -\Delta_{\xi}= (-\rmi \nabla_{\xi})^2= (\rmi\partial_x-\xi)^2-\Delta_{\bm r}, 
\label{kineticOpDecomp}
\end{equation}
where $\Delta_{\bm r} $ is the Laplace operator acting on $L^2(\bb R^2)$.

\paragraph{Mixed Fourier transform.} The mixed Fourier transform consists of a Fourier series transform in the $x$-direction and an integral Fourier transform in the $\bm r$-direction. Denote by $\Spx(\Gamma)$ the space of functions that are $C^{\infty}$ on $\bb R^3$ and $\Gamma$-periodic, decaying faster than any power of $|\bm r|$ when $|\bm r|$ tends to infinity, as well as their derivatives. Denote by $\Spx'(\Gamma)$ the dual space of $\Spx(\Gamma)$. The mixed Fourier transform is the unitary transform $\F:\lpg \to \ell^2\left(\bb Z, L^2(\bb R^2)\right) $ defined on the dense subspace $ \Spx(\Gamma)$ of $\lpg$ by:
\begin{equation}
\forall \phi\in \Spx(\Gamma),\, \forall (n, \bm k)\in  \bb Z\times\bb R^2 ,\quad \F\phi(n, \bm k) := \frac{1}{2\pi}\int_{\Gamma}\phi(x, \bm r)\,\rme^{-\rmi (2\pi nx +\bm k\cdot\bm r)} \,dx\,d\bm r.
\label{FourierTS}
\end{equation}
Its inverse is given by, 
$$\forall \left(\psi_{n}(\bm k)\right)_{n\in \bb Z, \bm k\in \R^2} \in \ell^2\left(\bb Z; L^2(\R^2)\right),\quad \F^{-1}\psi(x, \bm r) := \frac{1}{2\pi}\sum_{n\in \bb Z}\int_{\bb R^2}\psi_n( \bm k)\,\rme^{\rmi (2\pi nx +\bm k\cdot\bm r)} \,d\bm k.$$
Note that $\F$ can be extended from $\Spx'(\Gamma) $ to $\mathscr{S}'(\bb R^3)$. One can easily see that $\F $ is an isometry from $\lpg$ to $\ell^2\left(\bb Z, L^2(\bb R^2)\right) $ in the following sense:
 \begin{equation}
\forall f, g\in \lpg,\quad \int_{\Gamma}\overline{f (x,\bm r)}g(x,\bm r)\,dx\,d\bm r  = \sum_{n\in \bb Z}\int_{\bb R^2}\overline{\F f(n,\bm k)}\F g(n,\bm k)\,d\bm k.
\label{Isometry_F}
 \end{equation}
Moreover, it is easy to verify that for $f, g\in \lpg$,
\begin{equation}
\F\left( f \star_{\Gamma}g\right) = 2\pi\left(\F f\right) \left(\F g\right),
\label{eq:convolutionF}
\end{equation}
where $\left( f \star_{\Gamma} g\right) (\bm x) := \int_{\Gamma}f(\bm x-\bm x') g(\bm x') \,d\bm x' $. As an application of the mixed Fourier transform, let us introduce a Kato--Seiler--Simon type inequality~\cite{Seiler1975} for the operator $-\rmi \nabla_{\xi}=\left(-\rmi \partial_x + \xi, -\rmi \partial_{\bm r}\right)$ for all $\xi\in\Gamma^*$, which will be repeatedly used in the proofs.
\begin{lemma}
Fix $\xi\in\Gamma^*$. Let $2\leq p \leq  +\infty$ and $f,g\in L_{\mathrm{per},x}^p(\Gamma)$. Then
\begin{equation}
\lVert f(-\rmi \nabla_{\xi})g\rVert_{\mathfrak{S}_p\left(\lpg\right)}\leq (2\pi)^{-2/p} \lVert g\rVert_{L_{\mathrm{per},x}^p(\Gamma)}\left(\sum_{n\in \bb Z} \norm{f\left(2\pi n+\xi, \cdot\right)}_{L^p(\bb R^2)}^{p}\right)^{1/p}, 
\label{KSSGamma}
\end{equation}
for any $2\leq p< \infty$ and $$\lVert f(-\rmi \nabla_{\xi})g\rVert \leq \lVert g\rVert_{L_{\mathrm{per},x}^{\infty}(\Gamma)}\sup_{n\in \bb Z} \norm{f\left(2\pi n+\xi, \cdot \right)}_{L^{\infty}(\bb R^2)},$$
when $p=+\infty$.
\label{KSSGammalemma}
\end{lemma}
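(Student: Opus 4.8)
The plan is to prove the Kato--Seiler--Simon type bound \eqref{KSSGamma} by reducing it, via the mixed Fourier transform $\F$, to a Schatten-norm estimate for a product of a multiplication operator and a Fourier multiplier on $\ell^2(\bb Z, L^2(\bb R^2))$, and then to invoke the classical Kato--Seiler--Simon inequality on $\R^2$ fiberwise in $n$. Concretely, I would first conjugate: the operator $f(-\rmi\nabla_\xi)$ acting on $\lpg$ becomes, under $\F$, a diagonal-in-$n$ multiplication by $f(2\pi n + \xi, \bm k)$ in the variable $(n,\bm k)$, because $-\rmi\partial_x$ has eigenvalue $2\pi n$ on the $x$-Fourier mode $\rme^{2\pi\rmi nx}$ (so $-\rmi\partial_x + \xi \mapsto 2\pi n + \xi$) and $-\rmi\partial_{\bm r}$ becomes multiplication by $\bm k$. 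Meanwhile $g$ as a multiplication operator on $\lpg$ becomes, under $\F$, convolution (in both $n$ and $\bm k$) against $\F g$, up to the factor $2\pi$ recorded in \eqref{eq:convolutionF}.

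Next I would treat the two regimes $2\le p<\infty$ and $p=\infty$ separately. For $p=\infty$ the statement is just the operator-norm bound $\norm{f(-\rmi\nabla_\xi)g}\le \norm{g}_{L^\infty_{\mathrm{per},x}(\Gamma)}\sup_n\norm{f(2\pi n+\xi,\cdot)}_{L^\infty(\bb R^2)}$, which follows immediately from $\norm{f(-\rmi\nabla_\xi)}=\sup_n\norm{f(2\pi n+\xi,\cdot)}_{L^\infty}$ (spectral calculus plus the identification of the spectrum of $-\rmi\nabla_\xi$ above) and $\norm{g}_{\mathcal L}=\norm{g}_{L^\infty}$. For $2\le p<\infty$, I would use the standard device: by interpolation (Riesz--Thorin for Schatten classes) between $p=2$ and $p=\infty$ it suffices to prove the $p=2$, i.e.\ Hilbert--Schmidt, case with the constant $(2\pi)^{-1}$, and then the general $p$ follows. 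The Hilbert--Schmidt case is a direct computation: the integral kernel of $f(-\rmi\nabla_\xi)g$ on $\Gamma\times\Gamma$ is
\[
K(\bm x,\bm x') = \frac{1}{2\pi}\sum_{n\in\bb Z}\int_{\bb R^2} f(2\pi n+\xi,\bm k)\,\rme^{\rmi(2\pi n(x-x')+\bm k\cdot(\bm r - \bm r'))}\,d\bm k \;\cdot\; g(\bm x'),
\]
so that $\norm{f(-\rmi\nabla_\xi)g}_{\gS_2}^2 = \int_\Gamma\int_\Gamma |K(\bm x,\bm x')|^2\,d\bm x\,d\bm x'$, and applying the Plancherel identity \eqref{Isometry_F} in the $\bm x$ variable together with Fubini and the normalization constants gives exactly $(2\pi)^{-2}\norm{g}_{L^2_{\mathrm{per},x}(\Gamma)}^2\sum_n\norm{f(2\pi n+\xi,\cdot)}_{L^2(\bb R^2)}^2$, which is \eqref{KSSGamma} for $p=2$.

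The main obstacle I anticipate is bookkeeping rather than conceptual: carefully tracking the several factors of $2\pi$ introduced by the chosen normalizations of $\B$, $\F$, the Fourier transform on $\R^2$, and the measure $d\xi/(2\pi)$, so that the constant $(2\pi)^{-2/p}$ comes out correctly; and justifying the kernel manipulation on the dense subspace $\Spx(\Gamma)$ (where all sums and integrals converge absolutely) before extending to general $f,g\in L^p_{\mathrm{per},x}(\Gamma)$ by density and the usual approximation argument for Schatten-class operators. One should also note that the classical KSS inequality is being used here only implicitly, through the explicit Hilbert--Schmidt computation plus interpolation, which is the cleanest route since the half-line of exponents $p\ge 2$ is exactly the range where $\norm{\cdot}_{\gS_p}$ dominates the $L^p$-type quantity on the right; the endpoint $p=\infty$ must be handled by hand as above because interpolation alone does not reach it with the stated sharp form.
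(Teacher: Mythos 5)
Your proposal is correct and follows essentially the same route as the paper's proof: establish the Hilbert--Schmidt case $p=2$ by an explicit kernel computation with the mixed-Fourier Plancherel identity, handle $p=\infty$ by a direct operator-norm bound (multiplication-operator picture under $\F$), and conclude for $2\le p\le\infty$ by the same interpolation argument as in the classical Kato--Seiler--Simon proof. The constants and the density/approximation points you flag are exactly the bookkeeping the paper carries out.
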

The proof of this lemma can be read in Section~\ref{KssGammaLemmaSec}.
\paragraph{Periodic Green's function.}
We introduce a 3D Green's function which is $1$-periodic in the $x$-direction in the same spirit as in~\cite{blanc2000, LIEB197722}.
\begin{definition}[Periodic Green's function] For $(x,\bm r)\in \bb R^3$, the periodic Green's function is defined as
\begin{equation}
G(x, \bm r) =  -2\log (|\bm r|)  +\widetilde{G} (x,\bm r),\quad \widetilde{G} (x,\bm r):= 4 \sum_{n\geq 1}K_0\left(2\pi n |\bm r|\right)\cos\left(2\pi n x\right),
\label{defGreenFunc}
\end{equation}
where $K_0(\alpha):= \int_0^{+\infty} \rme^{ -\alpha\cosh(t)} dt$ is the modified Bessel function of the second kind. 
\end{definition}
The following lemma summarizes the properties of the periodic Green's function defined in (\ref{defGreenFunc}).
\begin{lemma} 
\begin{enumerate}
\item The Green's function $G(x, \bm r)$ defined in~\eqref{defGreenFunc} satisfies the following Poisson's equation:
\[
-\Delta G(x,\bm r) = 4\pi  \sum_{n \in \mathbb{Z}}\delta_{(x,\bm r) = (n,  0)}  \in \mathscr{S}'(\bb R^3),
\]
where $\delta_a\in \mathscr{S}'(\bb R^d)$ is the Dirac distribution at $a\in \bb R^d$. Moreover $G\in \Spx'(\Gamma)$ and 
\begin{equation}
\F(G)(n,\bm k ) = \frac{2}{4\pi^2n^2+|\bm k|^2} \in  \mathscr{S}'(\bb R^3).
\label{Poisson_Eq}
\end{equation}
\item The function $\widetilde{G}$ defined in~\eqref{defGreenFunc} belongs to $ L_{\mathrm{per},x}^p(\Gamma) $ for $1\leq p<2$ and satisfies $\int_{\Gamma} \widetilde{G} \equiv 0$. Moreover, there exist positive constants $d_1$ and $d_2$ such that $|\widetilde{G} (\cdot, \bm r)|\leq d_1\frac{\rme^{- 2\pi |\bm r|}}{\sqrt{|\bm r|}}$ when $|\bm r| \to +\infty$, and $|\widetilde{G} (\cdot, \bm r)|\leq \frac{d_2}{|\bm r|}$ when $|\bm r|\to 0$, uniformly with respect to $x$. Finally, the function $\widetilde{G}(x,\bm r) $ can also be written as 
\begin{equation}
\widetilde{G}(x,\bm r) = \sum_{n\in \bb Z}\left(\frac{1}{\sqrt{(x-n)^2+|\bm r|^2}} - \int_{-1/2}^{1/2}\frac{1}{\sqrt{(x-y-n)^2+|\bm r|^2}}\,dy\right).
\label{blancGreens}
\end{equation}
\end{enumerate}
\label{lemma1}
\end{lemma}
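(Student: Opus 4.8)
The plan is to prove the two parts separately, relying on the explicit Fourier series \eqref{defGreenFunc} and the known properties of the Bessel function $K_0$.

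For Part (1), I would first verify Poisson's equation fiber-by-fiber in Fourier space. The $n=0$ term $-2\log|\bm r|$ is (up to the normalization $2\pi$ in $d=2$) the fundamental solution of $-\Delta_{\bm r}$ on $\bb R^2$, so $-\Delta_{\bm r}(-2\log|\bm r|) = 4\pi\,\delta_{\bm r = 0}$. For $n\geq 1$, I would use that $K_0(2\pi n|\bm r|)$ is (up to a constant) the Green's function of $-\Delta_{\bm r} + 4\pi^2 n^2$ on $\bb R^2$, i.e. $(-\Delta_{\bm r} + 4\pi^2n^2)(2\pi K_0(2\pi n|\bm r|)) = 4\pi\,\delta_{\bm r=0}$; combined with the fact that $\cos(2\pi n x)$ is an eigenfunction of $-\partial_x^2$ with eigenvalue $4\pi^2 n^2$, each summand solves $-\Delta\big(4K_0(2\pi n|\bm r|)\cos(2\pi nx)\big) = 0$ away from the lattice, and the Dirac masses are produced by the $\bm r$-singularity summed against the Dirichlet kernel $\sum_n \rme^{2\pi\rmi n x}$ in the $x$-variable. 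Passing to the mixed Fourier transform \eqref{FourierTS}, the distributional identity $-\Delta G = 4\pi\sum_{n}\delta_{(n,0)}$ becomes $(4\pi^2 n^2 + |\bm k|^2)\F(G)(n,\bm k) = 4\pi\cdot\frac{1}{2\pi} = 2$, which is exactly \eqref{Poisson_Eq}; that $G\in\Spx'(\Gamma)$ follows since $\F(G)$ is a tempered distribution (indeed polynomially bounded).

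For Part (2), the decay and singularity bounds on $\widetilde{G}$ follow from standard asymptotics of $K_0$: one has $K_0(\alpha)\sim \sqrt{\pi/(2\alpha)}\,\rme^{-\alpha}$ as $\alpha\to+\infty$, so the $n=1$ term dominates the series and gives $|\widetilde{G}(\cdot,\bm r)|\lesssim \rme^{-2\pi|\bm r|}/\sqrt{|\bm r|}$; near $\bm r=0$ one uses $K_0(\alpha)\sim -\log\alpha$, but here the $L^p$ statement and the $1/|\bm r|$ bound are better read off from the alternative representation \eqref{blancGreens}, where the summand for each $n$ has a $1/\sqrt{x^2+|\bm r|^2}$-type singularity only near lattice points. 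I would establish \eqref{blancGreens} by computing the mixed Fourier transform of the right-hand side: the $n$-th image in the $x$-Fourier series of $\sum_n\big((x-n)^2+|\bm r|^2\big)^{-1/2}$ is, by Poisson summation, proportional to $K_0(2\pi|n||\bm r|)$ for $n\neq 0$, while the $n=0$ mode is cancelled exactly by the subtracted integral term (this also gives $\int_\Gamma\widetilde{G}=0$ directly, since the zero mode vanishes). Matching Fourier coefficients with \eqref{defGreenFunc} then yields the identity. The $L^p$ membership for $1\leq p<2$ comes from combining the integrable logarithmic/$|\bm r|^{-1}$ local singularity with the Gaussian-type exponential decay at infinity.

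The main obstacle is the careful bookkeeping in Part (1): making rigorous the claim that the pointwise identity $-\Delta(\text{summand}) = 0$ off the lattice upgrades to the stated distributional identity with the correct Dirac masses, i.e. controlling the interchange of $-\Delta$ with the (only conditionally convergent) series and correctly accounting for the delta contributions. I expect this is cleanest done entirely on the Fourier side: show first that the series \eqref{defGreenFunc} converges in $\Spx'(\Gamma)$ with mixed Fourier coefficients $\frac{2}{4\pi^2n^2+|\bm k|^2}$ (a locally integrable, polynomially bounded function, hence a bona fide tempered distribution), then apply $-\Delta$ as a Fourier multiplier and recognize the constant distribution $2$ in each $x$-mode as the transform of $4\pi\sum_n\delta_{(n,0)}$. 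The asymptotics of $K_0$ in Part (2) are classical and the representation \eqref{blancGreens} is, once the Fourier-coefficient computation is set up, a routine Poisson-summation identity.
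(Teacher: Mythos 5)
Your proposal is correct. For part (1) it coincides with the paper's argument: decompose $G$ into its $x$-Fourier modes, use that $-\log|\bm r|$ and $K_0(2\pi|n||\bm r|)$ solve $(-\Delta_{\bm r}+4\pi^2n^2)G_n=2\pi\delta_{\bm r=0}$, and resum via the Poisson summation formula $\sum_n\delta_{x=n}=\sum_n\rme^{2\rmi\pi nx}$; the mixed Fourier transform then gives \eqref{Poisson_Eq} with the same normalization you computed. The decay and near-axis bounds on $\widetilde G$ are also obtained in the paper exactly as you indicate, from the two asymptotic regimes of $K_0$ (for the $|\bm r|\to 0$ bound the paper splits the series at $N\simeq 1/|\bm r|$ rather than invoking \eqref{blancGreens}, but both give the $d_2/|\bm r|$ bound and hence $L^p_{\mathrm{per},x}(\Gamma)$ for $p<2$).

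Where you genuinely diverge is the identity \eqref{blancGreens}. You propose to match mixed Fourier coefficients: the $x$-Fourier coefficient of the renormalized periodization of $(x^2+|\bm r|^2)^{-1/2}$ is $2K_0(2\pi|m||\bm r|)$ for $m\neq 0$, and the subtracted cell averages kill the $m=0$ mode, which simultaneously yields $\int_\Gamma\widetilde G=0$. This works, with one point to make explicit: Poisson summation cannot be applied to the unrenormalized sum $\sum_n((x-n)^2+|\bm r|^2)^{-1/2}$, which diverges, so the coefficient computation must be carried out on the renormalized series itself (its terms are $O(n^{-2})$ by a mean-value estimate, so termwise integration against $\rme^{-2\pi\rmi mx}$ is legitimate, and the subtracted integrals contribute $\widehat f(2\pi m)\int_{-1/2}^{1/2}\rme^{-2\pi\rmi my}\,dy=0$ for $m\neq 0$). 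The paper instead proves \eqref{blancGreens} by a PDE uniqueness argument: citing~\cite{blanc2000} for the Poisson equation satisfied by the right-hand side $\overline G$ and its $O(|\bm r|^{-1})$ decay, it observes that $u=\widetilde G-\overline G$ is harmonic and locally integrable, hence smooth by Weyl's lemma, tends to $0$ as $|\bm r|\to\infty$ uniformly in $x$, and vanishes identically by the maximum modulus principle. Your route is more self-contained (no external input on $\overline G$, and $\int_\Gamma\widetilde G=0$ comes for free) at the price of the Fourier bookkeeping; the paper's route avoids that bookkeeping but needs the decay estimates on $\widetilde G$ and the borrowed facts about $\overline G$ to run the Liouville-type argument. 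Either is acceptable; the only substantive caution is the renormalization issue above, which you implicitly flagged and which is indeed routine to handle.
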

The proof of this lemma can be read in Section \ref{lemma1Sec}. 
\paragraph{One-body density matrices and kinetic energy space.}
In mean-field models, electronic states can be described by one-body density matrices (see e.g. \cite{Cances2008, frank2013}). Recall that for a finite system with $N$ electrons, a density matrix is a trace-class self-adjoint operator $\gamma\in \mathcal{S}(L^2(\bb R^3)) \cap \gS_1(L^2(\R^3))$ satisfying the Pauli principle $0\leq \gamma\leq 1$ and the normalization condition $\m{Tr}(\gamma)= \int_{\R^3} \rho_\gamma = N$. The kinetic energy of $\gamma$ is given by $\m{Tr}(-\frac{1}{2}\Delta \gamma):= \frac 12 \m{Tr}(|\nabla|\gamma|\nabla|)$ (see~\cite{CATTO2001687, Cances2008,CANCES2012887}).

Consider a 1D periodic system in the 3D space, where atoms are arranged periodically in the $x$-direction with unit cell $\Gamma$ and first Brillouin zone $\Gamma^*$. Since the rHF model is strictly convex in the density~\cite{Solovej1991}, we do not expect any spontaneous symmetry breaking. Therefore the electronic state of this quasi 1D system will be described by a one-body density matrix which commutes with the translations $\left\{\tau_k^x\right\}_{k\in \bb Z}$, hence is decomposed by the partial Bloch transform $\B$. In view of the decomposition~\eqref{kineticOpDecomp}, we define the following admissible set of one-body density matrices, which guarantees that the number of electrons per unit cell and the kinetic energy per unit cell are finite:
\begin{equation}
\label{kineticEnergyset}
\mathcal{P}_{\mathrm{per},x}: = \left\{\gamma \in \mathcal{S}(L^2(\mathbb{R}^3)) \left| \,0\leq\gamma\leq 1,\, \forall k \in \mathbb{Z}, \tau_k^x\gamma =\gamma \tau_k^x, \int_{\Gamma^*}\m{Tr}_{L_{\mathrm{per},x}^2}\left(\sqrt{1-\Delta_{\xi}}\,\gamma_{\xi}\,\sqrt{1-\Delta_{\xi}}\right)\,d\xi <\infty\right \}\right.,
\end{equation}
where 
\begin{equation}
\gamma =\B^{-1}\left(\int_{\Gamma^*}\gamma_{\xi}\,\frac{d\xi}{2\pi}\right)\B.
 \label{BlochDecomp}
\end{equation}
For any $\gamma\in\mathcal{P}_{\mathrm{per},x}$, it is easy to see that $\rho_{\gamma}\in L_{\mathrm{per},x}^1(\Gamma)$. Moreover, a Hoffmann-Ostenhof type inequality~\cite{PhysRevA.16.1782} can also be deduced from \cite[Equation~(4.42)]{CATTO2001687}:
\begin{equation}
\int_{\Gamma}\left|\nabla \sqrt{\rho_{\gamma}}\right|^2 \leq \int_{\Gamma^*}\m{Tr}_{L_{\mathrm{per},x}^2}\left(-\Delta_{\xi}\gamma_{\xi}\right)\frac{d\xi}{2\pi}.
\label{HOinequality}
\end{equation}
Therefore $\sqrt{\rho_{\gamma}}$ is in $H_{\mathrm{per},x}^1(\Gamma)$ hence in $ L_{\mathrm{per},x}^6(\Gamma)$ by Sobolev embeddings, so that $\rho_{\gamma}\in L_{\mathrm{per},x}^p(\Gamma)$ for $1\leq p\leq 3$ by an interpolation argument.
\paragraph{Coulomb interactions.}
Recall that the Coulomb interaction energy of charge densities $f$ and $g$ belonging to $ L^{6/5}(\bb R^3)$ can be written in real and reciprocal space as:
$$D(f,g) := \int_{\bb R^3}\int_{\bb R^3} \frac{f(\bm x)g(\bm x')}{\left|\bm x-\bm x'\right|}\,d\bm x\,d\bm x' =4\pi \int_{\bb R^3} \frac{\overline{\widehat{f}(\bm k)}\widehat{g}(\bm k)}{|\bm k|^2}.$$
In order to describe Coulomb interactions in the reciprocal space for a quasi 1D periodic system, we gather the results obtained in~\eqref{Isometry_F}, \eqref{eq:convolutionF} and~\eqref{Poisson_Eq}, and define the Coulomb interaction energy per unit cell for charge densities $f,g$ belonging to $\Spx(\Gamma)$ as:
\begin{equation}
 D_{\Gamma}(f,g):= 4\pi \sum_{n\in \bb Z}\int_{\bb R^2}\frac{\overline{\F(f)}(n,\bm k)\F(g)(n,\bm k)}{|\bm k |^2+4\pi^2 n^2}\,d\bm k.
\label{DG_Coulomb_interactions}
\end{equation}
It is easy to see that $D_{\Gamma}(\cdot,\cdot)$ is a positive definite bilinear form on $\Spx(\Gamma)$. Let us introduce the Coulomb space for the 1D periodic system in the 3D space as
\begin{equation}
\mathcal{C}_{\Gamma}:=\left\{ f\in \Spx'(\Gamma)\,\left| \, \forall n\in \bb Z,\F(f)(n,\cdot)\in L_{\mathrm{loc}}^1(\bb R^2), D_{\Gamma}(f,f)<+\infty \right.\right\},
\label{CoulombInterSpc}
\end{equation}
which is a Hilbert space endowed with the inner product $D_{\Gamma}(\cdot,\cdot)$.
\begin{remark}
\label{remark_neutral}
Remark that charge densities in $\mathcal{C}_{\Gamma}$ are neutral in some weak sense. Indeed, for $f\in \mathcal{C}_{\Gamma}\bigcap L_{\mathrm{per},x}^1(\Gamma)$, the condition $\int_{\bb R^2}\frac{\left|\F(f)(0,\bm k)\right|^2}{|\bm k |^2}\,d\bm k <+\infty$ implies that $\F(f)(0,\bm 0)=\int_{\Gamma}f(x, \bm r)\,dx\, d\bm r  =0$.
\end{remark}
\subsection{Reduced Hartree--Fock description for a quasi 1D periodic system}
\label{secRHF1Dper}
Based on the kinetic energy space and Coulomb interactions defined in the previous section, we construct here a rHF energy functional for a quasi 1D periodic system which is $1$-periodic only in the $x$-direction. We show that its ground state is given by the solution of some minimization problem. Denote by $Z\in \bb N^*$ the total nuclear charge in each unit cell. For the sake of technical reasons we model the nuclear density of a quasi 1D system by a smooth function (smeared nuclei) which is $1$-periodic in the $x$-direction
$$\mup(x, \bm r) = \sum_{n\in \bb Z}Z\,m(x-n,\bm r),$$
where $m(x,\bm r)$ is a non-negative $C_c^{\infty}(\Gamma)$ function such that $\int_{\bb R^3} m = 1$. In particular $\int_{\Gamma}\mup = Z $. 

For any trial density matrix $\gamma $ which commutes with the translations $\tau_{k}^x$ in the $x$-direction, the periodic rHF energy functional for a quasi 1D system associated with the nuclear density $\mup$ is defined as:
\begin{equation}
\mathcal{E}_{\mathrm{per},x}(\gamma):=\frac{1}{2\pi}\int_{\Gamma^*}\m{Tr}_{\lpg}\left(-\frac{1}{2}\Delta_{\xi}\gamma_{\xi}\right)d\xi+\frac{1}{2}D_{\Gamma}\left(\rho_{\gamma}-\mup,\rho_{\gamma}-\mup\right).
\label{defEnergyfunc}
\end{equation}
Let us introduce the following set of admissible density matrices for this rHF energy functional, which guarantees that the kinetic energy and Coulomb interaction energy per-unit cell are finite:
$$ \mathcal{F}_{\Gamma}:= \left\{\gamma\in \mathcal{P}_{\mathrm{per},x}\left|\, \rho_{\gamma}-\mup\in \mathcal{C}_{\Gamma}\right.\right\},$$
where $ \mathcal{P}_{\mathrm{per},x}$ is the kinetic energy space defined in~\eqref{kineticEnergyset} and $\mathcal{C}_{\Gamma}$ is the Coulomb space defined in~\eqref{CoulombInterSpc}.
\begin{lemma}
\label{F_GammaNotEmpty}
The set $\mathcal{F}_{\Gamma}$ is not empty. Moreover, for any $\gamma\in \mathcal{F}_{\Gamma}$,
\begin{equation}
\label{chargeNeutral}
\int_{\Gamma}\rho_{\gamma} = \int_{\Gamma}\mup.
\end{equation}
\end{lemma}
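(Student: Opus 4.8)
The plan is to prove the two assertions separately. For non-emptiness, I would exhibit an explicit element of $\mathcal{F}_{\Gamma}$: the natural candidate is the $\mathbb{Z}$-periodic density matrix built from the free Laplacian plus some confining mechanism in the $\bm r$-direction, or more simply a density matrix of the form $\gamma = \mathds 1_{(-\infty,0]}\!\left(-\tfrac12\Delta + W\right)$ for a suitable $1$-periodic-in-$x$ potential $W$ chosen so that the resulting density $\rho_\gamma$ has exactly $Z$ electrons per unit cell and equals $\mup$ up to a term in the Coulomb space. A cleaner route, which I would favor, is to take a finite-rank fiberwise construction: pick a normalized $\phi \in C_c^\infty(\Gamma)$ (e.g.\ essentially $m$ itself or a smooth function with the right mass) and set $\gamma_\xi = Z\,|\phi_\xi\rangle\langle\phi_\xi|$ where $\phi_\xi := (\B\phi)_\xi$, or rather a sum of such rank-one projectors so that $0\le\gamma\le1$ and $\rho_\gamma = \mup$ exactly. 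With $\rho_\gamma = \mup$ we get $\rho_\gamma - \mup = 0 \in \mathcal{C}_\Gamma$ trivially, and the kinetic energy per unit cell is finite because $\phi$ is Schwartz-class in $\bm r$ and smooth; one checks $\int_{\Gamma^*}\m{Tr}_{\lpg}(\sqrt{1-\Delta_\xi}\,\gamma_\xi\sqrt{1-\Delta_\xi})\,d\xi < \infty$ using that $-\Delta_\xi\phi_\xi$ decomposes the operator $-\Delta$ applied to $\phi$, whose $H^1$ norm on $\Gamma$ is finite. The main subtlety here is arranging $0\le\gamma\le1$ simultaneously with $\rho_\gamma=\mup$ and $\int_\Gamma \mup = Z$; this can be handled by spreading the charge over $\lceil Z\rceil$ orthonormal functions with small overlap, or by noting $Z\in\mathbb{N}^*$ and using $Z$ mutually orthogonal shifted copies. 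I expect this to be the one genuinely fiddly point, though entirely elementary.

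For the charge-neutrality identity~\eqref{chargeNeutral}, the key observation is already recorded in Remark~\ref{remark_neutral}: any $f \in \mathcal{C}_\Gamma \cap L^1_{\mathrm{per},x}(\Gamma)$ satisfies $\F(f)(0,\bm 0) = \int_\Gamma f = 0$. Given $\gamma \in \mathcal{F}_\Gamma$, I would argue that $\rho_\gamma - \mup$ lies in both $\mathcal{C}_\Gamma$ (by definition of $\mathcal{F}_\Gamma$) and $L^1_{\mathrm{per},x}(\Gamma)$: indeed $\rho_\gamma \in L^1_{\mathrm{per},x}(\Gamma)$ because $\gamma \in \mathcal{P}_{\mathrm{per},x}$ (stated right after~\eqref{BlochDecomp}), and $\mup \in L^1_{\mathrm{per},x}(\Gamma)$ since it is a smooth periodic function with $\int_\Gamma \mup = Z < \infty$. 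Applying Remark~\ref{remark_neutral} to $f = \rho_\gamma - \mup$ then gives $\int_\Gamma (\rho_\gamma - \mup) = 0$, which is exactly~\eqref{chargeNeutral}. The only thing to be careful about is the measurability/integrability of $\rho_{\gamma}$ against the constant function $1$ on the unbounded cell $\Gamma$; but this is precisely what $\rho_\gamma \in L^1_{\mathrm{per},x}(\Gamma)$ provides, so no extra work is needed. I therefore expect the non-emptiness construction to be the main obstacle, and the neutrality identity to follow in a couple of lines from the remark.
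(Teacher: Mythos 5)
Your treatment of the neutrality identity \eqref{chargeNeutral} is exactly the paper's: $\rho_\gamma\in L^1_{\mathrm{per},x}(\Gamma)$ because $\gamma\in\mathcal{P}_{\mathrm{per},x}$, the nuclear density is integrable on $\Gamma$, so $\rho_\gamma-\mup\in\mathcal{C}_\Gamma\cap L^1_{\mathrm{per},x}(\Gamma)$ and Remark~\ref{remark_neutral} concludes. For non-emptiness, however, you take a genuinely different route. The paper never tries to reproduce $\mup$ exactly: it takes $\gamma_{\omega,\xi}=A_{\omega,\xi}A_{\omega,\xi}^*$ with $A_{\omega,\xi}=\mathds 1_{[0,\omega]}(-\Delta_\xi)\varrho_{\mathrm{per}}$ for a fixed smooth profile $\varrho$, checks the kinetic bound via the Kato--Seiler--Simon-type inequality \eqref{KSSGamma}, tunes $\omega$ by monotonicity/continuity so that $\int_\Gamma\rho_{\gamma_{\omega_*}}=\int_\Gamma\mup$, and then verifies $\rho_{\gamma_{\omega_*}}-\mup\in\mathcal{C}_\Gamma$ by hand through the $n=0$ Fourier mode, as in \eqref{FourierTTestState}. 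Your plan is instead to realize $\rho_\gamma=\mup$ exactly by a rank-$Z$ fiberwise projector built from $Z$ functions supported in a single cell (orthonormality in $L^2(\Gamma)$ then gives orthonormality of the Bloch fibers, hence $0\le\gamma_\xi\le1$), which makes both the Coulomb-space membership and the kinetic bound immediate. This is viable, but the step you leave as a sketch is precisely the nontrivial one, and of your two suggested fixes only one actually delivers $\rho_\gamma=\mup$: phase-modulated square roots, e.g.\ $\phi_j=\sqrt{m}\,\rme^{2\pi\rmi jG(x)}$ with $G$ the cumulative distribution of the $x$-marginal of $m$, are orthonormal, lie in $H^1$ (note $\sqrt m$ is Lipschitz for smooth $m\ge0$, which needs a word), and satisfy $\sum_{j=1}^Z|\phi_j|^2=Zm$; by contrast, ``mutually orthogonal shifted copies'' with disjoint supports give $\rho_\gamma\neq\mup$, so the claim ``$\rho_\gamma-\mup=0\in\mathcal{C}_\Gamma$ trivially'' no longer applies and you are back to checking the Coulomb-space membership via the zero mode (still fine, since the charge per cell matches and the density is smooth with compact support in the $\bm r$-direction, but it is exactly the verification the paper performs). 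In short: the paper's $\omega$-tuning avoids any exact density matching at the cost of the explicit $n=0$-mode estimate, while your exact-matching construction, once the orthonormal family is written down, makes the remaining verifications trivial; either way the lemma follows.
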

The proof of Lemma~\ref{F_GammaNotEmpty} relies on an explicit construct of an element in $ \mathcal{F}_{\Gamma}$, and can be read in Section~\ref{F_GammaNotEmptySec}. 

The periodic rHF ground state energy (per unit cell) of a quasi 1D system can then be written as the following minimization problem:
\begin{equation}
\label{pb1}
I_{\mathrm{per}}=\inf \left\{\mathcal{E}_{\mathrm{per},x}(\gamma)  ;\,\gamma\in \mathcal{F}_{\Gamma}\, \right\},
\end{equation}
The minimization problem similar to~\eqref{pb1} under the Thomas-Fermi type models has been studied in~\cite{blanc2000}, where the authors proved the uniqueness of the minimizers, and justified the model by a thermodynamic limit argument. For a 3D periodic crystal, the minimization problem~\eqref{pb1} has been examined in \cite{CATTO2001687}, where the authors showed the existence of minimizers and the uniqueness of the density of the minimizers. The characterization of the minimizers is given in~\cite[Theorem 1]{Cances2008}: the minimizer is unique and is a spectral projector satisfying a self-consistent equation. The following theorem provides similar results for a quasi 1D system: we show that the minimizer of \eqref{pb1} exists, and that the density of the minimizers is unique. Let us emphasize that the unit cell of a quasi 1D system is an unbounded domain $\Gamma$, hence we need to deal with the possible escaping of electrons in the $\bm r$-direction, a situation which needs not be considered for bounded unit cells as in~\cite{CATTO2001687,Cances2008}.
\begin{theorem}[Existence of rHF ground state]
The minimization problem (\ref{pb1}) admits a minimizer $\rp$ with density $\rho_{\rp}$ belonging to $ L_{\mathrm{per},x}^p(\Gamma)$ for $1\leq p\leq 3$. Besides, all the minimizers share the same density. 
\label{thmPeriodicExistence1}
\end{theorem}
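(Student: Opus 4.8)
The plan is to run the direct method of the calculus of variations for existence, and to exploit the strict convexity of $\gamma\mapsto D_{\Gamma}(\rho_\gamma-\mup,\rho_\gamma-\mup)$ in the density for uniqueness. Since $\mathcal{F}_{\Gamma}\neq\emptyset$ by Lemma~\ref{F_GammaNotEmpty} and both terms of $\mathcal{E}_{\mathrm{per},x}$ are non-negative, $I_{\mathrm{per}}$ is a finite non-negative number. Pick a minimizing sequence $(\gamma_n)\subset\mathcal{F}_{\Gamma}$. From $\mathcal{E}_{\mathrm{per},x}(\gamma_n)\leq I_{\mathrm{per}}+1$ we get uniform bounds on $\int_{\Gamma^*}\m{Tr}_{\lpg}(-\tfrac12\Delta_\xi\gamma_{n,\xi})\,d\xi$ and on $D_{\Gamma}(\rho_{\gamma_n}-\mup,\rho_{\gamma_n}-\mup)$. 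Combining the former with the charge-neutrality identity~\eqref{chargeNeutral}, which gives $\int_{\Gamma^*}\m{Tr}_{\lpg}(\gamma_{n,\xi})\,d\xi=\int_{\Gamma}\rho_{\gamma_n}=Z$, we obtain a uniform bound on $\int_{\Gamma^*}\m{Tr}_{\lpg}(\sqrt{1-\Delta_\xi}\,\gamma_{n,\xi}\sqrt{1-\Delta_\xi})\,d\xi$, i.e. $(\gamma_n)$ is bounded in the kinetic-energy space $\mathcal{P}_{\mathrm{per},x}$. The Hoffmann-Ostenhof inequality~\eqref{HOinequality} then bounds $\sqrt{\rho_{\gamma_n}}$ in $H^1_{\mathrm{per},x}(\Gamma)$, hence $\rho_{\gamma_n}$ in $L_{\mathrm{per},x}^p(\Gamma)$ for $1\leq p\leq 3$ by Sobolev embedding and interpolation.

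Next, up to a subsequence, $\gamma_n\rightharpoonup\rp$ in the weak-$*$ topology of $\mathcal{L}(L^2(\bb R^3))$, with $0\leq\rp\leq1$ and $\tau_k^x\rp=\rp\tau_k^x$ for all $k\in\bb Z$; moreover $\rho_{\gamma_n}\rightharpoonup\rho$ weakly in $L_{\mathrm{per},x}^p(\Gamma)$ and $\rho_{\gamma_n}-\mup\rightharpoonup g$ weakly in the Hilbert space $(\mathcal{C}_{\Gamma},D_{\Gamma})$. The key point is to identify these limits with one another and with $\rho_{\rp}$: testing against $\varrho\in C_c^\infty(\bb R^3)$ and writing, on each fiber, $\varrho\gamma_{n,\xi}\varrho=[\varrho(1-\Delta_\xi)^{-1/2}]\,[\sqrt{1-\Delta_\xi}\,\gamma_{n,\xi}\sqrt{1-\Delta_\xi}]\,[(1-\Delta_\xi)^{-1/2}\varrho]$, the Kato--Seiler--Simon bound of Lemma~\ref{KSSGammalemma} shows $\varrho(1-\Delta_\xi)^{-1/2}$ lies in a Schatten class uniformly in $\xi\in\Gamma^*$, which suffices to pass to the limit in $\m{Tr}_{\lpg}(\gamma_{n,\xi}\varrho)$ after integration over $\Gamma^*$. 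Hence $\rho_{\gamma_n}\to\rho_{\rp}$ in $\Spx'(\Gamma)$, which identifies $\rho=\rho_{\rp}$ and $g=\rho_{\rp}-\mup$.

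It remains to check that no electronic mass escapes in the $\bm r$-direction — the only place where the unboundedness of $\Gamma$ could cost compactness — and that $\rp$ is a minimizer. Lower semicontinuity of the kinetic energy under weak-$*$ convergence (truncate $-\Delta_\xi$ spectrally and use that finitely many matrix elements converge) gives $\rp\in\mathcal{P}_{\mathrm{per},x}$, so $\rho_{\rp}\in L_{\mathrm{per},x}^1(\Gamma)$. Since $\rho_{\rp}-\mup=g\in\mathcal{C}_{\Gamma}$, Remark~\ref{remark_neutral} applied to $\rho_{\rp}-\mup\in\mathcal{C}_{\Gamma}\cap L_{\mathrm{per},x}^1(\Gamma)$ forces $\int_{\Gamma}(\rho_{\rp}-\mup)=0$, i.e. $\int_{\Gamma}\rho_{\rp}=Z$: no mass is lost. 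Consequently $\rp\in\mathcal{F}_{\Gamma}$. Finally, $\gamma\mapsto\int_{\Gamma^*}\m{Tr}_{\lpg}(-\tfrac12\Delta_\xi\gamma_\xi)\,d\xi$ is weakly-$*$ lower semicontinuous and $\rho\mapsto D_{\Gamma}(\rho-\mup,\rho-\mup)$ is weakly lower semicontinuous on $\mathcal{C}_{\Gamma}$ (weak lsc of a Hilbert norm), so $\mathcal{E}_{\mathrm{per},x}(\rp)\leq\liminf_n\mathcal{E}_{\mathrm{per},x}(\gamma_n)=I_{\mathrm{per}}$; since $\rp\in\mathcal{F}_{\Gamma}$ the reverse inequality is immediate, so $\rp$ minimizes~\eqref{pb1} and $\rho_{\rp}\in L_{\mathrm{per},x}^p(\Gamma)$ for $1\leq p\leq3$.

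For uniqueness of the density, $\mathcal{F}_{\Gamma}$ is convex ($\mathcal{P}_{\mathrm{per},x}$ is convex and $\mathcal{C}_{\Gamma}$ is a vector space). If $\gamma_1,\gamma_2$ both minimize, then since the kinetic term is linear in $\gamma$ and $D_{\Gamma}$ is a quadratic form, a direct computation gives $\mathcal{E}_{\mathrm{per},x}\!\big(\tfrac{\gamma_1+\gamma_2}{2}\big)=\tfrac12\mathcal{E}_{\mathrm{per},x}(\gamma_1)+\tfrac12\mathcal{E}_{\mathrm{per},x}(\gamma_2)-\tfrac18 D_{\Gamma}(\rho_{\gamma_1}-\rho_{\gamma_2},\rho_{\gamma_1}-\rho_{\gamma_2})$. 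The left-hand side is $\geq I_{\mathrm{per}}$ because $\tfrac{\gamma_1+\gamma_2}{2}\in\mathcal{F}_{\Gamma}$, while the first two terms on the right sum to $I_{\mathrm{per}}$; hence $D_{\Gamma}(\rho_{\gamma_1}-\rho_{\gamma_2},\rho_{\gamma_1}-\rho_{\gamma_2})\leq0$, and positive-definiteness of $D_{\Gamma}$ on $\mathcal{C}_{\Gamma}$ forces $\rho_{\gamma_1}=\rho_{\gamma_2}$. The main obstacle in the whole argument is the compactness step: the identification of the three weak limits and the exclusion of mass escaping to $|\bm r|=\infty$. Both are resolved not by a concentration-compactness analysis but by using that $\mathcal{C}_{\Gamma}$ is a Hilbert space (so $\rho_{\gamma_n}-\mup$ has a weak limit there) together with the weak charge-neutrality encoded in $\mathcal{C}_{\Gamma}$ (Remark~\ref{remark_neutral}), which pins the total mass of $\rho_{\rp}$ to exactly $Z$; the quasi-1D Kato--Seiler--Simon inequality of Lemma~\ref{KSSGammalemma} is the technical tool that makes the limit identification work despite the mixed discrete/continuous Fourier structure.
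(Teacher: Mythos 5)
Your overall architecture coincides with the paper's: direct method with fibered weak-$*$ limits, identification of the several limit densities, charge neutrality of the limit forced by membership of $\rho_{\rp}-\mup$ in $\mathcal{C}_{\Gamma}$ (Remark~\ref{remark_neutral}) to rule out mass escaping as $|\bm r|\to\infty$, Fatou-type lower semicontinuity for the kinetic term, weak lower semicontinuity of the $D_\Gamma$-norm, and convexity for uniqueness of the density (your coefficient $\tfrac18$ is in fact the correct one). The step where your justification does not hold up as written is exactly the step the paper isolates as Lemma~\ref{ConsistencyDensity}: passing to the limit in $\int_{\Gamma^*}\m{Tr}_{\lpg}(\gamma_{n,\xi}\,\varrho)\,d\xi$. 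Your two-sided sandwich places all the weight on the middle factor $T_{n,\xi}:=\sqrt{1-\Delta_\xi}\,\gamma_{n,\xi}\sqrt{1-\Delta_\xi}$, which the energy bound controls only in $L^1(\Gamma^*;\mathfrak{S}_1)$. That is not a dual space, so a bounded sequence need not have a weakly convergent subsequence, and even a vague limit need not be $\sqrt{1-\Delta_\xi}\,\gamma_\xi\sqrt{1-\Delta_\xi}$ -- losing trace mass in precisely this pairing (escape in the $\bm r$-direction) is the scenario the theorem must exclude, so ``a uniform Schatten bound on $\varrho(1-\Delta_\xi)^{-1/2}$ suffices to pass to the limit'' is an assertion, not a proof. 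The paper's device is one-sided: since $\gamma_{n,\xi}^2\le\gamma_{n,\xi}$, the family $\xi\mapsto\left|-\tfrac12\Delta_\xi-\kappa\right|^{1/2}\gamma_{n,\xi}$ is bounded in $L^1(\Gamma^*;\mathfrak{S}_1)\cap L^2(\Gamma^*;\mathfrak{S}_2)$, hence by interpolation in $L^q(\Gamma^*;\mathfrak{S}_q)$ for $1<q\le2$, which are dual spaces; a weak limit is extracted there and paired with $W_\xi=w\left|-\tfrac12\Delta_\xi-\kappa\right|^{-1/2}$, which Lemma~\ref{KSSGammalemma} puts in $\mathfrak{S}_4$ uniformly in $\xi$. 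Your KSS ingredient is the right one, but the weak-compactness bookkeeping is where the actual work sits. Note also that the paper separately identifies the $\mathcal{C}_\Gamma$-weak limit with the $L^p$-weak limit by testing against $w\in\mathcal{D}_{\mathrm{per},x}(\Gamma)$ and viewing $w$ as the periodic Coulomb potential of $-\tfrac{1}{4\pi}\Delta w\in\mathcal{C}_\Gamma$; you assert this identification without argument, and it is needed before Remark~\ref{remark_neutral} can be applied to the limit.

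Two smaller points. First, your test functions should be $1$-periodic in $x$ with compact support in $\bm r$ (the space $\mathcal{D}_{\mathrm{per},x}(\Gamma)$): Lemma~\ref{KSSGammalemma} applies to multipliers in $L_{\mathrm{per},x}^p(\Gamma)$ and the density is identified fiberwise, so a $C_c^\infty(\R^3)$ test function has to be handled through its periodization. Second, ``truncate $-\Delta_\xi$ spectrally'' does not produce finite-rank (or even compact) projections here, because the unit cell $\Gamma$ is unbounded and $-\Delta_\xi$ has purely continuous spectrum; the paper instead truncates with the finite-rank projections $\sum_{i=1}^N \left| e_i\right\rangle\left\langle e_i\right|$ built from a fixed orthonormal basis of $L_{\mathrm{per},x}^2(\Gamma)$ with elements in $H_{\mathrm{per},x}^1(\Gamma)$, and then lets $N\to\infty$ by Fatou. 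Both points are fixable, but they are exactly where the unboundedness of the fiber makes this theorem harder than its 3D-periodic analogue, so they deserve full arguments rather than one-line claims.
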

The proof of Theorem~\ref{thmPeriodicExistence1} relies on a classical variational argument, and can be read in Section~\ref{thmPeriodicExistenceSec1}. 

In order to treat the junction of quasi 1D systems in Section~\ref{SecJunc}, it is useful to define and study the mean--field potential $\vp$ generated by the ground state electronic density $\rho_{\rp}$ and the nuclear density $\mup$. It is also critical to obtain some decay estimates of $\vp$ in the $\bm r$-direction. However, for $\vp$ satisfying Poisson's equation $-\Delta \vp = 4\pi (\rho_{\rp}-\mup)$, the $L^p$ integrability of $\rho_{\rp}$ obtained in Theorem~\ref{thmPeriodicExistence1} does not imply the decay of the mean-field potential $\vp$ in the $\bm r$-direction, given that the Green's function defined in~\eqref{defGreenFunc} has $\log$-growth in the $\bm r$-direction. Moreover, the uniform bound given by the energy functional~\eqref{defEnergyfunc} does not provide any $L^p$ bounds or decay property of $\vp$. In this perspective, we introduce the following assumption on $\rho_{\rp}$. Remark that this assumption, which we call  ``summability condition" is common when treating 2D Poisson's equation~\cite[Theorem 6.21]{lieb2001analysis}.
\begin{assumption}
\label{assumption_density}
The unique ground state density $\rho_{\rp}$ of the problem~\eqref{pb1} satisfies
\begin{equation}
\label{mildIntegrabilityDensity}
\int_{\Gamma}|\bm r| \rho_{\rp}(x,\bm r)\,dx\,d\bm r <+\infty.
\end{equation}
\end{assumption}
With this mild summability condition~\eqref{mildIntegrabilityDensity} on $\rho_{\rp}$, we prove in Theorem~\ref{thmPeriodicExistence2} that the highest attainable energy (Fermi level) of electrons for a quasi 1D system in its ground state is always negative. This coincides with the physical reality: the additional summability condition on the density is sufficient to guarantee that the mean-field potential tends to $0$ in the $\bm r$-direction. If the Fermi level is non-negative, electrons can escape to infinity in the $\bm r$-direction, decreasing the energy of the system, hence the system is not at ground state. Furthermore, we are able to characterize the unique minimizer as a spectral projector of the mean-field Hamiltonian. We comment on Assumption~\ref{assumption_density} in Remark~\ref{remark_fail_proof_summability}.
\begin{theorem}[Properties of the rHF ground state with summability condition on the density] Assume that Assumption~\ref{assumption_density} holds for the unique ground state density $\rho_{\rp}$ of the minimization problem~\eqref{pb1}.
\begin{enumerate}
\item \textbf{(The integrability of mean-field potential.)} 
The mean-field potential $$\vp:= (\rho_{\rp} -\mups) \star_{\Gamma}G$$ belongs to $L_{\mathrm{per},x}^p(\Gamma)$ for $2< p \leq +\infty$. Moreover, $\vp$ is continuous and tends to zero in the $\bm r$-direction. 
\item \textbf{(Spectral properties of the mean-field Hamiltonian.)} The mean-field Hamiltonian
\begin{equation}
\hp=\B^{-1}\left(\int_{\Gamma^*}\hpx\,\frac{d\xi}{2\pi}\right)\B= -\frac{1}{2}\Delta + \vp,\quad \hpx := -\frac{1}{2}\Delta_{\xi}+\vp,
\label{Hper}
\end{equation}
is a self-adjoint operator acting on $L^2(\bb R^3)$ with domain $H^2\left(\bb R^3\right)$ and form domain $H^1\left(\bb R^3\right)$. There exists $N_H\in \bb N^*$ which can be finite or infinite, and a sequence $\left\{\lambda_n(\xi)\right\}_{\xi\in \Gamma^*,\,1\leq n\leq N_H}$ such that 
$$\sigma_{\mathrm{ess}}\left(\hpx\right) = [0,+\infty),\qquad \sigma_{\mathrm{disc}}\left(\hpx\right) =\bigcup_{1\leq n \leq N_H} \lambda_n\left(\xi \right) \subset [-\norm{\vp}_{L^{\infty}}, 0).$$ 
Moreover, the following spectral decomposition holds:
\begin{equation} 
\sigma\left(\hp\right)=\sigma_{\mathrm{ess}}\left(\hp\right)= \bigcup_{\xi\in\Gamma^*}\sigma(\hpx),\quad \bigcup_{\xi\in\Gamma^*}\sigma_{\mathrm{disc}}(\hpx) \subseteq \sigma_{\mathrm{ac}}\left(\hp\right).
\label{spectral_prop_H}
\end{equation}
In particular, $[0,+\infty) \subset \sigma_{\mathrm{ess}}(\hp)$.
\item \textbf{(The Fermi level is always negative.)} The energy level counting function $$\quad F(\kappa): \kappa \mapsto \frac{1}{|\Gamma^*|}\int_{\Gamma^*} \m{Tr}_{\lpg}\left(\mathds 1_{\left(-\infty,\kappa \right]}\left(\hpx\right)\right)d\xi=\frac{1}{|\Gamma^*|}\sum_{n=1}^{N_H}\int_{\Gamma^*} \mathds 1\left(\lambda_{n}(\xi )\leq \kappa \right)d\xi $$
is continuous and non-decreasing on $ \left(-\infty, 0\right]$. The following inequality always holds: $$   N_H = F(0)\geq \int_{\Gamma} \mups , $$
which means that there are always enough negative energy levels for the electrons. Moreover, there exists $\epsilon_F< 0$ called Fermi level (chemical potential) such that $F(\epsilon_F) = \int_{\Gamma} \mups =Z $, which represents the highest attainable energy level by electrons, and can be interpreted as the Lagrange multiplier associated with the charge neutrality condition~\eqref{chargeNeutral}.
\item \textbf{(The unique minimizer is a spectral projector.)} The minimizer of the problem~\eqref{pb1} is unique and satisfies the following self-consistent equation:
\begin{equation}
\gamma_{\mathrm{per}} = \mathds{1}_{(-\infty,\epsilon_F]}(\hp) = \B^{-1}\left(\int_{\Gamma^*}\rpx\,\frac{d\xi}{2\pi}\right)\B,\quad \rpx:=  \mathds{1}_{(-\infty,\epsilon_F]}(\hpx).
\label{selfconsist1}
\end{equation} 
Furthermore, there exist positive constants $C_{\epsilon_F}$ and $\alpha_{\epsilon_F}$ which depend on the Fermi level $\epsilon_F$, such that 
\begin{equation}
\label{expodecay_density}
0\leq \rho_{\rp} (x,\bm r) \leq  C_{\epsilon_F}\rme^{-\alpha_{\epsilon_F} |\bm r|}.
\end{equation}
\end{enumerate}
\label{thmPeriodicExistence2}
\end{theorem}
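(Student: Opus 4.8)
I would establish the four assertions in the order: the size of $\vp$, the spectral structure of $\hp$, the Euler--Lagrange equation for the minimiser, the negativity of the Fermi level, and finally the remaining claims (the negativity of $\epsilon_F$ being needed for the last step). Splitting $G=-2\log|\bm r|+\widetilde G$ from \eqref{defGreenFunc}, write $\vp=W+(\rho_{\rp}-\mups)\star_\Gamma\widetilde G$ with $W:=(\rho_{\rp}-\mups)\star_\Gamma(-2\log|\cdot|)$; since $\rho_{\rp}-\mups$ is $1$-periodic in $x$, integrating over one period shows $W$ depends only on $\bm r$. For the $\widetilde G$-term, combine $\rho_{\rp}-\mups\in L_{\mathrm{per},x}^1(\Gamma)\cap L_{\mathrm{per},x}^p(\Gamma)$ (Theorem~\ref{thmPeriodicExistence1}) with $\widetilde G\in L_{\mathrm{per},x}^q(\Gamma)$ for $q<2$ and its exponential decay in $\bm r$ (Lemma~\ref{lemma1}): Young's inequality on $\Gamma\cong(\mathbb R/\mathbb Z)\times\mathbb R^2$ yields membership in $L_{\mathrm{per},x}^p(\Gamma)$ for every $2<p\le\infty$, and the decay of $\widetilde G$ transfers continuity and vanishing in $\bm r$. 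For $W$, the point is that $\rho_{\rp}-\mups$ is neutral (Remark~\ref{remark_neutral}, \eqref{chargeNeutral}) with finite first moment in $\bm r$ by Assumption~\ref{assumption_density} and the smoothness of $\mups$; this is exactly the hypothesis of the classical two-dimensional Newtonian-potential estimate (in the spirit of \cite[Theorem~6.21]{lieb2001analysis}), giving that $W$ is bounded, continuous, $\to0$ as $|\bm r|\to\infty$, with a $|\bm r|^{-1}$ dipole tail, hence in $L^p$ precisely for $p>2$. This proves assertion~1; in particular $\vp\in L^\infty(\mathbb R^3)$.

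\textbf{Step 2 (spectral structure).} As $\vp\in L^\infty$, it is infinitesimally $-\tfrac12\Delta$-bounded, so $\hp$ and each $\hpx$ are self-adjoint on $H^2$ with form domain $H^1$. Read off $\sigma_{\mathrm{ess}}(-\tfrac12\Delta_\xi)$ from the Fourier series in $x$ (the eigenvalues $\tfrac12(2\pi n+\xi)^2$ combined with $-\tfrac12\Delta_{\bm r}$ on $L^2(\mathbb R^2)$); since $\vp$ vanishes in the $\bm r$-direction, a Rellich argument shows $\vp(-\tfrac12\Delta_\xi+1)^{-1}$ is compact, so $\sigma_{\mathrm{ess}}(\hpx)=\sigma_{\mathrm{ess}}(-\tfrac12\Delta_\xi)$, while $\hpx\ge-\|\vp\|_{L^\infty}$ confines the discrete eigenvalues to $[-\|\vp\|_{L^\infty},0)$ (their number being $N_H$, finite or infinite). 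Analyticity of $\xi\mapsto\hpx$ gives analytic band functions $\lambda_n(\xi)$ (analytic perturbation theory); the Bloch decomposition (\cite[Ch.~XIII]{ReeSim4}) gives \eqref{spectral_prop_H}, and since $\hp$ is $\mathbb Z$-periodic in $x$ it has no flat bands (a flat band would be a spectral atom of $\hp$), so each $\lambda_n$ is non-constant, the bands $\bigcup_\xi\sigma_{\mathrm{disc}}(\hpx)$ are absolutely continuous, and $[0,\infty)\subseteq\sigma_{\mathrm{ess}}(\hp)$.

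\textbf{Step 3 (Euler--Lagrange equation; negativity of $\epsilon_F$ --- the main obstacle).} Convexity of $\mathcal E_{\mathrm{per},x}$ and the fact that the directional derivative at $\rp$ along $\gamma-\rp$ equals $\frac{1}{2\pi}\int_{\Gamma^*}\m{Tr}_{\lpg}(\hpx(\gamma_\xi-\rpx))\,d\xi$ (using $D_\Gamma(\rho_{\rp}-\mups,\cdot)=\langle\vp,\cdot\rangle$) together with \eqref{chargeNeutral} show $\rp$ minimises $\gamma\mapsto\frac{1}{2\pi}\int_{\Gamma^*}\m{Tr}_{\lpg}(\hpx\gamma_\xi)\,d\xi$ under $0\le\gamma\le1$, $\int_\Gamma\rho_\gamma=Z$; a fibre-wise bathtub argument (as in \cite{Solovej1991,Cances2008}) then yields $\mathds 1_{(-\infty,\epsilon_F)}(\hp)\le\rp\le\mathds 1_{(-\infty,\epsilon_F]}(\hp)$ for a multiplier $\epsilon_F$. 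If $\epsilon_F>0$, then on a set of $\xi$ of positive measure $\mathds 1_{(-\infty,\epsilon_F)}(\hpx)$ captures part of the continuous spectrum of $\hpx$, forcing $\int_\Gamma\rho_{\rp}=+\infty$; hence $\epsilon_F\le0$, and $Z=\int_\Gamma\rho_{\rp}\le F(\epsilon_F)\le F(0)$ proves $F(0)\ge\int_\Gamma\mups$. Monotonicity of $F$ is clear and its continuity on $(-\infty,0]$ follows from nullity of the level sets of the analytic non-constant $\lambda_n$, so $F(\epsilon_F)=Z$ exactly. To upgrade $\epsilon_F\le0$ to $\epsilon_F<0$, use that each negative band contributes a full unit of charge to $F(0)$, so $F(0)=N_H$ and the equation $F(\epsilon_F)=Z\le N_H$ is solved at $\epsilon_F\le\sup_n\sup_\xi\lambda_n(\xi)<0$. \emph{The delicate input here --- that no fibre has a non-negative eigenvalue, and that $\epsilon_F=0$ is excluded at the threshold --- is where I expect the real work to lie}: Assumption~\ref{assumption_density} must be used, via the mechanism that an eigenfunction with eigenvalue close to $\inf\sigma_{\mathrm{ess}}(\hpx)$ decays too slowly in $\bm r$ for $\int_\Gamma|\bm r|\rho_{\rp}<\infty$ to survive, while the naive ``send electrons to $|\bm r|=\infty$'' competitor is blocked by the logarithmic divergence of the two-dimensional Coulomb self-energy of a charged cloud, so one must redistribute charge in a neutral way while tracking the energy gain.

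\textbf{Step 4 (conclusion).} With $\epsilon_F<0$ and $F$ continuous at $\epsilon_F$, the projectors $\mathds 1_{(-\infty,\epsilon_F)}(\hp)$ and $\mathds 1_{(-\infty,\epsilon_F]}(\hp)$ have equal per-cell charge $Z$, forcing $\rp=\mathds 1_{(-\infty,\epsilon_F]}(\hp)$, and strict convexity of $\mathcal E_{\mathrm{per},x}$ in the density (Theorem~\ref{thmPeriodicExistence1}) together with this formula gives uniqueness of the minimiser. Finally, each occupied $u_n(\xi)$ obeys $\hpx u_n(\xi)=\lambda_n(\xi)u_n(\xi)$ with $\lambda_n(\xi)\le\epsilon_F<0\le\inf\sigma_{\mathrm{ess}}(\hpx)$, so a Combes--Thomas (or Agmon) estimate yields exponential decay of $u_n(\xi)$ in $\bm r$ with rate $\ge\sqrt{-2\epsilon_F}$ and constants uniform in $\xi$ and $n$ (compactness of $\Gamma^*$ for the $\xi$-uniformity; elliptic regularity and Sobolev embedding to upgrade $L^2$ decay to pointwise decay; $\int_\Gamma\rho_{\rp}=Z$ to control the number of bands); integrating $\rho_{\rpx}(x,\bm r)=\sum_n|u_n(\xi;x,\bm r)|^2$ over $\xi$ via \eqref{def_density_dmatrix} then gives \eqref{expodecay_density}.
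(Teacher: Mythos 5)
Your Steps 1, 2 and 4 are broadly viable and track the paper, with one genuinely different (and legitimate) route in Step 1: you split $G=-2\log|\bm r|+\widetilde G$ in real space and treat the $x$-averaged logarithmic potential by two-dimensional potential theory (neutrality \eqref{chargeNeutral} plus the first moment from Assumption~\ref{assumption_density}), which is the strategy the paper uses only later, in the proof of Lemma~\ref{symmetryPotentialLemma}; for item 1 of the theorem the paper instead works in Fourier variables, showing that $\F(\rho_{\rp}-\mups)(0,\cdot)$ is $C^1$ and vanishes at $\bm 0$, and concluding via a Hausdorff--Young inequality for the mixed transform $\F$. Both yield $L^p_{\mathrm{per},x}(\Gamma)$ for $2<p\le+\infty$, though your version is not a verbatim citation of \cite[Theorem 6.21]{lieb2001analysis}: the region $|\bm r'|\sim|\bm r|$ must be handled with the first moment together with the local $L^p$ bound on the density. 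In Step 2, the parenthetical ``a flat band would be a spectral atom of $\hp$'' is circular — non-constancy of the band functions is exactly what has to be shown, and the paper gets it from Thomas's analytic-continuation argument.

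The genuine gap is Step 3, the heart of the theorem. You invoke a ``fibre-wise bathtub argument as in \cite{Solovej1991,Cances2008}'' to produce the multiplier $\epsilon_F$, and you explicitly defer ``the real work''. In this unbounded-cell setting the bathtub swap is not free: any competitor obtained by moving occupation mass between spectral subspaces (and between fibers) must remain in $\mathcal{F}_{\Gamma}$, i.e.\ keep $\rho_\gamma-\mups\in\mathcal{C}_{\Gamma}$, which couples neutrality with the square-integrability of $|\bm k|^{-1}\F(\rho_\gamma-\mups)(0,\bm k)$ near $\bm k=\bm 0$; those references have bounded unit cells where this issue does not arise. Supplying such competitors is precisely what the paper's proof does: a minimizer also minimizes the linearized functional $\gamma\mapsto\int_{\Gamma^*}\m{Tr}_{\lpg}\left(\hpx\gamma_\xi\right)d\xi$, and assuming $F(0)<\int_\Gamma\mups$ one constructs explicit trial states $\gamma_{N_0}$ consisting of all negative-energy states plus $Z_{\mathrm{diff}}$ compactly supported wave packets placed far out in the $\bm r$-direction (where $\vp$ is small by item 1) and projected onto the positive subspace; Lemma~\ref{TestStatesADMLemma} verifies admissibility (total charge $Z$, $C^1$ behaviour of the mixed Fourier transform at the origin, exponential decay of the bound states), and the energy comparison, letting $N_0\to\infty$, produces the contradiction and, in the equality case, identifies the minimizer with $\rpo$ before choosing $\epsilon_F<0$ above the top occupied band. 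None of this is in your proposal, and the mechanism you conjecture for the missing step — that near-threshold eigenfunctions decay too slowly in $\bm r$ to be compatible with \eqref{mildIntegrabilityDensity} — is not how the argument goes and is not needed: Assumption~\ref{assumption_density} enters only through item 1 (boundedness, continuity and decay of $\vp$ in $\bm r$), which is what makes the small-positive-energy competitors available; no decay property of near-threshold eigenfunctions is used. Finally, your upgrade from $\epsilon_F\le 0$ to $\epsilon_F<0$ rests on the unsubstantiated inequality $\sup_n\sup_\xi\lambda_n(\xi)<0$, whereas the paper reaches strict negativity through the case analysis of its contradiction argument combined with the finiteness of the number of occupied bands.
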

The proof of Theorem~\ref{thmPeriodicExistence2} can be read in Section~\ref{thmPeriodicExistenceSec2}.
\begin{remark}
As the unit cell of the 1D system in the 3D space is an unbounded domain $\Gamma$, the decomposed mean-field Hamiltonian $\hpx$ does not have a compact resolvent, which is a significant difference compared to the situation considered in~\cite{CATTO2001687,Cances2008}.
\end{remark} 
\begin{remark}
\label{remark_fail_proof_summability}
Let us comment on Assumption~\ref{assumption_density}. Remark that the exponential decay of the density~\eqref{expodecay_density} implies the summability condition~\eqref{mildIntegrabilityDensity}. However, we were not able to directly prove~\eqref{mildIntegrabilityDensity}. This failure is mainly due to the lack of \textit{a priori} summability bounds for the density matrices in $\mathcal{F}_{\Gamma}$. One might argue that we can add the condition~\eqref{mildIntegrabilityDensity} to the definition of $\mathcal{F}_{\Gamma}$. However, the set $\mathcal{F}_{\Gamma}$ with the condition~\eqref{mildIntegrabilityDensity} is not closed for the usual weak-$*$ topology when considering a minimizing sequence of~\eqref{defEnergyfunc}. Another attempt is to use a Schauder fixed-point algorithm as in~\cite{lions1987,cances:hal-00186926} to prove that~\eqref{selfconsist1} admits a solution. The most crucial step is to guarantee that there are enough negative bound states to meet the charge neutrality constraint~\eqref{chargeNeutral}. The number of bound states is controlled by the decay rate of potentials. With exponentially decaying densities we can show that~\cite[Lemma 2.5]{blanc2000} there exists $C\in \R^+$ such that $\left| \vp(\cdot,\bm r)\right| \leq C |\bm r|^{-1}$. Nevertheless this condition is not sufficient to guarantee that the number of bound states is sufficient, as the critical decay rate for numbers of bound states to be finite or infinite is $-|\bm r|^{-2}$~\cite[Theorem XIII.6]{ReeSim4}. In other words, we do not have a uniform bound over the Fermi level $\epsilon_F$ at each fixed-point iteration. On the other hand, the summability condition~\eqref{mildIntegrabilityDensity} is a sufficient but probably not a necessary condition for the negativity of the Fermi level and the characterization of the minimizers. The main difficult is to control the decay of the mean-field potential $\vp$ in the $\bm r$-direction by just controlling the nuclear density $\mups$, given that the Green's function defined in~\eqref{defGreenFunc} has $\log$-growth in the $\bm r$-direction. Furthermore, different decay scenarios of $\vp$ in the $\bm r$-direction lead to different characterizations of the spectrum of the Hamiltonian $\hp$: if $\vp$ is bounded from below, and positive with $\log$-growth when $|\bm r|\to\infty$, one can show that the spectrum of $\hpx$ is purely discrete and the spectrum of $\hp$ has a band structure. The Fermi level of the system could be positive in this case. We are not able to prove the above statements without Assumption~\ref{assumption_density}.
\end{remark}
In order to describe the junction of quasi 1D systems, more specifically to guarantee that the Coulomb energy generated by the perturbative state is finite, the integrability of the mean-field potential provided in~Theorem~\ref{thmPeriodicExistence2} is not sufficient in view of Lemma~\ref{eta_chi_Lemma} below. In order to make use of this result, let us introduce a class of nuclear densities such that the $x$-averaged density is rotationally invariant in the $\bm r$-direction: 
$$
\mupsm(x, \bm r) = \sum_{n\in \bb Z}Z\,m_{\mathrm{s}}(x-n,\bm r),
$$
where $m_{\mathrm{s}}(x,\bm r)$ is a non-negative $C_c^{\infty}(\Gamma)$ function such that $\int_{\bb R^3} m_{\mathrm{s}} = 1$. Moreover, there exists $m_{\mathrm{sym}}(|\bm r|) \in C_c^{\infty}(\R^2)$ such that
\begin{equation}
\forall \bm r\in \R^2,\quad \int_{-1/2}^{1/2}\quad m_{\mathrm{s}}(x, \bm r)\,dx  \equiv m_{\mathrm{sym}}(|\bm r|). 
\label{absenceDipole}
\end{equation}
\begin{lemma}
\label{symmetryPotentialLemma}
Suppose that Assumption~\ref{assumption_density} holds. Under the symmetry condition~\eqref{absenceDipole} on the nuclear density $\mup$, all the results of Theorem~\ref{thmPeriodicExistence2} hold for the minimization problem~\eqref{pb1}. Besides, the mean--field potential $\vp $ belongs to $L_{\mathrm{per},x}^p(\Gamma)$ for $1<p\leq +\infty$.
\end{lemma}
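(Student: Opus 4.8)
The plan is to observe that the symmetry condition~\eqref{absenceDipole} does not interfere with any of the arguments used to establish Theorem~\ref{thmPeriodicExistence2}: indeed, $\mupsm$ is still a non-negative, $1$-periodic (in $x$), compactly supported (in $\bm r$) smooth function with $\int_\Gamma \mupsm = Z$, so it satisfies exactly the same hypotheses as a generic $\mup$. Hence the existence of minimizers, the uniqueness of the density, the self-consistent characterization, and the negativity of the Fermi level all follow verbatim. The only genuinely new content is the improved integrability of $\vp$, namely $\vp \in L^p_{\mathrm{per},x}(\Gamma)$ for \emph{all} $1 < p \le +\infty$, rather than only $2 < p \le +\infty$. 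So the real work is to gain integrability near $|\bm r| \to \infty$ (the decay estimate~\eqref{expodecay_density} already controls the electronic contribution; the potential bottleneck is the $-2\log|\bm r|$ piece of the Green's function interacting with the \emph{total} charge).

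First I would split $\vp = (\rho_{\rp} - \mupsm) \star_\Gamma G$ using the decomposition $G = -2\log|\bm r| + \widetilde{G}$ from~\eqref{defGreenFunc}. The $\widetilde{G}$-part is harmless: by Lemma~\ref{lemma1}, $\widetilde{G} \in L^p_{\mathrm{per},x}(\Gamma)$ for $1 \le p < 2$ with exponential decay in $\bm r$, and since $\rho_{\rp} - \mupsm \in L^1_{\mathrm{per},x}(\Gamma)$ with $\rho_{\rp}$ even in $L^p$ for $1 \le p \le 3$, a Young-type convolution estimate on $\Gamma$ shows $(\rho_{\rp} - \mupsm)\star_\Gamma \widetilde{G} \in L^p_{\mathrm{per},x}(\Gamma)$ for the full range $1 < p \le +\infty$. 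The crux is therefore the logarithmic part $h := -2\big( (\rho_{\rp} - \mupsm) \star_\Gamma \log|\bm r|\big)$. Here I would use the charge neutrality per unit cell~\eqref{chargeNeutral}, $\int_\Gamma (\rho_{\rp} - \mupsm) = 0$, together with Assumption~\ref{assumption_density} and the compact $\bm r$-support of $\mupsm$, to write, for fixed $x$,
\begin{equation}
h(x,\bm r) = -2\int_\Gamma \big(\rho_{\rp}(x',\bm r') - \mupsm(x',\bm r')\big)\Big(\log|\bm r - \bm r'| - \log|\bm r|\Big)\,dx'\,d\bm r',
\end{equation}
and exploit the bound $\big|\log|\bm r - \bm r'| - \log|\bm r|\big| \lesssim |\bm r'|/|\bm r|$ for $|\bm r|$ large (and the local integrability of $\log$ for $|\bm r|$ bounded). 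The summability $\int_\Gamma |\bm r'|(\rho_{\rp} + \mupsm) < \infty$ then gives decay $|h(\cdot,\bm r)| \lesssim |\bm r|^{-1}$ at infinity, which together with local $L^\infty_{\mathrm{loc}}$ bounds places $h$ in $L^p_{\mathrm{per},x}(\Gamma)$ for every $p > 1$ (but not $p=1$, because $|\bm r|^{-1}$ is not integrable at infinity in $\R^2$ — this is why the range is $1 < p \le +\infty$ and not $1 \le p$).

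The role of the symmetry hypothesis~\eqref{absenceDipole} is to kill the would-be $|\bm r|^{-1}$ dipole term and actually improve the decay of the nuclear contribution, but for the stated conclusion one does not even need the dipole cancellation in the \emph{electronic} part — the exponential decay~\eqref{expodecay_density} already makes $\rho_{\rp}\star_\Gamma \log|\bm r|$ decay like $|\bm r|^{-1}$ regardless of symmetry. What the symmetry buys, and what I expect to be the one subtle point, is that the \emph{first moment} of $\rho_{\rp} - \mupsm$ in the $\bm r$-direction vanishes (or is at least controlled): since $\mupsm$ is radially symmetric in $\bm r$ after averaging in $x$ by~\eqref{absenceDipole}, and since the mean-field equations are rotationally covariant in $\bm r$, the $x$-averaged density $\int_{-1/2}^{1/2}\rho_{\rp}(x,\bm r)\,dx$ inherits the same radial symmetry, so $\int_\Gamma \bm r\,(\rho_{\rp} - \mupsm) = 0$; this improves the $|\bm r|^{-1}$ bound on $h$ to $|\bm r|^{-2}$, which is still not $L^1$ at infinity in $\R^2$ but is comfortably in every $L^p$, $p > 1$. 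I would make this rigorous by expanding $\log|\bm r - \bm r'|$ to second order and bounding the remainder by $|\bm r'|^2/|\bm r|^2$ against the finite second moment (which follows from~\eqref{expodecay_density} and the compact support of $\mupsm$). The main obstacle, then, is purely the borderline nature of the logarithmic kernel in two dimensions: one must carefully track that the cancellations from neutrality and radial symmetry are enough to land strictly inside $L^p$ for $p > 1$ while accepting that $p = 1$ is genuinely out of reach.
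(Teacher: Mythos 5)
Your final paragraph is, in substance, the paper's own proof: decompose $\vp=\ep\star_\Gamma\widetilde G+T$ with $T$ the convolution of the $x$-averaged total charge $\epx$ with $-2\log|\bm r|$, dispose of the $\widetilde G$ part by Young's inequality and the exponential decay of $\ep$, deduce from the uniqueness of the minimizing density that the $x$-averaged $\rho_{\rp}$ inherits the radial symmetry so that the first moment $\int_\Gamma \bm r\,\ep$ vanishes, and then expand the kernel to second order, $Q(\bm r,\bm r')=\log|\bm r-\bm r'|-\log|\bm r|-\bm r\cdot\bm r'/|\bm r|^2$ with $|Q|\leq C|\bm r'|^2/|\bm r|^2$ in the region $|\bm r'|\leq \varepsilon_R|\bm r|$, using the finite second moment of $\ep$ (exponential decay of $\rho_{\rp}$ plus compact $\bm r$-support of $\mups$); the paper carries this out by splitting $\{|\bm r|\leq 2R\}$ from $\{|\bm r|>2R\}$ and the $\bm r'$-integral into $|\bm r'|\leq\varepsilon_R|\bm r|$ and its complement, exactly as you propose.

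However, your middle paragraph contains a genuine error that contradicts this. In $\R^2$ a bound $|h(\cdot,\bm r)|\leq C|\bm r|^{-1}$ at infinity puts $h$ in $L^p$ only for $p>2$, since $\int_{|\bm r|>1}|\bm r|^{-p}\,d\bm r<\infty$ if and only if $p>2$; the borderline exponent is $p=2$, not $p=1$ as your parenthetical asserts. Hence the claim that charge neutrality alone, giving only $|\bm r|^{-1}$ decay, already yields $L^p_{\mathrm{per},x}(\Gamma)$ for all $p>1$ — and that "one does not even need the dipole cancellation" — is false: neutrality alone merely reproduces the range $2<p\leq+\infty$ of Theorem~\ref{thmPeriodicExistence2}, and the entire content of the symmetry hypothesis~\eqref{absenceDipole} is to kill the dipole term and upgrade the decay to $|\bm r|^{-2}$, which is what places $T$ in $L^p(\R^2)$ for every $p>1$. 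Your last paragraph gets this right, so the proof is sound once the erroneous intermediate claim is removed; when you make the symmetry transfer rigorous, argue as the paper does, via the uniqueness of the ground-state density under the rotational invariance of the problem, rather than appealing loosely to "covariance of the mean-field equations."
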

The proof of this lemma can be read in Section~\ref{symmetryPotentialLemmaSec}. The nuclei of many actual materials can be modeled with a smear nuclear density satisfying the condition~\eqref{absenceDipole}: for instance nanotubes and polymers with rotational symmetry in the $\bm r$-direction.
\section{Mean--field stability for the junction of quasi 1D systems}
\label{SecJunc}
In this section, we construct a rHF model for the junction of two different quasi 1D periodic systems. The junction system is described by periodic nuclei satisfying the symmetry condition~\eqref{absenceDipole} with different periodicities and possibly different charges per unit cell, occupying separately the left and right half spaces (\textit{i.e.}, $(-\infty, 0]\times \bb R^2$ and $(0,+\infty) \times \bb R^2$), see Fig.~\ref{Junction_pic}. We do not assume any commensurability of the different periodicities. The junction system is therefore~\textit{a priori} no longer periodic, making it impossible to define the periodic rHF energy. Inspired by perturbative approaches when treating infinitely extended systems~\cite{0305-4470-38-20-014, Hainzl2005, HAINZL2005TheMA, Hainzl2009, Cances2008}, the idea is to find an appropriate reference state which is close enough to the actual one. Section \ref{mathdesJuncSec} gives a mathematical description of the junction system. Section \ref{refSec} is devoted to a rigorous construction of a reference Hamiltonian $\hc$ and a reference one-particle density matrix defined as a spectral projector of $\hc$. In Section~\ref{perturbativeSec} we construct a perturbative state, which encodes the non-linear effects due to the electron-electron interaction in the rHF approximation, and associate the ground state energy of this perturbative state to some minimization problem in Section~\ref{ResultSec}. 
\subsection{Mathematical description of the junction system}
\label{mathdesJuncSec}
Consider two quasi 1D periodic systems with periods $a_{L}>0$ and $a_{R}>0$. The unit cells are respectively denoted by $\Gamma_{L}:= [-\frac{a_{L}}{2},\frac{a_{L}}{2})\times \bb R^2$ and $\Gamma_{R}:= [-\frac{a_{R}}{2},\frac{a_{R}}{2})\times \bb R^2$ with their duals $\Gamma_{L}^*:= [-\frac{\pi}{a_{L}},\frac{\pi}{a_{L}})$ and $\Gamma_{R}^*:= [-\frac{\pi}{a_{R}},\frac{\pi}{a_{R}})$. We consider nuclear densities fulfilling the symmetry condition~\eqref{absenceDipole} and suppose that Assumption~\ref{assumption_density} holds for the ground state densities of both quasi 1D periodic systems. More precisely, let $m_{L}(x,\bm r)$ and $m_{R}(x,\bm r)$ be non-negative $C_c^{\infty}$ functions with supports respectively in $\Gamma_{L}$ and $\Gamma_{R}$ such that $\int_{\bb R^3} m_{L} = 1$ and $\int_{\bb R^3} m_{R} = 1$. Assume that there exist $m_{\mathrm{sym},L}(|\bm r|),m_{\mathrm{sym},R}(|\bm r|) \in C_c^{\infty}(\R^2)$ such that 
$$\forall \bm r\in \R^2,\quad \int_{-a_L/2}^{a_L/2}\quad m_{L}(x, \bm r)\,dx  \equiv m_{\mathrm{sym},L}(|\bm r|),\quad \int_{-a_R/2}^{a_R/2}\quad m_{R}(x, \bm r)\,dx  \equiv m_{\mathrm{sym},R}(|\bm r|). $$
Denoting by $Z_{L},Z_R\in \bb N\backslash \{0\} $ the total charges of the nuclei per unit cells, the smeared periodic nuclear densities are respectively defined as 
\begin{equation}
 \mu_{\mathrm{per},L }(x, \bm r) := \sum_{n\in \bb Z}Z_{L}\,m_{L}(x-a_L n,\bm r),\quad\mu_{\mathrm{per},R }(x, \bm r) := \sum_{n\in \bb Z}Z_{R}\,m_{R}(x-a_Rn,\bm r).
\label{nuclearDensityper}
\end{equation}
The periodic Green's functions with period $\Gamma_{L}$ and $\Gamma_{R}$ are separately defined as $$G_{a_{L}}(x,\bm r)
= a_{L}^{-1}G\left(\frac{x}{a_{L}},\bm r\right),\quad G_{a_{R}}(x,\bm r) = a_{R}^{-1}G\left(\frac{x}{a_{R}},\bm r\right),$$
where $G(\cdot)$ is the periodic Green's function defined in~\eqref{defGreenFunc}. One can easily verify that 
$$-\Delta G_{a_{L}}(x,\bm r) =4\pi  \sum_{n\in \mathbb{Z}}\delta_{(x,\bm r) = (a_L n, \bm 0)}  \in \mathscr{S}'(\bb R^3),\quad -\Delta G_{a_{R}}(x,\bm r) =4\pi  \sum_{n \in \mathbb{Z}}\delta_{(x,\bm r) = (a_R n, \bm 0)}  \in \mathscr{S}'(\bb R^3). $$
According to the results of Theorem~\ref{thmPeriodicExistence2}, the following self-consistent equations uniquely define the ground states density matrices associated with the periodic nuclear densities $\mu_{\mathrm{per},L}$ and $\mu_{\mathrm{per},R}$:  $$\gl:= \mathds 1_{(-\infty,\epsilon_{L}]}(\hl),\quad \hl:=-\frac{\Delta}{2}+\vl, \quad \vl:= \left(\rol-\mu_{\mathrm{per},L}\right)\star_{\Gamma_{L}} G_{a_{L}},$$
$$\gamma_{\mathrm{per},R}:= \mathds 1_{(-\infty,\epsilon_{R}]}(\hr),\quad \hr:=-\frac{\Delta}{2}+\vr,\quad \vr:= (\ror-\mu_{\mathrm{per},R})\star_{\Gamma_{R}} G_{a_{R}},$$
where the negative constants $\epsilon_L $ and $\epsilon_R$ are the Fermi levels of the quasi 1D systems. The junction of the quasi 1D systems are described by considering the following nuclear density configuration (see Fig.\ref{Junction_pic}):
\begin{equation}
\mu_{J}(x,\bm r):= \mathds{1}_{x\leq 0}\cdot \mu_{\mathrm{per},L}(x,\bm r) +\mathds{1}_{x >0}\cdot \mu_{\mathrm{per},R}(x,\bm r) + v(x,\bm r),
\label{junctionNuclear}
\end{equation}
where $v(x,\bm r)\in L^{6/5}(\bb R^3) $ describes how the junction switches between the underlying nuclear densities. The assumption $v\in L^{6/5}(\bb R^3) $ ensures that $D(v,v)<+\infty$. Recall that
$$ D(f,g) = \int_{\mathbb{R}^3}\int_{\mathbb{R}^3}\frac{f(x)g(y)}{|x-y|}dx\,dy= \frac{1}{4\pi}\int_{\mathbb{R}^3}\frac{\overline{\widehat{f}(k)}\widehat{g}(k)}{k^2}dk$$
describes the Coulomb interactions in the whole space.
Once one sets the nuclear configuration (\ref{junctionNuclear}), electrons are allowed to move in the 3D space. The \textit{infinite} rHF energy functional for the junction system associated with a test density matrix $\gamma_{J}$ formally reads
\begin{equation}
\mathcal{E}(\gamma_{J}) = \mathrm{Tr}\left(-\frac{1}{2}\Delta\gamma_{J}\right) + \frac{1}{2}D\left(\rho_{\gamma_{J}} - \mu_{J},\rho_{\gamma_{J}}- \mu_{J}\right).
\label{semi-infinitesystem}
\end{equation} Let us also introduce the Coulomb space $\mathcal{C}$ and its dual $\mathcal{C}'$ (Beppo-Levi space \cite{Cances2008}):
\begin{equation}
\mathcal{C} :=\left\{ \rho \in {\mathscr{S}'(\mathbb{R}^3)}\left|\, \widehat{\rho} \in L_{\m{loc}}^1(\R^3),\, D(\rho,\rho) <\infty \right.\right\},\quad \mathcal{C}' :=\left\{V\in L^6(\mathbb{R}^3)\mid \nabla V \in (L^2(\mathbb{R}^3))^3\right\}.
\label{CoulombSpace}
\end{equation}
\begin{figure}[h!]
\centering
\includegraphics[width=0.65\textwidth ]{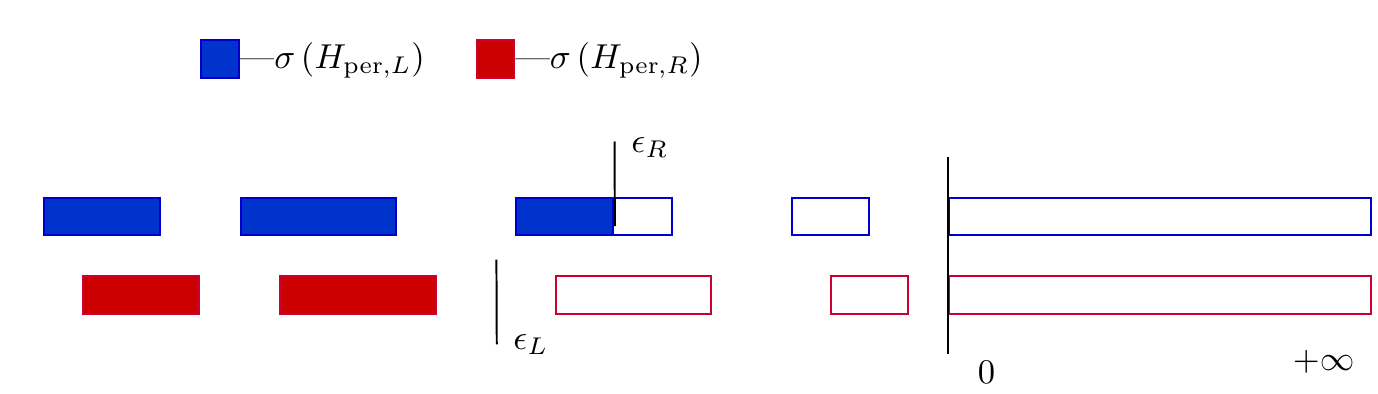}
\caption{Spectrum of $\hl$, $\hr$ in the non-equilibrium regime.}
\label{junction_1}
\end{figure}
\begin{figure}[h!]
\centering
\includegraphics[width=0.65\textwidth ]{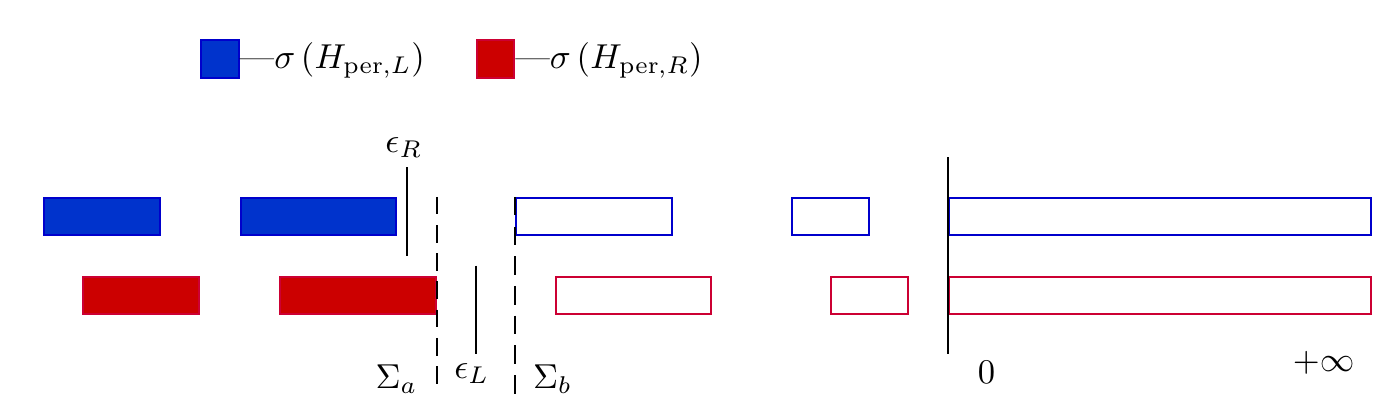}
\caption{Spectrum of $\hl$, $\hr$ in the equilibrium regime.}
\label{junction_2}
\end{figure}
Remark that the ground state energy of the junction system, if it exists, is infinite and there is no periodicity in this system, hence usual techniques which essentially consist in considering the energy per unit volume~\cite{CATTO2001687,Cances2008} are not applicable. We next define a reference system such that the difference between the junction system and the reference can be considered as a perturbation. This perturbative approach has been used in \cite{HAINZL2005TheMA, Hainzl2005, Hainzl2009, Cances2008} in various contexts. The next section is devoted to the rigorous mathematical construction of the reference state and its rHF energy functional.
\subsection{Reference state for the junction system}
\label{refSec}
In this section, we construct a reference Hamiltonian obtained by a linear combination of the periodic mean--field potentials $\vl$ and $\vr$. We prove the validity of this approach by showing that the density generated by this reference state is close to the linear combination of the periodic densities $\rol$ and $\ror$.
\paragraph{Hamiltonian of the reference state.} We introduce a class of smoothed cut-off functions. For $\bm x\in \bb R^3$, consider:
\begin{equation}
\mathcal{X} :=\left\{\chi\in C^{2}(\bb R^3)\left|\,0\leq \chi \leq 1;\, \chi(\bm x) = 1\text{ if } \bm x \in \left(-\infty, -\frac{a_{L}}{2}\right]\times \bb R^2;\,  \chi(\bm x) = 0 \text{ if } \bm x\in \left[\frac{a_{R}}{2}, +\infty\right)\times \bb R^2 \right.\right\}. 
\label{chifunc}
\end{equation}
Fix $\chi \in \mathcal{X}$, let us introduce a reference potential $$V_{\chi}:=\chi^2\vl+ (1-\chi^2)\vr.$$
We will show in Section~\ref{ResultSec} that the choice of $\chi \in \mathcal{X}$ is irrelevant. By Theorem \ref{thmPeriodicExistence2} and Lemma~\ref{symmetryPotentialLemma} we know that $V_{\chi}$ belongs to $ L_{\mathrm{loc}}^p\left(\bb R, L^p(\bb R^2)\right) $ for $1< p\leq \infty $, is continuous in all directions and tends to zero in the $\bm r$-direction.
By the Kato--Rellich theorem (see for example \cite[Theorem 9.10]{Helffer}), there exists a unique self-adjoint operator
\begin{equation}
\hc := -\frac{1}{2}\Delta + V_{\chi}
\label{Hchi}
\end{equation}
on $L^2(\bb R^3)$ with domain $H^2(\bb R^3)$ and form domain $H^1(\bb R^3)$. We next show that the essential spectrum of the reference Hamiltonian $\hc$ is the union of the essential spectra of $\hl$ and $\hr$, which implies that the reference system does not change essentially the unions of possible energy levels of quasi periodic systems, and that there are no surface states which propagate along the junction surface in the $\bm r$-direction. Note that this is \textit{a priori} not obvious as the cut-off function $\chi$ is $\bm r$-translation invariant (hence not compact), therefore scattering states may occur at the junction surface and escape to infinity in the $\bm r$-direction. Standard techniques in scattering theory to prove this statement, such as Dirichlet decoupling~\cite{DEIFT1976218, Hempel2014}, are not applicable in our situation since the junction surface is not compact.
\begin{prop}[Spectral properties of the reference state $\hc$] For any $\chi\in\mathcal{X}$, the essential spectrum of $\hc$ defined in (\ref{Hchi}) satisfies $$\sigma_{\mathrm{ess}}(\hc) = \sigma_{\mathrm{ess}}\left(\hl\right)\bigcup  \sigma_{\mathrm{ess}}\left(\hr\right).$$
In particular, $ [0,+\infty)\subset\sigma_{\mathrm{ess}}(\hc)$ and $\sigma_{\mathrm{ess}}(\hc)$ does not depend on the shape of the cut-off function $\chi\in \mathcal{X}$ defined in~\eqref{chifunc}.
\label{spectrapPpHchi}
\end{prop}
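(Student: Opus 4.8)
The plan is to prove the two inclusions $\sigma_{\mathrm{ess}}(\hc) \subseteq \sigma_{\mathrm{ess}}(\hl)\cup\sigma_{\mathrm{ess}}(\hr)$ and $\sigma_{\mathrm{ess}}(\hl)\cup\sigma_{\mathrm{ess}}(\hr)\subseteq\sigma_{\mathrm{ess}}(\hc)$ separately, using Weyl sequences together with the spatial localization of $V_\chi$: outside a compact $x$-interval $[-a_L/2,a_R/2]$ we have $V_\chi = \vl$ on the far left and $V_\chi = \vr$ on the far right (because $\chi\equiv 1$, resp. $\chi\equiv 0$, there), while in all directions $V_\chi$ is bounded and tends to $0$ as $|\bm r|\to\infty$. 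For the easy inclusion $\supseteq$, I would start from a Weyl sequence $(u_n)$ for $\hl$ at a point $\lambda\in\sigma_{\mathrm{ess}}(\hl)$; since $\sigma_{\mathrm{ess}}(\hl) = \bigcup_{\xi}\sigma(\hlq)$ (by Theorem~\ref{thmPeriodicExistence2}, transported to period $a_L$), one may choose $(u_n)$ with $\|u_n\|=1$, $u_n\rightharpoonup 0$, $\|(\hl-\lambda)u_n\|\to 0$, and — crucially — translate each $u_n$ far to the left in the $x$-direction using the approximate translation invariance of $\vl$, so that $u_n$ is supported in $\{x\le -a_L/2\}$ up to an error that vanishes. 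On that region $\hc$ agrees with $\hl$, so $(u_n)$ is also a Weyl sequence for $\hc$; the same argument on the right gives $\sigma_{\mathrm{ess}}(\hr)\subseteq\sigma_{\mathrm{ess}}(\hc)$. Here I would use that $[0,+\infty)\subseteq\sigma_{\mathrm{ess}}(\hl)\cap\sigma_{\mathrm{ess}}(\hr)$ is automatic (it already appears in Theorem~\ref{thmPeriodicExistence2}).

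For the nontrivial inclusion $\subseteq$, take $\lambda\in\sigma_{\mathrm{ess}}(\hc)$ and a Weyl sequence $(u_n)$ with $\|u_n\|=1$, $u_n\rightharpoonup 0$, $\|(\hc-\lambda)u_n\|\to 0$. I would introduce a three-piece partition of unity in the $x$-variable: $\phi_L$ supported in $\{x\le -a_L/2\}$, $\phi_R$ supported in $\{x\ge a_R/2\}$, and $\phi_0$ supported in the bounded slab $\{-a_L/2-1 \le x \le a_R/2+1\}$, with $\phi_L^2+\phi_0^2+\phi_R^2 = 1$. An IMS-type localization formula gives $\sum_{\#}\big(\hc - \tfrac12|\nabla\phi_\#|^2\big)$ acting through the pieces $\phi_\# u_n$; because the $\phi_\#$ depend only on $x$, the commutator/localization error terms are controlled by $\|\,|\nabla\phi_\#|\,u_n\|$ and $\|(\hc-\lambda)u_n\|$, both of which are handled once one knows $(u_n)$ is bounded in $H^1$ (which follows from the Weyl condition and $V_\chi$ bounded). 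The key dichotomy: either $\|\phi_0 u_n\|\not\to 0$, or $\|\phi_0 u_n\|\to 0$. In the latter case, essentially all the mass is in the two half-space pieces, $\phi_L u_n$ (on which $\hc = \hl$) and $\phi_R u_n$ (on which $\hc = \hr$); passing to a subsequence so that $\|\phi_L u_n\|\to c_L$, $\|\phi_R u_n\|\to c_R$ with $c_L^2+c_R^2 = 1$, at least one is nonzero and the corresponding renormalized piece is a Weyl sequence for $\hl$ (resp. $\hr$) at $\lambda$, giving $\lambda\in\sigma_{\mathrm{ess}}(\hl)\cup\sigma_{\mathrm{ess}}(\hr)$.

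The main obstacle is the case $\|\phi_0 u_n\|\not\to 0$, i.e. a Weyl sequence concentrating (in $x$) in the bounded junction slab $S := \{-a_L/2-1\le x\le a_R/2+1\}\times\R^2$. Since $S$ is bounded in $x$ but \emph{unbounded in $\bm r$}, this piece need not be compact, and the usual conclusion "$\hc$ restricted to a bounded region has compact resolvent, so no essential spectrum there" fails. To handle it I would exploit that $V_\chi$ tends to $0$ in the $\bm r$-direction uniformly on $S$: localize further in $\bm r$ with cut-offs $\theta_R(\bm r) = \theta(\bm r/R)$. On $\{|\bm r|\le R\}\cap S$ the operator has compact resolvent (bounded domain, $-\Delta$ with Dirichlet-type form), so $\theta_R\phi_0 u_n\rightharpoonup 0$ forces $\|\theta_R\phi_0 u_n\|\to 0$ for each fixed $R$; hence the mass of $\phi_0 u_n$ escapes to $|\bm r|\to\infty$, where $V_\chi\to 0$, so on that region $\hc$ looks like $-\tfrac12\Delta$. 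This pins $\lambda\ge 0$, and since $[0,+\infty)\subseteq\sigma_{\mathrm{ess}}(\hl)$ we again land in $\sigma_{\mathrm{ess}}(\hl)\cup\sigma_{\mathrm{ess}}(\hr)$. Making the escape-to-infinity argument quantitative — in particular choosing $R = R(n)\to\infty$ slowly enough that the localization errors $\|\,|\nabla\theta_{R(n)}|\,u_n\|$ still vanish while $\|\theta_{R(n)}\phi_0 u_n\|\to 0$ is retained — is the delicate point, and is exactly where the uniform decay of $V_\chi$ in $\bm r$ (guaranteed by Lemma~\ref{symmetryPotentialLemma}) is used in an essential way.
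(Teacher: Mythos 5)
Your plan is a genuinely different route from the paper's (which works with spreading sequences and a specially designed cut-off $\rho$ built from a one-dimensional Yukawa equation, Lemma~\ref{lemma_approx}), and the easy parts are fine: the translated-Weyl-sequence argument for $\sigma_{\mathrm{ess}}(\hl)\cup\sigma_{\mathrm{ess}}(\hr)\subseteq\sigma_{\mathrm{ess}}(\hc)$ matches the paper, and invoking $[0,+\infty)\subset\sigma_{\mathrm{ess}}(\hl)$ from Theorem~\ref{thmPeriodicExistence2} is legitimate. The genuine gap sits exactly at the point the paper flags as the central difficulty in Section~\ref{spectrapPpHchiSec}: the first-order localization errors. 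Writing $(\hl-\lambda)\phi_L u_n=\phi_L(\hc-\lambda)u_n-\tfrac12(\Delta\phi_L)u_n-\nabla\phi_L\cdot\nabla u_n$, the term $\nabla\phi_L\cdot\nabla u_n$ is \emph{not} controlled by $\|\,|\nabla\phi_L|\,u_n\|$, and $H^1$-boundedness of $u_n$ only makes it bounded, not vanishing; nor can you kill it by weak convergence, because $\phi_L$ is translation-invariant in the $\bm r$-direction, so $[-\Delta,\phi_L]$ is not $(-\Delta)$-compact. The same defect reappears in your concentration branch, where the error $\theta_R\,\nabla\phi_0\cdot\nabla u_n$ does not become small in $R$ or $n$. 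Also, the quadratic-form IMS identity you invoke does not by itself yield the norm estimates $\|(\hl-\lambda)\phi_L u_n\|\to 0$ needed to produce Weyl sequences. As written, neither branch of the dichotomy closes.

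Both branches are repairable, but each needs an argument you do not supply. In the branch $\|\phi_0 u_n\|\to 0$, one can recover smallness of the gradient on the transition region by integration by parts: test $-\tfrac12\Delta u_n=(\hc-V_\chi)u_n$ against $\theta^2 u_n$, where $\theta=\theta(x)$ equals $1$ on the supports of $\nabla\phi_L,\nabla\phi_R$ and is supported in a slightly larger slab on which $\|u_n\|_{L^2}\to 0$; boundedness of $V_\chi$ and $\|(\hc-\lambda)u_n\|\to 0$ then give $\|\theta\nabla u_n\|\to 0$ (one must also arrange the partition, or run the dichotomy with a slab indicator, so that the smallness region really covers the supports of $\nabla\phi_{L/R}$, since $\phi_0$ degenerates at the edges of the transition zones). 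In the concentration branch it is cleaner to drop the $x$-localization altogether and cut only in $\bm r$ with $\theta_R(\bm r)$ satisfying $|\nabla\theta_R|=O(1/R)$, $|\Delta\theta_R|=O(1/R^2)$: then all commutator errors vanish, the potential term is small by the uniform decay of $V_\chi$ in $\bm r$, the mass lower bound for fixed $R$ follows from Rellich exactly as you say, and a diagonal extraction $R_n\to\infty$ yields $\lambda\in\sigma(-\tfrac12\Delta)=[0,+\infty)$. The paper avoids all of this by its Yukawa-based cut-off, for which Lemma~\ref{lemma_approx} shows the derivative errors $\|\rho''\phi_n+2\rho'\partial_x\phi_n\|_{L^2}\to 0$ via an energy identity and the potential error $\rho\chi^2(\vl-\vr)$ is relatively compact; your route can work, but only once the gradient-error estimates above are actually proved.
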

The proof relies on an explicit construction of Weyl sequences associated with $\hc$ (see Section~\ref{spectrapPpHchiSec}). Remark that Proposition~\ref{spectrapPpHchi} also implies that the reference system essentially preserves the scattering channels of the underlying quasi 1D systems, since the scattering involves the purely absolutely continuous spectrum of a Hamiltonian (see for example~\cite{Exner2007, Bruneau2016,Bruneau2016II}). However, this does not exclude the existence of embedded eigenvalues in the essential spectrum, which may cause additional scattering channels~\cite{Frank03onthe, Frank04onthe}. We prove in Corollary~\ref{corollary_1} that the results in Proposition~\ref{spectrapPpHchi} still hold for the nonlinear junction.
\paragraph{Reference state as a spectral projector.}
Before constructing the reference state, let us discuss different regimes for junction system. From Theorem~\ref{thmPeriodicExistence2} we know that the chemical potentials (Fermi levels) $\epsilon_L $ and $\epsilon_R$ are negative. Introduce the energy interval $I_{\epsilon_F}:= \left[\min(\epsilon_L, \epsilon_R), \max(\epsilon_L, \epsilon_R)\right] $. In view of Proposition~\ref{spectrapPpHchi}, assume that the essential spectrum of $\hc$ below $0$ is purely absolutely continuous, the non-equilibrium regime (Fig.~\ref{junction_1}) corresponds to $$\sigma_{\mathrm{ac}}(\hc) \bigcap I_{\epsilon_F} \neq \emptyset. $$ 
In this regime, steady state currents occur and the Landauer-B\"{u}ttiker conductance can be calculated~\cite{Bruneau2015,Bruneau2016,Bruneau2016II}. When $\mu_{\mathrm{per},L}$ and $\mu_{\mathrm{per},R}$ are identical, the junction system becomes periodic with different chemical potentials $\epsilon_L $ and $\epsilon_R$ on the left and right half lines. In this case the Thouless conductance~\cite{Bruneau2015} can be defined and it is given by $$C_T\frac{\left|\sigma_{\mathrm{ac}}(\hc) \bigcap  I_{\epsilon_F} \right|}{ \left|I_{\epsilon_F} \right|} >0,$$
for some positive constant $C_T$. However it is not the aim of this article to discuss steady state currents for non-equilibrium systems. We instead consider the equilibrium regime (see Fig.~\ref{junction_2}) with the following assumption.
\begin{assumption}
The chemical potential $\epsilon_L$ and $\epsilon_R$ are in a common spectral gap $(\Sigma_a, \Sigma_b)$ (equilibrium regime, see Fig.~\ref{junction_2}), where $\Sigma_a$ is the maximum of the filled bands of $\hl$ and $\hr$, and $\Sigma_b$ is the minimum of the unfilled bands of $\hl$ and $\hr$.
\label{as:1}
\end{assumption}
\begin{figure}[h!]
\centering
\includegraphics[width=0.6\textwidth ]{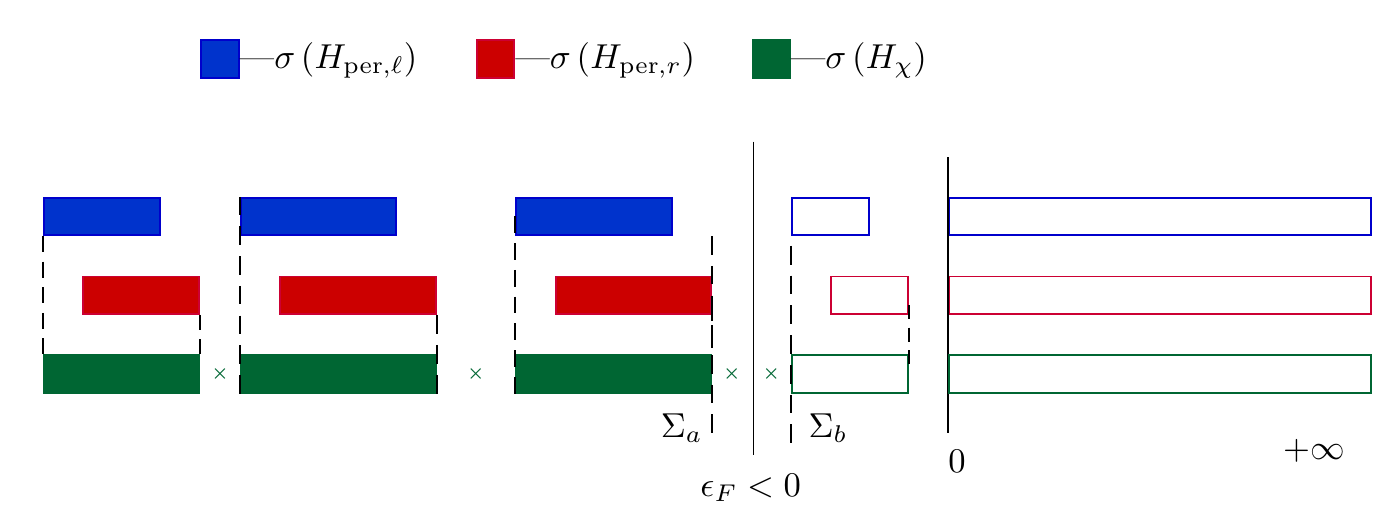}
\caption{Spectrum of $\hl$, $\hr$ and $\hc$ below $0$.}
\label{HchiSpectralFig}
\end{figure}
Assumption~\ref{as:1} guarantees that the Fermi level of the junction system lies in a spectral gap of $\hc$ in view of Proposition~\ref{spectrapPpHchi}, which is a common hypothesis~\cite{Cances2008,Hainzl2005,Hainzl2009} for 3D periodic insulating and semi-conducting systems. We make this assumption for simplicity. Remark that with approaches proposed in~\cite{energycost, frank2013,cao2018} it is possible to extend the results to metallic systems provided that the junction system is in its ground state and no steady state current occurs.

Let us without loss of generality choose the Fermi level $\epsilon_{F}=\max(\epsilon_L,\epsilon_R) = \sup I_{\epsilon_F}$ and define the reference state~$\gamma_{\chi}$ as the spectral projector associated with the states of $\hc$ below $\epsilon_F$:
\begin{equation}
 \gamma_{\chi}:=\mathds 1_{(-\infty,\epsilon_F)}(\hc).
 \label{referenceState}
\end{equation}
Remark that $\hc $ can have discrete spectrum in the gap $(\Sigma_a,\Sigma_b)$, with eigenvalues possibly accumulating at~$\Sigma_a$ and~$\Sigma_b$, and $\epsilon_{F}$ can also be an eigenvalue of $\hc$. The definition of~\eqref{referenceState} however excludes the possible bound states with energy~$\epsilon_F$. 

The following proposition shows that the density $\rho_{\chi}$ of $\gamma_\chi$ is well defined in $L_{\mathrm{loc}}^1(\bb R^3)$, and is close to the linear combination of the periodic densities $\rol$ and $\ror$, the difference decaying exponentially fast in the $x$-direction as $|x|\to\infty$. 
\begin{prop}[Exponential decay of density]
Under Assumption~\ref{as:1}, the spectral projector $\gamma_\chi$ is locally trace class, so that its density $\rho_{\chi}$ is well defined in $L_{\mathrm{loc}}^1(\bb R^3)$. Moreover,
$$\chi^2\rol+(1-\chi^2)\ror-\rho_{\chi} \in  L^p(\mathbb{R}^3) \text{  for  } 1<p\leq 2.$$ Furthermore, denote by $w_a$ the characteristic function of the unit cube centered at $a\in \bb R^3$. There exist positive constants $C$ and~$t$ such that for all $$ \alpha = (\alpha_x, 0,0) \in \bb R^3, \, \text{ with either  }\supp(w_\alpha)\subset (-\infty,a_{L}/2] \times \bb R^2 \text{ or }\, \supp(w_\alpha)\subset   [a_{R}/2,+\infty)\times \bb R^2,$$ it holds, $$ \int_{\mathbb{R}^3}\left|w_\alpha \left(\chi^2\rol+(1-\chi^2)\ror-\rho_{\chi}\right)w_\alpha\right| \leq C\rme^{-t|\alpha|}.$$
\label{propExist}
\end{prop}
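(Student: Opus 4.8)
The plan is to estimate the difference $\chi^2\rol+(1-\chi^2)\ror-\rho_{\chi}$ by resolvent comparison, exploiting the spectral gap guaranteed by Assumption~\ref{as:1} to represent the reference state as a Cauchy contour integral. First I would write $\gamma_{\chi}=\frac{1}{2\pi\rmi}\oint_{\mathscr{C}}(z-\hc)^{-1}\,dz$ and, similarly, express $\chi^2\gl+(1-\chi^2)\rl$ through the resolvents of $\hl$ and $\hr$ along a common contour $\mathscr{C}$ encircling the spectrum below $\epsilon_F$ and crossing the real axis inside the common gap $(\Sigma_a,\Sigma_b)$; this is legitimate precisely because of Proposition~\ref{spectrapPpHchi} and Assumption~\ref{as:1}, which place $\epsilon_F$ in a gap of $\sigma_{\mathrm{ess}}(\hc)$ (one must also handle possible discrete eigenvalues of $\hc$ in the gap, but they stay strictly to one side of $\mathscr{C}$ by the choice $\epsilon_F=\max(\epsilon_L,\epsilon_R)=\sup I_{\epsilon_F}$ together with the strict inequality in~\eqref{referenceState}). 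The key algebraic identity is the second resolvent formula: writing $\hc - \chi^2\hl-(1-\chi^2)\hr = W_{\chi}$, where $W_{\chi}$ collects the commutator terms $-\frac12[\chi^2,\Delta]$-type contributions (first-order differential operators supported in the transition region $[-a_L/2,a_R/2]\times\R^2$) plus the bounded multiplication errors coming from $V_{\chi}-\chi^2\vl-(1-\chi^2)\vr$, one gets a telescoping expansion of $(z-\hc)^{-1}-\chi^2(z-\hl)^{-1}-(1-\chi^2)(z-\hr)^{-1}$ in terms of $W_{\chi}$ sandwiched between resolvents.

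Next I would convert operator bounds into density bounds. For the $L^p(\R^3)$ membership with $1<p\le 2$, I would take the difference operator $Q:=\gamma_{\chi}-\chi^2\gl-(1-\chi^2)\rl$, note it is locally trace-class (the local trace-class property of $\gamma_\chi$ follows as for $\gl,\rl$ from the Kato--Seiler--Simon inequality, Lemma~\ref{KSSGammalemma}, applied fiberwise, together with the resolvent expansion and the fact that $W_\chi(z-\hr)^{-1}$ and $(z-\hc)^{-1}W_\chi$ are Hilbert--Schmidt against a compactly supported weight), and then bound $\|\rho_Q\|_{L^p}$ via the dual characterization $\|\rho_Q\|_{L^p}=\sup\{|\mathrm{Tr}(Q\phi)| : \phi\in C_c^\infty,\ \|\phi\|_{L^{p'}}\le 1\}$. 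The commutator $W_\chi$ brings in a gradient supported on the compact transition slab, so the first term of the expansion, $-\frac{1}{2\pi\rmi}\oint \chi^2(z-\hl)^{-1}\big(\hc-\hl\big)(z-\hc)^{-1}\,dz$ and its right-hand analogue, are already in the desired Schatten/$L^p$ class after a Kato--Seiler--Simon estimate on each fiber and integration over $\xi$; higher-order terms are controlled by powers of $\|W_\chi(z-\hr)^{-1}\|$, which are integrable along $\mathscr{C}$.

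For the exponential decay, the plan is to use the Combes--Thomas estimate: since $z\in\mathscr{C}$ stays at a fixed positive distance from $\sigma(\hc)\cup\sigma(\hl)\cup\sigma(\hr)$, the resolvents satisfy $\|\rme^{\eta\,\langle x\rangle}(z-\hc)^{-1}\rme^{-\eta\,\langle x\rangle}\|\le C$ for a small $\eta>0$ uniform on the contour (and likewise for $\hl,\hr$; here the weight is in the $x$-direction only, which is what the statement requires). Inserting these weighted bounds into the resolvent expansion and using that $W_\chi$ is supported in $\{|x|\le \max(a_L,a_R)/2\}$, one extracts a factor $\rme^{-\eta\,\mathrm{dist}(\mathrm{supp}(w_\alpha),\,[-a_L/2,a_R/2]\times\R^2)}$, which dominates $\rme^{-t|\alpha|}$ once $w_\alpha$ is supported in one of the two half-spaces specified; converting the operator bound $\|w_\alpha Q w_\alpha\|_{\gS_1}$ to the stated integral of $\rho$ uses $\int_{\R^3}|w_\alpha \rho_Q w_\alpha|\le \|w_\alpha Q w_\alpha\|_{\gS_1}$ up to the usual identification of the density. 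The main obstacle I anticipate is the bookkeeping at the two ends simultaneously: one must make the resolvent expansion symmetric enough that the left-region estimate is governed by $\hl$ (whose potential decays and whose gap is controlled) while the right-region estimate is governed by $\hr$, and check that the cross terms $\chi^2(z-\hl)^{-1}$ applied to objects living near $x=+\infty$ (where $\chi\equiv 0$) genuinely vanish — i.e. that the partition of unity $\chi^2+(1-\chi^2)$ is respected by the expansion — so that no spurious non-decaying contribution survives. A secondary technical point is ensuring the Combes--Thomas dilation is compatible with the fiber decomposition used for $\hl,\hr$, which requires that the analytic family $\xi\mapsto\hpx$ behaves well under the real translation $x\mapsto x+\rmi\eta$; this is standard but must be stated carefully.
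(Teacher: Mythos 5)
There is a genuine gap, and it sits exactly at the point the proposition is hardest. Your plan is to expand $(z-\hc)^{-1}-\chi^2(z-\hl)^{-1}-(1-\chi^2)(z-\hr)^{-1}$ and control the resulting terms, in particular the commutator contributions of type $[\Delta,\chi^2]$, by Kato--Seiler--Simon estimates. But the coefficients $\partial_x(\chi^2)$ and $\partial_x^2(\chi^2)$ are compactly supported in $x$ while being translation invariant in the $\bm r$-direction, so they belong to no $L^p(\R^3)$ for finite $p$; the same is true of $\chi$ itself. Consequently the KSS inequality~\eqref{KSS} gives no Schatten bound on these commutator terms, and without trace-class (or $\gS_p$) control you can neither run the duality argument for $\rho\in L^p(\R^3)$ nor convert the Combes--Thomas weighted resolvent bounds into the stated estimate $\int|w_\alpha(\cdot)w_\alpha|\le \|w_\alpha Q w_\alpha\|_{\gS_1}\le C\rme^{-t|\alpha|}$, which presupposes trace-class localizations. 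Your fallback, ``KSS applied fiberwise'', is not available either: $\hc$ is not periodic in $x$ (the two periods are incommensurate), so $\gamma_\chi$ and $(z-\hc)^{-1}$ admit no Bloch fibering; Lemma~\ref{KSSGammalemma} only applies to the periodic operators. This is precisely the obstruction the paper singles out, noting that $\chi$ not being in any $L^p(\R^3)$ ``prevents us from using the standard techniques such as calculating the commutator $[-\Delta,\chi]$''. The obstacles you flag (left/right bookkeeping, compatibility of Combes--Thomas with the fiber decomposition) are not the real issue.

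The paper's proof avoids commutators altogether by representing the density difference through the symmetric conjugations $\gamma_{d,1}=\chi(\gl-\gamma_\chi)\chi$ and $\gamma_{d,2}=\sqrt{1-\chi^2}\,(\rl-\gamma_\chi)\sqrt{1-\chi^2}$, whose densities are exactly $\chi^2\rol-\chi^2\rho_\chi$ and $(1-\chi^2)\ror-(1-\chi^2)\rho_\chi$. The second resolvent identity then brings in only the multiplication operator $V_d=(1-\chi^2)(\vl-\vr)$, and the point is that the product $V_d\chi$ \emph{is} in $L^p(\R^3)$ for $1<p\le\infty$ (compact support in $x$ from $1-\chi^2$ times $\chi$, decay in $\bm r$ from $\vl-\vr$ via Lemma~\ref{symmetryPotentialLemma}). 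Even then one cannot directly combine $\chi$ with $V_d$, since they are separated by resolvents; the paper inserts the projector decompositions $1=\gl+\gl^\perp$, $1=\gamma_\chi+\gamma_\chi^\perp$, kills the $\gl^\perp\cdots\gamma_\chi^\perp$ term by the residue theorem, and uses the cyclicity of the trace together with the fact that $\gl$ commutes with $(\zeta-\hl)^{-1}$ (Lemma~\ref{lemmaD121}) so that only the genuinely integrable function $V_d\chi$ ever enters a KSS estimate; the exponential decay then follows by inserting $\mathds 1_{\bb D_{a_L}}$ and the Combes--Thomas bound into the same decomposition. If you want to salvage your route, you would at minimum need a mixed-norm Schatten lemma for operators of the form $f(x)\,(1-\Delta)^{-1} g(\bm r)$ (decay in $x$ from one side, in $\bm r$ from the other) and a way to keep every commutator coefficient paired with a factor decaying in $\bm r$; as written, the proposal does not provide this and the key estimates would fail.
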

The proof can be read in Section \ref{propExistsec}. 
\paragraph{Fictitious nuclear density of the reference state.}
The density $\rho_{\chi}$ associated with $\gamma_{\chi}$ is fixed once the Fermi level $\epsilon_F$ is chosen. We can therefore define a fictitious nuclear density $\mu_{\chi}$ by imposing that the total electronic density $\rho_{\chi}-\mu_{\chi}$ generates the potential $V_{\chi}$. The fictitious nuclear density $\mu_{\chi}$ is given by
\begin{equation}
-\Delta V_{\chi} = 4\pi \left(\rho_{\chi}-\mu_{\chi}\right),  \quad \mu_{\chi}:= \rho_{\chi}-\left( \chi^2(\rol-\mu_{\mathrm{per},L}) +(1-\chi^2)(\ror-\mu_{\mathrm{per},R}) + \eta_{\chi}\right),
\label{ReferencePotential2}
\end{equation}
where $\eta_\chi$ has compact support in the $x$-direction:
\begin{equation}
\eta_{\chi} := -\frac{1}{4\pi}\left(\partial_x^2(\chi^2)\left(\vl-\vr \right) +2\partial_x(\chi^2)\partial_x\left( \vl- \vr\right)\right).
\label{eta_chi_Def}
\end{equation}
Let us emphasize that the Poisson's equation~\eqref{ReferencePotential2} is defined on the whole space $\bb R^3$. 
\paragraph{The nuclear density of the junction is a fictitious nuclear density plus a perturbation.}
Once we have defined fictitious nuclear density, we can treat the difference between the real nuclear density of the junction system $\mu_{J}$ and the fictitious nuclear density $\mu_{\chi}$ as a perturbative nuclear density. By doing so we can define a finite renormalized energy with respect to the perturbative nuclear density. Note that this idea is similar to the definition of the defect state in \cite{Cances2008} for defects in crystals, and the polarization of the vacuum in the Bogoliubov--Dirac--Fock model \cite{HAINZL2005TheMA,Hainzl2005,Hainzl2009}. Introduce
\begin{equation}
\nu_{\chi} := \mu_{J}- \mu_{\chi} = \left(\mathds{1}_{x\leq  0}-\chi^2 \right)\left(\mu_{\mathrm{per},L}-\mu_{\mathrm{per},R}\right)+ \left(\chi^2\rol + (1-\chi^2)\ror-\rho_{\chi} \right)+ \eta_{\chi}+v.
\label{nuchi}
\end{equation}
In order to guarantee that the perturbative state has a finite Coulomb energy, we need $D(\nu_\chi,\nu_\chi)<+\infty$. A sufficient condition is that $\nu_{\chi}$ belongs to $L^{6/5}(\bb R^3)$. This motivates the following $L^p$-estimate on $\eta_{\chi}$.
\begin{lemma}
The function $\eta_\chi$ defined in~\eqref{eta_chi_Def} belongs to $L^p\left(\bb R^3\right)$ for $1< p < 6$.
\label{eta_chi_Lemma}
\end{lemma}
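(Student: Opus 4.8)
The plan is to exploit the product structure of $\eta_\chi$ as given in~\eqref{eta_chi_Def}: it is a sum of two terms, one being $\partial_x^2(\chi^2)$ times $(\vl-\vr)$, the other being $2\partial_x(\chi^2)$ times $\partial_x(\vl-\vr)$, up to the constant $-\frac{1}{4\pi}$. The key structural observation is that $\chi\in\mathcal X$ is constant (equal to $1$ on the left, $0$ on the right) outside the slab $\left[-\frac{a_L}{2},\frac{a_R}{2}\right]\times\bb R^2$, so both $\partial_x^2(\chi^2)$ and $\partial_x(\chi^2)$ are supported inside this slab in the $x$-direction. Since $\chi\in C^2(\bb R^3)$ with $0\le\chi\le 1$, the functions $\partial_x(\chi^2)=2\chi\,\partial_x\chi$ and $\partial_x^2(\chi^2)=2(\partial_x\chi)^2+2\chi\,\partial_x^2\chi$ are bounded on $\bb R^3$. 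So each term of $\eta_\chi$ is a bounded function, compactly supported in $x$, times (a first derivative of) $\vl-\vr$.

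First I would reduce the $L^p$ bound to a local-in-$x$ estimate: because $\partial_x^2(\chi^2)$ and $\partial_x(\chi^2)$ vanish for $x\notin\left[-\frac{a_L}{2},\frac{a_R}{2}\right]$, we have $\|\eta_\chi\|_{L^p(\bb R^3)}^p \le C\int_{-a_L/2}^{a_R/2}\!\!\int_{\bb R^2}\big(|\vl-\vr|^p + |\partial_x(\vl-\vr)|^p\big)\,d\bm r\,dx$, where $C$ depends on $\|\chi\|_{C^2}$, $a_L$, $a_R$, $p$. Next, by Lemma~\ref{symmetryPotentialLemma} applied to each of the two periodic systems (whose nuclear densities satisfy the symmetry condition~\eqref{absenceDipole} by hypothesis in Section~\ref{mathdesJuncSec}), $\vl$ and $\vr$ belong to $L_{\mathrm{per},x}^q(\Gamma_L)$ and $L_{\mathrm{per},x}^q(\Gamma_R)$ respectively for all $1<q\le\infty$; in particular both are in $L^q$ over the bounded-in-$x$ slab for every $q\in(1,\infty]$. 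This handles the first term for all $p\in(1,\infty)$, and a fortiori for $p\in(1,6)$.

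For the second term I need $\partial_x(\vl-\vr)\in L^p_{\mathrm{loc}}(\bb R,L^p(\bb R^2))$ for $1<p<6$. Here I would use the Poisson equations $-\Delta\vl = 4\pi(\rol-\mu_{\mathrm{per},L})$ and $-\Delta\vr = 4\pi(\ror-\mu_{\mathrm{per},R})$ together with elliptic regularity. By Theorem~\ref{thmPeriodicExistence1}, $\rol\in L_{\mathrm{per},x}^s(\Gamma_L)$ for $1\le s\le 3$, and the smeared nuclear density $\mu_{\mathrm{per},L}$ is smooth and bounded; the same holds on the right. Thus the right-hand sides lie in $L^s$ over the slab for $s\le 3$. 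Combined with $\vl,\vr\in L^q$ for all $q<\infty$ from Lemma~\ref{symmetryPotentialLemma}, interior elliptic estimates ($W^{2,s}_{\mathrm{loc}}$ bounds) give $\vl,\vr\in W^{2,s}_{\mathrm{loc}}$ for $s\le 3$, and Sobolev embedding in $\bb R^3$ yields $\nabla\vl,\nabla\vr\in L^p_{\mathrm{loc}}$ with $\tfrac1p = \tfrac1s-\tfrac13$, i.e. $p$ up to the value associated with $s=3$, which is $p<6$ (the endpoint $W^{1,3}\hookrightarrow$ is the borderline; taking $s$ slightly below $3$ covers every $p<6$). Restricting to the compact-in-$x$ slab and using that we only integrate over a bounded $x$-interval, this gives $\partial_x(\vl-\vr)\in L^p$ over the slab for $1<p<6$, completing the estimate.

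The main obstacle is the second term: one must be careful that elliptic regularity is applied \emph{locally} in $\bm r$ as well as in $x$, since the potentials are only periodic in $x$ and decaying in $\bm r$, not globally in any nice space a priori — but the compact support of $\partial_x(\chi^2)$ is only in $x$, not in $\bm r$. However, this is not actually a problem: Lemma~\ref{symmetryPotentialLemma} already gives $\vl,\vr\in L^q$ \emph{globally} over the slab $\left[-\frac{a_L}{2},\frac{a_R}{2}\right]\times\bb R^2$ (the periodic $L^q$ norm controls the slab norm), and the source terms $\rol-\mu_{\mathrm{per},L}$ are likewise in $L^s$ globally over the slab; so a covering argument with interior elliptic estimates on unit balls, summed over a locally finite cover of the slab, produces the global-over-the-slab bound on $\nabla(\vl-\vr)$ in $L^p$, $1<p<6$. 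Once that quantitative point is pinned down, the lemma follows.
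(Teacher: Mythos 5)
Your proposal is correct in substance, but it takes a genuinely different route from the paper. You obtain the required bound on $\partial_x(\vl-\vr)$ from the Poisson equation via interior Calder\'on--Zygmund estimates summed over a bounded-overlap cover of the slab $[-a_L/2,a_R/2]\times\bb R^2$ (which is indeed the right way to handle the lack of compactness in the $\bm r$-direction), followed by Sobolev embedding. The paper never invokes elliptic regularity here: it differentiates the convolution formula $\vl=(\rol-\mu_{\mathrm{per},L})\star_{\Gamma_L}G_{a_L}$ directly, observes that the logarithmic part $T(\bm r)$ of the decomposition~\eqref{VpDecomposed} is $x$-independent so that only $\widetilde{G_{a_L}}\in L_{\mathrm{per},x}^q(\Gamma_L)$, $1\le q<2$ (Lemma~\ref{lemma1}), enters, bounds $|\partial_x\rol|\le\tfrac{1}{\pi}\sqrt{K_{\xi,L}}\sqrt{\rol}$ by Cauchy--Schwarz on the spectral representation of the minimizer (a Hoffmann--Ostenhof-type use of the kinetic energy bound), so that $(K_{\xi,L}\,\rol)^{1/2}\in L_{\mathrm{per},x}^p(\Gamma_L)$ for $1\le p\le 3/2$ by H\"older, and concludes with Young's convolution inequality; the exponent $6$ is intrinsic to that route, whereas in yours it is not (your method in principle reaches every $p\in(1,\infty)$). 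Both routes rest on Theorem~\ref{thmPeriodicExistence1} ($\rol\in L^s$, $1\le s\le 3$) and, for exponents near $1$, on the improved integrability of Lemma~\ref{symmetryPotentialLemma}. Two points to tidy in your write-up: the Sobolev bookkeeping is off ($s=3$ gives the critical embedding, with all finite exponents locally, while $p=6$ corresponds to $s=2$), and, since the slab is unbounded, membership in $L^{s^*}$ for one large $s^*$ does not imply membership in smaller $L^p$; you must choose $s$ for each target $p$ (take $s=p$ if $1<p\le 3$, and $s=3p/(3+p)\in(3/2,2)$ if $3<p<6$), which is admissible because $\rol-\mu_{\mathrm{per},L}\in L^s$ over the slab for $1\le s\le 3$ and $\vl\in L^s$ for all $s>1$. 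With that adjustment, and recalling that boundedness of $\partial_x(\chi^2)$ and $\partial_x^2(\chi^2)$ uses that $\chi$ is $\bm r$-translation invariant (a function of $x$ alone, as the paper intends), your covering argument closes the proof.
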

The proof can be read in Section~\ref{eta_chi_LemmaSec}. In view of Lemma~\ref{eta_chi_Lemma} and Proposition~\ref{propExist}, together with the fact that $ (\mathds{1}_{x\leq  0}-\chi^2 )\left(\mu_{\mathrm{per},L}-\mu_{\mathrm{per},R}\right)$ has compact support and $v$ belongs to $L^{6/5}\left(\bb R^3\right)$, it is easy to see that $\nu_{\chi}$ belongs to $L^{6/5}\left(\bb R^3\right)$, and hence to the Coulomb space $\mathcal{C}$ defined in~\eqref{CoulombSpace}. This means that the perturbative state generated by the nuclear density~$\nu_{\chi}$ has finite Coulomb energy. 
\begin{remark}
Remark that the integrability of $\vp$ provided by Lemma~\ref{symmetryPotentialLemma} is crucial to deduce Lemma~\ref{eta_chi_Lemma}.
\end{remark}
\subsection{Definition of the perturbative state}
\label{perturbativeSec}
In this section we define a perturbative state associated with the perturbative density $\nu_{\chi}$ following the ideas developed in~\cite{Cances2008}. We formally derive the rHF energy difference between the junction state $\gamma_{J} $ and the reference state $ \gamma_\chi$ by writing $\gamma_{J}  = \gamma_{\chi} + Q_\chi$ with $Q_{\chi}$ a trial density state. In view of~\eqref{semi-infinitesystem}, we formally have
\begin{equation}
\begin{aligned}
\mathcal{E}(\gamma_{J})-\mathcal{E}(\gamma_{\chi})&\overbrace{=}^{\mathrm{formally}} \mathrm{Tr}\left(-\frac{1}{2}\Delta\left(\gamma_{\chi} + Q_\chi\right)\right) + \frac{1}{2}D\left(\rho_{J} - \mu_{J},\rho_{J} - \mu_{J}\right)\\
&\qquad\qquad-\mathrm{Tr}\left(-\frac{1}{2}\Delta \gamma_{\chi}\right)- \frac{1}{2}D\left(\rho_{\chi} - \mu_{\chi},\rho_{\chi} - \mu_{\chi}\right)\\
&\,\, = \mathrm{Tr}\left(-\frac{1}{2}\Delta Q_{\chi}\right) + D(\rho_{\chi}-\mu_{\chi}, \rho_{Q_{\chi}})- D(\rho_{Q_{\chi}},\nu_{\chi})+ \frac{1}{2}D(\rho_{Q_{\chi}} ,\rho_{Q_{\chi}}) \\
&\qquad\qquad - D( \rho_{\chi}-\mu_{\chi}, \nu_{\chi})+\frac{1}{2}D(\nu_{\chi},\nu_{\chi})\\
& \,\,=\m{Tr}\left(\hc Q_{\chi}\right) - D(\rho_{Q_{\chi}}, \nu_{\chi})+\frac{1}{2}D(\rho_{Q_{\chi}} ,\rho_{Q_{\chi}}) - D( \rho_{\chi}-\mu_{\chi}, \nu_{\chi})+\frac{1}{2}D(\nu_{\chi},\nu_{\chi}).
\end{aligned}
\label{formalformallimit}
\end{equation}
We next give a mathematical definition of the terms in the last equality of~\eqref{formalformallimit}. We expect $Q_{\chi}$ to be a perturbation of the reference state~$\gamma_{\chi}$. More precisely, we expect $Q_{\chi}$ to be Hilbert-Schmidt. This is usually called the ``Shale-Stinespring" condition~\cite{ShaleStinespring}, see~\cite{LewinHDR, Solovej2007} for a detailed discussion. Moreover, we also expect the kinetic energy of $Q_{\chi}$ to be finite. Let $\Pi$ be an orthogonal projector on the Hilbert space $\mathfrak{H}$ such that both $\Pi$ and $\Pi^{\perp}:= 1-\Pi$ have infinite ranks. A self-adjoint compact operator $A$ on $\mathfrak{H}$ is said to be $\Pi$-trace class if $A\in \mathfrak{S}_2\left(\mathfrak{H}\right)$ and both $\Pi A\Pi$ and $\Pi^{\perp}A\Pi^{\perp}$ are in $\mathfrak{S}_1\left(\mathfrak{H}\right)$. For an operator $A$ we define its $\Pi$-trace as $$\mathrm{Tr}_{\Pi}(A):= \mathrm{Tr}\left(\Pi A\Pi\right) + \mathrm{Tr}\left(\Pi^{\perp}A\Pi^{\perp}\right),$$
and denote by $\mathfrak{S}_1^{\Pi}\left(\mathfrak{H}\right) $ the associated set of  $\Pi$-trace class operators. Since the reference state $\gamma_{\chi}$ defined in~\eqref{referenceState} is an orthogonal projector on $L^2(\bb R^3)$, we can define associated $\gamma_{\chi}$-trace class operators. For any trial density matrix $Q_{\chi}$, let us denote by $Q_{\chi}^{++}:= \gamma_{\chi}^{\perp}Q_{\chi}\gamma_{\chi}^{\perp}$ and $Q_{\chi}^{--} :=\gamma_{\chi}Q_{\chi}\gamma_{\chi}$, and introduce a Banach space of operators with finite $\gamma_{\chi}$-trace and finite kinetic energy as follows: 
\begin{align*}
\mathcal{Q}_{\chi}&:= \left\{Q_{\chi}\in \mathfrak{S}_1^{\gamma_{\chi}} (L^2(\bb R^3))\left|\, Q_{\chi}^* = Q_{\chi}, ~|\nabla |Q_{\chi} \in \mathfrak{S}_2(L^2(\bb R^3)) ,\right.\right.\\
&\qquad\qquad\qquad\qquad \left.\left. |\nabla |Q_{\chi}^{++}|\nabla |\in \mathfrak{S}_1(L^2(\bb R^3)), ~ |\nabla |Q_{\chi}^{--}|\nabla |\in \mathfrak{S}_1(L^2(\bb R^3))\right.\right\},
\end{align*}
equipped with its natural norm
$$ \left\lVert Q_{\chi} \right\lVert_{\mathcal{Q}_{\chi}}:= \lVert Q_{\chi}\rVert_{{\mathfrak{S}}_2} + \left\lVert Q_{\chi}^{++}\right\rVert_{{\mathfrak{S}}_1} +\left\lVert Q_{\chi}^{--}\right\rVert_{{\mathfrak{S}}_1} +\left\lVert| \nabla |Q_{\chi}\right\rVert_{{\mathfrak{S}}_2} +\left \lVert | \nabla |Q_{\chi}^{++}| \nabla |\right\rVert_{{\mathfrak{S}}_1} +\left\lVert | \nabla |Q_{\chi}^{--}| \nabla |\right\rVert_{{\mathfrak{S}}_1}.$$
By construction $\mathrm{Tr}_{\gamma_{\chi}}\left(Q_{\chi}\right)= \mathrm{Tr}\left(Q_{\chi}^{++}\right)+\mathrm{Tr}\left(Q_{\chi}^{--}\right)$. For $Q$ to be an admissible perturbation of the reference state $\gamma_{\chi}$, Pauli's principle requires that $0\leq \gamma_{\chi} + Q_{\chi}\leq 1$. Let us introduce the following convex set of admissible perturbative states:
$$\mathcal{K}_{\chi}:=\left\{Q_{\chi}\in \mathcal{Q}_{\chi} \mid - \gamma_{\chi}\leq Q_{\chi}\leq 1-\gamma_{\chi}\right\}.$$
Remark that $\mathcal{K}_{\chi}$ is not empty since it contains at least $0$. Note also that $\mathcal{K}_{\chi}$ is the convex hull of states in $\mathcal{Q}_{\chi}$ of the special form $\gamma- \gamma_{\chi}$, where $\gamma$ is an orthogonal projector~\cite{Cances2008}. Furthermore, for any $Q_{\chi}\in\mathcal{K}_{\chi}$ a simple algebraic calculation shows that $$Q_{\chi}^{++}\geq 0,\quad Q_{\chi}^{--}\leq 0,\quad 0\leq Q_{\chi}^2\leq Q_{\chi}^{++}-Q_{\chi}^{--}.$$ 
As mentioned in the previous section, the Fermi level $\epsilon_F$ can be an eigenvalue of $\hc$. Consider $N\in \bb N^*$ such that $\epsilon_F \in (\Sigma_{N,\chi},\Sigma_{N+1,\chi}]$, where $\Sigma_{N,\chi} < \Sigma_{N+1,\chi} $ are two eigenvalues of $\hc$ in the gap $(\Sigma_a,\Sigma_b)$, and let $\Sigma_{N,\chi} =\Sigma_a$ and $\Sigma_{N+1,\chi} = \Sigma_b$ whenever there is no such element. For any $\kappa \in (\Sigma_{N,\chi}, \epsilon_F)$, let us introduce the following rHF kinetic energy of a state $Q_{\chi}\in \mathcal{Q}_{\chi}$:
\[
\mathrm{Tr}_{\gamma_{\chi}}\left(\hc Q_{\chi}\right) := \mathrm{Tr}\left(\left|\hc- \kappa\right|^{1/2}\left(Q_{\chi}^{++}-Q_{\chi}^{--}\right)\left|\hc-\kappa\right|^{1/2}\right) + \kappa \mathrm{Tr}_{\gamma_{\chi}}\left(Q_{\chi}\right) .
 \]
By~\cite[Corollary 1]{Cances2008}, the above expression is independent of $\kappa \in  (\Sigma_{N,\chi}, \epsilon_F)$.
In view of the last line of~\eqref{formalformallimit} we introduce the following minimization problem 
\begin{equation}
E_{\kappa,\chi} = \inf_{Q_{\chi} \in \mathcal{K}_{\chi}}\left\{\mathcal{E}_{\chi}\left(Q_{\chi}\right)-\kappa\mathrm{Tr}_{\gamma_{\chi}}\left(Q_{\chi}\right)\right\},
\label{minimizationQ}
\end{equation}
where 
\begin{equation}
\mathcal{E}_{\chi}\left(Q_{\chi}\right):= \mathrm{Tr}_{\gamma_{\chi}}\left(\hc Q_{\chi}\right) - D\left(\rho_{Q_{\chi}},\nu_{\chi}\right)+  \frac{1}{2}D\left(\rho_{Q_{\chi}},\rho_{Q_{\chi}}\right).
\label{energyFunctional}
\end{equation}
\subsection{Properties of the junction system}
\label{ResultSec}
The following result shows that the minimization problem~\eqref{minimizationQ} is well posed and admits minimizers. 
\begin{prop}(\textbf{Existence of the perturbative ground state}) Assume that Assumption~\ref{as:1} holds. Then there exist minimizers for the problem~\eqref{minimizationQ}. There may be several minimizers, but they all share the same density. Moreover, any minimizer $\overline{Q}_{\chi}$ of (\ref{minimizationQ}) satisfies the following self-consistent equation:
\begin{equation}
\left\{ 
\begin{aligned}
\overline{Q}_{\chi} &= \mathds 1_{(-\infty, \epsilon_F)}(H_{\overline{Q}_{\chi}}) -\gamma_{\chi} +\delta_{\chi} ,\\
H_{\overline{Q}_{\chi}} &= \hc  + (\rho_{\overline{Q}_{\chi}}-\nu_{\chi})\star |\cdot|^{-1} ,
\end{aligned} 
\right.
\label{selfEquationQbar}
\end{equation}
where $\delta_{\chi}$ is a finite-rank self-adjoint operator satisfying $0\leq \delta_{\chi }\leq 1$ and $\mathrm{Ran}(\delta_{\chi}) \subseteq \mathrm{Ker}(H_{\overline{Q}_{\chi}}-\epsilon_F)$.
\label{PropExistMinimizer}
\end{prop}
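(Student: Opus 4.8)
The plan is to run the direct method of the calculus of variations on~\eqref{minimizationQ}, following closely the renormalized construction for crystals with a local defect in~\cite{Cances2008}; the only genuinely new point is to verify that all the ingredients used there for the periodic crystal Hamiltonian remain available for the non-periodic reference Hamiltonian $\hc$, namely that $\hc$ is bounded below, has a bounded potential $V_{\chi}$, and has a spectral gap containing both $\epsilon_F$ and the chosen $\kappa$.

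\textbf{Rewriting and coercivity.} First I would use the identity $\mathrm{Tr}_{\gamma_{\chi}}(\hc Q_{\chi}) = \mathrm{Tr}\big(|\hc-\kappa|^{1/2}(Q_{\chi}^{++}-Q_{\chi}^{--})|\hc-\kappa|^{1/2}\big) + \kappa\,\mathrm{Tr}_{\gamma_{\chi}}(Q_{\chi})$ to rewrite the quantity minimized in~\eqref{minimizationQ} as
\[
\mathcal{G}_{\chi}(Q_{\chi}) := \mathcal{E}_{\chi}(Q_{\chi}) - \kappa\,\mathrm{Tr}_{\gamma_{\chi}}(Q_{\chi}) = \mathrm{Tr}\big(|\hc-\kappa|^{1/2}(Q_{\chi}^{++}-Q_{\chi}^{--})|\hc-\kappa|^{1/2}\big) - D(\rho_{Q_{\chi}},\nu_{\chi}) + \tfrac12 D(\rho_{Q_{\chi}},\rho_{Q_{\chi}}),
\]
which is independent of $\kappa\in(\Sigma_{N,\chi},\epsilon_F)$. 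Since $\kappa$ lies in a genuine spectral gap of $\hc$ (no discrete or essential spectrum in $(\Sigma_{N,\chi},\Sigma_{N+1,\chi})$), one has $|\hc-\kappa|\geq g_{\chi}:=\mathrm{dist}(\kappa,\sigma(\hc))>0$. Combined with $Q_{\chi}^{++}\geq 0$, $Q_{\chi}^{--}\leq 0$ and the Cauchy--Schwarz bound $|D(\rho_{Q_{\chi}},\nu_{\chi})|\leq \tfrac14 D(\rho_{Q_{\chi}},\rho_{Q_{\chi}}) + D(\nu_{\chi},\nu_{\chi})$ (recall $D(\nu_{\chi},\nu_{\chi})<\infty$ since $\nu_{\chi}\in\mathcal{C}$), this yields
\[
\mathcal{G}_{\chi}(Q_{\chi})\;\geq\; g_{\chi}\,\mathrm{Tr}\big(Q_{\chi}^{++}-Q_{\chi}^{--}\big)\;+\;\tfrac14 D(\rho_{Q_{\chi}},\rho_{Q_{\chi}})\;-\;D(\nu_{\chi},\nu_{\chi}),
\]
so $E_{\kappa,\chi}$ is finite and $\leq\mathcal{G}_{\chi}(0)=0$.

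\textbf{Existence and uniqueness of the density.} For a minimizing sequence $(Q_n)\subset\mathcal{K}_{\chi}$, the coercivity estimate together with $0\leq Q_n^2\leq Q_n^{++}-Q_n^{--}$ bounds $\|Q_n\|_{\mathfrak{S}_2}$, $\|Q_n^{++}\|_{\mathfrak{S}_1}$, $\|Q_n^{--}\|_{\mathfrak{S}_1}$; writing $-\tfrac12\Delta=(\hc-\kappa)-(V_{\chi}-\kappa)$ with $V_{\chi}$ bounded gives $|\nabla|^2\leq 2|\hc-\kappa|+C$, hence $\mathrm{Tr}(|\nabla|Q_n^{\pm\pm}|\nabla|)$ is bounded, and the off-diagonal blocks of $Q_n$ and of $|\nabla|Q_n$ are controlled as in~\cite[Lemma~1]{Cances2008}; also $\rho_{Q_n}$ is well defined in $\mathcal{C}\cap L^2(\bb R^3)$ with norm controlled by $\|Q_n\|_{\mathcal{Q}_{\chi}}$. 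Up to extraction $Q_n\rightharpoonup\overline{Q}_{\chi}$ for the weak-$*$ topologies of the relevant Schatten classes and $\rho_{Q_n}\rightharpoonup\rho_{\overline{Q}_{\chi}}$ weakly in $\mathcal{C}$; the constraint $-\gamma_{\chi}\leq Q_{\chi}\leq 1-\gamma_{\chi}$ is weakly closed, so $\overline{Q}_{\chi}\in\mathcal{K}_{\chi}$. The kinetic term (a difference of traces of positive operators) and $\tfrac12 D(\rho_{Q_n},\rho_{Q_n})=\tfrac12\|\rho_{Q_n}\|_{\mathcal{C}}^2$ are weakly lower semicontinuous, while $D(\rho_{Q_n},\nu_{\chi})\to D(\rho_{\overline{Q}_{\chi}},\nu_{\chi})$ by weak convergence in $\mathcal{C}$; hence $\mathcal{G}_{\chi}(\overline{Q}_{\chi})\leq\liminf\mathcal{G}_{\chi}(Q_n)=E_{\kappa,\chi}$ and $\overline{Q}_{\chi}$ is a minimizer. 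For uniqueness of the density, note that $\mathcal{G}_{\chi}$ is convex on the convex set $\mathcal{K}_{\chi}$ (the kinetic and linear Coulomb terms are affine in $Q_{\chi}$) and strictly convex along any variation that changes $\rho_{Q_{\chi}}$, because $\rho\mapsto\tfrac12 D(\rho,\rho)$ is strictly convex; two minimizers $\overline{Q}_{\chi}^{(1)},\overline{Q}_{\chi}^{(2)}$ with $\rho_{\overline{Q}_{\chi}^{(1)}}\neq\rho_{\overline{Q}_{\chi}^{(2)}}$ would make $\tfrac12(\overline{Q}_{\chi}^{(1)}+\overline{Q}_{\chi}^{(2)})$ strictly decrease the energy, a contradiction.

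\textbf{Euler--Lagrange equation.} As $\mathcal{K}_{\chi}$ is convex and $\mathcal{G}_{\chi}$ is G\^ateaux differentiable (the map $Q_{\chi}\mapsto\rho_{Q_{\chi}}$ being linear and continuous into $\mathcal{C}$), first-order optimality at $\overline{Q}_{\chi}$ reads $\mathrm{Tr}_{\gamma_{\chi}}\big((H_{\overline{Q}_{\chi}}-\kappa)(Q_{\chi}-\overline{Q}_{\chi})\big)\geq 0$ for all $Q_{\chi}\in\mathcal{K}_{\chi}$, with $H_{\overline{Q}_{\chi}}=\hc+(\rho_{\overline{Q}_{\chi}}-\nu_{\chi})\star|\cdot|^{-1}$. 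Since $\rho_{\overline{Q}_{\chi}}-\nu_{\chi}\in\mathcal{C}$, the potential $(\rho_{\overline{Q}_{\chi}}-\nu_{\chi})\star|\cdot|^{-1}\in\mathcal{C}'$ is $\Delta$-form-compact, so $\sigma_{\mathrm{ess}}(H_{\overline{Q}_{\chi}})=\sigma_{\mathrm{ess}}(\hc)$ retains the gap around $\epsilon_F$, the spectral projectors of $H_{\overline{Q}_{\chi}}$ below $0$ are well defined, and $\mathds 1_{(-\infty,\epsilon_F)}(H_{\overline{Q}_{\chi}})-\gamma_{\chi}$ lies in the admissible class. The variational inequality says that $\overline{Q}_{\chi}$ minimizes the linear functional $Q_{\chi}\mapsto\mathrm{Tr}_{\gamma_{\chi}}((H_{\overline{Q}_{\chi}}-\kappa)Q_{\chi})$ over $\mathcal{K}_{\chi}$; a bathtub-principle argument — using the $\kappa$-independence of $\overline{Q}_{\chi}$ to move $\kappa$ to $\epsilon_F$ — then forces $\gamma_{\chi}+\overline{Q}_{\chi}$ to coincide with $\mathds 1_{(-\infty,\epsilon_F)}(H_{\overline{Q}_{\chi}})$ off $\mathrm{Ker}(H_{\overline{Q}_{\chi}}-\epsilon_F)$ and to be an arbitrary operator $\delta_{\chi}$ of finite rank with $0\leq\delta_{\chi}\leq 1$ on that kernel, which is exactly~\eqref{selfEquationQbar}; this step is a verbatim transcription of~\cite[proof of Theorem~2]{Cances2008}.

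\textbf{Main obstacle.} I expect the delicate part to be the existence step rather than the formally routine convexity and Euler--Lagrange arguments: one must turn the a priori bounds into a bound in the full norm of $\mathcal{Q}_{\chi}$ — controlling the off-diagonal blocks and ensuring that no trace-class mass escapes in the weak-$*$ limit — and verify that $Q_{\chi}\mapsto\rho_{Q_{\chi}}$ is continuous from $\mathcal{Q}_{\chi}$ into $\mathcal{C}$ so that the linear Coulomb term passes to the limit. A secondary point, specific to the junction rather than a periodic crystal with a defect, is that $\hc$ may carry discrete eigenvalues accumulating at the edges $\Sigma_a,\Sigma_b$ of the essential gap; this is dealt with by performing every spectral estimate at the value $\kappa$ chosen strictly between the consecutive eigenvalues $\Sigma_{N,\chi}<\Sigma_{N+1,\chi}$ of $\hc$ straddling $\epsilon_F$, where $|\hc-\kappa|$ is bounded below by a positive constant.
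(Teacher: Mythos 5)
Your proposal follows essentially the same route as the paper: the paper's proof is precisely a direct adaptation of the machinery of~\cite{Cances2008} (well-definedness of $\rho_{Q_\chi}$ by duality, the kinetic-energy equivalence of Lemma~\ref{lemmaTech} giving coercivity, then existence, uniqueness of the density by strict convexity in $\rho$, and the Euler--Lagrange/bathtub argument of~\cite[Theorem 2]{Cances2008} for the self-consistent equation), and you spell out the same steps with the same junction-specific ingredients (boundedness of $V_\chi$, the gap from Assumption~\ref{as:1} and Proposition~\ref{spectrapPpHchi}, $\nu_\chi\in\mathcal{C}$, and the choice of $\kappa\in(\Sigma_{N,\chi},\epsilon_F)$ to avoid eigenvalues accumulating at the gap edges). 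No substantive difference from the paper's argument.
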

The proof is a direct adaptation of several results obtained in~\cite{Cances2008}, see a short summary in Section~\ref{SecPropExistMinimizer} for completeness. Remark that $ (\rho_{\overline{Q}_{\chi}}-\nu_{\chi})\star |\cdot|^{-1}\in L^6(\R^3)$ by~\cite[Lemma 16]{CANCES2012887}, therefore $(1-\Delta)^{-1}(\rho_{\overline{Q}_{\chi}}-\nu_{\chi})\star |\cdot|^{-1}$ belongs to $\mathfrak{S}_6$ by the Kato--Seiler--Simon inequality~\eqref{KSS}, hence $(\rho_{\overline{Q}_{\chi}}-\nu_{\chi})\star |\cdot|^{-1}$ is $-\Delta$-compact hence $\hc$-compact by the boundedness of $V_{\chi}$, leaving the essential spectrum unchanged. Therefore in view of Proposition~\ref{spectrapPpHchi}, the following corollary holds.
\begin{corollary}
\label{corollary_1}
For any $\chi\in\mathcal{X}$, and $H_{\overline{Q}_{\chi}}$ solution of~\eqref{selfEquationQbar}, it holds $$\sigma_{\mathrm{ess}}\left(H_{\overline{Q}_{\chi}}\right)= \sigma_{\mathrm{ess}}\left(\hl\right)\bigcup  \sigma_{\mathrm{ess}}\left(\hr\right), \quad \sigma_{\mathrm{ess}}\left(H_{\overline{Q}_{\chi}}\right)\bigcap \,(-\infty, 0] \subseteq \sigma_{\mathrm{ac}}\left(H_{\overline{Q}_{\chi}}\right).$$
In particular, $ [0,+\infty)\subset\sigma_{\mathrm{ess}}\left(H_{\overline{Q}_{\chi}}\right)$ and $\sigma_{\mathrm{ess}}\left(H_{\overline{Q}_{\chi}}\right)$ does not depend on the shape of the cut-off function $\chi\in \mathcal{X}$ defined in~\eqref{chifunc}.
\end{corollary}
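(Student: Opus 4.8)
The strategy is to show that the perturbation $W_{\chi}:=H_{\overline{Q}_{\chi}}-\hc=(\rho_{\overline{Q}_{\chi}}-\nu_{\chi})\star|\cdot|^{-1}$ of $\hc$ is (i) relatively compact, which by Weyl's theorem transfers the essential spectrum of $\hc$ computed in Proposition~\ref{spectrapPpHchi} to $H_{\overline{Q}_{\chi}}$, and (ii) vanishing at spatial infinity with the same asymptotics as $V_{\chi}$, which lets one re-run the argument establishing the absolutely continuous part of Proposition~\ref{spectrapPpHchi} with $\hc$ replaced by $H_{\overline{Q}_{\chi}}$.

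For (i) -- the computation already indicated before the statement -- one uses that $\overline{Q}_{\chi}\in\mathcal{K}_{\chi}$ and $\nu_{\chi}\in L^{6/5}(\bb R^3)$ (shown above), so that $\rho_{\overline{Q}_{\chi}}-\nu_{\chi}\in\mathcal{C}$ and hence $W_{\chi}\in\mathcal{C}'\subset L^6(\bb R^3)$ by~\cite[Lemma~16]{CANCES2012887}. The Kato--Seiler--Simon inequality~\cite{Seiler1975} gives $(1-\Delta)^{-1}W_{\chi}\in\mathfrak{S}_6(L^2(\bb R^3))$, so $W_{\chi}$ is $(-\Delta)$-compact, and since $V_{\chi}\in L^{\infty}(\bb R^3)$ by Theorem~\ref{thmPeriodicExistence2} and Lemma~\ref{symmetryPotentialLemma}, it is $\hc$-compact. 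Weyl's theorem on the stability of the essential spectrum under relatively compact perturbations (see, e.g.,~\cite{ReeSim4}) then yields $\sigma_{\mathrm{ess}}(H_{\overline{Q}_{\chi}})=\sigma_{\mathrm{ess}}(\hc)$, and Proposition~\ref{spectrapPpHchi} gives in turn $\sigma_{\mathrm{ess}}(H_{\overline{Q}_{\chi}})=\sigma_{\mathrm{ess}}(\hl)\cup\sigma_{\mathrm{ess}}(\hr)$, the inclusion $[0,+\infty)\subset\sigma_{\mathrm{ess}}(H_{\overline{Q}_{\chi}})$, and the independence of this set on the choice of $\chi\in\mathcal{X}$.

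For the inclusion $\sigma_{\mathrm{ess}}(H_{\overline{Q}_{\chi}})\cap(-\infty,0]\subseteq\sigma_{\mathrm{ac}}(H_{\overline{Q}_{\chi}})$ I would write $H_{\overline{Q}_{\chi}}=-\frac{1}{2}\Delta+(V_{\chi}+W_{\chi})$ and check that $W_{\chi}$ tends to $0$ at spatial infinity, so that $V_{\chi}+W_{\chi}$ still converges to $\vl$ (resp.\ $\vr$) as $x\to-\infty$ (resp.\ $x\to+\infty$) and still vanishes in the $\bm r$-direction; then the construction used for $\hc$ in the proof of Proposition~\ref{spectrapPpHchi} applies with $\hc$ replaced by $H_{\overline{Q}_{\chi}}$. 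The decay of $W_{\chi}$ itself would follow by decomposing $\rho_{\overline{Q}_{\chi}}-\nu_{\chi}$ into the part $-\nu_{\chi}\in L^{6/5}(\bb R^3)$, the part $\chi^2\rol+(1-\chi^2)\ror-\rho_{\chi}$ which decays exponentially in $x$ away from the junction by Proposition~\ref{propExist}, and $\rho_{\overline{Q}_{\chi}}$, which obeys a bound $\lVert\rho_{\overline{Q}_{\chi}}\rVert_{L^2}\leq C\lVert\overline{Q}_{\chi}\rVert_{\mathcal{Q}_{\chi}}$ as in~\cite{Cances2008}, and then using that the Newtonian potential of such a density tends to $0$ at infinity.

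The step I expect to be the main obstacle is precisely this last one: one must verify that $W_{\chi}$ not only vanishes at infinity but belongs to whatever perturbation class -- short-range decay for a limiting absorption principle, Mourre-regularity, or trace-class relative to the pair $(H_{\overline{Q}_{\chi}},\hc)$ -- is required by the proof of the absolutely continuous part of Proposition~\ref{spectrapPpHchi}, the genuine difficulty being, here as for $\hc$ itself, that the junction surface is non-compact in the $\bm r$-direction so that Dirichlet-decoupling-type arguments are unavailable and only transverse decay can be exploited. If the bare $L^6$-bound on $W_{\chi}$ turns out to be insufficient, one can instead upgrade $W_{\chi}$, outside a bounded neighbourhood of the junction, to a perturbation decaying exponentially in $x$ by combining the exponential decay in Proposition~\ref{propExist} with a Combes--Thomas estimate applied to the spectral projector appearing in~\eqref{selfEquationQbar}, using the spectral gap around $\epsilon_{F}$ furnished by Assumption~\ref{as:1}, and then compare $\sigma_{\mathrm{ac}}(H_{\overline{Q}_{\chi}})$ with $\sigma_{\mathrm{ac}}(\hc)$ through complete wave operators.
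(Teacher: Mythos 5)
Your treatment of the main claims coincides exactly with the paper's own argument, which is entirely contained in the remark preceding the corollary: $(\rho_{\overline{Q}_{\chi}}-\nu_{\chi})\star|\cdot|^{-1}\in L^6(\bb R^3)$ by~\cite[Lemma 16]{CANCES2012887}, then $(1-\Delta)^{-1}(\rho_{\overline{Q}_{\chi}}-\nu_{\chi})\star|\cdot|^{-1}\in\mathfrak{S}_6$ by the Kato--Seiler--Simon inequality~\eqref{KSS}, hence the perturbation is $-\Delta$-compact and $\hc$-compact since $V_\chi$ is bounded, so the essential spectrum is unchanged and Proposition~\ref{spectrapPpHchi} gives $\sigma_{\mathrm{ess}}\left(H_{\overline{Q}_{\chi}}\right)=\sigma_{\mathrm{ess}}\left(\hl\right)\cup\sigma_{\mathrm{ess}}\left(\hr\right)$, the inclusion $[0,+\infty)\subset\sigma_{\mathrm{ess}}\left(H_{\overline{Q}_{\chi}}\right)$ and the independence of $\chi$. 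Where you diverge is the absolutely continuous part, and there the comparison is instructive: the paper does nothing beyond invoking Proposition~\ref{spectrapPpHchi}, whose formal statement does not even contain an a.c.\ assertion and whose proof (Section~\ref{spectrapPpHchiSec}) is a pure Weyl-sequence construction with no absolutely continuous analysis; so your plan to ``re-run the argument establishing the absolutely continuous part of Proposition~\ref{spectrapPpHchi}'' has, literally, no argument to re-run. Your underlying caution is well placed -- relative compactness preserves $\sigma_{\mathrm{ess}}$ but not $\sigma_{\mathrm{ac}}$, a point the paper glosses over -- and your fallback (exponential decay of the defect density via Proposition~\ref{propExist} and Combes--Thomas bounds on the spectral projector in~\eqref{selfEquationQbar}, then a limiting absorption principle or trace-class scattering/wave operators to compare with $\hc$) is a sensible route to an actual proof, but it is only sketched, not carried out; in that respect your proposal proves exactly what the paper proves, and flags, rather than fills, the same gap on the a.c.\ inclusion.
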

The result of Proposition~\ref{PropExistMinimizer} can be interpreted as follows: given a cut-off function $\chi$ belonging to the class $\mathcal{X}$ defined in~\eqref{chifunc}, we can construct a reference state $\gamma_{\chi} $ and a perturbative ground state $\overline{Q}_{\chi}$, the sum of which forms the ground state of the junction system. However it is artificial to introduce cut-off functions $\chi$ since there are infinitely many possible choices. In view of~\eqref{junctionNuclear}, the ground state of the junction system should not depend on the choice of cut-off functions. The following theorem shows that the electronic density of the junction system is indeed independent of the choice of the cut-off function $\chi$.
\begin{theorem}[\textbf{Independence of the reference state and uniqueness of ground state density}]
The ground state density of the junction system with nuclear density defined in~\eqref{junctionNuclear} under the rHF description is independent of the choice of the cut-off function $\chi\in \mathcal{X}$, \textit{i.e.}, the total electronic density $\rho_{J} = \rho_{\chi} + \rho_{Q_{\chi}}$ is independent of $\chi$, where $\rho_{\chi}$ is the density associated with the spectral projector $\gamma_{\chi}$ defined in~\eqref{referenceState}, and $\rho_{Q_{\chi}}$ is the unique density associated with the solution $Q_{\chi}$ of the minimization problem~\eqref{selfEquationQbar}.
\label{thm2}
\end{theorem}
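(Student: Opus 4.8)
The plan is to show that any junction ground state produced from one cut-off is, after subtracting the reference state attached to a second cut-off, a minimizer of the perturbative problem~\eqref{minimizationQ} associated with that second cut-off; uniqueness of the minimizing density in the latter problem (Proposition~\ref{PropExistMinimizer}) then forces the two total densities to coincide. Fix $\chi_1,\chi_2\in\mathcal{X}$, write $H_{\chi_i}:=-\tfrac12\Delta+V_{\chi_i}$ for the reference Hamiltonians~\eqref{Hchi}, $\gamma_{\chi_i}:=\mathds 1_{(-\infty,\epsilon_F)}(H_{\chi_i})$ for the reference states~\eqref{referenceState}, and $\gamma_{J,i}:=\gamma_{\chi_i}+\overline{Q}_{\chi_i}$ for the corresponding junction ground state, $\overline{Q}_{\chi_i}$ being a minimizer of~\eqref{minimizationQ} (its density is unique by Proposition~\ref{PropExistMinimizer}). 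The first step is an algebraic simplification: substituting the Poisson identity $V_{\chi_i}=(\rho_{\chi_i}-\mu_{\chi_i})\star|\cdot|^{-1}$ from~\eqref{ReferencePotential2} and the definition $\nu_{\chi_i}=\mu_J-\mu_{\chi_i}$ from~\eqref{nuchi} into the self-consistent equation~\eqref{selfEquationQbar} yields
\[
H_{\overline{Q}_{\chi_i}}=-\tfrac12\Delta+\bigl(\rho_{\gamma_{J,i}}-\mu_J\bigr)\star|\cdot|^{-1},\qquad
\gamma_{J,i}=\mathds 1_{(-\infty,\epsilon_F)}\bigl(H_{\overline{Q}_{\chi_i}}\bigr)+\delta_{\chi_i},
\]
so the full mean--field Hamiltonian of the junction depends on $\chi_i$ only through the total density $\rho_{\gamma_{J,i}}$.

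Next I would set $\widetilde{Q}:=\gamma_{J,1}-\gamma_{\chi_2}=\overline{Q}_{\chi_1}+(\gamma_{\chi_1}-\gamma_{\chi_2})$ and check that $\widetilde{Q}\in\mathcal{K}_{\chi_2}$. The Pauli bound $-\gamma_{\chi_2}\le\widetilde{Q}\le 1-\gamma_{\chi_2}$ is immediate since $\gamma_{J,1}$ is a genuine one-body density matrix, so the content is that $\widetilde{Q}\in\mathcal{Q}_{\chi_2}$. Because Hilbert--Schmidtness and finiteness of the kinetic energy are reference-independent and already known for $\overline{Q}_{\chi_1}$, and products of two Hilbert--Schmidt operators are trace class, this reduces to the statement that $\gamma_{\chi_1}-\gamma_{\chi_2}\in\mathcal{Q}_{\chi_2}$ (in particular that its $(++)$ and $(--)$ blocks relative to $\gamma_{\chi_2}$, and $|\nabla|$ times those blocks times $|\nabla|$, are trace class). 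This is where the analysis concentrates: one writes $\gamma_{\chi_1}-\gamma_{\chi_2}$ as a Cauchy integral of the resolvent difference of $H_{\chi_1}$ and $H_{\chi_2}$ over a contour separating $\epsilon_F$ from the essential spectra, the integrand being governed by $H_{\chi_1}-H_{\chi_2}=V_{\chi_1}-V_{\chi_2}=(\chi_1^2-\chi_2^2)(\vl-\vr)$. This potential is compactly supported in the $x$-direction (since $\chi_1$ and $\chi_2$ agree on $(-\infty,-a_L/2]\times\bb R^2$ and on $[a_R/2,+\infty)\times\bb R^2$) and, by Lemma~\ref{symmetryPotentialLemma} and Theorem~\ref{thmPeriodicExistence2}, is bounded, continuous, decaying in the $\bm r$-direction and $L^p(\bb R^3)$ for $1<p\le+\infty$; combined with Assumption~\ref{as:1} (so that $\epsilon_F$ sits in a genuine spectral gap of $H_{\chi_1}$ and $H_{\chi_2}$, up to finitely many in-gap eigenvalues absorbed into the $\delta_{\chi_i}$), a Combes--Thomas exponential decay estimate together with the Kato--Seiler--Simon inequality yields the required Schatten-class bounds.

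The third step is to verify that $\widetilde{Q}$ solves the self-consistent equation~\eqref{selfEquationQbar} relative to $\chi_2$. Setting $H_{\widetilde{Q}}:=H_{\chi_2}+(\rho_{\widetilde{Q}}-\nu_{\chi_2})\star|\cdot|^{-1}$ and using $\rho_{\widetilde{Q}}=\rho_{\gamma_{J,1}}-\rho_{\chi_2}$, $V_{\chi_2}=(\rho_{\chi_2}-\mu_{\chi_2})\star|\cdot|^{-1}$ and $\nu_{\chi_2}=\mu_J-\mu_{\chi_2}$, the same computation as in the first step gives $H_{\widetilde{Q}}=-\tfrac12\Delta+(\rho_{\gamma_{J,1}}-\mu_J)\star|\cdot|^{-1}=H_{\overline{Q}_{\chi_1}}$; hence $\mathds 1_{(-\infty,\epsilon_F)}(H_{\widetilde{Q}})-\gamma_{\chi_2}+\delta_{\chi_1}=\gamma_{J,1}-\gamma_{\chi_2}=\widetilde{Q}$, where $\delta_{\chi_1}$ is finite rank, $0\le\delta_{\chi_1}\le1$, with $\mathrm{Ran}(\delta_{\chi_1})\subseteq\mathrm{Ker}(H_{\widetilde{Q}}-\epsilon_F)$. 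Thus $\widetilde{Q}$ is a solution of the self-consistent equation~\eqref{selfEquationQbar} for $\chi_2$. Since the functional in~\eqref{minimizationQ}--\eqref{energyFunctional} is convex on $\mathcal{K}_{\chi_2}$ (the kinetic and cross terms are linear in the density matrix and $\tfrac12 D(\rho_{Q_\chi},\rho_{Q_\chi})$ is convex, strictly so in the density), the self-consistent equation~\eqref{selfEquationQbar} characterizes the minimizers of the $\chi_2$-problem, as in~\cite{Cances2008}; therefore $\widetilde{Q}$ is itself a minimizer of~\eqref{minimizationQ} for $\chi_2$.

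Finally, by the uniqueness of the minimizing density in Proposition~\ref{PropExistMinimizer}, $\rho_{\widetilde{Q}}=\rho_{\overline{Q}_{\chi_2}}$, whence $\rho_{\gamma_{J,1}}=\rho_{\chi_2}+\rho_{\widetilde{Q}}=\rho_{\chi_2}+\rho_{\overline{Q}_{\chi_2}}=\rho_{\gamma_{J,2}}$. Since $\chi_1,\chi_2\in\mathcal{X}$ were arbitrary, the total electronic density $\rho_J=\rho_{\chi}+\rho_{Q_{\chi}}$ does not depend on $\chi$. The main obstacle is the membership $\gamma_{\chi_1}-\gamma_{\chi_2}\in\mathcal{Q}_{\chi_2}$: because $V_{\chi_1}-V_{\chi_2}$ is only decaying, not compactly supported, in the $\bm r$-direction, controlling the trace norms of the diagonal blocks and of $|\nabla|(\gamma_{\chi_1}-\gamma_{\chi_2})^{++}|\nabla|$ and $|\nabla|(\gamma_{\chi_1}-\gamma_{\chi_2})^{--}|\nabla|$ requires combining carefully the $\bm r$-integrability of $\vl-\vr$ furnished by Lemma~\ref{symmetryPotentialLemma} with the exponential decay in the $x$-direction coming from the gap, while handling the in-gap eigenvalues that may accumulate at the edges $\Sigma_a$ and $\Sigma_b$.
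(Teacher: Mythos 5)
Your first half coincides with the paper's: the same test state $\widetilde Q=\gamma_{\chi_1}+Q_{\chi_1}-\gamma_{\chi_2}$ is used, and the membership $\widetilde Q\in\mathcal K_{\chi_2}$ is obtained exactly as in the paper, from the Cauchy resolvent representation of $\gamma_{\chi_1}-\gamma_{\chi_2}$, the fact that $V_{\chi_1}-V_{\chi_2}=(\chi_1^2-\chi_2^2)(\vl-\vr)$ is bounded, compactly supported in the $x$-direction and in $L^p$, and Kato--Seiler--Simon type bounds (cf.\ Lemmas~\ref{Lemma_diffProj} and~\ref{TracediffLemma}). The second half is genuinely different. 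The paper never checks that $\widetilde Q$ solves~\eqref{selfEquationQbar}; instead it argues symmetrically: setting $\Theta=\widetilde Q-Q_{\chi_2}$ and $\widetilde{\widetilde Q}=\gamma_{\chi_2}+Q_{\chi_2}-\gamma_{\chi_1}\in\mathcal K_{\chi_1}$, minimality of $Q_{\chi_2}$ and of $Q_{\chi_1}$ give two opposite inequalities whose left-hand sides are shown to be equal through $\m{Tr}\left((V_{\chi_2}-V_{\chi_1})\Theta\right)=D\left(\rho_\Theta,(\rho_{\chi_2}-\mu_{\chi_2})-(\rho_{\chi_1}-\mu_{\chi_1})\right)$, which only uses the Poisson equation~\eqref{ReferencePotential2} and the trace-class property $(V_{\chi_1}-V_{\chi_2})\Theta\in\mathfrak S_1$. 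This makes $\widetilde Q$ a minimizer of the $\chi_2$-problem using only the \emph{necessity} part of Proposition~\ref{PropExistMinimizer}.

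Your route instead identifies $H_{\widetilde Q}=H_{\overline{Q}_{\chi_1}}$ and concludes that $\widetilde Q$ is a minimizer because it satisfies~\eqref{selfEquationQbar}. Two points need repair. First, Proposition~\ref{PropExistMinimizer} only asserts that minimizers satisfy~\eqref{selfEquationQbar}; the converse (a solution of the self-consistent equation is a minimizer of~\eqref{minimizationQ}) is not proved in the paper and does not follow from your one-line convexity remark: you must show that~\eqref{selfEquationQbar} is equivalent to the first-order optimality (variational inequality) condition of the functional actually minimized, which carries the multiplier $\kappa\in(\Sigma_{N,\chi},\epsilon_F)$ while the spectral projector in~\eqref{selfEquationQbar} is taken at $\epsilon_F$; reconciling the two (absence of spectrum of $H_{\overline{Q}_\chi}$ in $[\kappa,\epsilon_F)$, or reformulating the problem with chemical potential $\epsilon_F$) and justifying the trace manipulations in $\mathcal Q_{\chi_2}$ is a genuine missing step, precisely the step the paper's two-sided energy comparison avoids. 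Second, your algebra uses $V_{\chi_i}=(\rho_{\chi_i}-\mu_{\chi_i})\star|\cdot|^{-1}$ and $(\rho_{\gamma_{J,i}}-\mu_J)\star|\cdot|^{-1}$, but these charges do not decay in the $x$-direction, so these convolutions are not defined; the identity $H_{\widetilde Q}=H_{\overline{Q}_{\chi_1}}$ must be derived at the level of differences: $(\rho_{\chi_1}-\mu_{\chi_1})-(\rho_{\chi_2}-\mu_{\chi_2})$ belongs to $L^{6/5}(\R^3)$ (compact $x$-support and exponential $\bm r$-decay of the periodic terms, plus $\eta_{\chi_1}-\eta_{\chi_2}$ via Lemma~\ref{eta_chi_Lemma}), it satisfies $-\Delta(V_{\chi_1}-V_{\chi_2})=4\pi\left[(\rho_{\chi_1}-\mu_{\chi_1})-(\rho_{\chi_2}-\mu_{\chi_2})\right]$, and a Liouville argument using the decay of both potentials removes the harmonic ambiguity. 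With these two repairs your argument can be completed, but as written it rests on an unproved converse of Proposition~\ref{PropExistMinimizer} and on formal convolution identities, whereas the paper's proof needs neither.
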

The proof can be read in Section \ref{thm2Sec}. Theorem~\ref{thm2} and Proposition~\ref{PropExistMinimizer} together imply that 
\begin{corollary}
\label{junction_gs}
The ground state of the junction system~\eqref{junctionNuclear} is of the form
$$\mathds 1_{(-\infty, \epsilon_F)}\left(\hc  + (\rho_{\overline{Q}_{\chi}}-\nu_{\chi})\star |\cdot|^{-1} \right) + \delta_\chi, \quad 0\leq \delta_{\chi} \leq 1,\quad \mathrm{Ran}(\delta_\chi) \subseteq \mathrm{Ker}\left(\hc  + (\rho_{\overline{Q}_{\chi}}-\nu_{\chi})\star |\cdot|^{-1}-\epsilon_F\right),$$ 
and its density is independent of the choice of $\chi$.
\end{corollary}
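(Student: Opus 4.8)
The strategy is to compare two admissible cut--offs $\chi_1,\chi_2\in\mathcal X$ by transporting the minimization problem~\eqref{minimizationQ} associated with $\chi_2$ onto the one associated with $\chi_1$. Writing $H_{\chi_i}:=-\tfrac12\Delta+V_{\chi_i}$ as in~\eqref{Hchi}, the basic observation is that the two reference potentials differ by
$$
W:=V_{\chi_1}-V_{\chi_2}=(\chi_1^2-\chi_2^2)\,(\vl-\vr),
$$
which vanishes outside the slab $[-a_L/2,a_R/2]\times\R^2$, hence is compactly supported in the $x$--direction; by Lemma~\ref{symmetryPotentialLemma} together with the exponential decay~\eqref{expodecay_density} of $\rol$ and $\ror$, the function $W$ (as well as $\nabla W$ and $\Delta W$) is $L^p$--integrable for $1<p\le+\infty$ and decays in the $\bm r$--direction. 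In particular $H_{\chi_1}=H_{\chi_2}+W$ with $W$ both $H_{\chi_2}$--compact and an admissible element of the Coulomb space $\mathcal C$, and by Proposition~\ref{spectrapPpHchi} the two reference Hamiltonians share the same essential spectrum and the common gap $(\Sigma_a,\Sigma_b)\ni\epsilon_F$ of Assumption~\ref{as:1}.

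The first main step is a transfer lemma: $\gamma_{\chi_1}-\gamma_{\chi_2}$ belongs to $\mathcal Q_{\chi_1}$, and symmetrically $\gamma_{\chi_2}-\gamma_{\chi_1}\in\mathcal Q_{\chi_2}$, so that the affine map $\Theta\colon Q\mapsto (\gamma_{\chi_2}-\gamma_{\chi_1})+Q$ is a bijection from $\mathcal K_{\chi_2}$ onto $\mathcal K_{\chi_1}$ whose inverse is the corresponding map for the pair $(\chi_2,\chi_1)$. I would prove this by representing the spectral projectors as Cauchy integrals $\gamma_{\chi_i}=\tfrac1{2\pi\rmi}\oint_{\mathscr C}(z-H_{\chi_i})^{-1}\,dz$ over a contour $\mathscr C$ enclosing $(-\infty,\epsilon_F)$ and crossing the real axis inside the gap, so that
$$
\gamma_{\chi_1}-\gamma_{\chi_2}=\frac1{2\pi\rmi}\oint_{\mathscr C}(z-H_{\chi_1})^{-1}\,W\,(z-H_{\chi_2})^{-1}\,dz,
$$
and by bounding the Schatten norms of this operator, and of its variants with one or two factors $|\nabla|$, using Kato--Seiler--Simon type estimates and the $L^p$--decay of $W$; the gap keeps $\mathscr C$ in the resolvent set away from the essential spectrum. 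The off--diagonal trace--class bounds $(\gamma_{\chi_1}-\gamma_{\chi_2})^{++},(\gamma_{\chi_1}-\gamma_{\chi_2})^{--}\in\mathfrak S_1$ then follow from the Hilbert--Schmidt bound by the usual algebraic identity $\gamma_{\chi_1}^\perp(\gamma_{\chi_1}-\gamma_{\chi_2})\gamma_{\chi_1}^\perp=-(\gamma_{\chi_1}^\perp\gamma_{\chi_2})(\gamma_{\chi_1}^\perp\gamma_{\chi_2})^\ast$, and similarly with gradients; the possibility that $\epsilon_F$ is an eigenvalue of $H_{\chi_1}$ or $H_{\chi_2}$ is absorbed into a finite--rank correction, exactly as the operator $\delta_\chi$ in Proposition~\ref{PropExistMinimizer}. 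Crucially $\gamma_{\chi_1}+\Theta(Q)=\gamma_{\chi_2}+Q$ for every $Q$, so the junction state, and hence its density $\rho_{\chi}+\rho_{Q}$, is the same for the two cut--offs.

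The second main step is to show that the renormalized energy functionals differ along $\Theta$ by a constant independent of $Q$:
$$
\Big[\mathcal E_{\chi_1}(\Theta(Q))-\kappa\,\mathrm{Tr}_{\gamma_{\chi_1}}(\Theta(Q))\Big]-\Big[\mathcal E_{\chi_2}(Q)-\kappa\,\mathrm{Tr}_{\gamma_{\chi_2}}(Q)\Big]=c(\chi_1,\chi_2),
$$
for $\kappa$ chosen below $\epsilon_F$ but above the largest relevant gap eigenvalue of both $H_{\chi_1}$ and $H_{\chi_2}$. This is checked term by term in~\eqref{energyFunctional} after inserting $\Theta(Q)=Q+(\gamma_{\chi_2}-\gamma_{\chi_1})$, $\rho_{\Theta(Q)}=\rho_Q+(\rho_{\chi_2}-\rho_{\chi_1})$, $H_{\chi_1}=H_{\chi_2}+W$ and $\nu_{\chi_1}-\nu_{\chi_2}=\mu_{\chi_2}-\mu_{\chi_1}$. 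Every term decomposes into a piece depending only on the fixed data $\gamma_{\chi_2}-\gamma_{\chi_1}$, $W$, $\rho_{\chi_1}-\rho_{\chi_2}$, $\nu_{\chi_i}$ — all sufficiently integrable, by Proposition~\ref{propExist}, Lemma~\ref{eta_chi_Lemma} and the transfer lemma, for the resulting traces and Coulomb energies to be finite — plus, a priori, a piece linear in $\rho_Q$; and the linear--in--$\rho_Q$ contributions, coming respectively from $\mathrm{Tr}(WQ)=D\!\left(\rho_{\chi_1}-\mu_{\chi_1}-\rho_{\chi_2}+\mu_{\chi_2},\rho_Q\right)$ (using $\rho_{\chi_i}-\mu_{\chi_i}=-\tfrac1{4\pi}\Delta V_{\chi_i}$ from~\eqref{ReferencePotential2}), from the change of $\nu_\chi$ in $D(\rho_Q,\nu_\chi)$, and from the cross term $D(\rho_Q,\rho_{\chi_2}-\rho_{\chi_1})$ in the quadratic part, cancel identically. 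The same cancellation must be carried out for the relative kinetic energy once it is written with the $|H_{\chi_i}-\kappa|^{1/2}$ regularization of~\cite{Cances2008}, keeping track of the mismatch between the $\gamma_{\chi_1}$-- and $\gamma_{\chi_2}$--gradings of operators; this is the most delicate bookkeeping, and is again controlled by the transfer lemma.

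Granting these two steps, $\Theta$ is a bijection $\mathcal K_{\chi_2}\to\mathcal K_{\chi_1}$ that maps minimizers of~\eqref{minimizationQ} for $\chi_2$ to minimizers for $\chi_1$ while preserving the junction density. If $\overline Q_{\chi_2}$ is a minimizer for $\chi_2$, then $\Theta(\overline Q_{\chi_2})$ is a minimizer for $\chi_1$; by the uniqueness of the minimizing density within the $\chi_1$--problem (Proposition~\ref{PropExistMinimizer}) one has $\rho_{\Theta(\overline Q_{\chi_2})}=\rho_{\overline Q_{\chi_1}}$, hence
$$
\rho_{\chi_1}+\rho_{\overline Q_{\chi_1}}=\rho_{\chi_1}+\rho_{\Theta(\overline Q_{\chi_2})}=\rho_{\chi_2}+\rho_{\overline Q_{\chi_2}},
$$
which is the claimed independence of $\rho_J$ from $\chi$; Corollary~\ref{junction_gs} then follows since $H_{\overline Q_\chi}=-\tfrac12\Delta+(\rho_J-\mu_J)\star|\cdot|^{-1}$ no longer involves $\chi$. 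I expect the main obstacle to be the transfer lemma itself, i.e. establishing that $\gamma_{\chi_1}-\gamma_{\chi_2}$ lies in $\mathcal Q_{\chi_1}$ — in particular the two $\mathfrak S_1$--conditions on the $|\nabla|$--weighted $\pm\pm$ blocks — which genuinely requires the compact $x$--support, the $\bm r$--decay and the $L^p$--integrability of $W$ provided by Lemma~\ref{symmetryPotentialLemma}, rather than mere boundedness, together with the correct accounting of a possible eigenvalue of $H_{\chi_1}$ or $H_{\chi_2}$ at $\epsilon_F$.
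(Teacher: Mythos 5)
Your proposal follows essentially the same route as the paper: you transfer the minimizer between the two cut-off problems via the shift by $\gamma_{\chi_2}-\gamma_{\chi_1}$ (the paper's test state $\widetilde{Q}=\gamma_{\chi_1}+Q_{\chi_1}-\gamma_{\chi_2}$), prove the same transfer lemma on the projector difference (the paper's Lemma~\ref{Lemma_diffProj} plus the equality of $\Pi$-traces), cancel the linear terms through the Poisson identity relating $\mathrm{Tr}(WQ)$, $\nu_{\chi_i}$ and $\rho_{\chi_i}$ exactly as in~\eqref{intermE1}, and conclude by the uniqueness of the minimizing density from Proposition~\ref{PropExistMinimizer}. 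The only differences are cosmetic — you phrase the energy comparison as a $Q$-independent constant shift along the affine bijection, where the paper uses the two-sided minimality inequalities~\eqref{eqsup}--\eqref{eqinf} with the specific minimizers, and your remark that an eigenvalue at $\epsilon_F$ forces a finite-rank correction in the transfer lemma is unnecessary (the contour is simply moved slightly below $\epsilon_F$) — so the proposal is correct and matches the paper's argument.
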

Remark that an extension to junctions of 2D materials may be done by similar constructions as above, see~\cite{cao} for more details.
\section*{Acknowledgements}
I would like to express my deep gratitude to \'{E}ric Canc\`{e}s and Gabriel Stoltz for many useful discussions and advice for the article, as well as their critical readings of the manuscript.
\pagebreak
\section{Proofs of the results}
In order to simplify the notation, in Section~\ref{KssGammaLemmaSec} to~\ref{eta_chi_LemmaSec} when treating the quasi 1D periodic system we denote by $\mathfrak{S}_p$ the Schattern class $\mathfrak{S}_p(\lpg)$ for $1\leq p \leq +\infty$. Unless otherwise specified, starting from Section~\ref{spectrapPpHchiSec} we use $\mathfrak{S}_p$ instead of $\mathfrak{S}_p(L^2(\bb R^3))$ for the proofs of the junction system. 

First of all, let us recall the following Kato--Seiler--Simon (KSS) inequality:
\begin{lemma}(\text{\cite[Lemma 2.1]{Seiler1975}})
Let $2\leq p\leq  \infty$. For $g,f$ belonging to $L^p(\bb R^3)$, the following inequality holds:
\begin{equation}
\lVert f(-\rmi \nabla)g(x)\rVert_{\mathfrak{S}_p\left(L^2(\bb R^3)\right)}\leq (2\pi)^{-3/p}\lVert g\rVert_{L^p(\bb R^3)}\lVert f\rVert_{L^p(\bb R^3)}.
\label{KSS}
\end{equation}
\end{lemma}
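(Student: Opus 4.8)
The plan is to prove the two endpoint cases $p=2$ and $p=\infty$ by hand and then recover the range $2<p<\infty$ by complex interpolation on the Schatten scale; I write $d$ for the dimension, the argument being dimension-independent (here $d=3$), and I note that $g$ acts by multiplication while $f(-\rmi\nabla)$ is the Fourier multiplier by $f$. For $p=\infty$ the bound is immediate: $\norm{f(-\rmi\nabla)}_{\mathcal L(L^2(\R^d))}=\norm{f}_{L^\infty}$ by Plancherel, the multiplication operator by $g$ has operator norm $\norm{g}_{L^\infty}$, and submultiplicativity of the operator norm gives $\norm{f(-\rmi\nabla)g}_{\mathfrak S_\infty}\le\norm{f}_{L^\infty}\norm{g}_{L^\infty}$, which is \eqref{KSS} at $p=\infty$ since $(2\pi)^{-d/\infty}=1$. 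For $p=2$, with the Fourier normalization fixed in the paper one checks that $f(-\rmi\nabla)g$ has integral kernel $K(x,y)=(2\pi)^{-d/2}\,\widecheck f(x-y)\,g(y)$, whence
\begin{equation*}
\norm{f(-\rmi\nabla)g}_{\mathfrak S_2}^2=\int_{\R^d}\!\int_{\R^d}|K(x,y)|^2\,dx\,dy=(2\pi)^{-d}\norm{\widecheck f}_{L^2}^2\,\norm{g}_{L^2}^2=(2\pi)^{-d}\norm{f}_{L^2}^2\,\norm{g}_{L^2}^2,
\end{equation*}
by Plancherel, so \eqref{KSS} holds (with equality) at $p=2$.

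For the interpolation step it suffices, by homogeneity and density of simple functions in $L^p$ for $p<\infty$, to treat simple functions $f,g$ normalized by $\norm{f}_{L^p}=\norm{g}_{L^p}=1$. Set $\theta:=1-2/p\in[0,1]$ and introduce the analytic families
\begin{equation*}
F(z):=|f|^{p(1-z)/2}\,\tfrac{f}{|f|},\qquad G(z):=|g|^{p(1-z)/2}\,\tfrac{g}{|g|},\qquad 0\le\Re z\le1,
\end{equation*}
so that $F(\theta)=f$ and $G(\theta)=g$. On the line $\Re z=0$ one has $\norm{F(z)}_{L^2}=\norm{G(z)}_{L^2}=1$, so the $p=2$ bound gives $\norm{F(z)(-\rmi\nabla)G(z)}_{\mathfrak S_2}\le(2\pi)^{-d/2}$; on $\Re z=1$ one has $\norm{F(z)}_{L^\infty},\norm{G(z)}_{L^\infty}\le1$, so the $p=\infty$ bound gives $\norm{F(z)(-\rmi\nabla)G(z)}_{\mathfrak S_\infty}\le1$. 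Since $z\mapsto F(z)(-\rmi\nabla)G(z)$ is analytic and of admissible growth on the strip, and $[\mathfrak S_2,\mathfrak S_\infty]_\theta=\mathfrak S_p$ with $1/p=(1-\theta)/2$, Stein's interpolation theorem yields at $z=\theta$
\begin{equation*}
\norm{f(-\rmi\nabla)g}_{\mathfrak S_p}\le\big((2\pi)^{-d/2}\big)^{1-\theta}\cdot 1^{\theta}=(2\pi)^{-d(1-\theta)/2}=(2\pi)^{-d/p},
\end{equation*}
using $1-\theta=2/p$; undoing the normalization gives \eqref{KSS}.

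The main obstacle is not the computation but the technical bookkeeping in the interpolation: one must verify that $z\mapsto F(z)(-\rmi\nabla)G(z)$ is a $\mathfrak S_\infty$-valued analytic function on the open strip, continuous up to the boundary, with at most exponential growth in $|\Im z|$, so that the operator-valued three-lines lemma underlying Stein interpolation applies, and one must invoke the identification of the complex interpolation space between the Hilbert--Schmidt and bounded-operator ideals. Restricting first to simple $f,g$ makes $F(z)$ and $G(z)$ finite linear combinations of the form $(\text{exponential in }z)\times(\text{fixed function in }L^2\cap L^\infty)$, which renders these verifications routine, and the passage to general $f,g\in L^p$ then follows by the density argument above. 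Alternatively one may simply quote the original reference~\cite{Seiler1975}; the sketch above is recorded for completeness.
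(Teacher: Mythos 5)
Your proposal is correct and follows essentially the same route the paper relies on: the paper itself gives no proof of \eqref{KSS} (it simply cites~\cite{Seiler1975}), and its periodic adaptation, Lemma~\ref{KSSGammalemma}, is proved in exactly your way — an explicit Hilbert--Schmidt kernel computation at $p=2$, the trivial multiplier bound at $p=\infty$, and Schatten-class (Stein) interpolation in between. The only points you leave as routine (analyticity/growth hypotheses for the operator-valued three-lines argument for simple $f,g$, and identifying the limit operator when passing from simple to general $L^p$ functions) are indeed standard and do not affect the validity of the argument.
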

\subsection{Proof of Lemma~\ref{KSSGammalemma}}
\label{KssGammaLemmaSec}
The proof is an easy adaptation of the proof of the classical Kato--Seiler--Simon inequality~\eqref{KSS} by replacing the Fourier transform with the mixed Fourier transform $\F$. Let us prove separately~\eqref{KSSGamma} for $p=2$ and $p=+\infty$. The conclusion then follows by an interpolation argument. We use the following kernel representation during the proofs. For $\bm x = (x,\bm r)$ and $\bm y = (y, \bm r')$ belonging to $\Gamma$, symbolic calculus shows that the Schwartz kernel $K_{f,\xi}\left((x,\bm r),(y, \bm r')\right)$ of the operator $f(-\rmi \nabla_{\xi})$ acting on $\lpg$ formally reads
\begin{equation}
\begin{aligned}
K_{f,\xi} \left((x,\bm r),(y, \bm r')\right) &= \frac{1}{2\pi}\left(\F^{-1}\circ\tau_{-\xi}^x f\right)\left((x-y),(\bm r - \bm r')\right)\\
&=\frac{1}{4\pi^2}\sum_{n\in\bb Z}\int_{\bb R^2} \rme^{\rmi\left( 2\pi n(x-y)+ \bm k\cdot(\bm r-\bm r')\right)} f\left(2\pi n +\xi, \bm k\right)\, d\bm k.
\end{aligned}
\label{KernelRepresentation}
\end{equation}
Let $p=2$. In view of the isometry identity~\eqref{Isometry_F}, the convolution equality~\eqref{eq:convolutionF} and the kernel representation~\eqref{KernelRepresentation}, the following estimate holds
\begin{align*}
\norm{f(-\rmi \nabla_{\xi})&g}_{\mathfrak{S}_2}^2  = \frac{1}{4\pi^2}\int_{\Gamma\times \Gamma} \left|\left(\F^{-1}\circ\tau_{-\xi}^x f\right)(\bm x -\bm y)g(\bm y)\right|^2 \, d\bm x \,d\bm y\\
&\leq \frac{1}{4\pi^2}\int_{\Gamma} \left| g(\bm y)\right|^2\left(\int_{\Gamma}\left|\left(\F^{-1}\circ\tau_{-\xi}^x f\right)(\bm x -\bm y)\right|^2 \, d\bm x \right)\,d\bm y\\
&= \frac{1}{4\pi^2}\int_{\Gamma} \left| g(\bm y)\right|^2\sum_{n\in \bb Z}\left(\int_{\bb R^2}\left|f\left(2\pi n+\xi, \bm k\right) \right|^2\,d\bm k\right)\,d\bm y\\
& =\frac{1}{4\pi^2}\norm{g}_{\lpg}^2\sum_{n\in \bb Z}\norm{f\left( 2\pi n +\xi, \cdot\right)}_{L^2(\bb R^2)}^2,
\end{align*}
which proves~\eqref{KSSGamma} for $p=2$. For $p=+\infty$, it suffices to prove that for any $\phi \in \lpg $, 
$$\norm{f(-\rmi \nabla_{\xi})g\phi}_{\lpg} \leq \norm{g}_{L_{\mathrm{per},x}^{\infty}(\Gamma)}\sup_{n\in \bb Z} \norm{f\left((2\pi n+\xi, \cdot)\right)}_{L^{\infty}(\bb R^2)} \norm{\phi}_{\lpg}. $$
By arguments similar to those used when $p=2$, we obtain by the isometry~\eqref{Isometry_F} that
\begin{align*}
\norm{f(-\rmi \nabla_{\xi})&g\phi}_{\lpg}^2=\sum_{n\in \bb Z}\int_{\bb R^2}\left|\F\left(g\phi\right)(n, \bm k)\right|^2\left|f\left(\left(2\pi n +\xi\right)^2+\bm k^2\right)\right|^2 \,d\bm k \\
&\leq \sup_{n\in \bb Z} \norm{f\left((2\pi n+\xi, \cdot)\right)}_{L^{\infty}(\bb R^2)}^2 \norm{g\phi}_{\lpg}^2 \\
&\leq   \lVert g\rVert_{L_{\mathrm{per},x}^{\infty}(\Gamma)}^2\sup_{n\in \bb Z} \norm{f\left((2\pi n+\xi, \cdot)\right)}_{L^{\infty}(\bb R^2)}^2  \norm{\phi}_{\lpg}^2,
\end{align*}
which is~\eqref{KSSGamma} for $ p=+\infty$. Therefore, following the same interpolation arguments as in~\cite[Lemma 2.1]{Seiler1975} we obtain~\eqref{KSSGamma} for $2\leq p\leq +\infty$.

\subsection{Proof of Lemma \ref{lemma1}}
\label{lemma1Sec}
For $n\in \bb Z$, let us consider the 2D equation:
\[
-\Delta_{\bm r} G_n + 4\pi^2 n^2 G_n  = 2\pi \delta_{\bm r = 0}\quad \text{in   } \mathscr{S}'(\mathbb{R}^2). 
\]
It is well known (see for example~\cite{lieb2001analysis, Lahbabi2014}) that the solution of the above equation is 
\[
G_n(|\bm r|) =\left\{
\begin{aligned}
&-\log (|\bm r|), \quad n\equiv 0,\\
&K_0\left(2\pi |n||\bm r|\right),\quad  |n|\geq 1,
\end{aligned}
\right.
\]
where $K_0(\alpha):= \int_0^{+\infty} \rme^{ -\alpha\cosh(t)}\,dt$ is the modified Bessel function of the second kind. Therefore the Green's function $G(x,\bm r)$ defined in~\eqref{defGreenFunc} can be rewritten as 
\begin{equation}
\label{proof:Greenfunc}
G(x,\bm r) =2 \sum_{n\in \mathbb{Z}}\rme^{2\rmi \pi n x}G_n(\bm r) \in\Spx'(\Gamma).
\end{equation}
Applying the Laplacian operator to both sides, 
\[
-\Delta G(x, \bm r) =4\pi  \sum_{n\in \mathbb{Z} }\rme^{2\rmi \pi n x}\delta_{\bm r =0}  \in\Spx'(\Gamma).
\]
On the other hand, by the Poisson summation formula $ \sum_{n \in \mathbb{Z}}\delta_{x=n} =\sum_{n\in \mathbb{Z}}\rme^{2\rmi \pi n x}  \in \mathscr{S}'(\bb R)$, we conclude that the Green's function $G(x,\bm r)$ defined in~\eqref{defGreenFunc} satisfies $$-\Delta G(x, \bm r)  = 4\pi  \sum_{n\in \mathbb{Z} }\delta_{(x, \bm r )=(n, 0)} .$$
Taking the Fourier transform $\F$ on both sides of~\eqref{proof:Greenfunc} we obtain that
$$\F G (n,\bm k ) = \frac{2}{4\pi^2n^2+|\bm k|^2} \in \mathscr{S}'(\bb R^3).$$
Let us now give some estimates on $\widetilde{G}$ defined in~\eqref{defGreenFunc}. Recall that there exist two positive constants $C_0$ and $C_1$ such that~\cite{Duffin1971}
\[
0\leq K_0(\alpha)\leq  \left\{
\begin{aligned}
C_0 \left|\log(\alpha)\right|,\qquad &\text{ when } \alpha \leq 2\pi, \\
C_{1} \rme^{-\alpha}(\pi/2\alpha)^{1/2}, &\text{ when } \alpha >  2\pi.
\end{aligned}
\right.
\]
For $|\bm r|> 1$, it holds that
\begin{equation}
\begin{aligned}
\left|\widetilde{G} (x,\bm r)\right| &\leq 2C_1 \sum_{n =1}^{+\infty} \frac{\rme^{-2\pi n |\bm r|}}{\sqrt{n|\bm r|}}\leq \frac{2C_1}{\sqrt{|\bm r|}} \frac{\rme^{- 2\pi |\bm r|}}{1-\rme^{-2\pi |\bm r|}} \leq \frac{2C_1}{1-\rme^{-2\pi }} \frac{\rme^{- 2\pi |\bm r|}}{\sqrt{|\bm r|}}.
\end{aligned}
\label{estimateGreenFunc1}
\end{equation}
For $|\bm r| \leq 1$ fixed, there exists $N \geq 1$ such that $N \leq \frac{1}{ |\bm r|} < N+1$. In particular, for $n >N+1$ we have $2\pi n|\bm r| >2\pi$. There exists therefore a positive constant $C$ such that 
\begin{equation}
\begin{aligned}
\left|\widetilde{G} (x, \bm r)\right| &\leq 4 C_{0} \left|\sum_{n=1}^N \log\left(2\pi n |\bm r|\right)\right| + 2C_{1} \sum_{n=N+1}^{\infty} \frac{ \rme^{-2\pi n |\bm r|}}{ \sqrt{n|\bm r|}} \leq  4C_{0} \left|\int_1^{\frac{1}{ |\bm r|}} \log\left(2\pi t |\bm r|\right)dt \right| + 2C_{1}\int_{2\pi}^{\infty} \rme^{-t}\,dt \\
&=  \frac{4C_{0}}{ |\bm r|}\left| \log(2\pi)- 1 -  |\bm r|  + |\bm r|\log( 2 \pi |\bm r|)\right|+  2C_{1}\rme^{-2\pi}\leq  \frac{C}{|\bm r|}.
\end{aligned}
\end{equation}
Together with~\eqref{estimateGreenFunc1} we deduce that $\widetilde{G} (x, \bm r)\in L_{\mathrm{per},x}^p(\Gamma)$ for $ 1\leq p <2$. Note that for all $ \bm r \in \bb R^2\backslash{\{ 0\}}$, it holds $\int_{-1/2}^{1/2}\overline{G}(x,\bm r)\,dx \equiv 0$. Consider, for $\bm r\neq 0$, $$\overline{G}(x,\bm r) =  \sum_{n\in \bb Z}\left(\frac{1}{\sqrt{(x-n)^2+|\bm r|^2}} - \int_{-1/2}^{1/2}\frac{1}{\sqrt{(x-y-n)^2+|\bm r|^2}}\,dy\right).$$
From~\cite[Equation (1.8)]{blanc2000}, $$-\Delta \left(\overline{G}(x,\bm r) -2\log\left(\left|\bm r \right|\right)\right)= 4\pi  \sum_{k \in \mathbb{Z}}\delta_{(x,\bm r)=(k, 0)} \quad \in \mathscr{S}'(\bb R^3),$$
with $\overline{G}(x,\bm r) = \mathcal{O}(\frac{1}{|\bm r|})$ when $|\bm r| \to \infty$ by~\cite[Lemma 2.2]{blanc2000}. Denoting by $u(x,\bm r)= \widetilde{G}(x,\bm r)-\overline{G}(x,\bm r)$ we therefore obtain that  $-\Delta u(x,\bm r)\equiv 0$. As $u(x,\bm r)$ belongs to $L_{\mathrm{loc}}^1(\bb R^3)$, by Weyl's lemma for the Laplace equation we obtain that $u(x,\bm r)$ is $C^{\infty}(\bb R^3)$. On the other hand, by the decay properties of $\widetilde{G}$ and $\overline{G}$, we deduce that $\left|u(\cdot,\bm r)\right| \to 0$ when $|\bm r| \to \infty$ uniformly in $x$, hence by the maximum modulus principle for harmonic functions we can conclude that $u\equiv 0$, hence $\widetilde{G}(x,\bm r)=\overline{G}(x,\bm r)$.
\subsection{Proof of Lemma~\ref{F_GammaNotEmpty}}
\label{F_GammaNotEmptySec}
We prove this lemma by an explicit construction of a density matrix belonging to $\mathcal{F}_{\Gamma} $.
Consider a cut-off function $\varrho \in C_c^{\infty}(\Gamma)$ such that $0\leq \varrho \leq 1$ and $\int_{\Gamma} \varrho^2 = 1$, and define $\varrho_{\mathrm{per}} = \sum_{n\in \bb N}\rho(\cdot-n)$. Let $\omega\geq 0$ be a parameter to be made precise later. Define 
\begin{equation}
\gamma_{\omega} =\B^{-1}\left(\int_{\Gamma^*}\gamma_{\omega,\xi}\,\frac{d\xi}{2\pi}\right)\B,\quad \gamma_{\omega,\xi}=A_{\omega,\xi}A_{\omega,\xi}^*,\quad A_{\omega,\xi}:=
\mathds 1_{[0,\omega]}\left( -\Delta_{\xi}\right)\varrho_{\mathrm{per}} .
\label{test_state_F}
\end{equation}
It is easy to see that $0\leq \gamma_{\omega}\leq 1$, and that $\tau_{k}^x\gamma_{\omega}  =\gamma_{\omega}\tau_{k}^x$ for all $k\in \bb Z$ by construction. Let us prove that the kinetic energy per unit cell of $\gamma_{\omega}$ is finite. Denote by $F_{\xi}(n,\bm k)=  \sqrt{\left(2\pi n +\xi\right)^2+\bm k^2}$. By the Kato--Seiler--Simon type inequality~\eqref{KSSGamma},
\begin{align*}
\int_{\Gamma^*}&\m{Tr}_{L_{\mathrm{per},x}^2}\left(\sqrt{1-\Delta_{\xi}}\,\gamma_{\omega,\xi}\,\sqrt{1-\Delta_{\xi}}\right)d\xi  = \int_{\Gamma^*}\norm{\sqrt{1-\Delta_{\xi}} A_{\omega,\xi}}_{\mathfrak{S}_2}^2 \,d\xi \\
&\leq  \frac{1}{4\pi^2}\int_{\Gamma^*} \norm{\varrho_{\mathrm{per}} }_{\lpg}^2\left(\sum_{n\in\bb Z}\bnorm{\sqrt{1+F_{\xi}^2(n,\cdot)}\mathds 1_{[0,\omega]}\left( F_{\xi}^2(n,\cdot)\right) }_{L^2(\bb R^2)}^2\right)d\xi < +\infty.
\end{align*}
The last estimate follows by the condition $\left(2\pi n +\xi\right)^2+\bm k^2 \leq \omega $ implies that the sum on $n$ is finite and the integration on $\bm k$ occurs in a compact domain. Hence $\gamma_{\omega} $ belongs to $\mathcal{P}_{\mathrm{per},x}$.
Let us now show that there exists $\omega_*\geq 0$ such that $\rho_{\gamma_{\omega_*}}-\mups\in \mathcal{C}_{\Gamma}$. It is easy to see that the density $\rho_{\gamma_{\omega} }$ is smooth and compactly supported in $\Gamma$ by definition. Moreover, in view of the kernel representation~\eqref{KernelRepresentation}, the kernel $K_{A_{\omega,\xi}}$ of the operator $A_{\omega,\xi}$ is $$K_{A_{\omega,\xi}} \left((x,\bm r),(y, \bm r')\right) = \frac{1}{4\pi^2}\sum_{n\in\bb Z}\int_{\bb R^2} \rme^{\rmi\left( 2\pi n(x-y)+ \bm k\cdot(\bm r-\bm r')\right)} \mathds 1_{[0,\omega]}\left(F_{\xi}^2(n,\bm k)\right) \varrho_{\mathrm{per}}(y, \bm r')\, d\bm k.$$
Remark that the non-negative function $\omega \mapsto \int_{\Gamma^*} \left(\sum_{n\in \bb Z} \int_{\bb R^2}\mathds 1_{[0,\omega]}\left(F_{\xi}^2(n,\bm k)\right)\,d\bm k\right)d\xi$ is monotonic non-decreasing in $\omega$, equals $0$ when $\omega = 0$ and tends to $+\infty$ when $\omega \to +\infty$. Hence there exists $\omega_* > 0$ such that
\begin{equation}
\begin{aligned}
&\int_{\Gamma}\rho_{\gamma_{\omega_*}} = \frac{1}{2\pi}\int_{\Gamma^*} \norm{ A_{\omega_*,\xi}}_{\mathfrak{S}_2}^2\,d\xi =  \frac{1}{2\pi}\int_{\Gamma^*} \norm{K_{A_{\omega,\xi}}}_{L_{\m{per},x}^2(\Gamma\times\Gamma)}^2 \,d\xi\\
&=\frac{1}{(2\pi)^3} \int_{\Gamma^*} \int_{\Gamma}\int_{\bb R^2}\sum_{n\in\bb Z} \mathds 1_{[0,\omega_*]}\left(F_{\xi}^2(n,\bm k)\right) \varrho_{\mathrm{per}}^2(y, \bm r)\,dy\,d\bm r\, d\bm k \,d\xi\\
&= \frac{1}{(2\pi)^3}\int_{\Gamma^*} \left(\sum_{n\in \bb Z} \int_{\bb R^2}\mathds 1_{[0,\omega_*]}\left(F_{\xi}^2(n,\bm k)\right)\,d\bm k\right)d\xi=\int_{\Gamma}\mups >0.
\end{aligned}
\label{decayConstuction}
\end{equation}
This condition is equivalent to $\F(\rho_{\gamma_{\omega_*}} -\mup )(0,\bm 0) = 0$. As $\bm k \mapsto \F(\rho_{\gamma_{\omega_*}} -\mup )(0,\bm k)$ is $C^1(\bb R^2)$ and bounded, the function $\bm k \mapsto |\bm k|^{-1}\F(\rho_{\gamma_{\omega_*}} -\mups )(0,\bm k) $ is in $ L_{\mathrm{loc}}^2(\bb R^2)$. In view of this, there exists a positive constant $C$ such that
\begin{equation}
\begin{aligned}
&\sum_{ n\in \bb Z}\int_{\bb R^2}  \frac{\left|\F(\rho_{\gamma_{\omega_*}} -\mups)(n,\bm k)\right|^2}{|\bm k |^2+4\pi^2 n^2}\,d\bm k 
\leq \int_{|\bm k|\leq 2\pi  }\frac{\left|\F(\rho_{\gamma_{\omega_*}} -\mups)(0,\bm k)\right|^2}{|\bm k|^2}\,d\bm k \\
&+\frac{1}{4\pi^2 }\left(\int_{|\bm k|>2\pi }\left|\F(\rho_{\gamma_{\omega_*}} -\mups)(0,\bm k)\right|^2\,d\bm k +\sum_{n\in \bb Z\backslash\{0\}}\int_{\bb R^2}\left|\F(\rho_{\gamma_{\omega_*}} -\mups)(n,\bm k)\right|^2\,d\bm k\right) \\
&\leq  C+\frac{1}{4\pi^2 }\int_{\Gamma}\left|\rho_{\gamma_{\omega_*}} -\mups\right|^2< +\infty.
\end{aligned}
\label{FourierTTestState}
\end{equation}
In view of the definition of the Coulomb energy~\eqref{DG_Coulomb_interactions}, we can therefore conclude that $$D_{\Gamma}(\rho_{\gamma_{\omega_*}} -\mups,\rho_{\gamma_{\omega_*}} -\mups)  < +\infty.$$
This concludes the proof that the state $\gamma_{\omega_*}\in \mathcal{F}_{\Gamma}$. Hence $\mathcal{F}_{\Gamma}$ is not empty. As any density $\rho_{\gamma}$ associated with $\gamma \in \mathcal{P}_{\mathrm{per},x}$ is integrable, we can conclude that ~\eqref{chargeNeutral} holds in view of Remark~\ref{remark_neutral}.
\subsection{Proof of Theorems~\ref{thmPeriodicExistence1}}
\label{thmPeriodicExistenceSec1}
Let us start by giving a convenient equivalent formulation of the minimization problems~\eqref{pb1}. The operator $-\frac{1}{2}\Delta_\xi$ is not invertible, but the operator $-\frac{1}{2}\Delta_{\xi}-\kappa$ is positive definite and $\left|-\frac{1}{2}\Delta_{\xi} - \kappa \right|^{-1} $ is bounded for any $\kappa <0$. Therefore in view of the charge neutrality constraint~\eqref{chargeNeutral}, we rewrite the periodic rHF energy functional~\eqref{defEnergyfunc} as follows:
\begin{align*}
\forall \gamma\in \mathcal{F}_{\Gamma},\quad \mathcal{E}_{\mathrm{per},x}(\gamma)&=\frac{1}{2\pi}\int_{\Gamma^*}\m{Tr}_{\lpg}\left(-\frac{1}{2}\Delta_{\xi}\gamma_{\xi}\right)d\xi+\frac{1}{2}D_{\Gamma}\left(\rho_{\gamma}-\mup,\rho_{\gamma}-\mup\right) \\
&=\overline{\cl{E}_{\mathrm{per},x,\kappa}}(\gamma)+\frac{\kappa}{2\pi}\int_{\Gamma^*}\m{Tr}_{\lpg}\left(\gamma_{\xi}\right)d\xi =\overline{\cl{E}_{\mathrm{per},x,\kappa}}(\gamma)+ \kappa\int_{\Gamma}\mup,
\end{align*}
with $$\overline{\cl{E}_{\mathrm{per},x,\kappa}}(\gamma):=\frac{1}{2\pi}\int_{\Gamma^*}\m{Tr}_{\lpg}\left(\left|-\frac{1}{2}\Delta_{\xi} - \kappa\right|^{1/2}\gamma_{\xi}\left|-\frac{1}{2}\Delta_{\xi} - \kappa \right|^{1/2}\right)d\xi+\frac{1}{2}D_{\Gamma}\left(\rho_{\gamma}-\mup,\rho_{\gamma}-\mup\right). $$
The parameter $\kappa$ can be interpreted as the Lagrangian multiplier associated with the charger neutrality constraint. Therefore by fixing $\kappa<0$, the minimization problem~\eqref{pb1} is equivalent to the problem
\begin{equation}
\label{pb1Reformulated}
 \inf \left\{\overline{\cl{E}_{\mathrm{per},x,\kappa}}(\gamma); \,\gamma\in \mathcal{F}_{\Gamma}\, \right\}.
\end{equation}
We prove the existence of minimizers and the uniqueness of the density of minimizers for the problem~\eqref{pb1Reformulated} (hence of~\eqref{pb1}) by considering a minimizing sequence, and show that there is no loss of compactness. This approach is rather classical for rHF type models~\cite{CATTO2001687,Cances2008,CancesLahbabiLewin2013, cao2018}. But in our case we need to be careful as electrons might escape to infinity in the $\bm r$-direction. We show that this is impossible thanks to the Coulomb interactions (Lemma~\ref{ConsistencyDensity}).
\paragraph{Weak convergence of the minimizing sequence.}
First of all it is easy to see that the functional $\overline{\cl{E}_{\mathrm{per},x,\kappa}}(\cdot)$ is well defined on the non-empty set $ \mathcal{F}_{\Gamma}$. Consider a minimizing sequence of $\overline{\cl{E}_{\mathrm{per},x,\kappa}}(\cdot)$ $$\left\{\gamma_{n} :=\B^{-1}\left(\int_{\Gamma^*}\gamma_{n,\xi}\,\frac{d\xi}{2\pi}\right)\B\right\}_{n\geq 1 }$$ on $\mathcal{F}_{\Gamma}$. There exists $C>0$ such that for all $n\geq 1$:
\begin{equation}
\begin{aligned}
&0\leq \int_{\Gamma^*}\m{Tr}_{\lpg}\left(\left|-\frac{1}{2}\Delta_{\xi}-\kappa\right|^{1/2}\gamma_{n,\xi}\left|-\frac{1}{2}\Delta_{\xi}-\kappa\right|^{1/2}\right)d\xi \leq C,\\
& 0\leq D_{\Gamma}\left(\rho_{\gamma_n}-\mups,\rho_{\gamma_n}-\mups\right)\leq C.
\end{aligned}
\label{uniformBoundedEnergy}
\end{equation}
The kinetic energy bound~\eqref{uniformBoundedEnergy} together with the inequality~\eqref{HOinequality} implies that the sequence $\left\{ \sqrt{\rho_{\gamma_n}}\right\}_{n\geq 1} $ is uniformly bounded in $H_{\mathrm{per},x}^1(\Gamma)$ hence in $ L_{\mathrm{per},x}^6(\Gamma)$ by Sobolev embeddings. Therefore for all $ n \in \bb N^*$, the density $\rho_{\gamma_n}$ belongs to $ L_{\mathrm{per},x}^p(\Gamma)$ for $1\leq p\leq 3$. On the other hand, for almost all $\xi\in \Gamma^*$, the operator $\gamma_{n,\xi}$ is a trace-class operator on $\lpg$. As $0\leq \gamma_{n,\xi}^2\leq \gamma_{n,\xi}\leq 1$, we obtain that
\begin{align*}
0\leq \int_{\Gamma^*}\norm{\gamma_{n,\xi}}_{\mathfrak{S}_2}^2\,d\xi=\int_{\Gamma^*}\m{Tr}_{\lpg}\left( \gamma_{n,\xi}^2\right)\,d\xi\leq \int_{\Gamma^*}\m{Tr}_{\lpg}\left( \gamma_{n,\xi}\right)\,d\xi  = 2\pi Z.
\end{align*}
This implies that the operator-valued function $\xi \mapsto \gamma_{n,\xi} $ is uniformly bounded in $L^2(\Gamma^*; \mathfrak{S}_2) $. Furthermore, the uniform boundedness $ 0\leq \gamma_{n,\xi}\leq 1$ also implies that the operator-valued function $\xi \mapsto \gamma_{n,\xi}$ belongs to $L^{\infty}(\Gamma^*; \mathcal{S}(\lpg))$. Combining these remarks with the uniform energy bound~\eqref{uniformBoundedEnergy}, we deduce that there exist (up to extraction):
\begin{equation}
\gamma =\B^{-1}\left(\int_{\Gamma^*}\gamma_{\xi}\,\frac{d\xi}{2\pi}\right)\B,\quad \overline{\rho_{\gamma}}\in L_{\mathrm{per},x}^p (\Gamma),\quad  \widetilde{\rho_{\gamma}} - \mups \in \mathcal{C}_{\Gamma},
\label{weakLimit}
\end{equation}
such that $\gamma_n \xrightharpoonup[]{*}\gamma  $ in the following sense: for any operator-valued function $\xi \mapsto U_{\xi}\in L^2(\Gamma^*; \mathfrak{S}_2) + L^1(\Gamma^*; \mathfrak{S}_1),$
\begin{equation}
\int_{\Gamma^*} \m{Tr}_{\lpg}\left(U_{\xi}\gamma_n \right) d\xi \toinfty \int_{\Gamma^*} \m{Tr}_{\lpg}\left( U_{\xi}\gamma\right) d\xi.
\label{faible1}
\end{equation}
The density $$\rho_{\gamma_n} \rightharpoonup  \overline{\rho_{\gamma}}\,\, \text{ weakly in } \,\,L_{\mathrm{per},x}^p (\Gamma) \,\text{ for } \,1< p\leq 3. $$ The total density $$\rho_{\gamma_n}  - \mups\rightharpoonup  \widetilde{\rho_{\gamma}} - \mups \,\text{ weakly in } \,\mathcal{C}_{\Gamma}.$$
The convergence~\eqref{faible1} is due to the fact that the predual of $L^{\infty}(\Gamma^*; \mathcal{S}(\lpg)) $ is $ L^1(\Gamma^*;  \mathfrak{S}_1)$, and that $L^2(\Gamma^*; \mathfrak{S}_2) $ is a Hilbert space. 
\begin{remark}
The convergence of $\gamma_n \xrightharpoonup[]{*}\gamma  $ in the sense of~\eqref{faible1} can also be reformulated as the convergence in the sense of the following weak-$*$ topology:
\begin{equation}
\label{weakTopoCstar}
 \B\,\gamma_n \,\B^{-1} \xrightarrow[n\to\infty]{} \B\, \gamma \,\B^{-1}\text{ for the weak-$*$ topology of } L^{\infty}\left(\Gamma^*; \mathcal{S}\left(\lpg\right)\right) \bigcap L^2\left(\Gamma^*; \mathfrak{S}_2\right) .
\end{equation}
\end{remark}
Denote by $\mathcal{D}_{\mathrm{per},x}(\Gamma)$ the functions which are $C^{\infty}$ on $\bb R$, $1$-periodic in the $x$-direction, and have compact support in the $\bm r$-direction. Denote by $\mathcal{D}_{\mathrm{per},x}'(\Gamma)$ the dual space of $\mathcal{D}_{\mathrm{per},x}(\Gamma)$. The following lemma guarantees that the densities obtained by different weak limit processes coincide. In particular there is no loss of compactness in the $\bm r$-direction when $|\bm r| \to \infty$.
\begin{lemma}[Consistency of densities]
Denote by $\rho_{\gamma}$ the density associated with the density matrix $\gamma$ obtained in the weak limit~\eqref{weakLimit}. Then $\rho_{\gamma}  = \overline{\rho_{\gamma}}  =\widetilde{\rho_{\gamma}}  $ in $\mathcal{D}_{\mathrm{per},x}'(\Gamma)$. In particular, $\rho_{\gamma} = \overline{\rho_{\gamma}} $ as elements in $L_{\mathrm{per},x}^p (\Gamma)$ for $1<p\leq 3$ and $\rho_{\gamma} -\mups  =\widetilde{\rho_{\gamma}} - \mups $ as elements in $\mathcal{C}_{\Gamma} $.
\label{ConsistencyDensity}
\end{lemma}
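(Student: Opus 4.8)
The plan is to test all three densities $\overline{\rho_{\gamma}}$, $\widetilde{\rho_{\gamma}}$, $\rho_{\gamma}$ from~\eqref{weakLimit} against an arbitrary $\phi\in\mathcal{D}_{\mathrm{per},x}(\Gamma)$ and to check that each of the three pairings equals $\lim_{n\to\infty}\int_{\Gamma}\rho_{\gamma_n}\phi$ (this limit exists because $\rho_{\gamma_n}\rightharpoonup\overline{\rho_{\gamma}}$ in $L_{\mathrm{per},x}^p(\Gamma)$ and $\phi$ belongs to every $L_{\mathrm{per},x}^{p'}(\Gamma)$). Once this is done, the three distributions agree on $\mathcal{D}_{\mathrm{per},x}(\Gamma)$, and the refinements — equality in $L_{\mathrm{per},x}^p(\Gamma)$ and in $\mathcal{C}_{\Gamma}$ — follow from the density of $\mathcal{D}_{\mathrm{per},x}(\Gamma)$ in $L_{\mathrm{per},x}^{p'}(\Gamma)$ and in $\Spx(\Gamma)$ respectively (recall $\mathcal{C}_{\Gamma}\subset\Spx'(\Gamma)$).

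\emph{Step 1 ($\overline{\rho_{\gamma}}=\widetilde{\rho_{\gamma}}$).} For $\overline{\rho_{\gamma}}$ this is immediate from the weak $L_{\mathrm{per},x}^p(\Gamma)$-convergence. For $\widetilde{\rho_{\gamma}}$ I would introduce $g_{\phi}:=-\tfrac{1}{4\pi}\Delta\phi$; using~\eqref{Poisson_Eq} one finds $\F(g_{\phi})(n,\bm k)=\tfrac{|\bm k|^2+4\pi^2n^2}{4\pi}\F(\phi)(n,\bm k)$, so $g_{\phi}\in\mathcal{C}_{\Gamma}$ and, by the isometry of $\F$, $D_{\Gamma}(f,g_{\phi})=\int_{\Gamma}f\phi$ for every real $f\in\mathcal{C}_{\Gamma}\cap L_{\mathrm{per},x}^2(\Gamma)$ (the mode $(0,\bm 0)$ is harmless because elements of $\mathcal{C}_{\Gamma}$ are weakly neutral, Remark~\ref{remark_neutral}). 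Applying this with $f=\rho_{\gamma_n}-\mups$ and letting $n\to\infty$ along the weak $\mathcal{C}_{\Gamma}$-convergence would give $\langle\widetilde{\rho_{\gamma}}-\mups,\phi\rangle=\lim_n\int_{\Gamma}(\rho_{\gamma_n}-\mups)\phi$, hence $\langle\widetilde{\rho_{\gamma}},\phi\rangle=\lim_n\int_{\Gamma}\rho_{\gamma_n}\phi=\langle\overline{\rho_{\gamma}},\phi\rangle$.

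\emph{Step 2 ($\rho_{\gamma}=\overline{\rho_{\gamma}}$), the hard part.} Here one writes $\int_{\Gamma}\rho_{\gamma_n}\phi=\tfrac{1}{2\pi}\int_{\Gamma^*}\m{Tr}_{\lpg}(\gamma_{n,\xi}\,\phi)\,d\xi$ and wants to pass to the limit using the weak-$*$ convergence~\eqref{faible1}. The obstacle — and the reason the lemma is not trivial — is that multiplication by $\phi$ is not compact and does not lie in the predual $L^2(\Gamma^*;\mathfrak{S}_2)+L^1(\Gamma^*;\mathfrak{S}_1)$, so~\eqref{faible1} cannot be used as is; this is precisely the mechanism by which electrons could escape to infinity in the $\bm r$-direction. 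My plan to circumvent it is a high-energy truncation. For $\Lambda>0$ I would split $\m{Tr}_{\lpg}(\gamma_{n,\xi}\,\phi)$ into $\m{Tr}_{\lpg}\!\big(\gamma_{n,\xi}\,\phi\,\mathds 1_{[0,\Lambda]}(-\Delta_{\xi})\big)$ plus a remainder $\m{Tr}_{\lpg}\!\big(\gamma_{n,\xi}\,\phi\,\mathds 1_{(\Lambda,+\infty)}(-\Delta_{\xi})\big)$. The remainder, after inserting $\sqrt{1-\Delta_{\xi}}$ and using cyclicity, is bounded in absolute value by $\norm{\phi}_{L_{\mathrm{per},x}^{\infty}(\Gamma)}(1+\Lambda)^{-1/2}\,\m{Tr}_{\lpg}\!\big(\sqrt{1-\Delta_{\xi}}\,\gamma_{n,\xi}\sqrt{1-\Delta_{\xi}}\big)$, so its $\xi$-integral is $\le C\norm{\phi}_{L_{\mathrm{per},x}^{\infty}(\Gamma)}(1+\Lambda)^{-1/2}$, uniformly in $n$ by the kinetic bound~\eqref{uniformBoundedEnergy}, and — by weak lower semicontinuity of the kinetic energy — with the same bound for $\gamma$. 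The truncated term is harmless: by the Kato--Seiler--Simon inequality Lemma~\ref{KSSGammalemma} (case $p=2$), $\phi\,\mathds 1_{[0,\Lambda]}(-\Delta_{\xi})\in\mathfrak{S}_2(\lpg)$ with $\mathfrak{S}_2$-norm bounded uniformly in $\xi\in\Gamma^*$ (the set $\{(n,\bm k):(2\pi n+\xi)^2+|\bm k|^2\le\Lambda\}$ being finite and of bounded measure), so $\xi\mapsto\phi\,\mathds 1_{[0,\Lambda]}(-\Delta_{\xi})$ lies in $L^2(\Gamma^*;\mathfrak{S}_2)$ and~\eqref{faible1} applies for each fixed $\Lambda$. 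A three-epsilon argument (first $\Lambda$ large, then $n\to\infty$) then gives $\int_{\Gamma^*}\m{Tr}_{\lpg}(\gamma_{n,\xi}\,\phi)\,d\xi\toinfty\int_{\Gamma^*}\m{Tr}_{\lpg}(\gamma_{\xi}\,\phi)\,d\xi$, and since $\gamma$ is locally trace class (it satisfies $0\le\gamma\le 1$ and has finite kinetic energy per cell, so $\gamma_{\xi}$ is locally trace class a.e.), formula~\eqref{def_density_dmatrix} identifies the right-hand side with $2\pi\int_{\Gamma}\rho_{\gamma}\phi$. Thus $\langle\rho_{\gamma},\phi\rangle=\lim_n\int_{\Gamma}\rho_{\gamma_n}\phi=\langle\overline{\rho_{\gamma}},\phi\rangle$.

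\emph{Step 3 (conclusion).} Combining Steps~1--2 yields $\rho_{\gamma}=\overline{\rho_{\gamma}}=\widetilde{\rho_{\gamma}}$ in $\mathcal{D}_{\mathrm{per},x}'(\Gamma)$, whence $\rho_{\gamma}\in L_{\mathrm{per},x}^p(\Gamma)$ and the identity holds in $L_{\mathrm{per},x}^p(\Gamma)$ for $1<p\le 3$; moreover $\widetilde{\rho_{\gamma}}-\mups\in\mathcal{C}_{\Gamma}\subset\Spx'(\Gamma)$ coincides with $\rho_{\gamma}-\mups$ as a tempered-type distribution (they agree on $\mathcal{D}_{\mathrm{per},x}(\Gamma)$, dense in $\Spx(\Gamma)$), which is the asserted equality in $\mathcal{C}_{\Gamma}$. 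The step I expect to be the real obstacle is Step~2: the whole difficulty is to turn the uniform kinetic-energy bound, via the fibered Kato--Seiler--Simon inequality, into enough compactness to pull the density through the fiber integral against the non-compact multiplication operator $\phi$ — which is exactly the statement that no mass is lost at infinity in the $\bm r$-direction, and is the fibered, unbounded-cell analogue of the density-passing arguments of~\cite{CATTO2001687,Cances2008}.
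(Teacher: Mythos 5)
Your proposal is correct, and your Step~1 coincides with the paper's own argument (test against $w\in\mathcal{D}_{\mathrm{per},x}(\Gamma)$, rewrite $\int_{\Gamma}(\rho_{\gamma_n}-\mups)w$ as $D_{\Gamma}\big(\rho_{\gamma_n}-\mups,-\tfrac{1}{4\pi}\Delta w\big)$, and use the two weak convergences plus uniqueness of the distributional limit). Step~2, the crux, is where you genuinely diverge: the paper does not truncate in energy. It first upgrades the a priori bounds --- $\xi\mapsto\big|-\tfrac12\Delta_{\xi}-\kappa\big|^{1/2}\gamma_{n,\xi}$ is uniformly bounded in $L^1(\Gamma^*;\mathfrak{S}_1)\cap L^2(\Gamma^*;\mathfrak{S}_2)$, hence in $L^q(\Gamma^*;\mathfrak{S}_q)$ for $1\le q\le 2$ by interpolation --- extracts a further weakly convergent subsequence in that class, and then tests it against $W_{\xi}=w\big|-\tfrac12\Delta_{\xi}-\kappa\big|^{-1/2}$, which belongs to $L^{4}(\Gamma^*;\mathfrak{S}_{4})$ by the fibered Kato--Seiler--Simon inequality~\eqref{KSSGamma}; writing $w\gamma_{n,\xi}=W_{\xi}\,\big|-\tfrac12\Delta_{\xi}-\kappa\big|^{1/2}\gamma_{n,\xi}$ then yields~\eqref{compacityOfMass}. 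You instead stay in the original weak-$*$ topology~\eqref{faible1} and make the multiplication operator admissible through the cutoff $w\,\mathds 1_{[0,\Lambda]}(-\Delta_{\xi})\in L^{\infty}(\Gamma^*;\mathfrak{S}_2)\subset L^{2}(\Gamma^*;\mathfrak{S}_2)$ (again the fibered KSS bound, now with $p=2$, uniform in $\xi$ since the relevant phase-space region has bounded measure), while the tail $\m{Tr}_{\lpg}\big(\gamma_{n,\xi}\,w\,\mathds 1_{(\Lambda,\infty)}(-\Delta_{\xi})\big)$ is controlled by $\norm{w}_{L^{\infty}}(1+\Lambda)^{-1/2}\,\m{Tr}_{\lpg}\big(\sqrt{1-\Delta_{\xi}}\,\gamma_{n,\xi}\sqrt{1-\Delta_{\xi}}\big)$, uniformly in $n$ by~\eqref{uniformBoundedEnergy}; your insertion-and-cyclicity bound is correct. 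Both routes rest on exactly the same two ingredients (the uniform kinetic bound and Lemma~\ref{KSSGammalemma}); yours buys the absence of a second subsequence extraction and of the identification of the auxiliary weak limit with $\big|-\tfrac12\Delta_{\xi}-\kappa\big|^{1/2}\gamma_{\xi}$, at the price of the $\Lambda$-bookkeeping. The one point you should make explicit is the tail estimate for the limit state $\gamma$ itself: it needs $\int_{\Gamma^*}\m{Tr}_{\lpg}\big(\sqrt{1-\Delta_{\xi}}\,\gamma_{\xi}\sqrt{1-\Delta_{\xi}}\big)\,d\xi<\infty$, which is not part of~\eqref{weakLimit} but follows from~\eqref{faible1} by the Fatou-type argument the paper gives in~\eqref{liminfKineticEnergy}; that argument uses only the weak-$*$ convergence, not this lemma, so invoking it here is not circular.
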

We postpone the proof to Section~\ref{SecConsistencyDensity}.
\paragraph{The state $\gamma$ is a minimizer.}
Let us first show that the kinetic energy of $\gamma$ obtained by the weak limit~\eqref{faible1} is finite. To achieve this, consider an orthonormal basis $\left\{e_i\right\}_{i\in \bb N}\subset H_{\mathrm{per},x}^1(\Gamma)$ of $L_{\mathrm{per},x}^2( \Gamma)$, and define the following family of operators for $N \in \bb N^*$:
$$M_{\xi}^N:= \left|-\frac{1}{2}\Delta_{\xi}-\kappa\right|^{1/2}\left(\sum_{i=1}^N \Ket{e_i}\Bra{e_i}\right)\left|-\frac{1}{2}\Delta_{\xi}-\kappa\right|^{1/2}.$$
An easy computation shows that for all $\xi\in \Gamma^*$, the operator $M_{\xi}^N$ belongs to $\mathfrak{S}_2$. Moreover, the function $\xi\mapsto M_{\xi}^N$ can be seen as an operator-valued function belonging to $L^2(\Gamma^*;\mathfrak{S}_2)$ as $\Gamma^*=[-\pi,\pi)$ is a finite interval. In view of the convergence~\eqref{faible1} and by choosing $U_{\xi}  = M_{\xi}^N$ we obtain that (recalling that $\m{Tr}(AB) = \m{Tr}(BA) $ when $A,B$ are Hilbert-Schmidt operators, and $\left|-\frac{1}{2}\Delta_{\xi}-\kappa\right|^{1/2}\gamma_{\xi} $ is Hilbert-Schmidt for almost all $\xi\in\Gamma^*$)
\begin{align*}
0\leq&\int_{\Gamma^*} \m{Tr}_{\lpg}\left(\left|-\frac{1}{2}\Delta_{\xi}-\kappa\right|^{1/2}\gamma_{\xi} \left|-\frac{1}{2}\Delta_{\xi}-\kappa\right|^{1/2}\left(\sum_{i=1}^N \Ket{e_i}\Bra{e_i}\right)\right) d\xi\\
&= \int_{\Gamma^*} \m{Tr}_{\lpg}\left( M_{\xi}^N\gamma_\xi\right) d\xi =\lim_{n\to\infty}\int_{\Gamma^*} \m{Tr}_{\lpg}\left(M_{\xi}^N\gamma_{n,\xi} \right) d\xi\\
&= \lim_{n\to\infty}\int_{\Gamma^*}\m{Tr}_{\lpg}\left(\left|-\frac{1}{2}\Delta_{\xi}-\kappa\right|^{1/2}\gamma_{n,\xi} \left|-\frac{1}{2}\Delta_{\xi}-\kappa\right|^{1/2}\left(\sum_{i=1}^N \Ket{e_i}\Bra{e_i}\right)\right) d\xi \\
&= \lim_{n\to\infty}\int_{\Gamma^*} \sum_{i=1}^N \left\langle e_i\left|\, \left|-\frac{1}{2}\Delta_{\xi}-\kappa\right|^{1/2}\gamma_{n,\xi} \left|-\frac{1}{2}\Delta_{\xi}-\kappa\right|^{1/2}\right| e_i \right\rangle\, d\xi \\
&\leq   \liminf_{n\to\infty}\int_{\Gamma^*} \m{Tr}_{\lpg}\left(\left|-\frac{1}{2}\Delta_{\xi}-\kappa\right|^{1/2}\gamma_{n,\xi} \left|-\frac{1}{2}\Delta_{\xi}-\kappa\right|^{1/2}\right) d\xi  \leq C',
\end{align*}
where the last step we have used the uniform energy bound~\eqref{uniformBoundedEnergy}. Therefore, by passing to the limit $N\to +\infty$, by Fatou's lemma we have
\begin{equation}
\begin{aligned}
0\leq \int_{\Gamma^*} \m{Tr}_{\lpg}&\left( \left|-\frac{1}{2}\Delta_{\xi}-\kappa\right|^{1/2}\gamma_{\xi} \left|-\frac{1}{2}\Delta_{\xi}-\kappa\right|^{1/2}\right) d\xi\\
&\qquad \leq \liminf_{n\to\infty}\int_{\Gamma^*} \m{Tr}_{\lpg}\left(\left|-\frac{1}{2}\Delta_{\xi}-\kappa\right|^{1/2}\gamma_{n,\xi} \left|-\frac{1}{2}\Delta_{\xi}-\kappa\right|^{1/2}\right) d\xi  \leq C'.
\end{aligned}
\label{liminfKineticEnergy}
\end{equation}
Remark that the bound~\eqref{liminfKineticEnergy} also implies that $\sqrt{\rho_{\gamma}}$ belongs to $H_{\mathrm{per},x}^1 (\Gamma)$ by the Hoffmann--Ostenhof inequality~\eqref{HOinequality}. Hence $\rho_{\gamma}\in L_{\mathrm{per},x}^p (\Gamma)$ for $1\leq p\leq 3$. Since $\rho_{\gamma} -\mups $ is an element in $\mathcal{C}_{\Gamma} $ by Lemma~\ref{ConsistencyDensity}, this implies that the charge is neutral by Remark~\ref{remark_neutral}. That is, 
$$\int_{\Gamma}\rho_{\gamma} =  \int_{\Gamma}\mups .$$
As $D_{\Gamma}(\cdot,\cdot)$ defines an inner product on $\mathcal{C}_{\Gamma}$, by the weak convergence~\eqref{weakLimit} of $\rho_{\gamma_n} - \mups $ to $ \widetilde{\rho_{\gamma}} - \mups $ in $\mathcal{C}_{\Gamma}$ and the consistency of densities given by Lemma~\ref{ConsistencyDensity}, we obtain that
\begin{equation}
D_{\Gamma}\left(\rho_{\gamma} -\mups ,\rho_{\gamma} -\mups \right) \leq \liminf_{n\to\infty}D_{\Gamma}\left(\rho_{\gamma_n} -\mups ,\rho_{\gamma_n} -\mups \right).
\label{liminfCoulombInteraction}
\end{equation}
In view of~\eqref{liminfKineticEnergy} and~\eqref{liminfCoulombInteraction}, we conclude that
\begin{align*}
\mathcal{E}_{\mathrm{per},x}(\gamma)\leq \liminf_{n\to\infty} \mathcal{E}_{\mathrm{per},x}(\gamma_n),  
\end{align*}
which shows that the state $\gamma$ obtained in~\eqref{weakLimit} is a minimizer of the problem~\eqref{pb1}. Let us prove that all minimizers share the same density: consider two minimizers $\overline{\gamma}_1$ and $\overline{\gamma}_2$. By the convexity of $\cl{F}_{\Gamma}$ it holds that $\frac{1}{2}\left(\overline{\gamma}_1+\overline{\gamma}_2\right) \in \cl{F}_{\Gamma}$. Moreover
\begin{align*}
\mathcal{E}_{\mathrm{per},x}\left(\frac{\overline{\gamma}_1+\overline{\gamma}_2}{2}\right) = \frac{1}{2}\mathcal{E}_{\mathrm{per},x}\left(\overline{\gamma}_1\right) +\frac{1}{2}\mathcal{E}_{\mathrm{per},x}\left(\overline{\gamma}_2\right) - \frac{1}{4}D_{\Gamma}\left(\rho_{\overline{\gamma}_1}-\rho_{\overline{\gamma}_2},\rho_{\overline{\gamma}_1}-\rho_{\overline{\gamma}_2}\right),
\end{align*}
which shows that $D_{\Gamma}\left(\rho_{\overline{\gamma}_1}-\rho_{\overline{\gamma}_2},\rho_{\overline{\gamma}_1}-\rho_{\overline{\gamma}_2}\right) \equiv 0$, hence all the minimizers of the problem~\eqref{pb1} share the same density. 
\subsection{Proof of Lemma~\ref{ConsistencyDensity}}
\label{SecConsistencyDensity}
\paragraph{Equality of $\widetilde{\rho_{\gamma}} $ and $\overline{\rho_{\gamma}}$. }
The proof follows ideas similar to the ones used for the proof of~\cite[Lemma 3.5]{cao2018} by considering a test function in $\mathcal{D}_{\mathrm{per},x}(\Gamma)$ and replacing the Fourier transform by the mixed Fourier transform defined in~\eqref{FourierTS}. Consider a test function $w\in \mathcal{D}_{\mathrm{per},x}(\Gamma)$. The weak convergence of $\rho_{\gamma_n} \rightharpoonup  \overline{\rho_{\gamma}}$ in $L_{\mathrm{per},x}^p (\Gamma)$ with $1<p\leq 3$ implies that
$$\left\langle \rho_{\gamma_n} - \mups,w\right\rangle_{\mathcal{D}_{\mathrm{per},x}',\mathcal{D}_{\mathrm{per},x}} \toinfty \left\langle \overline{\rho_{\gamma}} - \mups,w\right\rangle_{\mathcal{D}_{\mathrm{per},x}',\mathcal{D}_{\mathrm{per},x}}.$$
On the other hand, 
\begin{equation}
\begin{aligned}
\left\langle \rho_{\gamma_n} - \mups,w\right\rangle_{\mathcal{D}_{\mathrm{per},x}',\mathcal{D}_{\mathrm{per},x}}& =\int_{\Gamma}\left( \rho_{\gamma_n} - \mups \right)w = \sum_{n\in\bb Z}\int_{\bb R^2}\overline{\F\left( \rho_{\gamma_n} - \mups \right)(n,\bm k)} \F w(n,\bm k)\,d\bm k \\
&= 4\pi \sum_{n\in\bb Z}\int_{\bb R^2}\frac{\overline{\F\left( \rho_{\gamma_n} - \mups \right)(n,\bm k) }\F f(n,\bm k)}{4\pi^2n^2+|\bm k|^2}\, d\bm k.
\end{aligned}
\label{distributionToCoulomb}
\end{equation}
where $ f=-\frac{1}{4\pi} \Delta w$. Note that $f$ belongs to the Coulomb space $\mathcal{C}_{\Gamma}$ defined in~\eqref{CoulombInterSpc} since for all $n\in \bb Z$, $\F f(n,\cdot)\in L_{\mathrm{loc}}^1(\bb R^2)$, and 
\begin{align*}
D_{\Gamma}(f,f) &=4\pi \sum_{n\in \bb Z}\int_{\bb R^2}\frac{\left|\F f (n,\bm k)\right|^2}{|\bm k |^2+4\pi^2 n^2}\,d\bm k = \frac{1}{4\pi} \sum_{n\in \bb Z}\int_{\bb R^2}\left(|\bm k |^2+4\pi^2 n^2\right)\left|\F w (n,\bm k)\right|^2\,d\bm k\\
&=\frac{1}{4\pi} \sum_{n\in \bb Z}\int_{\bb R^2}\left|\F \left(\nabla w\right) (n,\bm k)\right|^2\,d\bm k = \frac{1}{4\pi}\int_{\Gamma}\left|\nabla w\right|^2 <+\infty.
\end{align*}
Therefore in view of~\eqref{distributionToCoulomb}, the convergence $D_{\Gamma}\left(\rho_{\gamma_n}  - \mups,f\right)  \toinfty D_{\Gamma}\left(\widetilde{\rho_{\gamma}} - \mups,f\right) $ implies that $$\left\langle \rho_{\gamma_n} - \mups,w\right\rangle \toinfty \left\langle \widetilde{\rho_{\gamma}} - \mups,w\right\rangle.$$
The uniqueness of limit in the sense of distribution allows us to conclude. 
\paragraph{Equality of $\rho_{\gamma}$ and $\overline{\rho_{\gamma}}$. }
Let us prove that $\rho_{\gamma}  = \overline{\rho_{\gamma}} $ in $\mathcal{D}_{\mathrm{per},x}'(\Gamma)$. The fact that $\rho_{\gamma_n} \rightharpoonup \overline{\rho_{\gamma}}$ weakly in $L_{\mathrm{per},x}^p(\Gamma)$ implies that
$$ \left\langle \rho_{\gamma_n},w\right\rangle \toinfty \left\langle \overline{\rho_{\gamma}},w\right\rangle.$$
Therefore it suffices to prove that the operator-valued function $\xi \mapsto w\gamma_{n,\xi} \in L^1\left(\Gamma^*;\mathfrak{S}_1\right) $ converges in the following sense:
\begin{equation}
\frac{1}{2\pi}\int_{\Gamma^*} \m{Tr}_{\lpg} \left(w\gamma_{n,\xi}\right)d\xi = \left\langle \rho_{\gamma_n},w\right\rangle \toinfty  \left\langle \rho_{\gamma},w\right\rangle = \frac{1}{2\pi}\int_{\Gamma^*} \m{Tr}_{\lpg} \left(w \gamma_{\xi}\right)d\xi.
\label{compacityOfMass}
\end{equation}
The weak convergence~\eqref{faible1} does not guarantee the above convergence since the function $w$ does not belong to any Schatten class. We prove~\eqref{compacityOfMass} by using the kinetic energy bound~\eqref{uniformBoundedEnergy}, which implies that the operator-valued function $$\xi\mapsto \left|-\frac{1}{2}\Delta_{\xi}-\kappa\right|^{1/2} \gamma_{n,\xi} =\left(\left|-\frac{1}{2}\Delta_{\xi}-\kappa\right|^{1/2} \gamma_{n,\xi} \left|-\frac{1}{2}\Delta_{\xi}-\kappa\right|^{1/2}\right)\left|-\frac{1}{2}\Delta_{\xi} - \kappa\right|^{-1/2}$$
is uniformly bounded in $L^1\left(\Gamma^*; \mathfrak{S}_1\right)$ as $\left|-\frac{1}{2}\Delta_{\xi} - \kappa\right|^{-1/2}$ is uniformly bounded with respect to $\xi\in\Gamma^*$. Moreover, the energy bounded~\eqref{uniformBoundedEnergy} also implies that $ \xi\mapsto \left|-\frac{1}{2}\Delta_{\xi}-\kappa\right|^{1/2}\sqrt{\gamma_{n,\xi}}$ is uniformly bounded in $L^2\left(\Gamma^*; \mathfrak{S}_2\right)$. Hence the operator-valued function $\xi\mapsto \left|-\frac{1}{2}\Delta_{\xi}-\kappa\right|^{1/2} \gamma_{n,\xi} =\left|-\frac{1}{2}\Delta_{\xi}-\kappa\right|^{1/2}\sqrt{\gamma_{n,\xi}}\sqrt{\gamma_{n,\xi}}$
is~uniformly bounded in $L^2\left(\Gamma^*; \mathfrak{S}_2\right)$ as $0\leq \gamma_{n,\xi}\leq 1$. Therefore $$\xi \mapsto \left|-\frac{1}{2}\Delta_{\xi}-\kappa\right|^{1/2} \gamma_{n,\xi}  \text{ is uniformly bounded in }L^1\left(\Gamma^*; \mathfrak{S}_1\right)\bigcap L^2\left(\Gamma^*; \mathfrak{S}_2\right), $$
hence in $L^q\left(\Gamma^*; \mathfrak{S}_q\right)$ for $1\leq q \leq 2$ by interpolation. Therefore, up to extraction, the following weak convergence holds: for any operator-valued function $\xi \mapsto W_{\xi} \in L^{q'}\left(\Gamma^*;\mathfrak{S}_{q'}\right)$ where $q'= \frac{q}{q-1}$ for $1<q\leq 2$, 
\begin{equation}
\int_{\Gamma^*} \m{Tr}_{\lpg}\left(W_{\xi} \left|-\frac{1}{2}\Delta_{\xi}-\kappa\right|^{1/2}\gamma_{n,\xi}\right) d\xi \toinfty  \int_{\Gamma^*} \m{Tr}_{\lpg}\left( W_{\xi}\left|-\frac{1}{2}\Delta_{\xi}-\kappa\right|^{1/2} \gamma_{\xi} \right) d\xi.
\label{SpConvergence}
\end{equation}
On the other hand by the inequality~\eqref{KSSGamma} we obtain that for any $\xi\in\Gamma^*$, 
\begin{align*}
\Bnorm{w\left|-\frac{1}{2}\Delta_{\xi} - \kappa\right|^{-1/2}}_{\mathfrak{S}_{q'}} &\leq \frac{1}{(2\pi)^{2/q'}}\norm{w}_{L_{\mathrm{per},x}^{q'}(\Gamma)}\left(\sum_{n\in \bb Z}\int_{\bb R^2}\frac{2^{q'/2}}{\left((2\pi n +\xi)^2+|\bm r|^2+2|\kappa|\right)^{q'/2}}\,d\bm r\right)^{1/q'}.
\end{align*}
Upon choosing for example $q'= 4$, the right hand side of the above quantity is finite. Therefore upon taking $W_{\xi} = w\left|-\frac{1}{2}\Delta_{\xi} - \kappa\right|^{-1/2}$ in~\eqref{SpConvergence} we obtain that
\begin{align*}
&\frac{1}{2\pi}\int_{\Gamma^*} \m{Tr}_{\lpg} \left(w\gamma_{n,\xi}\right)d\xi= \frac{1}{2\pi}\int_{\Gamma^*} \m{Tr}_{\lpg} \left(w\left|-\frac{1}{2}\Delta_{\xi} - \kappa\right|^{-1/2}\left|-\frac{1}{2}\Delta_{\xi}-\kappa\right|^{1/2}\gamma_{n,\xi}\right)d\xi\\
&\toinfty \frac{1}{2\pi} \int_{\Gamma^*} \m{Tr}_{\lpg}\left(w\left|-\frac{1}{2}\Delta_{\xi} - \kappa\right|^{-1/2}\left|-\frac{1}{2}\Delta_{\xi}-\kappa\right|^{1/2}\gamma_{\xi}\right)d\xi =  \frac{1}{2\pi}\int_{\Gamma^*} \m{Tr}_{\lpg} \left(w \gamma_{\xi}\right)d\xi.
\end{align*}
Hence~\eqref{compacityOfMass} holds. Therefore $\rho_{\gamma}  = \overline{\rho_{\gamma}} $ in $\mathcal{D}_{\mathrm{per},x}'(\Gamma)$, which concludes the proof of the lemma.
\subsection{Proof of Theorem~\ref{thmPeriodicExistence2}}
\label{thmPeriodicExistenceSec2}
We first define a mean-field Hamiltonian associated with the problem~\eqref{pb1}, and then show that the Fermi level is always negative. Moreover, the minimizer of~\eqref{pb1} is uniquely given by the spectral projector of the mean-field Hamiltonian. In the end we show that the density of the minimizer decays exponentially fast in the $\bm r$-direction. 
\paragraph{Properties of the mean--field potential and Hamiltonian.}
We begin with the definition of a mean-field potential and a mean-field Hamiltonian, and next study the spectrum of the mean-field Hamiltonian. Consider a minimizer $\rp$ of~\eqref{pb1} with the unique density $\rho_{\rp} \in L_{\mathrm{per},x}^p(\Gamma)$ where $1\leq p\leq 3$. Define the mean-field potential
\begin{align*}
 \vp&:= \ep \star_{\Gamma}G,\quad  \ep :=\rho_{\rp} -\mups.
\end{align*}
which is the solution of Poisson's equation $-\Delta  \vp = 4\pi\ep$. Let us prove that $\vp$ belongs to $L_{\mathrm{per},x}^{p}(\Gamma)$ for $ 2< p\leq +\infty$. 
As $\mup$ is smooth and has compact support in the $\bm r$-direction and $\rho_{\rp}$ belongs to $ L_{\mathrm{per},x}^p (\Gamma)$ for $1\leq p\leq 3$, hence $\F\ep(0,\cdot)$ belongs to $L^2(\R^2) \cap L^{\infty}(\R^2)\cap C^0(\R^2)$ by classical Fourier theory (see for example~\cite{ReeSim2} ). Moreover, as $\int_{\Gamma}|\bm r|\rho_{\rp}(x,\bm r) <+\infty$ by~\eqref{mildIntegrabilityDensity},
\begin{equation}
\label{majoration_partiale}
\begin{aligned}
\bnorm{\partial_{\bm k} \F\ep(0,\bm k)}_{L^{\infty}(\R^2)} &= \left|\int_{\Gamma}\rme^{-\rmi \bm r \cdot\bm k}\bm r\cdot\ep(x,\bm r)\,dx\,d\bm r \right| \\
&\leq \left|\int_{\Gamma}|\bm r|\left|\rho_{\rp} +\mups\right|(x,\bm r)\,dx \,d\bm r \right|<+\infty .
\end{aligned}
\end{equation}
This implies that $ \F\ep(0,\bm k)$ is $C^1(\R^2)$ and bounded. Remark also that $ \F\ep(0,\bm 0)=0$ by the charge neutrality and that $\F\ep(0,\cdot)$ belongs to $L^2(\R^2) \cap L^{\infty}(\R^2)\cap C^0(\R^2)$, hence for $1\leq \alpha <2$, 
\begin{equation}
\label{eq:FourierofV}
\begin{aligned}
\int_{\bb R^2} \left|\F\vp (0,\bm k)\right|^{\alpha}\,d\bm k&=\int_{\bb R^2}\frac{\left|\F\ep(0,\bm k)\right|^\alpha}{|\bm k|^{2\alpha}}\,d\bm k \\
&\leq  \int_{|\bm k| < 1}\frac{\left|\F\ep(0,\bm k)\right|^\alpha}{|\bm k|^{2\alpha}}\,d\bm k+ \left(\int_{|\bm k| \geq 1}\left|\F\ep(0,\bm k)\right|^{2\alpha} \,d\bm k\right)^{1/2} \left(\int_{|\bm k| \geq 1} \frac{1}{|\bm k|^{4\alpha}}\,d\bm k \right)^{1/2} <+\infty.
\end{aligned}
\end{equation}
Remark that the mixed Fourier transform $\F$ is an isometry from $\lpg$ to $ \ell^2\left(\Z, L^2\left(\R^2\right)\right)$ by~\eqref{Isometry_F}. On the other hand, $$ \forall \phi \in  \ell^1\left(\Z, L^1\left(\R^2\right)\right), \quad \norm{\F^{-1}\phi}_{L^{\infty}(\Gamma)} = \sup_{x\in\Gamma}\left| \frac{1}{2\pi}\sum_{n\in \bb Z}\int_{\bb R^2}\phi_n( \bm k)\,\rme^{\rmi (2\pi nx +\bm k\cdot\bm r)} \,d\bm k\right| \leq \frac{1}{2\pi } \norm{\phi}_{ \ell^1\left(\Z, L^1\left(\R^2\right)\right)},$$
By the Riesz–Thorin interpolation theorem (see for example~\cite{ReeSim2} and~\cite[Theorem 5.7]{lieb2001analysis}), we can deduce a Hausdorff-Young inequality for $\F^{-1}$: for $1\leq \alpha\leq 2$ there exists a constant $C_{\alpha}$ depending on $\alpha$ such that 
\begin{equation}
\label{interpolation_mixedF}
\forall \phi\in  \ell^\alpha\left(\Z, L^\alpha\left(\R^2\right)\right),\quad  \norm{\F^{-1}\phi}_{L_{\mathrm{per},x}^{\alpha'}(\Gamma)} \leq C_\alpha \norm{\phi}_{ \ell^\alpha\left(\Z, L^\alpha\left(\R^2\right)\right)}.
\end{equation}
where $\alpha' := \alpha/(\alpha-1)$. Hence in view of~\eqref{eq:FourierofV} and~\eqref{interpolation_mixedF}, for $1\leq \alpha <2$ and $2<\alpha' = \alpha/(\alpha-1)\leq +\infty$ there exist positive constants $C_{\alpha,1}$, $C_{\alpha,2}$ and $C_{\alpha,2}'$ such that 
\begin{equation}
\label{vpEstimateLP}
\begin{aligned}
\norm{\vp}_{L_{\mathrm{per},x}^{\alpha'}(\Gamma)}^{\alpha}&\leq C_\alpha^{\alpha}\norm{\F\vp}_{ \ell^\alpha\left(\Z, L^\alpha\left(\R^2\right)\right)}^{\alpha} = \frac{C_\alpha^{\alpha}}{2\pi } \sum_{n\in \Z}\int_{\R^2} \left|\F\vp (n,\bm k)\right|^{\alpha}\,d\bm k\\
& =\frac{C_\alpha^{\alpha}}{2\pi }\int_{\bb R^2} \left|\F\vp (0,\bm k)\right|^\alpha\,d\bm k +\frac{C_\alpha^{\alpha}}{2\pi } \sum_{n\neq 0} \int_{\bb R^2}\frac{\left|\F\ep(n,\bm k)\right|^{\alpha}}{\left(4\pi n^2+|\bm k|^2\right)^{\alpha}}\,d\bm k\\
&\leq C_{\alpha,1}+ \frac{C_\alpha^{\alpha}}{2\pi }\left( \sum_{n\neq 0} \int_{\bb R^2}\left|\F\ep(n,\bm k)\right|^{2\alpha} d\bm k\right)^{1/2}\left(\sum_{n\neq 0}\int_{\bb R^2}\frac{1}{(4\pi n^2+|\bm k|^2)^{2\alpha}}\,d\bm k\right)^{1/2}\\
&\leq C_{\alpha,1}+ C_{\alpha,2}\norm{\F\ep}_{ \ell^{2\alpha}\left(\Z, L^{2\alpha}\left(\R^2\right)\right)}^{\alpha} \leq C_{\alpha,1}+C_{\alpha,2}'\norm{\ep}_{L_{\mathrm{per},x}^{q}(\Gamma)}^{\alpha} <+\infty,
\end{aligned}
\end{equation}
where the last step we have used estimates similar to~\eqref{interpolation_mixedF} for $\F$ and the fact that $\ep$ belongs to $L_{\mathrm{per},x}^q(\Gamma)$ for  $4/3 <q :=2\alpha/(2\alpha -1) \leq 2$. Therefore $\vp$ belongs to $ L_{\mathrm{per},x}^{p}(\Gamma)$ for $ 2< p\leq +\infty$. By the elliptic regularity we know that $\vp$ belongs to the Sobolev space $W_{\mathrm{per},x}^{2,p}(\Gamma)$ for $ 2<p\leq 3$, where $W_{\mathrm{per},x}^{2,p}(\Gamma)$ is the space of functions, together with their gradients and hessians,  belong to $L_{\mathrm{per},x}^{p}(\Gamma)$. Approximating $\vp$ by functions in $\mathcal{D}_{\mathrm{per},x}(\Gamma)$, we also deduce that $\vp$ tends to $0$ when $|\bm r|$ tends to infinity. The mean-field potential $\vp$ defines a $-\Delta$-bounded operator on $L^2(\bb R^3)$ with relative bound zero, hence by the Kato--Rellich theorem (see for example \cite[Theorem 9.10]{Helffer}) we know that $\hp = -\frac{1}{2}\Delta +\vp$ uniquely defines a self-adjoint operator on $L^2(\bb R^3)$ with domain $H^2\left(\bb R^3\right)$ and form domain $H^1\left(\bb R^3\right)$. As $\hp$ is $\bb Z$-translation invariant in the $x$-direction,
$$\hp = \B^{-1}\left(\int_{\Gamma^*}\hpx\,\frac{d\xi}{2\pi}\right)\B,\quad \hpx := -\frac{1}{2}\Delta_{\xi}+\vp.$$
Note that the decomposed Hamiltonian $\hpx $ does not have a compact resolvent as $\Gamma $ is not a bounded domain. It is easy to see that $\sigma\left(-\Delta_{\xi}\right) =\sigma_{\mathrm{ess}}\left(-\Delta_{\xi}\right)  = \left[0,+\infty\right).$ On the other hand, by the inequality~\eqref{KSSGamma} we have
\begin{align*}
\bnorm{\vp\left(1-\Delta_{\xi}\right)^{-1}}_{\mathfrak{S}_2} &\leq \frac{1}{2\pi}\norm{\vp}_{\lpg}\left(\sum_{n\in \bb Z}\int_{\bb R^2}\frac{1}{\left((2\pi n+\xi)^2+|\bm k|^2+1\right)^2}\,d\bm k\right)^{1/2} < +\infty.
\end{align*}
In particular $\vp$ is a compact perturbation of $-\Delta_{\xi}$, and therefore introduces at most countably many eigenvalues below $0$ which are bounded from below by $-\norm{\vp}_{L^{\infty}}$. Denote by $\left\{\lambda_n(\xi)\right\}_{1\leq n \leq N_{H} }$ these (negative) eigenvalues for $N_{H}\in \bb N^*$ ($N_H$ can be finite or infinite). Then for all $\xi\in \Gamma^*$, $$\sigma_{\mathrm{ess}}(\hpx) =\sigma_{\mathrm{ess}}(-\Delta_{\xi})= [0,+\infty),\qquad \sigma_{\mathrm{disc}}(\hpx) = \bigcup_{1\leq n \leq N_H} \lambda_n\left(\xi \right) .$$ In view of the decomposition~\eqref{Hper}, a result of~\cite[Theorem XIII.85]{ReeSim4} gives the following spectral decomposition: $$\sigma_{\mathrm{ess}}\left(\hp\right)\supseteq \bigcup_{\xi\in\Gamma^*}\sigma_{\mathrm{ess}}(\hpx) =[0,+\infty) ,\quad \sigma_{\mathrm{disc}}\left(\hp\right) \subseteq \bigcup_{\xi\in\Gamma^*}\sigma_{\mathrm{disc}}(\hpx) = \bigcup_{\xi\in\Gamma^*}\bigcup_{1\leq n \leq N_H} \lambda_n\left(\xi \right) .$$ We also obtain from~\cite[Item (e) of Theorem XIII.85]{ReeSim4} that $$\lambda \in \sigma_{\mathrm{disc}}(\hp) \Leftrightarrow  \left\{\xi\in\Gamma^*\left|\, \lambda\in \sigma_{\mathrm{disc}}(\hpx)\right.\right\} \text{ has non-trivial Lebesgue measure. }$$
By the regular perturbation theory of the point spectra~\cite{kato1995} (see also~\cite[Section XII.2]{ReeSim4}) and the approach of Thomas~\cite[Lemma 1]{thomas1973}, we know that the eigenvalues $\lambda_n(\xi)$ below $0$ are analytical functions of $\xi$ and cannot be constant, so that $\left\{\xi\in\Gamma^*\left|\, \lambda\in \sigma_{\mathrm{disc}}(\hpx)\right.\right\}$ has trivial Lebesgue measure, and the  essential spectrum of $\hp$ below $0$ is purely absolutely continuous. As a conclusion,
 $$\sigma\left(\hp\right)=\sigma_{\mathrm{ess}}\left(\hp\right)= \bigcup_{\xi\in\Gamma^*}\sigma(\hpx).$$
\paragraph{The Fermi level is always negative.}
Let us prove that the inequality $N_H = F(0) \geq \int_{\Gamma} \mups $ always holds. The physical meaning of this statement is that the Fermi level of the quasi 1D system at ground state is always negative when the mean-field potential tends to $0$ in the $\bm r$-direction. We prove this by contradiction: assume that $F(0) <  \int_{\Gamma}\mups$, then we can always construct (infinitely many) states belonging to $\cl{F}_{\Gamma}$ with positive energies arbitrarily close to $0$ and they decrease the ground state energy of the problem~\eqref{pb1}. 

Let us first define a spectral projector representing all the states of $\hp$ below $0$: for any $\xi \in \Gamma^*$ and $\hpx$ defined in~\eqref{Hper}, define
$$ \rpo := \mathds 1_{(-\infty, 0]}( \hp ),\qquad \rpxo := \mathds 1_{(-\infty, 0]}( \hpx ).
$$
Therefore, 
\begin{equation}
N_H= F(0)= \frac{1}{|\Gamma^*|}\int_{\Gamma^*}\m{Tr}_{\lpg}\left(\rpxo\right)d\xi = \int_{\Gamma}\rho_{\rpo}.
 \label{F0}
\end{equation}
The inequality that $F(0) <  \int_{\Gamma}\mups$ implies that
\begin{equation}
 Z_{\mathrm{diff}} :=\int_{\Gamma}\mups - N_H  = \int_{\Gamma}\mups -\int_{\Gamma}\rho_{\rpo} \quad \in  \bb N^*.
 \label{DiffCharge}
\end{equation}
The condition~\eqref{DiffCharge} implies in particular that $N_H<+\infty$, \textit{i.e.}, that there are at most finitely many states below $0$. Let us construct states of $\hp$ with positive energies. These states belonging to $L^2(\bb R^3)$ approximate the plane waves of $\hp$ traveling in the $\bm r$-direction. For $R>0$, recall that $\mathfrak{B}_{R}$ is the ball in $\R^3$ centered at $0$. Consider a smooth function $t(x,\bm r)$ supported in $\mathfrak{B}_{1}$, equal to one in $\mathfrak{B}_{1/2}$ and such that $\norm{t}_{L^2(\bb R^3)} =1$. For $n\in \bb N^*$, let us define $$\psi_{n}(x, \bm r) := n^{-3/2} t\left(\frac{(x,\bm r) - \left(n^2, \left(n^2,n^2\right)\right)}{n}\right).$$
It is easy to see that $\psi_{n}$ belongs to $L^2(\R^3)$, converges weakly to $0$ when $n$ tends to infinity and $\norm{\psi_{n}}_{L^2(\R^3)} = 1$. Moreover, as $\vp$ tends to $0$ in the $\bm r$-direction, hence for any $\varepsilon>0$ there exists an integer $N_{\varepsilon}$ such that $\left|\vp(\cdot, (n^2,n^2))\right| \leq \varepsilon$ when $n\geq N_\varepsilon$. Denote by $\left\{\psi_{n,\xi}\right\}_{n\in \bb N^*,\xi\in\Gamma^*} $ the Bloch decomposition $\B$ in the $x$-direction (see Section~\ref{PreliminarySec} for the definition) of $\left\{\psi_{n}\right\}_{n\in \bb N^*}$ which belong to $\lpg$. For $n\geq N_\varepsilon$, it holds
\begin{equation}
\begin{aligned}
\bnorm{\hp\psi_{n}}_{L^2(\R^3)}^2 &= \frac{1}{2\pi}\int_{\Gamma^*} \bnorm{\hpx\psi_{n,\xi}}_{\lpg}^2 \,d\xi\\
&= \bnorm{ -n^{-7/2}\Delta t\left(\frac{\cdot- \left(n^2, \left(n^2,n^2\right)\right)}{n}\right) + \vp  \psi_{n}}_{L^2(\R^3)}^2 \leq 2\left(\frac{1}{n^4}  +\varepsilon^2\right).
\end{aligned} 
\label{WeylForScatteringStates}
\end{equation}
Remark that $\rpxo \hpx = \sum_{n=1}^{N_H}P_{\{\lambda_n(\xi)\}}(\hpx)$ is a compact operator, where $P_{\{\cdot\}}(\hpx)$ the spectral projector of $\hpx$. There exists an orthonormal basis $\left\{e_{n,\xi}\right\}_{n\geq 1}$ of $\lpg$ with elements in $H_{\mathrm{per},x}^1(\Gamma) $ such that $\rpxo \hpx e_{n,\xi } = \lambda_{n}(\xi)e_{n,\xi } $ for $1\leq n \leq N_H$, and $\rpxo\hpx e_{n,\xi }\equiv 0$ for $n > N_H$. Let us construct test density matrices composed by all the states of $\hp$ with negative energies and some states with positive energies. More precisely, for $N_0\in \bb N^*$ to be made precise later, consider a test density matrix 
$$  \gamma_{N_0}= \B^{-1}\left(\int_{\Gamma^*} \gamma_{N_0,\xi} \frac{d\xi}{2\pi}\right)\B ,$$
where \begin{align*}
\gamma_{N_0,\xi}& :=\rpxo+ \sum_{n=N_0+1}^{N_0+Z_{\mathrm{diff}}}(1-\rpxo)\left| \psi_{n,\xi}\left\rangle \right\langle \psi_{n,\xi}\right|\\
&=\sum_{ n=1}^{+\infty}\rpxo \left| e_{n,\xi}\left\rangle \right\langle e_{n,\xi}\right|   + \sum_{n=N_0+1}^{N_0+Z_{\mathrm{diff}}}(1-\rpxo)\left| \psi_{n,\xi}\left\rangle \right\langle \psi_{n,\xi}\right|.
\end{align*}
\begin{lemma}
\label{TestStatesADMLemma}
For any $N_0\in \bb N^*$, the state~$\gamma_{N_0}$ belongs to the admissible set~$\cl{F}_{\Gamma}$.
\end{lemma}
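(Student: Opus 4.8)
The plan is to check that $\gamma_{N_0}$ meets every constraint defining $\cl F_{\Gamma}$: that it is self-adjoint with $0\le\gamma_{N_0}\le 1$ and $\bb Z$-translation invariant in the $x$-direction; that it has finite kinetic energy per unit cell (i.e.\ $\gamma_{N_0}\in\cl P_{\mathrm{per},x}$); and that $\rho_{\gamma_{N_0}}-\mups\in\cl C_{\Gamma}$. I would verify these in turn, using that $\gamma_{N_0}$ is a finite-rank modification of the projector $\rpo$ — finite-rank precisely because the contradiction hypothesis~\eqref{DiffCharge} forces $N_H=F(0)<+\infty$. The $\bb Z$-translation invariance is automatic, since $\gamma_{N_0}=\B^{-1}(\int_{\Gamma^*}^{\oplus}\gamma_{N_0,\xi}\,\tfrac{d\xi}{2\pi})\B$ is decomposed by $\B$ by construction. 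For the operator bounds it suffices to check $0\le\gamma_{N_0,\xi}\le1$ for a.a.\ $\xi$: $\rpxo$ is an orthogonal projector of rank $N_H$, and the extra term is a nonnegative finite-rank operator with range contained in $\mathrm{Ran}(1-\rpxo)$. After projecting the $Z_{\mathrm{diff}}$ transverse bumps $\psi_{N_0+1,\xi},\dots,\psi_{N_0+Z_{\mathrm{diff}},\xi}$ onto $\mathrm{Ran}(1-\rpxo)$ — the projections being nonzero since $\psi_n\notin\mathrm{Ran}(\rpo)$ — and orthonormalizing by a finite Gram--Schmidt step, the extra term becomes an orthogonal projector onto a subspace of $\mathrm{Ran}(1-\rpxo)$, so $\gamma_{N_0,\xi}$ is itself an orthogonal projector and $0\le\gamma_{N_0,\xi}\le 1$. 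Ensuring this orthonormalization is harmless, uniformly in $\xi$, is the delicate point, and it is precisely what dictates the choice of the $\psi_n$ as bumps supported in the pairwise disjoint balls of radius $n$ centered at $(n^2,(n^2,n^2))$, which escape to infinity in the $\bm r$-direction.

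Next I would estimate the kinetic energy $\int_{\Gamma^*}\m{Tr}_{\lpg}(\sqrt{1-\Delta_\xi}\,\gamma_{N_0,\xi}\,\sqrt{1-\Delta_\xi})\,d\xi$ by splitting $\gamma_{N_0,\xi}$ into $\rpxo$ and the extra part. For $\rpxo$: each of its $N_H$ orthonormal eigenfunctions $e_{n,\xi}$ solves $-\tfrac12\Delta_\xi e_{n,\xi}=(\lambda_n(\xi)-\vp)e_{n,\xi}$ with $|\lambda_n(\xi)|\le\norm{\vp}_{L^\infty}$, so $\norm{\nabla_\xi e_{n,\xi}}_{\lpg}^2\le 2\norm{\vp}_{L^\infty}$ and $\langle e_{n,\xi},(1-\Delta_\xi)e_{n,\xi}\rangle\le 1+2\norm{\vp}_{L^\infty}$ uniformly in $n$ and $\xi$; summing over $n\le N_H$ and integrating over the bounded set $\Gamma^*$ is finite. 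For the extra part, the contribution of the $n$-th new state reduces (up to the normalization) to $\norm{\sqrt{1-\Delta_\xi}(1-\rpxo)\psi_{n,\xi}}_{\lpg}^2$; writing $\sqrt{1-\Delta_\xi}(1-\rpxo)=\sqrt{1-\Delta_\xi}-\sqrt{1-\Delta_\xi}\rpxo$ and integrating in $\xi$, the first piece gives $\int_{\Gamma^*}\norm{\sqrt{1-\Delta_\xi}\psi_{n,\xi}}_{\lpg}^2\,d\xi=2\pi\norm{\psi_n}_{H^1(\bb R^3)}^2=2\pi(1+n^{-2}\norm{\nabla t}_{L^2(\bb R^3)}^2)<+\infty$, and the second is controlled by $\norm{\sqrt{1-\Delta_\xi}e_{m,\xi}}\le(1+2\norm{\vp}_{L^\infty})^{1/2}$ together with $|\langle e_{m,\xi},\psi_{n,\xi}\rangle|\le\norm{\psi_{n,\xi}}$ and $\int_{\Gamma^*}\norm{\psi_{n,\xi}}_{\lpg}^2\,d\xi=2\pi$. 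As there are only $Z_{\mathrm{diff}}$ values of $n$, the total is finite, so $\gamma_{N_0}\in\cl P_{\mathrm{per},x}$.

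For the density condition I would write $\rho_{\gamma_{N_0}}=\rho_{\rpo}+\rho_{\Delta\gamma}$ with $\Delta\gamma:=\gamma_{N_0}-\rpo$. Since $\rpo$ has finite kinetic energy per cell (same bound), the Hoffmann--Ostenhof inequality~\eqref{HOinequality} and Sobolev embedding give $\rho_{\rpo}\in L_{\mathrm{per},x}^p(\Gamma)$ for $1\le p\le 3$, while $\mups\in\Spx(\Gamma)$. The density $\rho_{\Delta\gamma}$ of the rank-$Z_{\mathrm{diff}}$ modification is nonnegative, bounded and in $L_{\mathrm{per},x}^p(\Gamma)$ for every $p$ (by the smoothness and localization of the $\psi_n$ and of the eigenfunctions $e_{m,\xi}$), with $\int_\Gamma\rho_{\Delta\gamma}=Z_{\mathrm{diff}}$ after normalization. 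Hence, by~\eqref{F0} and~\eqref{DiffCharge},
\[
\int_\Gamma\bigl(\rho_{\gamma_{N_0}}-\mups\bigr)=N_H+Z_{\mathrm{diff}}-Z=0,
\]
so $\F(\rho_{\gamma_{N_0}}-\mups)(0,\bm 0)=0$; since moreover $\rho_{\gamma_{N_0}}-\mups\in L_{\mathrm{per},x}^1(\Gamma)\cap L_{\mathrm{per},x}^2(\Gamma)$ and $\bm k\mapsto\F(\rho_{\gamma_{N_0}}-\mups)(0,\bm k)$ is continuous, the estimate~\eqref{FourierTTestState} from the proof of Lemma~\ref{F_GammaNotEmpty} applies verbatim and yields $D_\Gamma(\rho_{\gamma_{N_0}}-\mups,\rho_{\gamma_{N_0}}-\mups)<+\infty$. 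By Remark~\ref{remark_neutral} this means $\rho_{\gamma_{N_0}}-\mups\in\cl C_{\Gamma}$, and together with the previous two paragraphs, $\gamma_{N_0}\in\cl F_{\Gamma}$.

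The step I expect to be the main obstacle is the operator bound $\gamma_{N_0,\xi}\le 1$: one must guarantee that adjoining $Z_{\mathrm{diff}}$ new occupied states to $\rpo$ still produces a sub-identity operator, which requires the new states to be mutually orthogonal and orthogonal to $\mathrm{Ran}(\rpxo)$ uniformly in $\xi$. This is where the specific geometry of the $\psi_n$ — disjoint balls receding to infinity in the transverse direction, finitely many of them — is used, and one must also check that the orthonormalization it permits neither spoils the kinetic-energy bound nor perturbs the exact charge count invoked above.
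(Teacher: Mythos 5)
Your overall skeleton (translation invariance by construction, finite kinetic energy per cell, charge count $N_H+Z_{\mathrm{diff}}=\int_\Gamma\mups$ giving $\F(\rho_{\gamma_{N_0}}-\mups)(0,\bm 0)=0$, then membership in $\mathcal{C}_\Gamma$) matches the paper, but the Coulomb step has a genuine gap. You claim that since $\rho_{\gamma_{N_0}}-\mups\in L^1\cap L^2$, $\F(\rho_{\gamma_{N_0}}-\mups)(0,\cdot)$ is \emph{continuous} and vanishes at $\bm 0$, the estimate~\eqref{FourierTTestState} ``applies verbatim''. It does not: in Lemma~\ref{F_GammaNotEmpty} the test density was compactly supported in the $\bm r$-direction, so its mixed Fourier transform was $C^1$, and it is the resulting bound $|\F(\cdot)(0,\bm k)|\lesssim|\bm k|$ near $\bm 0$ that makes $|\bm k|^{-1}\F(\cdot)(0,\bm k)$ square-integrable there. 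Mere continuity with a zero at the origin is not enough in dimension two (a modulus of vanishing like $|\log|\bm k||^{-1/2}$ gives a divergent integral in~\eqref{DG_Coulomb_interactions}). Here $\rho_{\rpo}$ is \emph{not} compactly supported in $\bm r$, and Assumption~\ref{assumption_density} concerns $\rho_{\rp}$, not $\rho_{\rpo}$, so it cannot be borrowed. The missing ingredient, which is exactly what the paper supplies, is the first-moment bound $\int_\Gamma|\bm r|\,\rho_{\gamma_{N_0}}\,dx\,d\bm r<+\infty$, obtained from the exponential decay in $\bm r$ of the finitely many eigenfunctions $e_{n,\xi}$ (Combes--Thomas) together with the compact $\bm r$-support of the $\psi_n$; this yields $C^1$ regularity of $\F(\rho_{\gamma_{N_0}}-\mups)(0,\cdot)$ near $\bm 0$ and only then does the argument of~\eqref{FourierTTestState} go through.

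The second issue is your treatment of the Pauli constraint. The lemma is about the operator $\gamma_{N_0}$ whose fibers are $\rpxo+\sum_n(1-\rpxo)|\psi_{n,\xi}\rangle\langle\psi_{n,\xi}|$; the paper argues $0\le\gamma_{N_0}\le1$ directly from this structure (the bumps are mutually orthogonal by their disjoint supports and the extra term acts in $\mathrm{Ran}(1-\rpxo)$), and the rest of the proof of Theorem~\ref{thmPeriodicExistence2} — in particular the neutrality computation~\eqref{NeutralTestState} and the energy estimate~\eqref{ScatterTestEnergy} — uses this exact form. Your plan instead projects the bumps onto $\mathrm{Ran}(1-\rpxo)$ and re-orthonormalizes by Gram--Schmidt; that proves admissibility of a \emph{different} operator, whose density is no longer $\frac{1}{2\pi}\sum_n\int_{\Gamma^*}|\psi_{n,\xi}|^2\,d\xi$, and you explicitly leave open precisely the points (uniformity in $\xi$, preservation of the exact charge $Z_{\mathrm{diff}}$ per unit cell, the kinetic bound) that would be needed to transfer the conclusion back to $\gamma_{N_0}$ and to keep the subsequent contradiction argument intact. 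As written, the proposal therefore neither establishes the stated lemma for the given $\gamma_{N_0}$ nor completes the admissibility proof for the modified state.
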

\begin{proof}
It is easy to see that $0\leq \gamma_{N_0} \leq 1$. Remark also that $\m{Ran}\left(\rpxo\right)= \m{Span}\left\{e_{n,\xi}\right\}_{1\leq n\leq N_H}$ for all $\xi\in \Gamma^*$. The density of $\gamma_{N_0}$ can be written as 
\begin{align*}
\rho_{\gamma_{N_0}} =  \frac{1}{2\pi}\int_{\Gamma^*} \sum_{n=1}^{N_H}\left|e_{n,\xi}\right|^2d\xi  +  \frac{1}{2\pi}\sum_{n=N_0+1}^{N_0+Z_{\mathrm{diff}}}\int_{\Gamma^*}   \left| \psi_{n,\xi}\right|^2 d\xi.
\end{align*}
The density $\rho_{\gamma_{N_0}}$ belongs to $L_{\mathrm{per}}^p(\Gamma) $ for $1\leq p \leq 3$ as $\left\{e_{n,\xi}\right\}_{n\geq 1}$ and $\left\{\psi_{n,\xi}\right\}_{n\geq 1}$ belong to $H_{\mathrm{per},x}^1(\Gamma)$. Besides, in view of~\eqref{F0}, 
\begin{equation}
\begin{aligned}
\int_{\Gamma}\rho_{\gamma_{N_0}}& =  \frac{1}{2\pi}\int_{\Gamma^*}\m{Tr}_{\lpg}\left(\rpxo\right)d\xi  + \frac{1}{2\pi}\sum_{n=N_0+1}^{N_0+Z_{\mathrm{diff}}}\int_{\Gamma}\int_{\Gamma^*}   \left| \psi_{n,\xi}\right|^2 d\xi\\
&=N_H + \sum_{n=N_0+1}^{N_0+Z_{\mathrm{diff}}} \int_{\R^3} |\psi_{n}|^2= N_H+ Z_{\mathrm{diff}} = \int_{\Gamma}\mups.
\end{aligned}
\label{NeutralTestState}
\end{equation}
A simple calculation shows that $|\nabla|\gamma_{N_0,\xi}|\nabla|$ is trace-class on $\lpg$. Hence $\gamma_{N_0}$ belongs to $\mathcal{P}_{\mathrm{per},x}$. Let us show that $\rho_{\gamma_{N_0}} - \mups$ belongs to $\mathcal{C}_{\Gamma}$. Following calculations similar to the ones leading to~\eqref{FourierTTestState}, we only need to prove that $\bm k \mapsto |\bm k|^{-1}\F(\rho_{\gamma_{N_0}} -\mups )(0,\bm k) $ is square-integrable near $\bm k =\bm 0$ since $\rho_{\gamma_{N_0}} - \mups$ belongs to $\lpg$. Remark that $\F(\rho_{\gamma_{N_0}} -\mups )(0, \bm 0) =\int_{\Gamma}\rho_{\gamma_{N_0}}- \mups  = 0$ and 
\begin{align*}
&\left|\partial_{\bm k}\F(\rho_{\gamma_{N_0}} -\mups )(0, \bm 0)\right| =\left|\int_{\Gamma}\bm r\left(\rho_{\gamma_{N_0}}- \mups \right)(x,\bm r)\,dx\,d\bm r\right| \\
&\leq \frac{1}{2\pi}\sum_{n=1}^{N_H}\int_{\Gamma^*} \int_{\Gamma}|\bm r|\left|e_{n,\xi}\right|^2(x,\bm r)\,dx\,d\bm r\,d\xi  +  \sum_{n=N_0+1}^{N_0+Z_{\mathrm{diff}}}\int_{\bb R^3}|\bm r|\left| \psi_{n}\right|^2(x,\bm r)\,dx\,d\bm r\\
&\qquad \qquad+ \int_{\Gamma}|\bm r|\mups(x,\bm r)\,dx\,d\bm r< +\infty,
\end{align*}
where we have used the fact that the eigenfunctions of $\hpx$ associated with negative eigenvalues decay exponentially (see~\cite[Theorem 3.4]{hislop1995introduction} and~\cite[Theorem 1]{combes1973}) so that $\int_{\Gamma}|\bm r|\left|e_{n,\xi}\right|^2(x,\bm r)\,dx\,d\bm r <+\infty$ for~$1\leq n \leq N_H$ and $\xi\in\Gamma^*$, and the fact that $\left\{\psi_{n}\right\}_{N_0+1\leq n \leq N_0+Z_{\mathrm{diff}}}$ have compact support in the $\bm r$-direction by definition. Therefore $\F(\rho_{\gamma_{N_0}} -\mups )(0,\bm k)$ is $C^1$ near $\bm k = \bm 0$. The conclusion then follows by arguments similar to those leading to~\eqref{FourierTTestState} in Section~\ref{F_GammaNotEmptySec}.
\end{proof}
Lemma~\ref{TestStatesADMLemma} implies that we can construct many admissible states in $\cl{F}_{\Gamma}$ by varying $N_0$. Let us show that we can always find $N_0$ such that $\gamma_{N_0}$ decreases the ground state energy of~\eqref{pb1} if $N_H< \int_{\Gamma}\mups$.

Given a minimizer $\overline{\gamma} $ of~\eqref{pb1}, simple expansion of the energy functional around minimal shows that $\overline{\gamma}$ also minimizes the functional (see~\cite{Cances2008}) $$\gamma \mapsto \int_{\Gamma^*}\m{Tr}_{\lpg}\left(\hpx\gamma_\xi\right)d\xi$$ on $\mathcal{F}_{\Gamma}$. Therefore, given $N_0\in \bb N^*$ we have
\begin{equation}
\begin{aligned}
0&\leq \int_{\Gamma^*}\m{Tr}_{\lpg}\left(\hpx\left(\gamma_{N_0,\xi}-\overline{\gamma_\xi}\right)\right) d\xi  =\int_{\Gamma^*}\m{Tr}_{\lpg}\left(\rpxo\hpx\left(\gamma_{N_0,\xi}-\overline{\gamma_\xi}\right)\right) d\xi\\
&\qquad\qquad +\int_{\Gamma^*}\m{Tr}_{\lpg}\left((1-\rpxo)\hpx\left(\gamma_{N_0,\xi}-\overline{\gamma_\xi}\right)\right) d\xi \\
&=M+\int_{\Gamma^*}\m{Tr}_{\lpg}\left((1-\rpxo)\hpx \gamma_{N_0,\xi}\right) d\xi - \int_{\Gamma^*}  \m{Tr}_{\lpg}\left((1-\rpxo)\hpx\overline{\gamma_\xi}\right) d\xi,
\end{aligned}
\label{PositiveTestEIneq}
\end{equation}
where, since $0\leq \overline{\gamma_{\xi}}\leq 1$ and $\left\{\lambda_n(\cdot) \right\}_{1\leq n \leq N_H}<0$, 
\begin{equation}
\begin{aligned}
M&:=\int_{\Gamma^*}\m{Tr}_{\lpg}\left(\rpxo\hpx\left(\gamma_{N_0,\xi}-\overline{\gamma_\xi}\right)\right) d\xi=\int_{\Gamma^*}\sum_{ n=1}^{N_H}\lambda_{n}(\xi)\left\langle e_{n,\xi}\left| 1- \rpxo\overline{\gamma_{\xi}}\right| e_{n,\xi}\right\rangle  d\xi \\
&=\int_{\Gamma^*}\sum_{ n=1}^{N_H}\lambda_{n}(\xi)\left\langle e_{n,\xi}\left| 1- \overline{\gamma_{\xi}}\right| e_{n,\xi}\right\rangle  d\xi \leq 0.
\label{PositiveTestPart1}
\end{aligned}
\end{equation}
In view of~\eqref{WeylForScatteringStates} and by a Cauchy-Schwarz inequality, we deduce that, for $N_0\geq N_{\varepsilon}$: 
\begin{equation}
\begin{aligned}
&\int_{\Gamma^*}\m{Tr}_{\lpg}\left((1-\rpxo)\hpx \gamma_{N_0,\xi}\right) d\xi \\
&=  \int_{\Gamma^*}\sum_{n=N_0+1}^{N_0+Z_{\mathrm{diff}}} \sum_{m=1}^{+\infty} \left\langle e_{m,\xi}\left| \hpx\left| \psi_{n,\xi}\left\rangle \right\langle \psi_{n,\xi}\right.\right| e_{m,\xi}\right\rangle  d\xi\\
&\leq \int_{\Gamma^*}\sum_{n=N_0+1}^{N_0+Z_{\mathrm{diff}}} \left(\sum_{m=1}^{+\infty} \left| \left\langle\psi_{n,\xi}\left.\right| e_{m,\xi}\right\rangle\right|^2\right)^{1/2}\left(\sum_{m=1}^{+\infty} \left| \left\langle e_{m,\xi}\left| \hpx\right. \left| \psi_{n,\xi}\right\rangle \right. \right|^2\right)^{1/2}d\xi\\
&= \int_{\Gamma^*}\sum_{n=N_0+1}^{N_0+Z_{\mathrm{diff}}} \norm{\psi_{n,\xi}}_{\lpg} \bnorm{\hpx\psi_{n,\xi}}_{\lpg} d\xi  \\
&\leq 2\pi \sum_{n=N_0+1}^{N_0+Z_{\mathrm{diff}}}\left(\frac{1}{2\pi} \int_{\Gamma^*}\norm{\psi_{n,\xi}}_{\lpg}^2d\xi\right)^{1/2}\left(\frac{1}{2\pi} \int_{\Gamma^*} \bnorm{\hpx\psi_{n,\xi}}_{\lpg}^2 d\xi \right)^{1/2} \\
&\leq  2\sqrt{2}\pi \sum_{n=N_0+1}^{N_0+Z_{\mathrm{diff}}} \left(\frac{1}{n^4}  +\varepsilon^2\right)^{1/2} \leq 2\sqrt 2\pi Z_\mathrm{diff}\left(\frac{1}{N_0^4} +\varepsilon^2\right)^{1/2}.
\end{aligned}
\label{ScatterTestEnergy}
\end{equation}
Moreover, by definition of $\rpo$
\begin{equation}
\begin{aligned}
& \int_{\Gamma^*}\m{Tr}_{\lpg}\left((1-\rpxo)\hpx \overline{\gamma_\xi}\right)  = \int_{\Gamma^*}\m{Tr}_{\lpg}\left(\left|\hpx\right|^{1/2}  (1-\rpxo)\overline{\gamma_\xi}(1-\rpxo)\left|\hpx\right|^{1/2} \right) \geq 0.
\end{aligned}
\label{LastScatterEnergy}
\end{equation}
We distinguish in the inequality~\eqref{PositiveTestPart1} the cases $M \equiv 0$ or $M<0$. When $M\equiv 0$, the inequality~\eqref{PositiveTestPart1} implies that $\overline{\gamma_\xi} \rpxo = \rpxo $ for almost all $\xi\in\Gamma^*$. In view of the the inequalities~\eqref{ScatterTestEnergy} and~\eqref{LastScatterEnergy}, the inequality~\eqref{PositiveTestEIneq} implies that, 
\begin{equation}
\forall N_0\geq N_\varepsilon,\quad 0\leq \int_{\Gamma^*}\m{Tr}_{\lpg}\left((1-\rpxo)\hpx\overline{\gamma_\xi}\right) d\xi  \leq 2\sqrt 2 \pi Z_\mathrm{diff}\left(\frac{1}{N_0^4} +\varepsilon^2\right)^{1/2}.
\label{FirstTestIneqE}
\end{equation}
By letting $N_0$ tend to infinity, it is easy to deduce that $(1-\rpxo )\overline{\gamma_\xi} = 0$ for almost all $\xi\in\Gamma^*$. Together with the fact that $\overline{\gamma_\xi} \rpxo = \rpxo $ we deduce that $\overline{\gamma_\xi}\equiv \rpxo$ for almost all $\xi\in\Gamma^*$. In view of~\eqref{F0} and~\eqref{DiffCharge}, by the charge neutrality we obtain that 
$$ Z_{\mathrm{diff}} =\int_{\Gamma}\mups - N_H=\int_{\Gamma}\rho_{\overline{\gamma}} -  N_H =  \frac{1}{|\Gamma^*|}\int_{\Gamma^*}\m{Tr}_{\lpg}\left(\rpxo\right)d\xi- N_H \equiv 0.$$
Hence $ \int_{\Gamma}\mups =N_H  = F(0)$. This also implies that the minimizer of the problem~\eqref{pb1} equals to $\rpo$ when $N_H= F(0)=\int_{\Gamma}\mups$ and $Z_{\m{diff}}\equiv 0$. When $M<0$ and $Z_{\mathrm{diff}}\neq 0$, we can always find $\varepsilon >0$ and $N_0\geq N_{\varepsilon}$ such that $ 2\sqrt 2\pi Z_\mathrm{diff}\left(\frac{1}{N_0^4} +\varepsilon^2\right)^{1/2}\leq -M/2$. In view of the the inequalities~\eqref{ScatterTestEnergy} and~\eqref{LastScatterEnergy}, the inequality~\eqref{PositiveTestEIneq} implies that
$$\forall N_0\geq N_\varepsilon,\quad 0\leq \int_{\Gamma^*}\m{Tr}_{\lpg}\left((1-\rpxo)\hpx\overline{\gamma_\xi}\right) d\xi  \leq M/2 < 0,$$
which leads to contradiction. We can finally conclude that $F(0) \geq  \int_{\Gamma}\mups$, so that the Fermi level of the quasi 1D system is always non-positive. In the following paragraph we show that the Fermi level can be chosen to be strictly negative.
\paragraph{Form of the minimizer and decay of the density of minimizers.}
We have already shown that if $N_H = F(0)\equiv \int_{\Gamma}\mups $ then the 1D system has a unique minimizer which is equal to $\rpo$. This also implies that for almost all $\xi\in\Gamma^*$, the operator $\hpx$ has $N_H$ strictly negative eigenvalues below $0$, therefore we can always choose the Fermi level $\epsilon_F  \in\left( \max_{\xi\in \Gamma^*}\lambda_{N_H}(\xi), 0\right) $. If $ F(0)>\int_{\Gamma} \mups$, it is clear that there exists $\epsilon_F<0$ such that $F(\epsilon_F) =\int_{\Gamma}\mups$ as $F(\kappa)$ is a non-decreasing function on $(-\infty,0]$ with range in $[0,F(0)]$. The form of the minimizer and the uniqueness is a direct adaptation of \cite[Theorem 1]{Cances2008} by using a spectral projector decomposition similar to~(A.2) of \cite[Theorem 1]{Cances2008}, that is, the unique minimizer can be written as
\begin{equation}
\begin{aligned}
\rp &=\mathds{1}_{(-\infty,\epsilon_F]}(\hp)= \B^{-1}\left(\int_{\Gamma^*}\rpx\,\frac{d\xi}{2\pi}\right)\B= \B^{-1}\left(\int_{\Gamma^*}\sum_{ n=1}^{N_H}\mathds 1(\lambda_n(\xi)\leq \epsilon_F) \left| e_{n,\xi}\left\rangle \right\langle e_{n,\xi}\right|  \right)\B,
\end{aligned}
\label{FormMinimzer}
\end{equation}
where $\rpx:=  \mathds{1}_{(-\infty,\epsilon_F]}(\hpx)$. The Fermi level $\epsilon_F< 0$ can be considered as the Lagrange multiplier associated with the charge neutrality condition 
$$F(\epsilon_F) = \int_{\Gamma}\rho_{\rp} = \int_{\Gamma} \mups.$$
\begin{detail}
{\color{red}{The following detail is: change to another Bloch transform to define a Kato-analytical family operators with constant domain in $L_{\mathrm{per}}^2$ and shows that indeed the eigenvalues are analytic in $\xi$, and cannot be constant, therefore $\left|\left\{\xi\in\Gamma^*\left|\, \lambda\in \sigma_{\mathrm{disc}}(\hpx)\right.\right\}\right| \equiv 0$.  }}
Introduce $\bb Z$-periodic $L^p$ spaces $$L_{\mathrm{per}}^p(\Gamma):= \left\{\phi\in  L_{\mathrm{loc}}^p(\mathbb{R}, L^p(\mathbb{R}^2))\left|\, \phi(x,\cdot)= \phi(x+k,\cdot)\text{ , $\forall k\in \bb Z$}\right.\right\},$$
endowed with norm 
$\norm{\phi}_{L_{\mathrm{per}}^p(\Gamma)} := \left(\int_{\Gamma}|\phi|^p\right)^{1/p}$ for $1\leq p<+\infty$ and $\norm{\phi}_{L_{\mathrm{per}}^{\infty}(\Gamma)} := \text{ess-sup }|\phi|$. Define $\widetilde{L^2(\Gamma^*;\lpg)}:= \int_{\Gamma^*}\lpg\,d\xi.$
We introduce an alternative Bloch decomposition $\widetilde{\B}: L^2(\bb R^3)\to  \widetilde{L^2(\Gamma^*;\lpg)} $, which is an isometry $$\forall \phi \in \mathscr{S}(\bb R^3),\, \text{ for a.e. }(x,\bm r)\in \Gamma, \, \xi \in \Gamma^* \quad \left(\widetilde{\B}\phi\right)_{\xi}(x,\bm r) = \left(\B\phi\right)_{\xi}(x,\bm r)\cdot \rme^{-\rmi x\xi}= \sum_{k\in \bb Z}\rme^{-\rmi (k+x)\xi}\phi(x+k,\bm r).$$
For $\xi \mapsto f_{\xi}\in \widetilde{L^2(\Gamma^*;\lpg)}$, the inverse of $\widetilde{\B}$ is defined to be
$$\forall k\in\bb Z,\,\text{ for a.e. }(x,\bm r)\in \Gamma,\quad \left(\widetilde{\B}^{-1}f_{\xi}\right)(x+k,\bm r) =\left(\B^{-1}f_{\xi}\rme^{\rmi x\xi}\right)(x+k,\bm r)= \int_{\Gamma^*}\rme^{\rmi (k+x)\xi}f_{\xi}(x,\bm r)\frac{d\xi}{2\pi}. $$
With $\widetilde{\B}$, we have an alternative decomposition 
$$\hp = \widetilde{\B}^{-1}\left(\int_{\Gamma^*}\widetilde{\hpx}\,\frac{d\xi}{2\pi}\right)\widetilde{\B},\quad \widetilde{\hpx} = -\frac{1}{2}\Delta_{\mathrm{per}}- \rmi \xi\frac{\partial}{\partial_x}+\frac{\xi^2}{2}+\vp,$$
where $ \Delta_{\mathrm{per}}$ is the Laplacian operator defined on $\lpg$ with periodic boundary conditions. The reason for the introduction of $\widetilde{\B}$ is that under $\widetilde{\B}$, the operator $ \left(\widetilde{\hpx}\right)_{\xi\in\Gamma^*} $ has a constant domain in $\lpg$, which leads to the characterization of analytic family in the sense of Kato (See, for ex \cite[Section XII.2]{ReeSim4}). An application of perturbation theory of the isolated eigenvalues (not necessarily non-degenerate) \cite[Theorem XII.13]{ReeSim4} for the analytic family $\left(\widetilde{\hpx}\right)_{\xi\in\Gamma^*} $ indicates that if $\lambda \in \sigma_{\mathrm{disc}}(\widetilde{H_{\mathrm{per},\xi_0}})$ for some $\xi_0\in \Gamma^*$ with multiplicity $m$, then there are $m$ not necessarily distinct single-valued functions $\lambda^{(1)}(\xi)\cdots \lambda^{(m)}(\xi)$, analytic near $\xi_0$ which are eigenvalues of $\widetilde{\hpx}$. By an argument which is similar to \cite[Lemma 2 of Theorem XIII.99]{ReeSim4}, $\lambda^{i}(\xi)$ cannot be constant({\color{red}{cannot see the impact of compact resolvent condition}}). By the analyticity and non-constancy of the eigenvalue of $\widetilde{\hpx}$, we can see that $\left\{\xi\in\Gamma^*\left|\, \lambda\in \sigma_{\mathrm{disc}}(\widetilde{\hpx})\right.\right\}$ has Lebesgue measure $0$. Therefore $\sigma_{\mathrm{disc}}(\hp)\equiv \emptyset$. Moreover, it is easy to see that if $\lambda$ is an eigenvalue associated with an eigenfunction $\phi$ of $\hpx$ then $\phi(x,\bm r) \rme^{-\rmi x\xi }$ is an eigenfunction of $\widetilde{\hpx}$ with eigenvalue $\lambda$, therefore $\cup_{\xi\in\Gamma^*}\sigma_{\mathrm{disc}}\left(\hpx\right) = \cup_{\xi\in\Gamma^*}\sigma_{\mathrm{disc}}\left(\widetilde{\hpx}\right)$, and
\end{detail}
Once the unique minimizer is shown to be a spectral projector, we can use the exponential decay property of the eigenfunctions of $\hpx$ in the $\bm r$-direction via the Combe--Thomas estimate~\cite[Theorem 1]{combes1973}: for almost all $\xi\in\Gamma^*$, there exist positive constant $C(\xi)$ and $\alpha(\xi)$ such that $$\forall (x,\bm r)\in \Gamma, \,\,\, \forall \, 1\leq n\leq N_H,\qquad \left|e_{n,\xi}(x,\bm r)\right| \leq C(\xi)\rme^{-\alpha(\xi)|\bm r|}.$$
On the other hand, the fact that $\int_{\Gamma}\mups  = F(\epsilon_F) <+\infty$ implies that there exist only finitely many states of $\hpx$ below $\epsilon_F$ for all $\xi\in\Gamma^*$. Therefore there exist positive constants $C_{\epsilon_F}$ and $\alpha_{\epsilon_F}$ such that 
\begin{align*}
0\leq \rho_{\rp} (x,\bm r)  \leq  \frac{1}{2\pi}\int_{\Gamma^*}\sum_{ n=1}^{N_H}\mathds 1(\lambda_n(\xi)\leq \epsilon_F)  C^2(\xi)\rme^{-2\alpha(\xi)|\bm r|} \,d\xi \leq  C_{\epsilon_F}\rme^{-\alpha_{\epsilon_F} |\bm r|}.
\end{align*}
Remark that this exponential decay property coincides with Assumption~\ref{assumption_density} that $\int_{\Gamma}|\bm r|\rho_{\rp}(x,\bm r)\,dx\,d\bm r <+\infty$.

\subsection{Proof of Lemma~\ref{symmetryPotentialLemma}}
\label{symmetryPotentialLemmaSec}
Assume that~\eqref{absenceDipole} holds, that is $\mup(x, \bm r)\equiv \mup(x, |\bm r|)$ has radial symmetry in the $\bm r$-direction. It is clear that the results of Theorem~\ref{thmPeriodicExistence2} hold. We employ the same notations as in Theorem~\ref{thmPeriodicExistence2} in the sequel. By the uniqueness of density, $\rho_{\rp}$ enjoys the same radial symmetry in the $\bm r$-direction. Recall that $\ep =\rho_{\rp} -\mups $. Together with the facts that $\int_{\Gamma}|\bm r|\rho_{\rp}(x,\bm r)\,dx\,d\bm r <+\infty$ and that $\mup$ has compact support in the $\bm r$-direction. The radial symmetry in the $\bm r$-direction implies that
\begin{equation}
\label{symmetry_rp}
\int_{\Gamma}\bm r\cdot \ep(x,\bm r)\,dx\,d\bm r \equiv 0.
\end{equation}
Remark also that the exponential decay of density implies that $\int_{\Gamma}|\bm r|^2 |\ep(x,\bm r)|\,dx\,d\bm r <+\infty.$ Following calculations similar to the those in~\eqref{majoration_partiale}, ~\eqref{eq:FourierofV} and~\eqref{vpEstimateLP}, it is easy to deduce that $\partial_{\bm k}\F\ep(0,\bm k)\equiv 0$ and $\partial_{\bm k}^2\F\ep(0,\bm k)$ is continuous and bounded, so that $\F\vp (0,\cdot)$ belongs to $L^2(\R^2)$, and $\vp$ also belongs to $\lpg$. Let us prove that $\vp \in L_{\mathrm{per},x}^p(\Gamma)$ for $1<p < 2$ (for which we can conclude that~$\vp $ belongs to $ L_{\mathrm{per},x}^p(\Gamma)$ for $1< p\leq +\infty$). Let us rewrite~$\vp$ as
\begin{equation}
\begin{aligned}
\vp(x, \bm r)  &= \left(\ep  \star_{\Gamma}G\right)(x,\bm r)
= \left( \ep  \star_{\Gamma}\widetilde{G}\right)(x,\bm r) +T(\bm r),
\end{aligned}
\label{VpDecomposed}
\end{equation}
where $$T(\bm r)= -2 \int_{\bb R^2}\epx(\bm r') \log\left(|\bm r-\bm r'|\right)\, d\bm r',\quad  \epx(\bm r) := \int_{-1/2}^{1/2} \ep (x,\bm r) \,dx.$$
Recall that $\mups$ has compact support in the $\bm r$-direction, hence there exist positive constants $C_{q}, \alpha_{q}$ such that
$$
\forall (x,\bm r)\in \R^3,\quad |\ep(\cdot,\bm r)| \leq C_{q}\rme^{-\alpha_{q} |\bm r|},\quad  |\epx(\bm r) |\leq C_{q}\rme^{-\alpha_{q} |\bm r|}.
$$
As $\widetilde{G}$ belongs to $ L_{\mathrm{per},x}^p(\Gamma) $ for $1\leq p<2$, by Young's convolution inequality we deduce that $ \ep \star_{\Gamma}\widetilde{G}$ belongs to $ L_{\mathrm{per},x}^t(\Gamma) $ for $1\leq t \leq +\infty$. 

It remains to prove that $T(\bm r) $ belongs to $ L^p(\bb R^2) $ for $ 1<p<2$. Let us use the partition $\bb R^2 = \left\{|\bm r| \leq 2R \right\}\cup \left\{|\bm r| >  2R
\right\}$ for the integration domain of $T(\bm r)$. Note first that $ \log\left(\left|\bm r \right|\right)$ is $L_{\mathrm{loc}}^t(\bb R^2)$ for $1\leq t < +\infty$. Therefore, by a Cauchy-Schwarz inequality, there exists a positive constant $C_{R,1} $ such that for $p' = p/(p-1) \in (2,+\infty)$: 
\begin{equation}
\begin{aligned}
\label{CR1}
\left(\int_{|\bm r| \leq 2R}\left|T(\bm r)\right|^p d\bm r \right)^{1/p}&=2\left(\int_{|\bm r| \leq 2 R}\left| \int_{\bb R^2}\epx(\bm r') \log\left(\left|\bm r- \bm r'\right|\right)  d\bm r'\right|^p d\bm r\right)^{1/p} \\
&\leq 2\left( \int_{|\bm r| \leq 2R}\left|\int_{|\bm r'| \leq 3R}\epx(\bm r') \log\left(\left|\bm r- \bm r'\right|\right)  d\bm r' \right|^p d\bm r\right)^{1/p}\\
&\qquad+2C_{q} \left(\int_{|\bm r| \leq 2R}\left|\int_{|\bm r'| > 3R}\rme^{-\alpha_{q} |\bm r'|}\left|\log\left(\left|\bm r- \bm r'\right|\right)\right| d\bm r' \right|^p d\bm r \right)^{1/p}\\
& \leq 2\left(\int_{|\bm r'| \leq 3R}\left|\epx\right|^p\right)^{1/p}\left( \int_{|\bm r| \leq 2R}\left( \int_{|\bm r'| \leq 3R}\left|\log\left(\left|\bm r- \bm r'\right|\right)\right|^{p'}  d\bm r' \right)^{p/p'} d\bm r\right)^{1/p}\\
&\qquad+2C_{q} \left(\int_{|\bm r| \leq 2R}\left|\int_{|\bm r'| > 3R}\rme^{-\alpha_{q} |\bm r'|} \left|\log(|(R, R)-\bm r'|)\right| d\bm r' \right|^p d\bm r \right)^{1/p}\leq C_{R,1}.
\end{aligned}
\end{equation}
Let us look at the integration domain $\left\{ |\bm r| >2R\right\}$. Remark that by the charge neutrality condition and the radial symmetry condition~\eqref{symmetry_rp}, it holds, for any $\bm r\neq 0$, $$ \int_{\bb R^2}\epx(\bm r')\log\left(\left|\bm r \right|\right)\,d\bm r' \equiv 0,\quad \int_{\bb R^2}\epx(\bm r')\frac{\bm r'\bm r}{|\bm r|^2}\,d\bm r'\equiv 0 .$$ Denote by 
$$Q(\bm r, \bm r'):= \log\left(\left|\bm r- \bm r'\right|\right) -\log\left(\left|\bm r \right|\right)- \frac{\bm r'\bm r}{|\bm r|^2}= \frac{1}{2}\log\left(1 - \frac{2\bm r\bm r'}{|\bm r|^2} + \frac{|\bm r'|^2}{|\bm r|^2}\right) - \frac{\bm r'\bm r}{|\bm r|^2}.$$
Then $T(\bm r)= -2 \int_{\bb R^2}\epx(\bm r') Q(\bm r, \bm r')\, d\bm r'. $ Remark that when $|\bm r| >2R$ and $\frac{|\bm r'|}{|\bm r|} \leq \varepsilon_R$ for $\varepsilon_R >0$ fixed. A Taylor expansion shows that there exists a positive constant $C$ such that
$\left|Q(\bm r, \bm r')\right| \leq  C \frac{|\bm r'|^2}{|\bm r|^2}$. This motivates the following partition of $\bb R^2$ given $|\bm r| >2R$: $$\bb R^2=\bb B_{\varepsilon_R} \cup \bb B_{\varepsilon_R}^{\complement}, \quad\bb B_{\varepsilon_R}:= \left\{\bm r' \in \R^2\left|\, \frac{|\bm r'|}{|\bm r|} \leq \varepsilon_R \right. \right\}.$$
Hence $$T(\bm r) =T_{\m{int}}(\bm r) + T_{\m{ext}}(\bm r),\quad  T_{\m{int}}(\bm r):= 
\int_{ \bb B_{\varepsilon_R}}\epx(\bm r') Q(\bm r, \bm r')d\bm r', \qquad T_\m{ext}(\bm r):= \int_{ \bb B_{\varepsilon_R}^{\complement}}\epx(\bm r') Q(\bm r, \bm r') d\bm r'.
$$
Therefore, for $1<p<2$,
\begin{equation}
\label{CR2}
\begin{aligned}
\int_{|\bm r| > 2R}\left|T_\m{int}(\bm r)\right|^p d\bm r 
&\leq 2C^p\int_{|\bm r| >2 R}\left| \int_{\bb B_{\varepsilon_R}}\epx(\bm r') \frac{|\bm r'|^2}{|\bm r|^2}  d\bm r'\right|^p d\bm r\\
& \leq 2C' \int_{|\bm r| >2 R}\left| \int_{ |\bm r'|\leq \varepsilon_R |\bm r|} \rme^{ -\alpha_{q}  |\bm r'| } |\bm r'|^2 d\bm r'\right|^p |\bm r|^{-2p} d\bm r <+\infty.
\end{aligned}
\end{equation}
Similarly,
\begin{equation}
\label{CR3}
\begin{aligned}
\int_{|\bm r| > 2R}\left|T_\m{ext}(\bm r)\right|^p d\bm r &\leq C_1 \int_{|\bm r| >2 R}\left| \int_{|\bm r'| > \varepsilon_R |\bm r|}\rme^{-\alpha_{q} |\bm r'|} \left|Q(\bm r, \bm r')\right|  d\bm r'\right|^p d\bm r \\
&\leq C_1 \int_{|\bm r| >2 R}\left| \int_{|\bm r'| > \varepsilon_R |\bm r|}\rme^{-\alpha_{q}\varepsilon_R |\bm r|/2} \rme^{-\alpha_{q} |\bm r'|/2} \left|Q(\bm r, \bm r')\right|  d\bm r'\right|^p d\bm r <+\infty.
\end{aligned}
\end{equation}
In view of~\eqref{CR1}, ~\eqref{CR2} and~\eqref{CR3} we conclude that $T(\bm r)$ belongs to $L^p(\bb R^2)$ for $1<p<2$. This leads to the conclusion that~$\vp$ belongs to $L_{\mathrm{per},x}^p(\Gamma)$ for $1<p\leq +\infty$.

\subsection{Proof of Proposition \ref{spectrapPpHchi}}
\label{spectrapPpHchiSec}
Let us emphasize that the function $\chi$ being translation-invariant in the $\bm r$-direction makes it difficult to control the compactness in the $\bm r$-direction across the junction surface. Our geometry is very different from the cylindrical geometry considered in~\cite{Hempel2014} for instance which automatically provides compactness in the $\bm r$-direction.

The proof of $\sigma_{\mathrm{ess}}\left(\hl\right)\cup  \sigma_{\mathrm{ess}}\left(\hr\right)\subseteq \sigma_{\mathrm{ess}}(\hc)$ is relatively easier than the converse inclusion. Intuitively, the $a_L$-periodicity (resp. $a_R$-periodicity) implies that the Weyl sequences of $\hl$ (resp. $\hr$) after a translation by $na_L$ (resp. $na_R$-translation) are still Weyl sequences of $\hl$ (resp. $\hr$) for any $n\in \bb Z$. This suggests that one can construct Weyl sequences of $\hc$ by properly translating Weyl sequences of $\hl $ or $\hr$, as $\hc$ is a linear interpolation of $\hl$ and $\hr$ hence behaves like $\hl$ or $\hr$ away from the junction surface. The construction of Weyl sequences of either $\hl$ or $\hr$ from Weyl sequences of $\hc$ is much more difficult, as the support of Weyl sequences is essentially away from any compact set, making it difficult to control their behaviors. One naive approach should be to cut-off Weyl sequences of $\hc$ by the function $\chi$ in order to construct Weyl sequences of $\hl$. However, one quickly remarks that the commutator $[-\Delta, \chi]$ is not $-\Delta$-compact, hence it is difficult to ensure that the cut-off sequence is a Weyl sequence of $\hl$. Another naive approach is to use a cut-off function $\chi_c$ which has compact support in the $\bm r$-direction. However as mentioned before, it is also difficult to ensure that the Weyl sequences leave any mass in the support of $\chi_c$ as Weyl sequences essentially have supports away from any compact. We construct a special Weyl sequence of $\hc$ from $\hl$ (or $\hr$) by a suitable cut-off function which introduces $-\Delta$-compact perturbations to solve this difficulty.

The proof is organized as follows: we first prove that $[0,+\infty) \subset \sigma_{\mathrm{ess}}(\hc)$ by an explicit construction of Weyl sequences with positive energies, as $[0,+\infty)$ belongs to the essential spectrum of $\hl$ and $\hr$. We next prove for any $\lambda <0$ such that~$\lambda\in \sigma_{\mathrm{ess}}\left(\hl\right)\cup  \sigma_{\mathrm{ess}}\left(\hr\right)$, $\lambda$ also belongs to $ \sigma_{\mathrm{ess}}(\hc)$. Finally we prove that $\sigma_{\mathrm{ess}}(\hc)$ in included in~$\sigma_{\mathrm{ess}}\left(\hl\right)\cup  \sigma_{\mathrm{ess}}\left(\hr\right)$ by a rather technical construction of a Weyl sequence. 
\paragraph{The essential spectrum $\sigma_{\mathrm{ess}}(\hc)$ contains $[0,+\infty)$. }
Recall that a sequence $\{\psi_n\}_{n\in \bb N^*}$ is a Weyl sequence of an operator $O$ on $L^2(\bb R^3)$ associated with $\lambda\in \R$ if the following properties are satisfied:
\begin{itemize}
\item for all $n$, $\psi_n$ is in the domain of the operator $O$ and $\norm{\psi_n}_{L^2(\bb R^3)}= 1$;
\item the sequence $\psi_n \rightharpoonup 0$ weakly in $L^2(\bb R^3)$;
\item $\displaystyle\lim_{n\to+\infty} \norm{\left(O-\lambda\right)\psi_n}_{L^2(\bb R^3)}= 0$.
\end{itemize}
Consider a $C_c^{\infty}(\bb R)$ function $f(z)$ supported on $[0,1]$ with $\int_{\bb R}|f|^2 =1$, and $g\in C_c (\bb R^2)$ supported on the unit disk centered at $0$ and such that $\int_{\bb R^2}|g|^2 =1$.
For any $\lambda>0$, consider a sequence of functions $\{\psi_{n}\}_{n\in \bb N^*}$ defined as follows: $$\psi_n(x,y,z):= n^{-3/2}\rme^{\rmi \sqrt{\lambda}z}f((z-2n)/n)g(x/n,y/n).$$
It is easy to see that $\int_{\bb R^3}|\psi_n|^2 = 1$ for all $n\in \bb N^*$, and that $\psi_n$ tends weakly to $0$ in $L^2(\bb R^3)$. 
On the other hand
\begin{align*}
\left(\hc - \lambda\right)\psi_n(x,y,z) =& \left(-n^{-2}f''((z-2n)/n)-2\rmi \sqrt{\lambda}n^{-1}f'((z-2n)/n)\right) n^{-3/2}\rme^{\rmi\sqrt{\lambda}z}g(x/n,y/n)\\
&-n^{-7/2}\rme^{\rmi \sqrt{\lambda}z}f((z-2n)/n)\left(\Delta g\right)(x/n,y/n) + V_{\chi}\psi_n(x,y,z) .
\end{align*}
Recall also that by the results of Theorem~\ref{thmPeriodicExistence2}, there exists for any $\epsilon >0$ an integer $N_{\epsilon}$ such that $|V_\chi (x,y, 2n)| \leq \epsilon$ when $n\geq N_{\epsilon}$. Therefore, for $n\geq N_{\epsilon}$, there exists a positive constant $C$ such that
\begin{align*}
\norm{\left(\hc - \lambda\right)\psi_n}_{L^2(\bb R^3)} &\leq \frac{1}{n^2}\norm{f''}_{L^2(\bb R)}\norm{g}_{L^2(\bb R^2)} + \frac{2\sqrt{\lambda}}{n}\norm{f'}_{L^2(\bb R)}\norm{g}_{L^2(\bb R^2)} + \frac{1}{n^2}\norm{f}_{L^2(\bb R)}\norm{\Delta g}_{L^2(\bb R^2)}\\
& \quad+\left(n^{-3} \int_{\bb R^2}\left(\int_{2n}^{3n}\left|V_{\chi}(x,y,z) f((z-2n)/n)\right|^2\,dz\right)\left| g(x/n,y/n)\right|^2\,dx\,dy\right)^{1/2}\\
&\leq \frac{C}{n} +\left( \int_{\bb R^2}\int_0^1\left|V_{\chi}(nx,ny,2n +nz) f(z)\right|^2\left| g(x,y)\right|^2\,dz \,dx\,dy\right)^{1/2}\\
&\leq \frac{C}{n}+ \epsilon \norm{f}_{L^2(\bb R)}\norm{g}_{L^2(\bb R^2)} = \frac{C}{n} + \epsilon. 
\end{align*}
This shows that $\{\psi_{n}\}_{n\in \bb N^*}$ is a Weyl sequence of $\hc$ associated with $\lambda >0$. Since that $0$ is an accumulation point of $\sigma_{\mathrm{ess}}(\hc)$, it holds $[0,+\infty) \subset \sigma_{\mathrm{ess}}(\hc)$.
\paragraph{The union of $\sigma_{\mathrm{ess}}\left(\hl\right)\cup  \sigma_{\mathrm{ess}}\left(\hr\right)$ is included in $\sigma_{\mathrm{ess}}(\hc)$.}
Without loss of generality we prove that a negative $\lambda_{L} $ belonging to $\sigma_{\mathrm{ess}}(\hl)$ also belongs to $ \sigma_{\mathrm{ess}}(\hc)$. Consider a Weyl sequence $\left\{w_{n}\right\}_{n\in\bb N^*}$ for $\hl$ associated with $\lambda_{L}$. Let us construct a Weyl sequence for $H_{\chi}$ from $\left\{w_{n}\right\}_{n\in\bb N^*}$. Fix $n\in \bb N^*$, there exists a sequence $\left\{v_{k,n}\right\}_{k\in \bb N^*}$ belonging to $C_c^{\infty}(\bb R^3)$ such that for all $\varepsilon>0$, there exists a $K_n \in \bb N^*$ such that for any $k \geq K_n$,
\begin{equation}
\norm{v_{k,n} - w_n}_{H^2(\bb R^3)} \leq \varepsilon.
\label{eq1_Weyl}
\end{equation}
It is easy to see that $v_{K_n,n}$ tends weakly to $0$ in $L^2(\bb R^3)$ as $n \to \infty$ since $w_n$ converges weakly to $0$. As $v_{K_n,n}$ has compact support, for any fixed $n\in \bb N^*$ and for $m \in \bb N^*$ large enough,
\begin{equation}
\supp\left(\tau_{a_{L}m}^xv_{K_n,n}\right)\bigcap \left(\left(\left[-a_{L}/2,+\infty\right)\times \bb R^2\right)\bigcup \mathfrak{B}_{n}\right) = \emptyset,
\label{eq2_Weyl}
\end{equation}
where $\mathfrak{B}_{n}$ denotes the ball of radius $n$ centered at $0$ in $\bb R^3$. Remark that the above equality also ensures that $\tau_{a_{L}m}^xv_{K_n,n}$ tends weakly to $0$ in $L^2(\bb R^3)$ when $m\to+\infty$ for $n$ fixed. In view of~\eqref{eq1_Weyl} and~\eqref{eq2_Weyl}, we introduce $\widetilde{w_n} := \tau_{a_{L}m_n}^xv_{K_n, n}$ for $n \in \bb N^*$ so that~\eqref{eq2_Weyl} is satisfied. This implies that $\widetilde{w_n}$ tends weakly to $0$ in $L^2(\bb R^3)$ when $n \to +\infty$. Moreover, in view of~\eqref{eq1_Weyl} and by the definition of the Weyl sequence 
\begin{align*}
\bnorm{(\hc - \lambda_L)\widetilde{w_n}}_{L^2} &= \bnorm{(\hc - \lambda_L)\tau_{a_{L}m_n}^xv_{K_n,n}}_{L^2}  = \bnorm{(\hl- \lambda_L)\tau_{a_{L}m_n}^xv_{K_n,n}}_{L^2} \\
& \leq \bnorm{\tau_{a_{L}m_n}^x(\hl- \lambda_L)\left(v_{K_n,n} - w_{n}\right)}_{L^2} + \bnorm{\tau_{a_{L}m_n}^x(\hl- \lambda_L) w_{n}}_{L^2} \\
&\leq  \left(1+\norm{\vl}_{L^{\infty}} +| \lambda_L| \right)\bnorm{v_{K_n,n} - w_{n}}_{H^2} + \bnorm{(\hl- \lambda_L) w_{n}}_{L^2}\xrightarrow[n \to +\infty]{} 0.
\end{align*}
Therefore the sequence $\widetilde{w_n}/\norm{\widetilde{w_n}}_{L^2}$ is a Weyl sequence of $\hc$ associated with $\lambda_L$. This leads to the conclusion that
\begin{equation}
\sigma_{\mathrm{ess}}\left(\hl\right)\cup  \sigma_{\mathrm{ess}}\left(\hr\right) \subseteq \sigma_{\mathrm{ess}}(\hc).
\label{spHCHI2}
\end{equation}
\paragraph{The essential spectrum $\sigma_{\mathrm{ess}}(\hc)$ in included in $\sigma_{\mathrm{ess}}\left(\hl\right)\cup  \sigma_{\mathrm{ess}}\left(\hr\right)$. }
We prove that for strictly negative $\lambda\in \sigma_{\mathrm{ess}}(\hc)$, it holds that $\lambda\in\sigma_{\mathrm{ess}}\left(\hl\right)\cup  \sigma_{\mathrm{ess}}\left(\hr\right)$. The main technique is to use spreading sequences (Zhislin sequences)~\cite[Definition 5.12]{stephen2003mathematical}, which are special Weyl sequences for which the supports of the functions move off to infinity. More precisely, a sequence $\{\psi_n\}_{n\in \bb N^*}$ is a spreading sequence of an operator $O$ on $L^2(\bb R^3)$ associated with $\lambda$ if the following properties are satisfied:
\begin{itemize}
\item for all $n$, $\psi_n$ is in the domain of the operator $O$ and $\norm{\psi_n}_{L^2(\bb R^3)}= 1$;
\item for any bounded set $\mathcal{G}\subset \bb R^3$, $ \supp(\psi_n)\cap \mathcal{G} = \emptyset$ for $n$ sufficiently large. As a consequence, $\psi_n \rightharpoonup 0$ weakly in $L^2(\bb R^3)$;
\item $\displaystyle \lim_{n\to+\infty} \norm{\left(O-\lambda\right)\psi_n}_{L^2(\bb R^3)}= 0$.
\end{itemize}
As $\vl$ and $\vr$ are continuous and bounded by Theorem~\ref{thmPeriodicExistence2}, it holds by~\cite[Theorem 5.14]{stephen2003mathematical} that $$\sigma_{\mathrm{ess}}\left(H_{\theta}\right) = \left\{\lambda \in \bb C \left|\, \text{there is a spreading sequence for $H_{\theta}$ and $\lambda$}\right.\right\}$$
with $H_{\theta}$ being $\hc$, $\hl$ or $\hr$. The results of Theorem~\ref{thmPeriodicExistence2} also imply that, for any $\varepsilon>0$, there exists a constant $\mathcal{R}_{\varepsilon}$ such that 
\begin{equation}
\max\left(\norm{\vl \mathds 1_{|\bm r|>\mathcal{R}_{\varepsilon}}}_{L^{\infty}(\bb R^3)} , \norm{\vr  \mathds 1_{|\bm r|>\mathcal{R}_{\varepsilon}}}_{L^{\infty}(\bb R^3)} \right) <\varepsilon.
\label{maxVdecay}
\end{equation}
Consider a spreading sequence $\left\{\phi_n\right\}_{n\in\bb N^*}$ for $\hc$ associated with $\lambda<0$. For all $n\in \bb N^*$, either $\norm{\phi_{n}}_{L^2\left((0,+\infty)\times \bb R^2\right)}\geq 1/2$ or $\norm{\phi_{n}}_{L^2\left((-\infty, 0]\times \bb R^2\right)}\geq 1/2$. Without loss of generality, we assume in the sequel that there exists a sub-sequence $\left\{\phi_{n}\right\}_{n\in\bb N^*}$ such that, for $n$ sufficiently large, $\norm{\phi_{n}}_{L^2\left((0,+\infty)\times \bb R^2\right)}\geq 1/2$. We next construct a Weyl sequence of $\hr$ from $\left\{\phi_{n}\right\}_{n\in\bb N^*}$ by constructing a special cut-off function $\rho$ which has a non-trivial mass on $(0,+\infty)\times \bb R^2$, and whose derivatives decay rapidly: $$\rho(x):= \frac{\int_{-\infty}^x\eta(y)\,dy}{\norm{\eta}_{L^1(\bb R)}} ,$$ where $\eta$ satisfies the following one-dimensional Yukawa equation
\begin{equation}
-\eta''-\lambda\eta = \rme^{-2\sqrt{-\lambda}| x|}.
\label{eq:Yuka1}
\end{equation}
The following lemma summarizes some properties of the cut-off function $\rho$.
\begin{lemma}
\label{lemma_approx}
It holds that $0\leq \rho\leq 1$ and $\lim_{n\to\infty}\norm{\rho \phi_{n}}_{L^2}\neq 0 $. Moreover, 
\begin{equation}
\label{eq:approxrho}
\norm{\rho''\phi_{n} +2\rho'\partial_x\phi_{n}}_{L^2} \xrightarrow[n\to \infty]{} 0,\qquad \rho \chi^2\left(\vl-\vr\right) \in L^2(\bb R^3).
\end{equation}
\end{lemma}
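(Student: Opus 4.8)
The plan is as follows. First I would pin down the function $\eta$: among the solutions of $-\eta''-\lambda\eta=\rme^{-2\sqrt{-\lambda}\,|x|}$ I take the unique one lying in $L^1(\bb R)$, namely the convolution $\eta=\frac{1}{2\sqrt{-\lambda}}\,\rme^{-\sqrt{-\lambda}\,|\cdot|}\star\rme^{-2\sqrt{-\lambda}\,|\cdot|}$ of the one‑dimensional Yukawa Green's function with the right‑hand side. Solving the ODE explicitly yields the closed form $\eta(x)=\frac{1}{3(-\lambda)}\bigl(2\rme^{-\sqrt{-\lambda}\,|x|}-\rme^{-2\sqrt{-\lambda}\,|x|}\bigr)$, so $\eta$ is even, of class $C^2$, everywhere strictly positive, with $\eta,\eta',\eta''$ all bounded by a constant times $\rme^{-\sqrt{-\lambda}\,|x|}$, with the pointwise bound $|\eta'|\le\sqrt{-\lambda}\,\eta$, and — reading the equation backwards — with $\eta''=-\lambda\eta-\rme^{-2\sqrt{-\lambda}\,|x|}$ everywhere. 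Since $\rho$ is the normalised primitive of the nonnegative function $\eta$, it is nondecreasing from $0$ to $1$, whence $0\le\rho\le1$; by evenness of $\eta$, $\rho(0)=\frac12$, so $\rho\ge\frac12$ on $(0,+\infty)\times\bb R^2$ and $\norm{\rho\phi_n}_{L^2}\ge\frac12\norm{\phi_n}_{L^2((0,+\infty)\times\bb R^2)}\ge\frac14$ by the standing normalisation, which is the second assertion. For the last assertion, $\chi^2$ is supported in the slab $[-\frac{a_L}{2},\frac{a_R}{2}]\times\bb R^2$, of finite width in $x$, while $\vl$ and $\vr$ lie in $L^2_{\mathrm{per},x}(\Gamma_L)$ and $L^2_{\mathrm{per},x}(\Gamma_R)$ by Lemma~\ref{symmetryPotentialLemma}; hence $\vl-\vr$ is square‑integrable on this slab and, multiplied by the bounded function $\rho\chi^2$, stays in $L^2(\bb R^3)$.

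The substance of the lemma is the estimate $\norm{\rho''\phi_n+2\rho'\partial_x\phi_n}_{L^2}\to0$. Since $\hc$ is self‑adjoint with domain $H^2(\bb R^3)$, the relations $\norm{(\hc-\lambda)\phi_n}_{L^2}\to0$, $\norm{\phi_n}_{L^2}=1$ together with the boundedness of $V_\chi$ give $\sup_n\norm{\phi_n}_{H^2}<\infty$, hence $\sup_n\norm{\phi_n}_{L^\infty}<\infty$ and $\sup_n\norm{\nabla\phi_n}_{L^2}<\infty$. Using $|\rho''|\le\sqrt{-\lambda}\,\rho'$ and $0\le\rho'\le\norm{\rho'}_{L^\infty}$ one obtains the pointwise bound
\[
\bigl|\rho''\phi_n+2\rho'\partial_x\phi_n\bigr|^2\ \le\ 8\,\norm{\rho'}_{L^\infty}\,\rho'\,\bigl((-\lambda)|\phi_n|^2+|\nabla\phi_n|^2\bigr),
\]
so it suffices to prove $\int_{\bb R^3}\rho'\,|\phi_n|^2\to0$ and $\int_{\bb R^3}\rho'\,|\nabla\phi_n|^2\to0$.

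These two facts I would derive from a weighted energy identity: testing the approximate eigenvalue equation against $\rho'\phi_n$ and using $\norm{\rho'\phi_n}_{L^2}\le\norm{\rho'}_{L^\infty}$ gives $\mathrm{Re}\,\langle\rho'\phi_n,(\hc-\lambda)\phi_n\rangle\to0$. Expanding the left‑hand side by integration by parts (legitimate since $\phi_n\in H^2$ and $\rho',\rho'',\rho'''$ are bounded and exponentially decaying), using $\mathrm{Re}\int\rho''\overline{\phi_n}\,\partial_x\phi_n=-\frac12\int\rho'''|\phi_n|^2$ and the Yukawa identity $\rho'''=(-\lambda)\rho'-\norm{\eta}_{L^1(\bb R)}^{-1}\rme^{-2\sqrt{-\lambda}\,|x|}$, one finds
\[
\mathrm{Re}\,\langle\rho'\phi_n,(\hc-\lambda)\phi_n\rangle=\tfrac12\!\int\rho'|\nabla\phi_n|^2-\tfrac{3\lambda}{4}\!\int\rho'|\phi_n|^2+\tfrac{1}{4\norm{\eta}_{L^1}}\!\int\rme^{-2\sqrt{-\lambda}\,|x|}|\phi_n|^2+\!\int V_\chi\,\rho'|\phi_n|^2 .
\]
Since $\lambda<0$, each term but the last is nonnegative. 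For the last, given $\varepsilon>0$ choose $\mathcal{R}_\varepsilon$ with $|V_\chi|\le\varepsilon$ on $\{|\bm r|>\mathcal{R}_\varepsilon\}$ (Theorem~\ref{thmPeriodicExistence2} and Lemma~\ref{symmetryPotentialLemma}); then $\bigl|\int V_\chi\rho'|\phi_n|^2\bigr|\le\varepsilon\int\rho'|\phi_n|^2+\norm{V_\chi}_{L^\infty}\sup\{\rho'(x)\colon(x,\bm r)\in\supp\phi_n,\ |\bm r|\le\mathcal{R}_\varepsilon\}$, and since $\{\phi_n\}$ spreads, for each $R$ the support of $\phi_n$ eventually leaves $\mathfrak{B}_{\sqrt{R^2+\mathcal{R}_\varepsilon^2}}$, so $|x|\ge R$ on $\supp\phi_n\cap\{|\bm r|\le\mathcal{R}_\varepsilon\}$ for $n$ large; as $\rho'(x)\to0$ when $|x|\to\infty$, this supremum tends to $0$. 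Taking $\varepsilon=-3\lambda/8$ and $n\to\infty$ forces $\int\rho'|\nabla\phi_n|^2\to0$ and $\int\rho'|\phi_n|^2\to0$, which by the pointwise bound above gives the estimate and concludes the proof.

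The one genuinely delicate point — and the reason for the special Yukawa construction of $\rho$ — is this last step: a priori a spreading sequence at negative energy $\lambda$ could carry all its mass off to $|\bm r|=\infty$ with $x$ bounded, exactly where $\rho'$ is not small. What rules this out is the strict positivity $-\frac12\Delta-\lambda\ge-\lambda>0$, which can only be cashed in where $V_\chi$ is negligible; and the Yukawa equation is precisely what makes the weighted integration‑by‑parts identity coercive there, the point being that $-\eta''-\lambda\eta$ is not merely decaying but a bona fide nonnegative source term, so that no indefinite boundary contribution survives.
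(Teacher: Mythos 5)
Your proof of the central estimate $\norm{\rho''\phi_n+2\rho'\partial_x\phi_n}_{L^2}\to 0$ is correct, and it is genuinely different from the paper's: you pair $(\hc-\lambda)\phi_n$ against the \emph{first-order} weight $\rho'\phi_n$ and exploit the Yukawa identity through $\rho'''=-\lambda\rho'-\norm{\eta}_{L^1}^{-1}\rme^{-2\sqrt{-\lambda}|x|}$, so that after one integration by parts the right-hand side is a sum of signed terms with the good sign, plus the $V_\chi$ term that you absorb by decay and spreading. The paper instead expands the quadratic form $\bigl(\eta(-\Delta-\lambda)\phi_n,(-\Delta-\lambda)\phi_n\bigr)$, first showing $\norm{\eta(-\Delta-\lambda)\phi_n}\to 0$ strongly and appealing to boundedness/weak convergence of $(-\Delta-\lambda)\phi_n$, then using $\lambda\eta''+\lambda^2\eta=-\lambda\,\rme^{-2\sqrt{-\lambda}|x|}>0$; the two arguments cash in the Yukawa source term in the same place but at different orders of $(\hc-\lambda)$. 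Your version is slightly more economical (no second-order quantity, no need to discuss $(-\Delta-\lambda)\phi_n\rightharpoonup 0$, and it handles the factor $\tfrac12$ in $\hc$ cleanly), and the coefficients in your identity check out.

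There is, however, a genuine gap in your argument for the last claim $\rho\chi^2\left(\vl-\vr\right)\in L^2(\bb R^3)$. You assert that $\chi^2$ is supported in the slab $\left[-\tfrac{a_L}{2},\tfrac{a_R}{2}\right]\times\bb R^2$, but by the definition of the class $\mathcal{X}$ one has $\chi\equiv 1$ on $\left(-\infty,-\tfrac{a_L}{2}\right]\times\bb R^2$, so $\supp(\chi^2)=\left(-\infty,\tfrac{a_R}{2}\right]\times\bb R^2$ is a half-space. The function $\vl-\vr$ is periodic in $x$ there and does not decay, so ``$\vl-\vr\in L^2$ on the support of $\chi^2$'' is false, and the boundedness of $\rho\chi^2$ does not save the argument. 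What does close it is the exponential decay of $\rho$ itself towards $x\to-\infty$: since $\rho(x)=\norm{\eta}_{L^1}^{-1}\int_{-\infty}^x\eta$ and $\eta\lesssim \rme^{-\sqrt{-\lambda}|x|}$, one has $\rho(x)\lesssim \rme^{\sqrt{-\lambda}x}$ for $x<0$, so that summing $\rho^2\vl^2$ over the unit cells of $\Gamma_L$ in the left half-space yields a convergent geometric series times $\norm{\vl}_{L^2(\Gamma_L)}^2$ (this is exactly why $\rho$ must be a primitive of an exponentially decaying $\eta$, and not merely a bounded cut-off). Your write-up does not invoke this decay of $\rho$ at $-\infty$ at all, so as written the last assertion is not proved.
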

We postpone the proof of this lemma to Section~\ref{lemma_approxSec}. Define $w_n:= \rho \phi_{n}/\norm{\rho \phi_{n}}_{L^2}$. Let us show that $\{w_n\}_{n\in \bb N^*}$ is a spreading sequence of $\hr$ associated with $\lambda$. First of all, the sequence $\{w_n\}_{n\in \bb N^*}$ is well defined at least for large $n$ as $\lim_{n\to\infty}\norm{\rho \phi_{n}}_{L^2}\neq 0 $. It is also easy to see that $\norm{w_n}_{L^2} = 1$ for all $n\in \bb N^*$. For any bounded set $\mathcal{G}\in \bb R^3$, it holds $ \supp(w_n)\cap \mathcal{G} = \emptyset$ for $n$ sufficiently large as $\left\{\phi_{n}\right\}_{n\in\bb N^*}$ is a spreading sequence. Note that 
\begin{equation}
\label{decompose_op}
\left(\hr-\lambda\right)w_n = \frac{1}{\norm{\rho \phi_{n}}_{L^2}}\left(\rho\left(\hc-\lambda \right)\phi_{n} +A_\chi\phi_{n}\right),
\end{equation}
where $A_\chi:=  -\frac{1}{2}\rho''  -\rho'\partial_x +\rho \chi^2\left(\vl-\vr\right).$ As $\lim_{n\to \infty}\norm{\left(\hc-\lambda \right)\phi_{n}}_{L^2} =0$ by definition of the spreading sequence, it follows that $\lim_{n\to \infty}\norm{\rho\left(\hc-\lambda \right)\phi_{n}}_{L^2} \leq \lim_{n\to \infty}\norm{\left(\hc-\lambda \right)\phi_{n}}_{L^2} =0.$ It therefore suffices to prove that (possibly up to extraction) $\lim_{n\to \infty}\norm{A_{\chi}\phi_{n}}_{L^2} = 0$.  By the Kato--Seiler--Simon inequality (\ref{KSS}) we obtain that
\[
\Bnorm{ (1-\Delta)^{-1}\rho \chi^2\left(\vl-\vr\right)}_{\mathfrak{S}_2}\leq \frac{1}{2\sqrt{\pi}}\bnorm{\rho \chi^2\left(\vl-\vr\right)}_{L^2(\bb R^3)}.
\]
In particular $\rho \chi^2\left(\vl-\vr\right)$ is $-\Delta$--compact, hence $\hr$--compact by the boundedness of $\vr$. As the sequence $\hr w_n$ is bounded in $L^2(\R^3)$, the $\hr$--compactness of $\rho \chi^2\left(\vl-\vr\right)$ implies that $\rho \chi^2\left(\vl-\vr\right)w_n $ converges strongly to some function $v\in L^2(\bb R^3)$. On the other hand, for any $f\in L^2(\bb R^3)$, 
$$\left(v,f\right)_{L^2} = \lim_{n\to+\infty}\left(\rho \chi^2\left(\vl-\vr\right)w_n ,f\right)_{L^2} = \lim_{n\to+\infty}\left(w_n ,\rho \chi^2\left(\vl-\vr\right)f\right)_{L^2} = 0, $$
where we have used the fact that $w_n\rightharpoonup 0$ and $\rho \chi^2\left(\vl-\vr\right) \in L^{\infty}(\bb R^3)$ so $\rho \chi^2\left(\vl-\vr\right)f \in L^2(\bb R^3) $. Therefore by the uniqueness of weak limit, $\rho \chi^2\left(\vl-\vr\right)w_n $ converges strongly to $ v\equiv 0$. Together with~\eqref{eq:approxrho} we conclude that $\lim_{n\to \infty}\norm{A\phi_{n}}_{L^2} = 0$. Therefore in view of~\eqref{decompose_op}, it holds $$\lim_{n\to \infty}\norm{\left(\hr -\lambda\right)w_{n}}_{L^2} =0.$$
Hence $\left\{w_{n}\right\}_{n\in \bb N}$ is a spreading sequence of $\hr$ associated with $\lambda<0$. This implies that for any spreading sequence of $\hc$ associated with $\lambda<0$, we can construct a spreading sequence for either $\hl$ or $\hr$ associated with $\lambda <0$, depending on whether (up to extraction) the spreading sequence of $\hc$ has non-trivial mass on $\left( (-\infty, 0]\times \bb R^2\right)$ or $\left((0,+\infty)\times \bb R^2\right)$. This allows us to conclude that
\begin{equation}
\sigma_{\mathrm{ess}}(\hc)\subseteq \sigma_{\mathrm{ess}}\left(\hl\right)\cup  \sigma_{\mathrm{ess}}\left(\hr\right).
\label{spHCHI1}
\end{equation}
By gathering~\eqref{spHCHI2} and~\eqref{spHCHI1} we conclude that $\sigma_{\mathrm{ess}}(\hc) \equiv \sigma_{\mathrm{ess}}\left(\hl\right)\cup  \sigma_{\mathrm{ess}}\left(\hr\right) $. In particular, $\sigma_{\mathrm{ess}}(\hc)$ is independent of the function $\chi\in\mathcal{X}$.
\subsection{Proof of Lemma \ref{lemma_approx}}
\label{lemma_approxSec}
The solution of the one-dimensional Yukawa equation~\eqref{eq:Yuka1} is
\begin{equation} 
 0<\eta(x)=\int_{\bb R}\frac{\rme^{-\sqrt{-\lambda}|x-y|}}{\sqrt{-\lambda}} \rme^{-2\sqrt{-\lambda}| y|}\,dy \leq \int_{\bb R}\frac{\rme^{-\sqrt{-\lambda}\left(|x|+|y|\right)}}{\sqrt{-\lambda}}\,dy\leq \frac{2}{-\lambda}\rme^{-\sqrt{-\lambda}| x|}.
 \label{etaprop}
 \end{equation}
This implies that $\eta $ belongs to $ L^p(\bb R)$ for $1\leq p\leq +\infty$. As $\eta >0$ is integrable, the cut-off function $\rho(x)=\frac{\int_{-\infty}^x\eta(y)\,dy}{\norm{\eta}_{L^1(\bb R)}} $ is well defined. It is easy to see that $0\leq \rho\leq 1$. Since $\norm{\phi_{n}}_{L^2\left((0,+\infty)\times \bb R^2\right)}\geq 1/2$ for $n$ large enough and $\rho(x)\geq \frac12$ for $x\geq 0$, hence $\lim_{n\to\infty}\norm{\rho \phi_{n}}_{L^2(\bb R^3)}\neq 0 $. 

Let us next prove that $\norm{\rho''\phi_{n} +2\rho'\partial_x\phi_{n}}_{L^2} \xrightarrow[n\to \infty]{} 0$. 
In view of~\eqref{etaprop}, 
\begin{align*}
\left|\eta'(x) \right| &= \left|\int_{-\infty}^{\infty}\rme^{-\sqrt{-\lambda}|x-y|-2\sqrt{-\lambda}| y|} \left(\mathds 1_{x\leq y} - \mathds 1_{x>y}\right)\,dy\right|\\
&\leq 2\int_{-\infty}^{+\infty}\rme^{-\sqrt{-\lambda}|x-y|-2\sqrt{-\lambda}| y|} \,dy = 2\sqrt{-\lambda}\eta(x) \leq \frac{4}{\sqrt{-\lambda}}\rme^{-\sqrt{-\lambda}| x|}.
\end{align*}
Combining the above inequality with~\eqref{etaprop} we obtain that
\begin{equation}
\label{rhoderivative_EQ}
\begin{aligned}
&\norm{\rho''\phi_{n} +2\rho'\partial_x\phi_{n}}_{L^2(\bb R^3)}^2  = \frac{\norm{\eta'\phi_{n} +2\eta\partial_x\phi_{n}}_{L^2(\bb R^3)}^2}{\norm{\eta}_{L^1(\bb R)}^2} \leq \frac{2}{\norm{\eta}_{L^1(\bb R)}^2}\int_{\bb R^3} \left(\left(\eta'\right)^2 |\phi_{n}|^2 + 4\eta^2 |\partial_x\phi_{n}|^2\right)\\
&\leq \frac{32}{|\lambda|\norm{\eta}_{L^1(\bb R)}^2}\int_{\bb R^3}\rme^{-2\sqrt{-\lambda}| \cdot|}|\phi_{n}|^2  + \frac{ 8\norm{\eta}_{L^{\infty}}}{\norm{\eta}_{L^1(\bb R)}^2}  \int_{\bb R^3}\eta \left|\partial_x\phi_{n}\right|^2.
\end{aligned}
\end{equation}
It suffices to prove that each of the previous integrals tends to $0$ when $n\to \infty$. Remark that the integrands of these terms appear when one does an explicit calculation of $\left(\eta \left(-\Delta -\lambda\right)\phi_n,  \left(-\Delta -\lambda\right)\phi_n\right)_{L^2(\bb R^3)}$. Let us therefore first prove that $$\left(\eta \left(-\Delta -\lambda\right)\phi_n,  \left(-\Delta -\lambda\right)\phi_n\right)_{L^2(\bb R^3)}\xrightarrow[n\to \infty]{} 0.$$ For this purpose, we prove that $\norm{\eta \left(-\Delta -\lambda\right)\phi_n}_{L^2(\bb R^3)} \xrightarrow[n\to \infty]{} 0$ strongly and $(-\Delta-\lambda)\phi_{n} \xrightharpoonup[n\to \infty]{} 0$ weakly in $L^2(\bb R^3)$. In view of~\eqref{etaprop}, for any $\varepsilon >0$ there exists $x_\varepsilon>0$ such that when $|x|>x_\varepsilon$, $0\leq \eta(x)\leq \varepsilon$. Together with~\eqref{maxVdecay} and the fact that $\left\{\phi_n\right\}_{n\in \bb N^*}$ is a spreading sequence, it holds
\[
\begin{aligned}
 \norm{\eta V_{\chi} \phi_n}_{L^2(\bb R^3)}^2 &=\int_{\bb R\times \left\{ |\bm r| > R_{\varepsilon}\right\}}\left|\eta V_{\chi} \phi_n\right|^2+ \int_{[-x_0,x_0]\times \left\{ |\bm r| \leq  R_{\varepsilon}\right\}} \left|\eta V_{\chi} \phi_n\right|^2+ \int_{[-x_0,x_0]^{\complement}\times \left\{ |\bm r| \leq R_{\varepsilon}\right\}} \left|\eta V_{\chi} \phi_n\right|^2\\
   &\leq \varepsilon \norm{\eta}_{L^{\infty}(\bb R)}^2 + \norm{\eta V_{\chi}}_{L^{\infty}(\bb R)}^2 \bnorm{ \phi_n\mathds 1_{[-x_0,x_0]\times\left\{ |\bm r| \leq  R_{\varepsilon}\right\}}}_{L^2(\bb R)}^2+\varepsilon\norm{V_\chi}_{L^{\infty}(\bb R^3)}^2 \xrightarrow[n\to \infty]{} 0.
\end{aligned}
\]
The above convergence allows us to conclude that 
\begin{equation}
 \norm{\eta \left(-\Delta -\lambda\right)\phi_n}_{L^2(\bb R^3)}\leq \norm{\eta  (\hc-\lambda) \phi_n}_{L^2(\bb R^3)} + \norm{\eta V_{\chi} \phi_n}_{L^2(\bb R^3)} \xrightarrow[n\to \infty]{} 0.
 \label{approximate1}
\end{equation}
Moreover, $\phi_n\in H^2(\bb R^3) \hookrightarrow L^{\infty}(\bb R^3)$ with continuous embedding for all $n\in \bb N^*$. Approximating $\phi_n$ by $C_c^{\infty}(\bb R^3)$ functions we deduce that $\lim_{|\bm x|\to \infty}|\phi_n(\bm x)| = 0$. Furthermore, remark that for any $\psi \in C_c^{\infty}(\bb R^3)$ it holds $$\left((-\Delta-\lambda)\phi_n,\psi\right)_{L^2(\bb R^3)}=\left(\phi_n,(-\Delta-\lambda)\psi\right)_{L^2(\bb R^3)}  \xrightarrow[n\to \infty]{} 0, $$
which implies the weak convergence $(-\Delta-\lambda)\phi_n \xrightharpoonup[n\to \infty]{}  0$ by the density of $C_c^{\infty}(\bb R^3)$ in $L^2(\bb R^3)$. Together with the strong convergence~\eqref{approximate1} and an integration by parts, this leads to,
\[
\begin{aligned}
& \left(\eta \left(-\Delta -\lambda\right)\phi_n,  \left(-\Delta -\lambda\right)\phi_n\right)_{L^2(\bb R^3)}= \int_{\bb R^3} \eta\left| \Delta \phi_n\right|^2+ \eta\lambda \overline{\Delta\phi_n} \phi_n + \lambda\eta \Delta\phi_n \overline{\phi_n} +  \eta\lambda^2|\phi_n|^2\\
 &\qquad\qquad= \int_{\bb R^3} \eta \left| \Delta \phi_n\right|^2- \lambda \eta'\partial_x\left(\left|\phi_n\right|^2\right)- 2\eta\lambda \left|\nabla\phi_n\right|^2  +  \lambda^2\eta|\phi_n|^2\\
 & \qquad\qquad=\int_{\bb R^3} \eta \left|\Delta \phi_n\right|^2+\left(\lambda \eta''+  \lambda^2\eta\right)|\phi_n|^2+2\eta|\lambda| \left|\nabla\phi_n\right|^2 \xrightarrow[n\to \infty]{} 0.
\end{aligned}
\]
In view of~\eqref{eq:Yuka1} we have $\left(\lambda \eta''+  \lambda^2\eta\right) =-\lambda \rme^{-2\sqrt{-\lambda}| \cdot|} > 0$, hence the integrand of the above integral is positive. This implies that 
\begin{equation}
\int_{\bb R^3}\rme^{-2\sqrt{-\lambda}| \cdot|}|\phi_n|^2 \xrightarrow[n\to \infty]{} 0,\qquad \int_{\bb R^3}\eta \left|\nabla\phi_n\right|^2 \xrightarrow[n\to \infty]{} 0.
\label{eq:cvgeta}
\end{equation}
 In view of~\eqref{rhoderivative_EQ} and~\eqref{eq:cvgeta}, we deduce that $ \norm{\rho''\phi_{n} +2\rho'\partial_x\phi_{n}}_{L^2(\bb R^3)}^2 \xrightarrow[n\to \infty]{} 0.$ Let us finally prove that $$ \rho\chi^2\left(\vl-\vr\right)\in L^2(\bb R^3).$$ 
By definition of $\chi$, the function $\rho\chi^2$ has support in $(-\infty, a_{R}/2]$ and is equal to $\rho(x)$ when $x\in (-\infty, -a_{L}/2)$. Remark also that~\eqref{etaprop} implies that 
\[\forall x<0, \qquad \rho(x) \leq \frac{1}{\norm{\eta}_{L^1}} \int_{-\infty}^x  \frac{2}{|\lambda|}\rme^{-\sqrt{-\lambda}| y|} \,dy \leq \frac{2}{|\lambda|^{3/2}\norm{\eta}_{L^1}} \rme^{\sqrt{-\lambda}x}.
\]  
Hence, as $\vl \in L_{\mathrm{per},x}^s\left(\Gamma_{L}\right) $ for $1< s \leq \infty $ by Theorem~\ref{thmPeriodicExistence2},
\begin{align*}
\int_{\bb R^3} (\rho\chi^2\vl)^2 &= \int_{(-\infty,-a_{L}/2)\times \bb R^2} (\rho\vl)^2 + \int_{[-a_{L}/2,a_{R}/2]\times \bb R^2}(\rho\chi^2\vl)^2 \\
&\leq \frac{4\norm{\vl}_{L^2(\Gamma_{a_{L}})}^2}{|\lambda|^{3}\norm{\eta}_{L^1(\bb R)}^2}\sum_{n=0}^{+\infty}\rme^{-2\sqrt{-\lambda} a_{L} n}+ \int_{[-a_{L}/2,a_{R}/2]\times \bb R^2}\vl^2 < +\infty.
\end{align*}
This implies that $\rho\chi^2\vl\in L^2(\bb R^3)$. Similar arguments show that $\rho\chi^2\vr\in L^2(\bb R^3)$, which concludes the proof of the lemma.
\subsection{Proof of Proposition \ref{propExist}} 
\label{propExistsec}
In view of Proposition \ref{spectrapPpHchi}, consider a contour $\mathfrak{C}$ in the complex plan enclosing the spectrum of $\hc$ below the Fermi level $\epsilon_F$ without intersecting it, crossing the real axis at $c<\inf\left\{-\norm{\vl}_{L^{\infty}},-\norm{\vr}_{L^{\infty}}\right\}$ (See Fig. \ref{spectrumFig}). This is possible even if $\epsilon_F$ is an eigenvalue: one can always slightly move the curve $\mathfrak{C}$ below $\epsilon_F$ in order bypass $\epsilon_F$ but still enclose all the spectrum of $\hc$ below $\epsilon_F$. 
\begin{figure} 
\begin{center}
\begin{tikzpicture}[scale=0.9]
\draw (-0.5, 0) -- (-0.5, -0.7) ;
\draw (-3.5, 0) -- (-3.5, -0.7) ;
\draw (-0.5, -0.7) node [below] {$\epsilon_F$};
\draw (-3.5, -0.7) node [below] {$c$};
\draw  [blue, very thick] (-2,0) ellipse (1.5 and 0.9);
\draw [->, ->= latex, very thick] (-4,0) -- (4,0);
\draw [green, very thick, fill = green,opacity=0.5](-3.0,-0.1) rectangle (-2.6, 0.1);
\draw [green, very thick, fill = green,opacity=0.5](-2.0,-0.1) rectangle (-1.5,0.1);
\draw (- 1.5 , 0) -- (-1.5, -0.3);
\draw (-1.18, -0.17) node [below] {$\Sigma_a$};
\draw (0.5 , 0) -- (0.5, -0.3);
\draw (0.5, -0.2) node [below right] {$\Sigma_b$};
\draw [green, very thick, fill = white,opacity=0.5](0.5,-0.1) rectangle (1.2,0.1);
\draw [green, very thick, fill = white,opacity=0.5](2.0,-0.1) rectangle (2.5,0.1);
\draw [green, very thick, fill = white](2.0, 0.6) rectangle (2.2,0.7);
\draw (2.2,0.6) node [right] {$-\sigma_{\mathrm{ess}}(\hc)=\sigma_{\mathrm{ess}}\left(H_{\mathrm{per,L}}\right)\cup \sigma_{\mathrm{ess}}\left(H_{\mathrm{per,R}}\right)$};
\draw[->,>=latex] (-0.7, 0.9) to[bend left] (-0.6, 0.27) ;
\draw (-0.7, 0.9) node [above] {$\mathfrak{C}$};
\end{tikzpicture}
\end{center}
\caption{The essential spectrum of $\hc$, $\hl$ and $\hr$, and the contour $\mathfrak{C}$.}
\label{spectrumFig}
\end{figure}
Let us introduce the following estimates, which are useful to characterize the decay property of  densities.
\begin{lemma}[Combes-Thomas estimate \cite{Klopp95anasymptotic, combes1973, GERMINET}]
Consider $H:= -\frac{1}{2}\Delta + V$ with $V\in L^{\infty}(\mathbb{R}^3)$. Let $p, q$ be positive integers such that $p q >3/2$. Then there exists $\varepsilon >0$ and a positive constant $C(p,q)$ such that for any $\zeta \notin \sigma(H)$, and any $(\alpha,\beta) \in \mathbb{Z}^3\times \mathbb{Z}^3$,
\begin{equation}
\norm[\bigg]{w_{\alpha} (\zeta-H)^{-p}w_{\beta}}_{\mathfrak{S}_q}\leq C(p,q)\left(1+ \frac{1}{\theta(\zeta,V)} \right)^{4p}\rme^{-\varepsilon \theta(\zeta,V)|\alpha-\beta|},
\label{CTSnorm}
\end{equation} where $\displaystyle \theta(\zeta,V)= \frac{ \mathrm{dist}(\zeta, \sigma(H))}{|\zeta|+\norm V_{L^{\infty}}+1}$.
\label{CombesThomas}
\end{lemma}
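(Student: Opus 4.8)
The plan is to run the classical Combes--Thomas exponential-conjugation argument, in the spirit of~\cite{combes1973}, while tracking how the constants depend on $\zeta$. Throughout, abbreviate $d:=\mathrm{dist}(\zeta,\sigma(H))$ and $M:=|\zeta|+\|V\|_{L^\infty}+1$, so that $\theta:=\theta(\zeta,V)=d/M$; note $\theta\le 1$, since $\inf\sigma(H)\in[-\|V\|_{L^\infty},\|V\|_{L^\infty}]$ gives $d\le|\zeta|+\|V\|_{L^\infty}<M$. For a vector $\mathbf a\in\R^3$ I would introduce the conjugated operator
\[
H_{\mathbf a}:=\rme^{\mathbf a\cdot\mathbf x}\,H\,\rme^{-\mathbf a\cdot\mathbf x}=H+\mathbf a\cdot\nabla-\tfrac12|\mathbf a|^2,
\]
a closed operator on $H^2(\R^3)$. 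From $1-\Delta\le 2(H+\|V\|_{L^\infty}+1)$ one gets $\|\nabla(\zeta-H)^{-1}\|\le C(1+\theta^{-1})$ and $\|(\zeta-H)^{-1}\|=d^{-1}\le\theta^{-1}$, hence $\|(\mathbf a\cdot\nabla-\tfrac12|\mathbf a|^2)(\zeta-H)^{-1}\|\le C(|\mathbf a|+|\mathbf a|^2)\theta^{-1}$ after using $\theta\le1$. Choosing $|\mathbf a|=c_0\theta$ for a small universal $c_0$ makes this $\le\tfrac12$, so a Neumann series gives $\zeta\notin\sigma(H_{\mathbf a})$, $\|(\zeta-H_{\mathbf a})^{-1}\|\le2d^{-1}$, and --- after rewriting $1-\Delta$ through $H_{\mathbf a}$, $V$, $\mathbf a\cdot\nabla$, and passing to adjoints ($H_{\mathbf a}^*=H_{-\mathbf a}$) --- also $\|(1-\Delta)(\zeta-H_{\mathbf a})^{-1}\|,\ \|(\zeta-H_{\mathbf a})^{-1}(1-\Delta)\|\le C(1+\theta^{-1})^{2}$; iterating yields $\|(1-\Delta)^{m}(\zeta-H_{\mathbf a})^{-m}\|$ and its mirror bounded by $C_m(1+\theta^{-1})^{2m}$.

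Next I would insert the weight. Since $w_\beta$ is compactly supported, $\rme^{\mathbf a\cdot\mathbf x}w_\beta$ is bounded and $\rme^{\mathbf a\cdot\mathbf x}(\zeta-H)^{-1}\rme^{-\mathbf a\cdot\mathbf x}=(\zeta-H_{\mathbf a})^{-1}$, so
\[
w_\alpha(\zeta-H)^{-p}w_\beta=\big(w_\alpha\rme^{-\mathbf a\cdot\mathbf x}\big)\,(\zeta-H_{\mathbf a})^{-p}\,\big(\rme^{\mathbf a\cdot\mathbf x}w_\beta\big).
\]
Pulling out the $L^\infty$ norms of the multiplication operators $w_\alpha\rme^{-\mathbf a\cdot\mathbf x}$ and $\rme^{\mathbf a\cdot\mathbf x}w_\beta$ --- bounded by $\rme^{\sqrt3|\mathbf a|/2}\rme^{-\mathbf a\cdot\alpha}$ and $\rme^{\sqrt3|\mathbf a|/2}\rme^{\mathbf a\cdot\beta}$ because $w_\gamma$ is supported in the unit cube at $\gamma$ --- gives $\|w_\alpha(\zeta-H)^{-p}w_\beta\|_{\mathfrak S_q}\le\rme^{\sqrt3|\mathbf a|}\rme^{-\mathbf a\cdot(\alpha-\beta)}\|w_\alpha(\zeta-H_{\mathbf a})^{-p}w_\beta\|_{\mathfrak S_q}$. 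Choosing $\mathbf a=-c_0\theta\,(\alpha-\beta)/|\alpha-\beta|$ turns the scalar factor into $\rme^{\sqrt3 c_0}\rme^{-c_0\theta|\alpha-\beta|}$, producing the exponential decay with $\varepsilon:=c_0$.

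For the Schatten norm I would write $p=p_1+p_2$ with $p_1,p_2\ge1$ when $p\ge2$ (when $p=1$, necessarily $q\ge2$, and one works directly with $\big(w_\alpha(1-\Delta)^{-1}\big)\big[(1-\Delta)(\zeta-H_{\mathbf a})^{-1}\big]w_\beta$ instead), and factor
\[
w_\alpha(\zeta-H_{\mathbf a})^{-p}w_\beta=\big(w_\alpha(1-\Delta)^{-p_1}\big)\big[(1-\Delta)^{p_1}(\zeta-H_{\mathbf a})^{-p_1}\big]\big[(\zeta-H_{\mathbf a})^{-p_2}(1-\Delta)^{p_2}\big]\big((1-\Delta)^{-p_2}w_\beta\big).
\]
The two middle factors are bounded by $C(1+\theta^{-1})^{2p_1}$ and $C(1+\theta^{-1})^{2p_2}$ from the first step, while the Hölder inequality for Schatten norms and the Kato--Seiler--Simon inequality~\eqref{KSS} handle the outer ones: as $w_\alpha,w_\beta$ are characteristic functions of unit cubes, $\|w_\alpha(1-\Delta)^{-p_1}\|_{\mathfrak S_{2q}}\le(2\pi)^{-3/(2q)}\big(\int_{\R^3}(1+|k|^2)^{-2qp_1}\,dk\big)^{1/(2q)}$, which is finite precisely under the hypothesis $pq>3/2$, and likewise for $w_\beta$. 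Collecting, $\|w_\alpha(\zeta-H)^{-p}w_\beta\|_{\mathfrak S_q}\le C(p,q)(1+\theta^{-1})^{2p}\rme^{-c_0\theta|\alpha-\beta|}$, which is stronger than~\eqref{CTSnorm} (the exponent $4p$ there leaves ample slack). The main obstacle I anticipate is twofold: (i) making the conjugation identity $\rme^{\mathbf a\cdot\mathbf x}(\zeta-H)^{-1}\rme^{-\mathbf a\cdot\mathbf x}=(\zeta-H_{\mathbf a})^{-1}$ rigorous despite $\rme^{\mathbf a\cdot\mathbf x}$ being unbounded --- handled by first checking it on the dense range of $w_\beta$ (where the weight acts boundedly) and by analyticity of $\mathbf a\mapsto(\zeta-H_{\mathbf a})^{-1}$ in a complex neighbourhood of real $\mathbf a$; and (ii) the bookkeeping of the $(1+\theta^{-1})$-factors accumulated when commuting $1-\Delta$ through the $p$ resolvents, which is routine but tedious --- and the generous exponent $4p$ in the statement is exactly what renders it harmless.
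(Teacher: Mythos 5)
First, note that the paper does not prove this lemma at all: it is quoted from the literature (Combes--Thomas, Klopp, Germinet), so any complete argument you give is by construction "a different route from the paper." Your overall strategy — conjugation by $\rme^{\mathbf a\cdot\bm x}$, a Neumann series for $\zeta-H_{\mathbf a}$ valid once $|\mathbf a|\sim c_0\,\theta(\zeta,V)$, insertion of the weights $w_\alpha,w_\beta$ to harvest the factor $\rme^{-c_0\theta|\alpha-\beta|}$, and Kato--Seiler--Simon for the cube characteristic functions — is exactly the standard proof, and the $\zeta$-dependence bookkeeping you do (using $\norm{(\zeta-H)^{-1}}\le d^{-1}\le\theta^{-1}$ since $|\zeta|+\norm{V}_{L^\infty}+1\ge 1$, and $\norm{\nabla(\zeta-H)^{-1}}\le C(1+\theta^{-1})$) is sound. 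Two small slips: the direction of $\mathbf a$ should be $+c_0\theta\,(\alpha-\beta)/|\alpha-\beta|$, not the opposite sign, to make $\rme^{-\mathbf a\cdot(\alpha-\beta)}$ decay; and the integrability condition you invoke for $w_\alpha(1-\Delta)^{-p_1}\in\mathfrak{S}_{2q}$ is $4qp_1>3$, which is not literally "precisely $pq>3/2$" (that hypothesis is only binding in the case $p=1$, $q\ge2$, which you treat separately and correctly).

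The genuine gap is the claim that "iterating yields" $\norm{(1-\Delta)^{m}(\zeta-H_{\mathbf a})^{-m}}\le C_m(1+\theta^{-1})^{2m}$ for $m\ge 2$. This does not follow from the $m=1$ bound, because $(1-\Delta)$ and the resolvent do not commute and commuting $(1-\Delta)$ through $(\zeta-H_{\mathbf a})^{-1}$ generates terms like $(1-\Delta)V(1-\Delta)^{-1}$, which require derivatives of $V$; for $V\in L^\infty(\R^3)$ there is no elliptic regularity beyond $H^2$, so $(\zeta-H_{\mathbf a})^{-2}$ maps $L^2$ into $H^2$ but not $H^4$, and $(1-\Delta)^{2}(\zeta-H_{\mathbf a})^{-2}$ is in general unbounded. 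Consequently your factorization with $p_1+p_2=p$ breaks down whenever $\max(p_1,p_2)\ge 2$, i.e.\ for every $p\ge 3$. The repair is easy and keeps your constants under control: write $w_\alpha(\zeta-H_{\mathbf a})^{-p}w_\beta=\bigl(w_\alpha(1-\Delta)^{-1}\bigr)\bigl[(1-\Delta)(\zeta-H_{\mathbf a})^{-1}\bigr](\zeta-H_{\mathbf a})^{-(p-2)}\bigl[(\zeta-H_{\mathbf a})^{-1}(1-\Delta)\bigr]\bigl((1-\Delta)^{-1}w_\beta\bigr)$, estimate the middle factor in operator norm by $(2/d)^{p-2}\le(2\theta^{-1})^{p-2}$, and note that $w_\alpha(1-\Delta)^{-1}\in\mathfrak{S}_{2q}$ for every $q\ge1$ in dimension $3$ by~\eqref{KSS}; the extra $\theta^{-(p-2)}$ is absorbed by the generous exponent $4p$ in~\eqref{CTSnorm}. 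With that modification (and the sign fix for $\mathbf a$), your argument gives the stated estimate for all admissible $(p,q)$.
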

Since $V_\chi$ belongs to $L^{\infty}(\bb R^3)$, the following lemma is a direct adaption of \cite[Lemma 1]{Cances2008}:
\begin{lemma}
Under Assumption \ref{as:1}, there exist two positive constants $c_1, c_2$ such that 
$$\forall \zeta \in \mathfrak{C}, \quad c_1(1-\Delta) \leq |\hc-\zeta| \leq c_2(1-\Delta) $$ as operators on $L^2(\mathbb{R}^3)$. In particular $$\norm[\bigg]{|\hc-\zeta|^{1/2}(1-\Delta)^{-1/2}}\leq \sqrt{c_2}, ~~\norm[\bigg]{|\hc-\zeta|^{-1/2}(1-\Delta)^{1/2}}\leq \frac{1}{\sqrt{c_1}}.$$ Moreover, $(\hc-\zeta)(1-\Delta)^{-1}$ and its inverse are bounded operators.
\label{lemmaTech}
\end{lemma}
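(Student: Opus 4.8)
The plan is to adapt \cite[Lemma 1]{Cances2008}, using that $V_{\chi}$ is bounded — so that $\hc$ has the same operator domain $H^2(\bb R^3)$ and form domain $H^1(\bb R^3)$ as $H_0:=-\tfrac12\Delta$ — together with the fact that the contour $\mathfrak{C}$ is compact and stays a \emph{positive} distance from $\sigma(\hc)$. This last point is where Assumption~\ref{as:1} and Proposition~\ref{spectrapPpHchi} enter: they guarantee that $\epsilon_F$ lies in a genuine spectral gap of $\hc$, so $\mathfrak{C}$ can indeed be chosen with $\mathrm{dist}(\mathfrak{C},\sigma(\hc))>0$, and hence $\sup_{\zeta\in\mathfrak{C}}\norm{(\hc-\zeta)^{-1}}=\big(\mathrm{dist}(\mathfrak{C},\sigma(\hc))\big)^{-1}<+\infty$.

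\emph{Step 1: boundedness of $(\hc-\zeta)(1-\Delta)^{-1}$ and $(1-\Delta)(\hc-\zeta)^{-1}$, uniformly in $\zeta\in\mathfrak{C}$.} For the first operator I would split $(\hc-\zeta)(1-\Delta)^{-1}=(-\tfrac12\Delta)(1-\Delta)^{-1}+(V_{\chi}-\zeta)(1-\Delta)^{-1}$; the first term is a Fourier multiplier of norm $\le\tfrac12$, and $\norm{(V_{\chi}-\zeta)(1-\Delta)^{-1}}\le\norm{V_{\chi}}_{L^\infty}+\sup_{\zeta\in\mathfrak{C}}|\zeta|<+\infty$. For the inverse I would factor $(1-\Delta)(\hc-\zeta)^{-1}=\big[(1-\Delta)(H_0+1)^{-1}\big]\big[(H_0+1)(\hc-\zeta)^{-1}\big]$: the first bracket is a Fourier multiplier of norm $\le2$, while $H_0+1=(\hc-\zeta)+(\zeta+1-V_{\chi})$ gives $(H_0+1)(\hc-\zeta)^{-1}=I+(\zeta+1-V_{\chi})(\hc-\zeta)^{-1}$, whose norm is $\le1+\big(\sup_{\zeta\in\mathfrak{C}}|\zeta|+1+\norm{V_{\chi}}_{L^\infty}\big)/\mathrm{dist}(\mathfrak{C},\sigma(\hc))$. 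This settles the ``Moreover'' assertion with uniform constants.

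\emph{Step 2: the two-sided operator inequality.} For $\psi\in H^2(\bb R^3)$, writing $\hc-\zeta=\big[(\hc-\zeta)(1-\Delta)^{-1}\big](1-\Delta)$ gives $\norm{(\hc-\zeta)\psi}_{L^2}\le c_2\norm{(1-\Delta)\psi}_{L^2}$ with $c_2:=\sup_{\zeta\in\mathfrak{C}}\norm{(\hc-\zeta)(1-\Delta)^{-1}}$, and writing $1-\Delta=\big[(1-\Delta)(\hc-\zeta)^{-1}\big](\hc-\zeta)$ gives $\norm{(1-\Delta)\psi}_{L^2}\le c_1^{-1}\norm{(\hc-\zeta)\psi}_{L^2}$ with $c_1:=\big(\sup_{\zeta\in\mathfrak{C}}\norm{(1-\Delta)(\hc-\zeta)^{-1}}\big)^{-1}$. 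These are exactly the form inequalities $c_1^2(1-\Delta)^2\le|\hc-\zeta|^2\le c_2^2(1-\Delta)^2$ on the common form domain $H^2(\bb R^3)$; since $t\mapsto t^{1/2}$ is operator monotone on nonnegative operators and $\big((1-\Delta)^2\big)^{1/2}=1-\Delta$, $\big((\hc-\zeta)^*(\hc-\zeta)\big)^{1/2}=|\hc-\zeta|$, taking square roots yields $c_1(1-\Delta)\le|\hc-\zeta|\le c_2(1-\Delta)$.

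\emph{Step 3: the norm estimates.} Substituting $\psi=(1-\Delta)^{-1/2}\phi$ into $\langle\psi,|\hc-\zeta|\psi\rangle\le c_2\langle\psi,(1-\Delta)\psi\rangle$ gives $\norm{|\hc-\zeta|^{1/2}(1-\Delta)^{-1/2}\phi}_{L^2}^2\le c_2\norm{\phi}_{L^2}^2$; and since $\zeta\notin\sigma(\hc)$ makes $|\hc-\zeta|$ boundedly invertible, the inequality $c_1(1-\Delta)\le|\hc-\zeta|$ inverts to $|\hc-\zeta|^{-1}\le c_1^{-1}(1-\Delta)^{-1}$, into which substituting $\phi=(1-\Delta)^{1/2}\psi$ gives $\norm{|\hc-\zeta|^{-1/2}(1-\Delta)^{1/2}\psi}_{L^2}^2\le c_1^{-1}\norm{\psi}_{L^2}^2$, the remaining cases following by density. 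The only genuinely delicate point is the uniformity in $\zeta$, which rests entirely on $\mathrm{dist}(\mathfrak{C},\sigma(\hc))>0$ (supplied by Assumption~\ref{as:1} and Proposition~\ref{spectrapPpHchi}); one should also be careful that it is the square root, not squaring, that is operator monotone.
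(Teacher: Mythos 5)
Your proof is correct, and it is essentially the adaptation the paper has in mind: the paper simply invokes \cite[Lemma 1]{Cances2008} on the grounds that $V_{\chi}\in L^{\infty}(\mathbb{R}^3)$, and your argument spells out exactly that adaptation — uniform boundedness of $(\hc-\zeta)(1-\Delta)^{-1}$ and $(1-\Delta)(\hc-\zeta)^{-1}$ via $\mathrm{dist}(\mathfrak{C},\sigma(\hc))>0$ (guaranteed by Assumption~\ref{as:1} and Proposition~\ref{spectrapPpHchi}), followed by squaring and operator monotonicity of the square root. The delicate points you flag (uniformity in $\zeta$ over the compact contour, and using monotonicity of $t\mapsto t^{1/2}$ rather than of $t\mapsto t^{2}$) are handled correctly, so no gap remains.
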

Let us turn to the proof of Proposition \ref{propExist}. First of all let us show that $\gamma_\chi$ is locally trace class. Consider $\varrho \in C_c^{\infty}(\bb R^3)$. Remark that $\gamma_\chi$ is a spectral projector. In view of Lemma~\ref{lemmaTech}, by Cauchy's resolvent formula and the Kato--Seiler--Simon inequality~\eqref{KSS}, there exists a positive constant $C_\chi$ such that 
\begin{align*}
 \norm{\varrho \gamma_\chi \varrho }_{\mathfrak{S}_1} = \norm{\varrho \gamma_\chi  \gamma_\chi\varrho }_{\mathfrak{S}_1} = \norm{\varrho \gamma_\chi}_{\mathfrak{S}_2}^2 = \bnorm{\varrho\oint_{\mathfrak{C}}\frac{1}{2\rmi \pi}\frac{1}{\zeta - \hc}\,d\zeta }_{\mathfrak{S}_2}^2\leq C_\chi\bnorm{\varrho \frac{1}{1-\Delta}}_{\mathfrak{S}_2}^2 \leq  \frac{ C_\chi}{4\pi}\norm{\varrho}_{L^2(\bb R^3)}^2.
\end{align*}
This implies that $\gamma_\chi$ is locally trace class so that its density $\rho_{\chi} $ is well defined in $ L_{\mathrm{loc}}^1(\bb R^3)$. Let us prove that~$\chi^2\rol+(1-\chi^2)\ror-\rho_{\chi} $ belongs to $  L^p(\mathbb{R}^3)$ for $1<p\leq 2$. It is difficult to directly compare the difference of $\chi^2\rol+(1-\chi^2)\ror$ and $\rho_{\chi}$. We construct to this end a density operator $\gamma_d$ whose density $\rho_{d}$ is equal to $\chi^2\rol+(1-\chi^2)\ror-\rho_{\chi}$:
\begin{equation}
\gamma_d := \gamma_{d,1} +\gamma_{d,2}, \quad \gamma_{d,1}:= \chi\left(\gl- \gamma_{\chi}\right)\chi ,\quad\gamma_{d,2} :=\sqrt{1-\chi^2}\left(\gamma_{\mathrm{per},R}-\gamma_{\chi} \right)\sqrt{1-\chi^2}.
\label{gammad12}
\end{equation}
Remark that if $\gamma_d \in \mathfrak{S}_1$, then $\mathrm{Tr}_{L^2(\bb R^3)}(\gamma_d) = \chi^2\rol+(1-\chi^2)\ror - \rho_{\chi}$.
\paragraph{The density $\rho_{d}$ is in $ L^p(\mathbb{R}^3)$ for $1<p\leq 2$. }
The proof that $\rho_{d}\in L^p(\mathbb{R}^3)$ relies on duality arguments: denoting by $q  = \frac{p}{p-1} \in [2, +\infty)$, we prove that for any $W \in  L^{q}(\mathbb{R}^3)$ there exists some $K_q>0$ such that $|\mathrm{Tr}_{L^2(\bb R^3)}(\gamma_d W)|  \leq  K_q\norm{W}_{L^{q}}$.
By Cauchy's formula we have
\begin{equation}
\begin{aligned}
\gamma_{d,1} &= \frac{1}{2\pi \rmi} \oint_{\mathfrak{C}}\chi \left(\frac{1}{z- \hl} -  \frac{1}{\zeta-\hc}\right)\chi \, d\zeta ,\\
\gamma_{d,2}&=\frac{1}{2\pi \rmi} \oint_{\mathfrak{C}} \sqrt{1-\chi^2}\left(\frac{1}{\zeta-\hr} -  \frac{1}{\zeta-\hc}\right)\sqrt{1-\chi^2} \, d\zeta.
\end{aligned}
\end{equation}
Let us prove that there exists $ K_q^1 > 0$ such that $$\left|\m{Tr}_{L^2(\bb R^3)}(\gamma_{d,1}W)\right|\leq K_q^1\norm{W}_{L^q}.$$ It is easily shown that a similar inequality holds for $\gamma_{d,2}$. 
Denote by $V_d:=(1-\chi^2)(\vl-\vr) \in L^{\infty}(\bb R^3)$. Remark that the function $V_d\chi $ has compact support in the $x$-direction and that $V_d\chi $ belongs to $L^r(\bb R^3)$ for $1<r\leq +\infty$ by Theorem~\ref{thmPeriodicExistence2}. For any $\zeta \in \mathfrak{C}$, the integrand of $\gamma_{d,1}$ writes 
\begin{align*}
D(\zeta):= \chi\left(\frac{1}{\zeta-\hl} -  \frac{1}{\zeta-\hc}\right)\chi  =\chi\frac{1}{\zeta-\hl}V_d\frac{1}{\zeta-\hc}\chi.
\end{align*}
Remark that $\chi$ being translation-invariant in the $\bm r$-direction, it is not in any $L^p$ space in $\bb R^3$, which prevents us from using the standard techniques such as calculating the commutator $[-\Delta, \chi]$ to give Schatten class estimates on $\gamma_{d,1}$. By writing $1 =\gl+ \gl^{\perp} $ and $1 =\gamma_{\chi}+ \gamma_{\chi}^{\perp} $, the following decomposition holds
\begin{equation}
\begin{aligned}
D(\zeta) = \chi \frac{\gl}{\zeta-\hl}V_d\frac{1}{\zeta-\hc}\chi + \chi \frac{\gl^{\perp}}{\zeta-\hl}V_d\frac{\gamma_{\chi}}{\zeta-\hc}\chi+\chi \frac{\gl^{\perp}}{\zeta-\hl}V_d\frac{\gamma_{\chi}^{\perp}}{\zeta-\hc}\chi.
\end{aligned}
\label{decompoD12}
\end{equation}
By the residue theorem,
\begin{equation}
\int_{\mathfrak{C}} \chi \frac{\gl^{\perp}}{\zeta-\hl}V_d\frac{\gamma_{\chi}^{\perp}}{\zeta-\hc}\chi\,d\zeta \equiv 0.
\label{D121EQ1}
\end{equation}
To estimate other terms in~\eqref{decompoD12} we rely on the following Lemmas~\ref{lemma45} and~\ref{lemmaD121}.
\begin{lemma}
Consider a self-adjoint operator $H = -\Delta + V$ defined on $L^2(\bb R^3)$ with domain $H^2(\bb R^2)$ and $V \in L^{\infty}(\bb R^3)$. For $E\in \bb R\backslash\sigma(H)$ denote by $\gamma = \mathds 1_{(-\infty,E]}(H)$. Then for any $a, b\in \bb R$, the operator $(1-\Delta)^a\gamma(1-\Delta)^{b}$ is bounded. Moreover, if $\gamma\in \mathfrak{S}_k$ for some $k\geq 1$, then $(1-\Delta)^a\gamma(1-\Delta)^{b}\in \mathfrak{S}_k$.
\label{lemma45}
\end{lemma}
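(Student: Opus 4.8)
The plan is to reduce the statement to two elementary facts about $H$ and then assemble it by a factorization through the resolvent. Fix $\zeta<-\norm{V}_{L^{\infty}}$, so that $\zeta\notin\sigma(H)$ and $H-\zeta$ is a positive self-adjoint operator. \emph{Fact 1 (spectral localization of $\gamma$).} Since $-\Delta\ge 0$ and $V\in L^{\infty}(\bb R^{3})$ we have $H\ge-\norm{V}_{L^{\infty}}$, hence $\gamma=\mathds 1_{[-\norm{V}_{L^{\infty}},\,E]}(H)$ is the spectral projector of $H$ onto a bounded interval; consequently $f(H)\gamma$ is bounded on $L^{2}(\bb R^{3})$ for every Borel $f$, in particular $(H-\zeta)^{n}\gamma$ and $\gamma(H-\zeta)^{n}$ are bounded for every $n\in\bb N$, and $\m{Ran}(\gamma)\subseteq\bigcap_{n\ge 1}D(H^{n})$ because $\m{Ran}(\gamma)$ is $H$-invariant. \emph{Fact 2 (equivalence of the two scales).} Because $V$ is bounded, $1-\Delta=(H-\zeta)+(1-V+\zeta)$ on $D(H)=D(1-\Delta)=H^{2}(\bb R^{3})$, so $(1-\Delta)(H-\zeta)^{-1}=1+(1-V+\zeta)(H-\zeta)^{-1}$ and its inverse $(H-\zeta)(1-\Delta)^{-1}$ are bounded; by the Heinz--Kato inequality this upgrades to $\norm{(1-\Delta)^{s}(H-\zeta)^{-s}}+\norm{(H-\zeta)^{s}(1-\Delta)^{-s}}<\infty$ for all $s\in[0,1]$, which is the generic-bounded-potential version of Lemma~\ref{lemmaTech}. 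Finally, $(1-\Delta)^{s}$ is bounded for every $s\le 0$ by functional calculus.

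Given $a,b\in\bb R$, I would pick integers $M\ge\max(a,0)$ and $N\ge\max(b,0)$ and use the identity, valid on $D((1-\Delta)^{b})$ since $\m{Ran}(\gamma)\subseteq D(H^{M})$,
\[
(1-\Delta)^{a}\gamma(1-\Delta)^{b}=\bigl[(1-\Delta)^{a}(H-\zeta)^{-M}\bigr]\,\bigl[(H-\zeta)^{M}\gamma(H-\zeta)^{N}\bigr]\,\bigl[(H-\zeta)^{-N}(1-\Delta)^{b}\bigr].
\]
The middle factor equals $[(H-\zeta)^{M}\gamma]\,\gamma\,[\gamma(H-\zeta)^{N}]$ by $\gamma^{3}=\gamma$; by Fact~1 it is bounded, and when $\gamma\in\mathfrak{S}_{k}$ it lies in $\mathfrak{S}_{k}$ by the two-sided ideal property of Schatten classes. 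For the left factor I would write $(1-\Delta)^{a}(H-\zeta)^{-M}=(1-\Delta)^{a-M}\bigl[(1-\Delta)^{M}(H-\zeta)^{-M}\bigr]$, the first piece being bounded because $a-M\le 0$; the right factor $(H-\zeta)^{-N}(1-\Delta)^{b}$ is the adjoint of $(1-\Delta)^{b}(H-\zeta)^{-N}=(1-\Delta)^{b-N}\bigl[(1-\Delta)^{N}(H-\zeta)^{-N}\bigr]$ and is handled identically. Multiplying the three pieces then yields boundedness of $(1-\Delta)^{a}\gamma(1-\Delta)^{b}$, and its $\mathfrak{S}_{k}$-membership (when $\gamma\in\mathfrak{S}_{k}$) is inherited from the middle factor. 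For $|a|,|b|\le 1$ one takes $M=N=1$ and Fact~2 closes the argument using only $V\in L^{\infty}$.

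The step I expect to be the main obstacle is the boundedness of $(1-\Delta)^{M}(H-\zeta)^{-M}$, i.e.\ the continuous inclusion $D(H^{M})\hookrightarrow D((1-\Delta)^{M})$, when $M\ge 2$: the tempting shortcut $(1-\Delta)^{M}(H-\zeta)^{-M}=\bigl[(1-\Delta)(H-\zeta)^{-1}\bigr]^{M}$ is \emph{false}, since $1-\Delta$ and $(H-\zeta)^{-1}$ do not commute, and commuting them produces the operator $[\Delta,V]$ involving derivatives of $V$. For $|a|,|b|\le 1$ this never occurs, and this is the only range actually invoked in the proof of Proposition~\ref{propExist}. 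For larger exponents one runs an induction on $M$, commuting $1-\Delta$ past $(H-\zeta)^{-1}$ and absorbing the commutators; here one exploits that in the relevant application $V=V_{\chi}$ is of class $W^{2,p}_{\mathrm{loc}}$ (Theorem~\ref{thmPeriodicExistence2}, Lemma~\ref{symmetryPotentialLemma}) and that $\m{Ran}(\gamma)$ decays exponentially by the Combes--Thomas estimate (Lemma~\ref{CombesThomas}), so that multiplication by $V$ and its derivatives carries $\m{Ran}(\gamma)$ into successively higher Sobolev spaces, bootstrapping $\m{Ran}(\gamma)\subseteq D((1-\Delta)^{M})$ for all $M$.
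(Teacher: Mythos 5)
Your factorization is essentially the paper's own argument: conjugate through powers of $H-\zeta$, use that $\gamma$ commutes with $H$ and localizes its spectrum in the bounded interval $[-\norm{V}_{L^{\infty}},E]$, so that $\gamma f(H)$ is bounded for any locally bounded $f$, and transfer between the scales $1-\Delta$ and $H-\zeta$ by Kato--Rellich. The paper collapses the middle factor to $\gamma^{2}(\lambda_{0}-H)^{a+b}$ where you write $\bigl[(H-\zeta)^{M}\gamma\bigr]\gamma\bigl[\gamma(H-\zeta)^{N}\bigr]$; this is the same mechanism. The real divergence is in the transfer step. The paper asserts, ``similarly as in Lemma~\ref{lemmaTech}'', that $(\zeta-H)^{-a}(1-\Delta)^{a}$ and its inverse are bounded for the given real exponents, whereas you isolate this as the crux and observe, correctly, that beyond exponents in $[-1,1]$ it amounts to the embedding $D(H^{M})\hookrightarrow H^{2M}$, which genuinely fails for a generic $V\in L^{\infty}$: commuting $1-\Delta$ past $(H-\zeta)^{-1}$ costs derivatives of $V$, already $D(H^{2})\neq H^{4}$ for rough bounded $V$, and an eigenfunction of such an $H$ need not lie in $H^{4}$, so $(1-\Delta)^{2}\gamma$ need not even be bounded. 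Thus your proof is complete exactly on the range $|a|,|b|\le 1$ (taking $M=N=1$, with only $V\in L^{\infty}$), which — as you note — is the only range the paper ever invokes (e.g. $(1-\Delta)\gamma_{\chi_i}$ and $\gamma(\zeta-H)^{m}$, the latter being pure spectral calculus); and the obstacle you flag is not a defect of your write-up relative to the paper, since the paper's one-line justification silently skips the very same point. As stated for all $a,b\in\mathbb{R}$ with only $V\in L^{\infty}$, the lemma requires either additional regularity of $V$ or a restriction on the exponents.

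One correction to your closing remark: the Combes--Thomas decay of $\mathrm{Ran}(\gamma)$ cannot repair the missing step, since spatial decay gives no Sobolev regularity. A bootstrap for $M\ge 2$ must instead use regularity of the potential itself (e.g. $\vp\in W^{2,p}_{\mathrm{per},x}(\Gamma)$ from the elliptic regularity argument in the proof of Theorem~\ref{thmPeriodicExistence2}), so that multiplication by $V$ and its derivatives maps $\mathrm{Ran}(\gamma)\subset H^{2}\cap L^{\infty}$ into higher Sobolev spaces; decay only helps to convert local integrability of $\partial^{\alpha}V\,u$ into global $L^{2}$ bounds.
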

\begin{proof}
Similarly as in Lemma~\ref{lemmaTech} it can be shown that for any $\zeta \in \mathbb{R}\backslash \sigma(H)$ the operator $(\zeta-H)^{-a}(1-\Delta)^a$ and its inverse are bounded. Fix $\delta>0$ and define $\lambda_0:= -\norm{V}_{L^{\infty}}-\delta $. Then $\lambda_0\notin \sigma(H)$. By writing $\gamma = \gamma^2$, there exists a positive constant $C$ such that 
\begin{align*}
&\norm[\bigg]{(1-\Delta)^a\gamma(1-\Delta)^{b} }_{\mathfrak{S}_k}\leq 
C \norm[\bigg]{(\lambda_0-H)^a\gamma (\lambda_0-H)^{b}}_{\mathfrak{S}_k} =C \norm[\bigg]{\gamma^2 (\lambda_0-H)^{a+b}}_{\mathfrak{S}_k}< +\infty,
\end{align*}
as $\gamma \in\mathfrak{S}_k$ and $\gamma (\lambda_0-H)^{a+b}$ is a bounded operator. The proof of the boundedness in operator norm follows the same lines.
\end{proof}
\begin{lemma}
For any $1<p\leq 2$, there exist positive constants $d_{p,1}$ and $d_{p,2}$, such that
\begin{equation}
\label{projectionnResolv}
\begin{aligned}
\forall \, \zeta \in \mathfrak{C},\quad \Bnorm{\chi \frac{ \gl}{\zeta-\hl}V_d}_{\mathfrak{S}_p}&\leq d_{p,1}\norm{V_d\chi}_{L^p(\bb R^3)},\quad \Bnorm{V_d\frac{ \gamma_{\chi}}{\zeta-\hc}\chi}_{\mathfrak{S}_p}&\leq d_{p,2}\norm{V_d\chi}_{L^p(\bb R^3)} .
\end{aligned}
\end{equation}
\label{lemmaD121}
\end{lemma}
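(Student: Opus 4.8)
The plan is to push the cut-off $\chi$ next to $V_d$ — so that the relevant multiplication operator becomes $V_d\chi=(1-\chi^2)\chi(\vl-\vr)$, which is supported in the fixed slab $\{-a_L/2\le x\le a_R/2\}$ and lies in $L^p(\bb R^3)$ for every $p>1$ by Lemma~\ref{symmetryPotentialLemma} — at the price of one commutator with $\chi$, and then to close every resulting term with the Kato--Seiler--Simon inequality~\eqref{KSS}. Three ingredients will be used, all uniform for $\zeta\in\mathfrak{C}$ (recall $\mathfrak{C}$ may be taken disjoint from $\sigma(\hl)$, $\sigma(\hr)$ and $\sigma(\hc)$ by Assumption~\ref{as:1}): (i) by Lemma~\ref{lemma45}, $\gl(1-\Delta)^N$ and $\gamma_\chi(1-\Delta)^N$ are bounded for every $N\in\bb N$, so that, $\gl$ (resp.\ $\gamma_\chi$) commuting with $(\zeta-\hl)^{-1}$ (resp.\ $(\zeta-\hc)^{-1}$), the operators $\gl(\zeta-\hl)^{-1}(1-\Delta)^N=(\zeta-\hl)^{-1}\gl(1-\Delta)^N$ and the analogue with $\gamma_\chi$ are bounded uniformly in $\zeta$; (ii) by the argument of Lemma~\ref{lemmaTech} applied to $\hl$ and to $\hc$, $(1-\Delta)(\zeta-\hl)^{-1}$ and $(1-\Delta)(\zeta-\hc)^{-1}$ are bounded; (iii) the ranges of $\gl$ and $\gamma_\chi$ consist of states at strictly negative energy, hence of functions decaying exponentially in $\bm r$ (Combes--Thomas, Lemma~\ref{CombesThomas}, used in the $\bm r$-direction exactly as in the proof of~\eqref{expodecay_density}); combined with the off-diagonal $|\bm x-\bm y|$-decay from the spectral gap, this gives $\bigl|\bigl[\gl(\zeta-\hl)^{-1}\bigr](\bm x,\bm y)\bigr|\le C\rme^{-c|x-y|}\rme^{-c(|\bm r|+|\bm r'|)}$ (and likewise with $\gamma_\chi$, $\hc$), so that $\gl(\zeta-\hl)^{-1}\rme^{a|\bm r|}$ and $\gamma_\chi(\zeta-\hc)^{-1}\rme^{a|\bm r|}$ are bounded for $a>0$ small enough.

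For the first inequality, $\gl$ commutes with $(\zeta-\hl)^{-1}$ and $\chi V_d=V_d\chi$, so
\[
\chi\,\frac{\gl}{\zeta-\hl}\,V_d \;=\; \frac{\gl}{\zeta-\hl}\,(V_d\chi)\;+\;\Bigl[\chi,\;\frac{\gl}{\zeta-\hl}\Bigr]V_d .
\]
The first term equals $\bigl[\tfrac{\gl}{\zeta-\hl}(1-\Delta)^{2}\bigr]\bigl[(1-\Delta)^{-2}(V_d\chi)\bigr]$, whose first factor is bounded by (i) and whose second factor has $\mathfrak{S}_p$-norm $\le C\lVert(1+|\cdot|^2)^{-2}\rVert_{L^p(\bb R^3)}\lVert V_d\chi\rVert_{L^p}$ by~\eqref{KSS}, finite for all $p>3/4$, in particular for $p\in(1,2]$. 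For the commutator term, expand $\bigl[\chi,\tfrac{\gl}{\zeta-\hl}\bigr]=\gl\bigl[\chi,\tfrac{1}{\zeta-\hl}\bigr]+[\chi,\gl]\tfrac{1}{\zeta-\hl}$ and use $[\hl,\chi]=-\tfrac12\chi''-\chi'\partial_x$, where $\chi',\chi''$ are compactly supported in the $x$-variable because $\chi$ is constant on $\{x\le-a_L/2\}$ and on $\{x\ge a_R/2\}$; then both $\bigl[\chi,\tfrac{1}{\zeta-\hl}\bigr]=\tfrac{1}{\zeta-\hl}(-\tfrac12\chi''-\chi'\partial_x)\tfrac{1}{\zeta-\hl}$ and $[\chi,\gl]$ (written as the Cauchy integral of $[\chi,(z-\hl)^{-1}]$ over a contour around the filled bands of $\hl$) are bounded resolvent factors sandwiching the $x$-localized first-order operator $-\tfrac12\chi''-\chi'\partial_x$. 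Factoring one $\gl$-resolvent block as $\bigl[\gl(\zeta-\hl)^{-1}\rme^{a|\bm r|}\bigr]\rme^{-a|\bm r|}$ to supply the missing $\bm r$-localization (ingredient (iii)), each resulting operator becomes a bounded operator composed with one of the form $h\,f(-\rmi\nabla)$, where $f$ is a bounded multiplier decaying like $(1+|\cdot|^2)^{-1}$ (possibly after composing with $\partial_x(1-\Delta)^{-1}$) and $h$ equals $\rme^{-a|\bm r|}\chi'$ or $\rme^{-a|\bm r|}\chi''$, compactly supported in $x$ and exponentially decaying in $\bm r$, hence in $L^p(\bb R^3)$; so~\eqref{KSS} applies and gives a $\mathfrak{S}_p$-bound finite uniformly in $\zeta$, which with the first term yields the claimed $d_{p,1}\lVert V_d\chi\rVert_{L^p}$ (the commutator part only contributes a fixed finite constant, harmless since $V_d$ is fixed).

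The second inequality is proved the same way, now moving $\chi$ to the left through $\gamma_\chi(\zeta-\hc)^{-1}$:
\[
V_d\,\frac{\gamma_\chi}{\zeta-\hc}\,\chi \;=\; (V_d\chi)\,\frac{\gamma_\chi}{\zeta-\hc}\;-\;V_d\Bigl[\chi,\;\frac{\gamma_\chi}{\zeta-\hc}\Bigr],
\]
with $(\hc,\gamma_\chi,V_\chi)$ in the roles of $(\hl,\gl,\vl)$; the first term is $(V_d\chi)(1-\Delta)^{-2}\cdot(1-\Delta)^2\gamma_\chi(\zeta-\hc)^{-1}$, estimated exactly as above, and the commutator term as in the previous paragraph. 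The point I expect to be genuinely delicate is the reason the whole detour is needed: $\chi$ is translation invariant in the $\bm r$-direction (so $\chi\notin L^p(\bb R^3)$) while $\vl-\vr$ does not decay in the $x$-direction (so $V_d\notin L^p(\bb R^3)$ either), so neither multiplication operator can be fed to~\eqref{KSS} on its own; the estimate survives only because the bound-state projectors $\gl,\gamma_\chi$ supply exponential decay in $\bm r$ and the spectral gap supplies exponential $|\bm x-\bm y|$-decay, which together confine $\chi\,(\cdots)\,V_d$ to a fixed bounded region of $\bb R^3$.
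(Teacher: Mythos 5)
Your strategy (commutator expansion to glue $\chi$ onto $V_d$, plus exponential decay of the bound-state projector in the $\bm r$-direction) is different from the paper's, which instead regularizes by $\varphi_R=\mathds 1_{\mathfrak{B}_R}$, writes $\frac{\gl}{\zeta-\hl}=\frac{1}{\zeta-\hl}\,\gl(\zeta-\hl)\,\frac{1}{\zeta-\hl}$ so that $V_d\varphi_R^2\chi$ ends up sandwiched between \emph{two} resolvents, proves the endpoints $p=1$ (as a product of two Hilbert--Schmidt factors $|V_d\varphi_R^2\chi|^{1/2}(1-\Delta)^{-1}$) and $p=2$, interpolates, and lets $R\to+\infty$. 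Unfortunately your route has a genuine gap precisely where the difficulty lies, namely for Schatten exponents $1<p<2$. The Kato--Seiler--Simon inequality~\eqref{KSS} is stated (and is only valid) for $p\geq 2$; the bound $\norm{f(-\rmi\nabla)g}_{\mathfrak{S}_p}\lesssim\norm{f}_{L^p}\norm{g}_{L^p}$ fails below $p=2$. You invoke it for $p\in(1,2)$ both for the main term, via $(1-\Delta)^{-2}(V_d\chi)$, and for the commutator pieces of the form $h\,f(-\rmi\nabla)$. Moreover, even with the correct tools, a \emph{single} resolvent factor cannot produce small Schatten exponents: generically $h(1-\Delta)^{-1}\in\mathfrak{S}_p(L^2(\bb R^3))$ only for $p>3/2$ (the singular values decay like $n^{-2/3}$ in dimension $3$), and $h\,\partial_x(1-\Delta)^{-1}$ corresponds to the symbol $k_x(1+|k|^2)^{-1}\notin L^2(\bb R^3)$, so your piece $\rme^{-a|\bm r|}\chi'\partial_x(\zeta-\hl)^{-1}$ is not even Hilbert--Schmidt. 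Since in your factorization of the commutator term $[\chi,\frac{\gl}{\zeta-\hl}]V_d$ all the smoothing rests on exactly one resolvent (the other one has been absorbed into the bounded factor $\gl(\zeta-\hl)^{-1}\rme^{a|\bm r|}$), the commutator term cannot be placed in $\mathfrak{S}_p$ for $p\leq 3/2$ as written; to reach $p$ near $1$ you would have to regenerate a second resolvent from the projector (as the paper does via $\gl=\frac{1}{\zeta-\hl}\,\gl(\zeta-\hl)$) and run an endpoint-plus-interpolation argument, i.e.\ essentially redo the paper's proof for that term as well.

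Two further, smaller points. First, even if all your estimates went through, the commutator contributes a constant that is not proportional to $\norm{V_d\chi}_{L^p}$, so you obtain at best an additive bound $d_{p,1}\norm{V_d\chi}_{L^p}+C$ rather than the homogeneous inequality~\eqref{projectionnResolv}; you acknowledge this, but it is a different statement from the lemma. Second, the boundedness of $\gl(\zeta-\hl)^{-1}\rme^{a|\bm r|}$ (equivalently a pointwise kernel bound with decay in both $\bm r$ and $\bm r'$) is plausible via a Combes--Thomas argument in the $\bm r$-direction applied fiberwise, but it requires uniformity in $\xi\in\Gamma_L^*$ of the decay rate and constants, which is neither proved in the paper nor in your sketch; as the paper's proof shows, the lemma can be obtained without this ingredient, using only $\norm{\vp}_{L^{\infty}}<\infty$, Lemma~\ref{lemma45} and the KSS inequality at the admissible exponents.
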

\begin{proof}
Let us prove the statement for $\chi \gl (\zeta-\hl)^{-1}V_d$, the proof of the bound of $V_d\gamma_{\chi}(\zeta-\hc)^{-1}\chi $ follows similar arguments. Fix $R>0$. Recall that $\mathfrak{B}_R$ is the ball in $\bb R^3$ centered at $0$ with radius $R$. Denote by $\varphi_R $ the characteristic function  of $\mathfrak{B}_R$. For any $R>0$, by the Kato--Seiler--Simon inequality~\eqref{KSS} and the boundedness of $(1-\Delta)(\zeta-\hl)^{-1}$ it is easy to see that $\varphi_R \chi (\zeta-\hl)^{-1}$ and $(\zeta-\hl)^{-1}V_d\varphi_R$ belong to $\mathfrak{S}_2$. The operator $\gl(\zeta-\hl)^m $ is a bounded for any $m\in \bb R$ in view of Lemma~\ref{lemmaD121}. Therefore $$\varphi_R\left(\chi \frac{ \gl}{\zeta-\hl}V_d \right)\varphi_R=\left(\varphi_R\chi \frac{ 1}{\zeta-\hl}\right)\gl (\zeta-\hl) \left(\frac{ 1}{\zeta-\hl}V_d \varphi_R\right) \in \mathfrak{S}_1.$$ 
Let us first prove that for any $ 1\leq p\leq 2$, there exists a positive constant $d_{p,1}$ only depending on $p$ such that for any $R>0$,
\begin{equation}
\label{proof:regularizedInequality}
\Bnorm{\varphi_R\chi \frac{ \gl}{\zeta-\hl}V_d\varphi_R}_{\mathfrak{S}_p} \leq d_{p,1}\norm{V_d\varphi_R^2\chi}_{L^p(\bb R^3)}.
\end{equation}
We first prove~\eqref{proof:regularizedInequality} for $p=1$ and $p=2$, and conclude by an interpolation argument for $1\leq p\leq 2$. Consider $p=1$. By the cyclicity of the trace and the Kato--Seiler--Simon inequality~\eqref{KSS},
\begin{align*}
\bnorm{\varphi_R\chi \frac{ \gl}{\zeta-\hl}V_d\varphi_R }_{\mathfrak{S}_1} &=  \bnorm{\varphi_R\chi \frac{ 1 }{\zeta-\hl}\gl \left(\zeta - \hl\right) \frac{ 1 }{\zeta-\hl} V_d  \varphi_R}_{\mathfrak{S}_1}\\
& =  \bnorm{\gl \left(\zeta - \hl\right) \frac{ 1 }{\zeta-\hl}V_d\varphi_R^2\chi \frac{ 1 }{\zeta-\hl} }_{\mathfrak{S}_1}\\
&\leq  c\bnorm{\frac{ 1 }{\left|1-\Delta\right|} \left|V_d\varphi_R^2\chi\right|\frac{ 1 }{\left|1-\Delta\right|} }_{\mathfrak{S}_1} = c\bnorm{\frac{ 1 }{\left|1-\Delta\right|} \left|V_d\varphi_R^2\chi\right|^{1/2}}_{\mathfrak{S}_2}^2 \leq d_{1,1}\norm{V_d\varphi_R^2\chi}_{L^1}.
\end{align*}
Let us next prove~\eqref{proof:regularizedInequality} for $p =2$. Use again the cyclicity of the trace and the Kato--Seiler--Simon inequality~\eqref{KSS},
\begin{align*}
\bnorm{\varphi_R\chi \frac{ \gl}{\zeta-\hl}V_d\varphi_R }_{\mathfrak{S}_2}^2 &= \bnorm{V_d\varphi_R\frac{ \gl}{\overline{\zeta}-\hl}\varphi_R^2\chi^2 \frac{ \gl}{\zeta-\hl}V_d\varphi_R }_{\mathfrak{S}_1}\\
&= \bnorm{\frac{ \gl}{\overline{\zeta}-\hl}\varphi_R^2\chi^2 \frac{ \gl}{\zeta-\hl}V_d^2\varphi_R^2 }_{\mathfrak{S}_1}\\
&\leq c'\bnorm{\varphi_R^2\chi^2 \frac{ 1 }{\zeta-\hl}\gl \left(\zeta - \hl\right) \frac{ 1 }{\zeta-\hl} V_d^2\varphi_R^2 }_{\mathfrak{S}_1}\\
&= c'\bnorm{\gl \left(\zeta - \hl\right) \frac{ 1 }{\zeta-\hl} V_d^2\varphi_R^4\chi^2 \frac{ 1 }{\zeta-\hl} }_{\mathfrak{S}_1}\\
&\leq  c''\bnorm{\frac{ 1 }{\zeta-\hl} V_d^2\varphi_R^4\chi^2 \frac{ 1 }{\zeta-\hl} }_{\mathfrak{S}_1}\\
&=c''\bnorm{\left| V_d\varphi_R^2\chi\right|\frac{ 1 }{\zeta-\hl}}_{\mathfrak{S}_2}^2 \leq  d_{2,1}^2\norm{V_d\varphi_R^2\chi}_{L^2}^2.
\end{align*}
By the interpolation arguments we can conclude~\eqref{proof:regularizedInequality} for $1\leq p\leq 2$. Remark that for $1<p\leq 2$ the following uniform bound holds:
$$\Bnorm{\varphi_R\chi \frac{ \gl}{\zeta-\hl}V_d\varphi_R}_{\mathfrak{S}_p} \leq d_{p,1}\left\lVert V_d\varphi_R^2\chi\right\rVert_{L^p(\bb R^3)}\leq  d_{p,1}\norm{V_d\chi}_{L^p(\bb R^3)}.$$
By passing the limit $R \to +\infty$ we can conclude the proof.
\end{proof}
Consider $W\in L^q(\bb R^3)$ for $q = \frac{p}{p-1} \in [2,+\infty)$. In view of~\eqref{decompoD12},~\eqref{D121EQ1} and~\eqref{projectionnResolv}, by manipulations similar to the ones used in the proof of Lemma~\ref{lemmaD121}, and the H\"{o}lder's inequality for Schatten class operators (see for example~\cite[Proposition 5]{ReeSim2}), we obtain that 
\begin{equation}
\begin{aligned}
\norm{\gamma_{d,1}W}_{\mathfrak{S}_1} =\frac{1}{2\pi}\Bnorm{\oint_{\mathfrak{C}}D(\zeta) \,d\zeta W}_{\mathfrak{S}_1}&=\Bnorm{\oint_{\mathfrak{C}}\chi \frac{\gl}{\zeta-\hl}V_d\frac{1}{\zeta-\hc}\chi W+ \chi \frac{\gl^{\perp}}{\zeta-\hl}V_d\frac{\gamma_{\chi}}{\zeta-\hc}\chi W\,d\zeta}_{\mathfrak{S}_1}\\
&\leq \Bnorm{\oint_{\mathfrak{C}}\left(\chi \frac{\gl}{\zeta-\hl}V_d\right)\frac{1}{\zeta-\hc}(1-\Delta)\left(\frac{1}{1-\Delta}\chi W\right)\,d\zeta}_{\mathfrak{S}_1}\\
&\quad+ \Bnorm{\oint_{\mathfrak{C}}\left(W\chi \frac{1}{1-\Delta}\right)(1-\Delta)\frac{\gl^{\perp}}{\zeta-\hl}\left(V_d\frac{\gamma_{\chi}}{\zeta-\hc}\chi \right)\,d\zeta}_{\mathfrak{S}_1}\\
&\leq C\norm{V_d\chi}_{L^p(\bb R^3)}\Bnorm{ \frac{1}{1- \Delta}\chi W}_{\mathfrak{S}_q} \leq K_{q}^1\norm{W}_{L^q(\bb R^3)},
\end{aligned}
\label{D1V}
\end{equation}
where we have used the Kato--Seiler--Simon inequality~\eqref{KSS} as well as the fact that $\norm{\chi}_{L^{\infty}} = 1$. Similar estimates hold for $\gamma_{d,2}$. We therefore can conclude that $
\rho_d  = \chi^2\rol+(1-\chi^2)\ror - \rho_{\chi} $ belongs to $L^p(\bb R^3)$ for $1<p\leq 2$.
\paragraph{Decay rate in the $x$-direction.}
Let us show that the density difference $ \chi^2\rol+(1-\chi^2)\ror - \rho_{\chi} $ decays exponentially fast in the $x$-direction. Note that there exists $N_{L} \in \bb N$ such that $ N_{L} -1< a_{L}/2 \leq N_{L} $. Denote by $\bb D_{a_L} := [-a_{L}/2, +\infty)\times \bb R^2$, we prove the exponential decay when $\supp\left(w_{\alpha}\right) \subset \bb R^3\backslash \bb D_{a_L}$. Denote by $$\alpha:= (\alpha_x,0,0)\in (\bb R, 0, 0),\quad \beta = (\beta_x, \beta_y,\beta_z)\in \bb Z^3, \quad \beta_x \geq -N_L.$$
We have $$
 \mathds 1_{\bb D_{a_L}} \left(\sum_{\beta_x \geq -N_{L}}^{+\infty}\sum_{\beta_y, \beta_z\in \bb Z}w_{\beta}  \right)= \mathds 1_{\bb D_{a_L}}, \quad \mathds 1_{\bb D_{a_L} }V_d = V_d, \quad \alpha_x < -\frac{a_{L}}{2}<-N_{L} + 1\leq \beta_x+1.
$$
The above relations imply, together with~\eqref{projectionnResolv}, the Combes--Thomas estimate (\ref{CTSnorm}) and arguments similar to ones used in~\eqref{D1V}, that there exist a positive constants $C_1$ and $t_1$ such that, for $1<p\leq 2$ and $ q = \frac{p}{p-1}\geq 2 $,
 \begin{align*}
\norm{w_{\alpha}\gamma_dw_{\alpha}}_{\mathfrak{S}_1}  &=\norm{w_{\alpha}\gamma_{d,1}w_{\alpha}}_{\mathfrak{S}_1}\\
&=\Bnorm{\frac{1}{2\pi \rmi} \oint_{\mathfrak{C}}\left(w_{\alpha}\chi \frac{\gl}{\zeta-\hl}V_d\frac{1}{\zeta-\hc}\chi  \,w_{\alpha}+ w_{\alpha}\chi \frac{\gl^{\perp}}{\zeta-\hl}V_d\frac{\gamma_{\chi}}{\zeta-\hc}\chi \,w_{\alpha}\right) d\zeta }_{\mathfrak{S}_1}\\
&\leq \Bnorm{\frac{1}{2\pi \rmi} \oint_{\mathfrak{C}}\left(w_{\alpha}\chi \frac{\gl}{\zeta-\hl}V_d\right)\left(\mathds 1_{\bb D_{a_L} }\frac{1}{\zeta-\hc}\chi  \,w_{\alpha}\right) d\zeta }_{\mathfrak{S}_1}\\
&\quad+ \Bnorm{\frac{1}{2\pi \rmi} \oint_{\mathfrak{C}}\left(w_{\alpha}\chi \frac{\gl^{\perp}}{\zeta-\hl}\mathds 1_{\bb D_{a_L} }\right)\left(V_d\frac{\gamma_{\chi}}{\zeta-\hc}\chi \,w_{\alpha}\right) d\zeta }_{\mathfrak{S}_1}\\
& \leq K\sum_{\beta_x \geq- N_{L}}^{+\infty}\sum_{\beta_y, \beta_z\in \bb Z}^{+\infty}  \rme^{- t_1(\beta_x-\alpha_x)} \rme^{-t_1|\beta_y|}\rme^{-t_1|\beta_z|} \leq C_1 \rme^{-t_1 |\alpha_x|}.
\end{align*}
The last step relies on the uniform distance of $\zeta \in \mathfrak{C}$ to $\sigma(\hc)$ and $\sigma( \hl)$. Similar estimates hold when the support of $w_{\alpha}$ is in $[a_{R}/2,+\infty)\times \bb R^2$. There exist therefore positive constants $C$ and $t$ such that $\norm{w_{\alpha}\gamma_dw_{\alpha}}_{\mathfrak{S}_1} = \int_{\bb R^3} \left|w_{\alpha }\rho_dw_{\alpha}\right|\leq C\rme^{-t|\alpha|}$, which concludes the proof.

\subsection{Proof of Lemma~\ref{eta_chi_Lemma}}
\label{eta_chi_LemmaSec}
From the last item of the Theorem~\ref{thmPeriodicExistence2} we know that $\vl\in L_{\mathrm{per},x}^p(\Gamma_L) $ (resp.  $\vr\in L_{\mathrm{per},x}^p(\Gamma_R) $) for $1< p \leq +\infty$. Remark also that $\partial_x^2(\chi^2), \partial_x(\chi^2)$ are uniformly bounded and have support in $\left[-a_{L}/2, a_{R}/2\right]\times \mathbb{R}^2$. It therefore suffices to obtain the $L^p$-estimates on $\partial_x \vl$ and $\partial_x \vr$. We treat $\partial_x\vl$, the $L^p$-estimates of $\partial_x \vr$ following similar arguments. First of all in view of the form of the minimizer~\eqref{FormMinimzer}, by the Cauchy--Schwarz inequality
\begin{align*}
\partial_x \rol &= \partial_x \left(\frac{1}{2\pi} \int_{\Gamma_L^*}\sum_{n\geq 1}\mathds 1(\lambda_n(\xi)\leq \epsilon_L)\left|e_n(\xi, \cdot) \right|^2\,d\xi\right)\\
&\leq  \frac{1}{\pi} \int_{\Gamma_L^*}\left(\sum_{n\geq 1}\mathds 1(\lambda_n(\xi)\leq \epsilon_L)\left|\partial_x |e_n|(\xi, \cdot) \right|^2\right)^{1/2}\left(\sum_{n\geq 1}\mathds 1(\lambda_n(\xi)\leq \epsilon_L)\left| e_n(\xi, \cdot) \right|^2\right)^{1/2}\,d\xi\\
&\leq \frac{1}{\pi} \sqrt{K_{\xi,L}}\sqrt{\rol},
\end{align*}
where $K_{\xi,L}(\bm x):= \int_{\Gamma_L^*}\sum_{n\geq 1}\mathds 1(\lambda_n(\xi)\leq \epsilon_L)\left|\partial_x  e_n(\xi, \bm x) \,d\xi\right|^2$. We also have used the fact that $|\nabla |f  | | \leq  |\nabla f | $ for any complex-valued function $f$. In view of the potential decomposition~\eqref{VpDecomposed}, the term $T(\bm r)$ does not contribute to the $x$-directional derivative, hence
\begin{align*}
&\left|\partial_x \vl \right|=\left| \left(\partial_x \left(\rol-\mu_{\mathrm{per},L}\right)\right) \star \widetilde{G_{a_L}} \right| \leq  \left(\frac{1}{2\pi} \sqrt{K_{\xi,L}}\sqrt{\rol}+\left|\partial_x  \mu_{\mathrm{per},L}\right|\right) \star \left| \widetilde{G_{a_L}}\right|.
\end{align*}
On the other hand, finite kinetic energy condition~\eqref{kineticEnergyset} implies that $ K_{\xi,L} \in L_{\mathrm{per},x}^1(\Gamma_L)$. Moreover, $\sqrt{\rol}$ belongs to $ H_{\mathrm{per},x}^1(\Gamma_L)$ hence to $L_{\mathrm{per},x}^s(\Gamma_L)$ for $2\leq s\leq 6$. Therefore, by H\"{o}lder's inequality, for $p, m \geq 1$:
$$\int_{\Gamma_L}\left(K_{\xi,L}\rol\right)^{p/2}\leq \left(\int_{\Gamma_L}K_{\xi,L}^{pm/2}\right)^{1/m}\left(\int_{\Gamma_L}\rol^{pm/\left(2(m-1)\right)}\right)^{(m-1)/m}, $$
with the conditions $pm =2$ and $ 2 \leq pm/(m-1) \leq 6$. This is the case for $ 4/3\leq  m \leq 2 $ and $1 \leq  p  \leq 3/2  $ so that~$\left(K_{\xi,L}\rol\right)^{1/2}$ belongs to $L_{\mathrm{per},x}^{p}(\Gamma_L)$ for $1\leq p\leq 3/2$. As~$\partial_x\mu_{\mathrm{per},L}$ is in $L_{\mathrm{per},x}^{p}(\Gamma_L)$ for any $1\leq p\leq +\infty$ and~$\widetilde{G_{a_L}}\in L_{\mathrm{per},x}^q(\Gamma_L)$ for $1\leq q <2 $ by Lemma~\ref{lemma1}, we obtain by Young's convolution inequality that~$ \partial_x \vl \in L_{\mathrm{per},x}^{s}(\Gamma_L)$ for $  1\leq s < 6 $. This allows us to conclude the lemma.
\subsection{Proof of Proposition \ref{PropExistMinimizer}}
\label{SecPropExistMinimizer}
The following statements assure that the problem~\eqref{minimizationQ} is well-defined: a duality argument shows that densities of operators in $\mathcal{Q}_{\chi}$ are well defined. The energy functional (\ref{energyFunctional}) is well defined and is bounded from below. The proofs are direct adaptations of \cite[Proposition 1, Lemma 2, Corollary 1 and Corollary 2]{Cances2008}. 
\begin{enumerate}
\item For any $Q_{\chi} \in \mathcal{Q}_{\chi}$, it holds that $Q_{\chi}W\in \mathfrak{S}_1^{\gamma_{\chi}}$ for $W = W_1 + W_2 \in \mathcal{C}' + L^2(\mathbb{R}^3)$. Moreover, there exists a positive constant $C_\chi $ such that:
$$|\mathrm{Tr}_{\gamma_{\chi}}\left(Q_{\chi}W\right)| \leq C_\chi \left\lVert Q_{\chi}\right\rVert_{\mathcal{Q}_{\chi}}(\lVert W_1\rVert_{\mathcal{C}'}+ \lVert W_2\rVert_{L^2(\mathbb{R}^3)}).$$ 
Moreover, there exists a uniquely defined function $\rho_{Q_{\chi}} \in \mathcal{C} \bigcap L^2(\mathbb{R}^3)$ such that 
$$\forall \, W = W_1+W_2 \in \mathcal{C}' + L^2(\mathbb{R}^3), \quad \mathrm{Tr}_{\gamma_{\chi}} \left(Q_{\chi}W\right) =\langle \rho_{Q_{\chi}}, W_1\rangle_{\mathcal{C}',\mathcal{C}} + \int_{\mathbb{R}^3}\rho_{Q_{\chi}}W_2 .$$
The linear map $Q_{\chi}\in \mathcal{Q}_{\chi} \mapsto \rho_{Q_{\chi}} \in \mathcal{C} \bigcap L^2(\mathbb{R}^3)$ is continuous:
$$\lVert \rho_{Q_{\chi}}\rVert_{\mathcal{C}} +\lVert \rho_{Q_{\chi}}\rVert_{L^2(\mathbb{R}^2)} \leq C_\chi \lVert Q_{\chi}\rVert_{\mathcal{Q}_{\chi}}.$$ 
Moreover, if $Q_{\chi}\in \mathfrak{S}_1 \subset \mathfrak{S}_1^{\gamma_{\chi}}$, then $\rho_{Q_{\chi}}(\bm x)=Q_{\chi}(\bm x,\bm x)$ where $Q_{\chi}(\bm x,\bm x)$ the integral kernel of $Q_{\chi}$.
\item For any $\kappa\in (\Sigma_{N,\chi}, \epsilon_F)$ and any state $Q_{\chi}\in \mathcal{K}_{\chi}$, the following inequality holds \begin{align*}
0\leq c_1 &\mathrm{Tr}\left((1-\Delta)^{1/2}\left(Q_{\chi}^{++}-Q_{\chi}^{--}\right)(1-\Delta)^{1/2}\right)\leq\mathrm{Tr}_{\gamma_{\chi}}\left(\hc Q_{\chi}\right)-\kappa\mathrm{Tr}_{\gamma_{\chi}}(Q_{\chi}) \\
&\qquad\leq  c_2\mathrm{Tr}\left((1-\Delta)^{1/2}(Q_{\chi}^{++}-Q_{\chi}^{--})(1-\Delta)^{1/2}\right),
\end{align*}
where $c_1$ and $c_2$ are the same constants as in Lemma~\ref{lemmaTech}.
\item Assume that Assumption~\ref{as:1} holds. There are positive constants $\widetilde{d_1}$, $\widetilde{d_2}$, such that
\begin{align*}
\mathcal{E}_{\chi}(Q_{\chi})-\kappa\mathrm{Tr}_{\gamma_{\chi}}(Q_{\chi})&\geq \widetilde{d_1}\left(\lVert Q_{\chi}^{++}\rVert_{{\mathfrak{S}}_1} +\lVert Q_{\chi}^{--}\rVert_{{\mathfrak{S}}_1} + \lVert | \nabla |Q_{\chi}^{++}| \nabla |\rVert_{{\mathfrak{S}}_1} +\lVert | \nabla |Q_{\chi}^{--}| \nabla |\rVert_{{\mathfrak{S}}_1}\right)\\
&+\widetilde{d_2}\left(\lVert| \nabla |Q_{\chi}\rVert_{{\mathfrak{S}}_2}^2+\lVert Q_{\chi}\rVert_{{\mathfrak{S}}_2}^2\right)  -\frac{1}{2}D(\nu_{\chi}, \nu_{\chi}).
\end{align*}
Hence $\mathcal{E}_{\chi}(\cdot)-\kappa\mathrm{Tr}_{\gamma_{\chi}}(\cdot)$ is bounded from below and coercive on $\mathcal{K}_{\chi}$.
\end{enumerate}
The existence and the form of the minimizers are direct adaptations of~\cite[Theorem 2]{Cances2008}. 
\subsection{Proof of Theorem~\ref{thm2}}
\label{thm2Sec}
We prove this theorem by taking two arbitrary cut-off functions $\chi_1, \chi_2$ belonging to $\mathcal{X}$, and prove that $\rho_{\chi_1} + \rho_{Q_{\chi_1}}= \rho_{\chi_2} + \rho_{Q_{\chi_2}}$. For $i = 1,2$, consider the reference states associated with the Hamiltonian $H_{\chi_i}$. Denote by $\gamma_{\chi_i}$ the spectral projector of $H_{\chi_i}$ below $\epsilon_F$ and by $Q_{\chi_i}$ the solutions of (\ref{selfEquationQbar}) associated with $\chi_i$. Consider a test state
\begin{equation}
 \widetilde{Q} := \gamma_{\chi_1} +Q_{\chi_1}- \gamma_{\chi_2}.
 \label{testState}
\end{equation}
We show that $ \widetilde{Q}$ is a minimizer of the problem~\eqref{minimizationQ} associated with the cut-off function $\chi_2$, so that $\rho_{\widetilde{Q}} \equiv \rho_{Q_{\chi_2}}$ by the uniqueness of the density of the minimizer provided by Proposition~\ref{PropExistMinimizer}. Note that Assumption~\ref{as:1} and Proposition~\ref{spectrapPpHchi} guarantee that there is a common spectral gap for $H_{\chi_i}$ and $\sigma_{\mathrm{ess}}(H_{\chi_1})=\sigma_{\mathrm{ess}}(H_{\chi_2})$. We first show that the test state $\widetilde{Q}$ belongs to the convex set $$\mathcal{K}_{\chi_2}:=\left\{Q\in \mathcal{Q}_{\chi_2} \mid - \gamma_{\chi_2}\leq Q \leq 1-\gamma_{\chi_2}\right\},$$
hence is an admissible state for the minimization problem~\eqref{minimizationQ} associated with $\chi_2$. We next show that $\widetilde{Q}$ is a minimizer. 
\paragraph{The test state $\widetilde{Q} $ belongs to $ \mathcal{K}_{\chi_2}$.}
We begin by proving that $\widetilde{Q}$ is in $\mathcal{Q}_{\chi_2}$. Let us prove that $\widetilde{Q}$ is $\gamma_{\chi_2}$-trace class. The following lemma will be useful.
\begin{lemma}
\label{Lemma_diffProj}
The difference of the spectral projectors $\gamma_{\chi_1}-\gamma_{\chi_2}$ belongs to $\mathfrak{S}_1^{\gamma_{\chi_2}}$. Moreover,
\begin{equation}
|\nabla|\left(\gamma_{\chi_1} - \gamma_{\chi_2}\right) \in \mathfrak{S}_2,\quad \left(\gamma_{\chi_1} - \gamma_{\chi_2}\right) |\nabla| \in \mathfrak{S}_2.
\end{equation}
\end{lemma}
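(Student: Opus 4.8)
The plan is to express $\gamma_{\chi_1}-\gamma_{\chi_2}$ via the Cauchy integral formula on the contour $\mathfrak{C}$ introduced in the proof of Proposition~\ref{propExist} (which encloses the spectrum of both $H_{\chi_1}$ and $H_{\chi_2}$ below $\epsilon_F$, and which by Assumption~\ref{as:1} and Proposition~\ref{spectrapPpHchi} can be chosen to work simultaneously for both reference Hamiltonians), writing
\begin{equation}
\gamma_{\chi_1}-\gamma_{\chi_2} = \frac{1}{2\pi\rmi}\oint_{\mathfrak{C}}\left(\frac{1}{\zeta-H_{\chi_1}}-\frac{1}{\zeta-H_{\chi_2}}\right)d\zeta = \frac{1}{2\pi\rmi}\oint_{\mathfrak{C}}\frac{1}{\zeta-H_{\chi_1}}\left(V_{\chi_1}-V_{\chi_2}\right)\frac{1}{\zeta-H_{\chi_2}}\,d\zeta.
\end{equation}
The crucial observation is that $V_{\chi_1}-V_{\chi_2} = (\chi_1^2-\chi_2^2)(\vl-\vr)$ has \emph{compact support in the $x$-direction} (both $\chi_1,\chi_2$ equal $1$ on $(-\infty,-a_L/2]\times\bb R^2$ and $0$ on $[a_R/2,+\infty)\times\bb R^2$, so $\chi_1^2-\chi_2^2$ is supported in $[-a_L/2,a_R/2]\times\bb R^2$), and by Lemma~\ref{symmetryPotentialLemma} and Theorem~\ref{thmPeriodicExistence2} the potentials $\vl,\vr$ belong to $L^p_{\mathrm{per},x}$ for all $1<p\le+\infty$, so $V_{\chi_1}-V_{\chi_2}$ belongs to $L^r(\bb R^3)$ for all $1<r\le+\infty$. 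This is exactly the same structure exploited in the proof of Proposition~\ref{propExist} for the operator $\gamma_{d,1}$: the potential appearing in the resolvent sandwich is a genuine $L^r$ function, so the Kato--Seiler--Simon inequality~\eqref{KSS} and Lemma~\ref{lemmaTech} apply directly.

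First I would establish that $|\nabla|(\gamma_{\chi_1}-\gamma_{\chi_2})\in\mathfrak{S}_2$ and $(\gamma_{\chi_1}-\gamma_{\chi_2})|\nabla|\in\mathfrak{S}_2$. For the first one, write $|\nabla|(\zeta-H_{\chi_1})^{-1}(V_{\chi_1}-V_{\chi_2})(\zeta-H_{\chi_2})^{-1}$ and insert $(1-\Delta)^{\pm 1/2}$: using that $|\nabla|(1-\Delta)^{-1/2}$ is bounded, that $(1-\Delta)^{1/2}(\zeta-H_{\chi_1})^{-1}(1-\Delta)^{1/2}$ is bounded uniformly on $\mathfrak{C}$ (Lemma~\ref{lemmaTech}), and that $(1-\Delta)^{-1/2}(V_{\chi_1}-V_{\chi_2})^{1/2}\in\mathfrak{S}_4$ hence $(1-\Delta)^{-1/2}(V_{\chi_1}-V_{\chi_2})(1-\Delta)^{-1/2}\in\mathfrak{S}_2$ by~\eqref{KSS} (since $V_{\chi_1}-V_{\chi_2}\in L^2(\bb R^3)$), together with the uniform boundedness of $(1-\Delta)^{1/2}(\zeta-H_{\chi_2})^{-1}$, yields an integrand uniformly bounded in $\mathfrak{S}_2$ along the compact contour $\mathfrak{C}$; integrating gives the claim. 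The estimate for $(\gamma_{\chi_1}-\gamma_{\chi_2})|\nabla|$ is symmetric (take adjoints, noting both projectors are self-adjoint).

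For the $\gamma_{\chi_2}$-trace class property, I would show $\gamma_{\chi_1}-\gamma_{\chi_2}\in\mathfrak{S}_2$ (which follows from the above argument without the $|\nabla|$ factor, or from $\gamma_{\chi_1}-\gamma_{\chi_2} = (\gamma_{\chi_1}-\gamma_{\chi_2})(1-\Delta)^{-1/2}\cdot(1-\Delta)^{1/2}$... more simply directly from the Cauchy integral as above), and then that the diagonal blocks $\gamma_{\chi_2}^{\perp}(\gamma_{\chi_1}-\gamma_{\chi_2})\gamma_{\chi_2}^{\perp}$ and $\gamma_{\chi_2}(\gamma_{\chi_1}-\gamma_{\chi_2})\gamma_{\chi_2}$ are trace class. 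For the diagonal blocks one uses the standard algebraic identity for differences of projectors: writing $P_i=\gamma_{\chi_i}$, one has $P_2^{\perp}(P_1-P_2)P_2^{\perp}=P_2^{\perp}P_1P_2^{\perp}=P_2^{\perp}P_1P_1P_2^{\perp}=(P_2^{\perp}P_1)(P_2^{\perp}P_1)^*=|P_1P_2^{\perp}|^2$, and $P_1P_2^{\perp}=P_1-P_1P_2=(P_1-P_2)-P_1^{\perp}(P_1-P_2)\cdot(\dots)$, so it suffices that $P_1P_2^{\perp}\in\mathfrak{S}_2$, which follows since $P_1P_2^{\perp}=P_1(P_2^{\perp}-P_1^{\perp})+P_1P_1^{\perp}=P_1(P_1-P_2)\in\mathfrak{S}_2$; similarly $P_2P_1 P_2 \cdots$ gives $P_2(P_1-P_2)P_2=-P_2(P_2-P_1)P_2=-|(P_2-P_1)^{1/2}\dots|$, more concretely $P_2(P_1-P_2)P_2 = P_2P_1P_2-P_2=-P_2P_1^{\perp}P_2=-(P_1^{\perp}P_2)^*(P_1^{\perp}P_2)=-|P_1^{\perp}P_2|^2$ with $P_1^{\perp}P_2=(P_2-P_1)+P_1(P_2-P_1)\cdots\in\mathfrak{S}_2$. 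Hence both diagonal blocks are squares (up to sign) of Hilbert--Schmidt operators, so trace class, giving $\gamma_{\chi_1}-\gamma_{\chi_2}\in\mathfrak{S}_1^{\gamma_{\chi_2}}$.

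The main obstacle is purely bookkeeping: ensuring that all the resolvent-sandwich estimates are uniform in $\zeta$ over $\mathfrak{C}$ so that the contour integral converges in the appropriate Schatten norm, and correctly distributing the $(1-\Delta)^{\pm 1/2}$ factors between the two resolvents and the potential so that~\eqref{KSS} and Lemma~\ref{lemmaTech} can be invoked — exactly as in the proof of Proposition~\ref{propExist}. No genuinely new idea is needed beyond the compact-$x$-support of $V_{\chi_1}-V_{\chi_2}$, which puts this lemma squarely in the framework already developed; in particular the non-compactness of $\chi$ in the $\bm r$-direction, which was the real difficulty in Proposition~\ref{spectrapPpHchi}, does not appear here because the $\chi$'s enter only through the \emph{difference} $\chi_1^2-\chi_2^2$ multiplied by the (bounded, decaying) potentials, never as a bare cut-off.
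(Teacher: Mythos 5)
Your proposal is correct and follows essentially the same route as the paper: the Cauchy integral with the resolvent identity, the Kato--Seiler--Simon inequality and Lemma~\ref{lemmaTech} for the $\mathfrak{S}_2$ bounds on $\gamma_{\chi_1}-\gamma_{\chi_2}$ and $|\nabla|(\gamma_{\chi_1}-\gamma_{\chi_2})$ (the paper simply places the full $(1-\Delta)^{-1}$ on one side of $(\chi_1^2-\chi_2^2)(\vl-\vr)$ instead of splitting half-powers, an immaterial difference), followed by the adjoint argument and the same projector algebra for the diagonal blocks, which in the paper reads $\gamma_{\chi_2}^{\perp}(\gamma_{\chi_1}-\gamma_{\chi_2})\gamma_{\chi_2}^{\perp}=(\gamma_{\chi_1}-\gamma_{\chi_2})\gamma_{\chi_1}(\gamma_{\chi_1}-\gamma_{\chi_2})$ and $\gamma_{\chi_2}(\gamma_{\chi_1}-\gamma_{\chi_2})\gamma_{\chi_2}=-(\gamma_{\chi_2}-\gamma_{\chi_1})\gamma_{\chi_1}^{\perp}(\gamma_{\chi_2}-\gamma_{\chi_1})$, equivalent to your $|P_1P_2^{\perp}|^2$ and $-|P_1^{\perp}P_2|^2$ identities. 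The only blemish is a cosmetic sign slip in your intermediate expansion of $P_1^{\perp}P_2$ (the correct identity is $P_1^{\perp}P_2=P_1^{\perp}(P_2-P_1)=(P_2-P_1)-P_1(P_2-P_1)$), which does not affect the conclusion.
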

\begin{proof}
By Cauchy's resolvent formula and the Kato--Seiler--Simon inequality~\eqref{KSS},
\begin{equation}
\begin{aligned}
\norm{\gamma_{\chi_1} - \gamma_{\chi_2}}_{\mathfrak{S}_2} &= \Bnorm{\frac{1}{2 
\rmi \pi } \oint_{\mathfrak{C}} (\zeta-H_{\chi_1})^{-1}(\chi_1^2-\chi_2^2)(\vl-\vr)(\zeta-H_{\chi_2})^{-1} \, d\zeta }_{\mathfrak{S}_2}\\
&\leq C\norm[\bigg]{(1-\Delta)^{-1}(\chi_1^2-\chi_2^2)(\vl-\vr) }_{\mathfrak{S}_2} \leq \frac{C}{2\sqrt{\pi}}\norm[\bigg]{(\chi_1^2-\chi_2^2)(\vl-\vr) }_{L^2} < +\infty.
\end{aligned}
\label{diffGamma12}
\end{equation}
The results of Lemma~\ref{lemmaTech} imply that $|\nabla|(\zeta-H_{\chi_i})^{-1}$ is uniformly bounded with respect to $\zeta\in\mathfrak{C}$. By calculations similar to~\eqref{diffGamma12},
\begin{equation}
\bnorm{|\nabla|\left(\gamma_{\chi_1} - \gamma_{\chi_2}\right)}_{\mathfrak{S}_2} \leq c_1\bnorm{(\chi_1^2-\chi_2^2)(\vl-\vr) (1-\Delta)^{-1}}_{\mathfrak{S}_2}<+\infty.
\label{GraddiffGamma}
\end{equation}
Hence $ \left(\gamma_{\chi_1} - \gamma_{\chi_2}\right) |\nabla| $ also belongs to $\mathfrak{S}_2$ since it is the adjoint of $|\nabla|\left(\gamma_{\chi_1} - \gamma_{\chi_2}\right)$. On the other hand, as $\gamma_{\chi_1}$ is a bounded operator, in view of~\eqref{diffGamma12} and by writing $ \gamma_{\chi_1} -  \gamma_{\chi_2}=\gamma_{\chi_2}^{\perp} -\gamma_{\chi_1}^{\perp} $ and using the fact that $\gamma_{\chi_i} +\gamma_{\chi_i}^{\perp} = 1 $,
\begin{equation}
\begin{aligned}
&\gamma_{\chi_2}^{\perp}\left(\gamma_{\chi_1} - \gamma_{\chi_2}\right) \gamma_{\chi_2}^{\perp} = \gamma_{\chi_2}^{\perp}\gamma_{\chi_1}\gamma_{\chi_2}^{\perp} = \left(\gamma_{\chi_1}- \gamma_{\chi_2}\right)\gamma_{\chi_1}\left(\gamma_{\chi_1}- \gamma_{\chi_2}\right) \in \mathfrak{S}_1,\\
 & \gamma_{\chi_2}\left(\gamma_{\chi_1}- \gamma_{\chi_2}\right)\gamma_{\chi_2} = - \gamma_{\chi_2}\gamma_{\chi_1}^{\perp}\gamma_{\chi_2} =-\left(\gamma_{\chi_2}- \gamma_{\chi_1}\right)\gamma_{\chi_1}^{\perp}\left(\gamma_{\chi_2}- \gamma_{\chi_1}\right) \in \mathfrak{S}_1.
\end{aligned}
\label{estimate_gamma12}
\end{equation}
Together with~\eqref{diffGamma12} we conclude that $\gamma_{\chi_1}- \gamma_{\chi_2}$ belongs to $\mathfrak{S}_1^{\gamma_{\chi_2}}$.
\end{proof}
The following lemma is a consequence of~\cite[Lemma 1]{Hainzl2005} and the fact that $\gamma_{\chi_1}-\gamma_{\chi_2}\in\mathfrak{S}_2$.
\begin{lemma}
\label{TracediffLemma}
Any self-adjoint operator $A$ is in $ \mathfrak{S}_1^{\gamma_{\chi_1}}$ if and only if $A$ is in $\mathfrak{S}_1^{\gamma_{\chi_2}}$. Moreover $\m{Tr}_{\gamma_{\chi_1}}(A) =\m{Tr}_{\gamma_{\chi_2}}(A) $.
\end{lemma}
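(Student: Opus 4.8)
The statement to prove is Lemma~\ref{TracediffLemma}: any self-adjoint operator $A$ is in $\mathfrak{S}_1^{\gamma_{\chi_1}}$ if and only if it is in $\mathfrak{S}_1^{\gamma_{\chi_2}}$, with equality of the two traces. The plan is to reduce everything to the algebraic identity relating the two $P$-trace decompositions of $A$ through the difference $\Delta\gamma := \gamma_{\chi_1} - \gamma_{\chi_2}$, which by Lemma~\ref{Lemma_diffProj} lies in $\mathfrak{S}_2$ and in $\mathfrak{S}_1^{\gamma_{\chi_2}}$.

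First I would write, for $P = \gamma_{\chi_1}$ and $P' = \gamma_{\chi_2}$, the orthogonal expansions $1 = P + P^\perp = P' + (P')^\perp$ and expand $P A P - P' A P'$ (and similarly for the $\perp\perp$ blocks) in terms of $\Delta\gamma$, $A$, and the projectors. The key point is that every ``cross term'' that appears is of the form $(\text{bounded}) \cdot \Delta\gamma \cdot A \cdot (\text{bounded})$ or $(\text{bounded}) \cdot A \cdot \Delta\gamma \cdot (\text{bounded})$, so one needs $A \Delta\gamma$ and $\Delta\gamma A$ to be trace class whenever $A$ is $P$-trace class. Since $A \in \mathfrak{S}_2$ by definition of $\mathfrak{S}_1^{P}$ and $\Delta\gamma \in \mathfrak{S}_2$ by Lemma~\ref{Lemma_diffProj}, these products are automatically in $\mathfrak{S}_1$ by Hölder for Schatten classes. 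This is exactly the content of \cite[Lemma 1]{Hainzl2005}, which states that if $P - P' \in \mathfrak{S}_2$ then $\mathfrak{S}_1^{P} = \mathfrak{S}_1^{P'}$ as sets and the $P$-trace is independent of the choice of reference projector within this class; so I would invoke that lemma directly, after checking its hypothesis via Lemma~\ref{Lemma_diffProj}.

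For the equality of traces, I would track the difference $\m{Tr}_{P}(A) - \m{Tr}_{P'}(A) = \m{Tr}(PAP + P^\perp A P^\perp) - \m{Tr}(P'AP' + (P')^\perp A (P')^\perp)$. Using $P A P + P^\perp A P^\perp = A - P A P^\perp - P^\perp A P$ formally, and the analogous identity for $P'$, the difference reduces to a finite sum of traces of genuinely trace-class operators built from $A$ and $\Delta\gamma$; regrouping and using cyclicity of the trace on $\mathfrak{S}_1$ one sees the sum telescopes to zero. Again this is precisely the invariance statement in \cite[Lemma 1]{Hainzl2005}, so the cleanest route is to cite it rather than redo the bookkeeping.

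The only genuine obstacle is making sure the hypothesis of the cited lemma is in force, i.e. that $\gamma_{\chi_1} - \gamma_{\chi_2} \in \mathfrak{S}_2$; but this has already been established in Lemma~\ref{Lemma_diffProj} (inequality~\eqref{diffGamma12}), so the proof is essentially a one-line citation once that input is quoted. I would therefore write: ``By Lemma~\ref{Lemma_diffProj}, $\gamma_{\chi_1} - \gamma_{\chi_2} \in \mathfrak{S}_2$. The conclusion then follows from \cite[Lemma 1]{Hainzl2005}, which asserts that for two orthogonal projectors whose difference is Hilbert--Schmidt, the associated classes of $P$-trace-class operators coincide and the $P$-trace does not depend on which of the two projectors is used as reference.''
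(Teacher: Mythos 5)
Your proposal is correct and follows essentially the same route as the paper: the paper also deduces the lemma directly from \cite[Lemma 1]{Hainzl2005} together with the fact, established in Lemma~\ref{Lemma_diffProj}, that $\gamma_{\chi_1}-\gamma_{\chi_2}\in\mathfrak{S}_2$. The extra algebraic sketch you give is only motivational and not needed once the citation is in place.
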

The fact that $Q_{\chi_1}\in \mathfrak{S}_1^{\gamma_{\chi_1}}$ implies that $|\nabla|Q_{\chi_1} \in \mathfrak{S}_2$, and $Q_{\chi_1} \in \mathfrak{S}_1^{\gamma_{\chi_2}}$ by Lemma~\ref{TracediffLemma}. In view of this and~Lemma~\ref{Lemma_diffProj} we know that $\widetilde{Q} = \gamma_{\chi_1}-\gamma_{\chi_2} +Q_{\chi_1}$ belongs to $\mathfrak{S}_1^{\gamma_{\chi_2}}$. The inequality~\eqref{GraddiffGamma} implies that $|\nabla| \widetilde{Q}= |\nabla|Q_{\chi_1} +  |\nabla| (\gamma_{\chi_1}-\gamma_{\chi_2}) \in \mathfrak{S}_2$. It remains to prove that $|\nabla|\gamma_{\chi_2}^{\perp}\widetilde{Q}\gamma_{\chi_2}^{\perp}|\nabla| \in \mathfrak{S}_1$ and $|\nabla| \gamma_{\chi_2}\widetilde{Q}\gamma_{\chi_2}|\nabla| \in \mathfrak{S}_1$. In view of~\eqref{testState} we have
\begin{equation}
\begin{aligned}
|\nabla|\gamma_{\chi_2}^{\perp}\widetilde{Q}\gamma_{\chi_2}^{\perp}|\nabla| &=|\nabla|\gamma_{\chi_2}^{\perp}Q_{\chi_1}\gamma_{\chi_2}^{\perp}|\nabla| + |\nabla|\gamma_{\chi_2}^{\perp}\left(\gamma_{\chi_1}- \gamma_{\chi_2}\right)\gamma_{\chi_2}^{\perp}|\nabla|,\\
|\nabla|\gamma_{\chi_2}\widetilde{Q}\gamma_{\chi_2}|\nabla| &=|\nabla|\gamma_{\chi_2}Q_{\chi_1}\gamma_{\chi_2}|\nabla| + |\nabla|\gamma_{\chi_2}\left(\gamma_{\chi_1}- \gamma_{\chi_2}\right)\gamma_{\chi_2}|\nabla|.
\end{aligned}
\label{chi2Qchi2}
\end{equation}
We estimate~\eqref{chi2Qchi2} term by term. By Lemma \ref{lemma45} we know that $\norm{|\nabla|\gamma_{\chi_i}}  \leq \norm{|\nabla|(1-\Delta)^{-1}}\norm{(1-\Delta)\gamma_{\chi_i}} <\infty$. Moreover, by writing $\gamma_{\chi_2}^{\perp} = 1-\gamma_{\chi_1} +  \gamma_{\chi_1} -  \gamma_{\chi_2}$ and $\gamma_{\chi_2} = \gamma_{\chi_1} +  \gamma_{\chi_2}- \gamma_{\chi_1} $, in view of~\eqref{GraddiffGamma},
\[
\begin{aligned}
&\bnorm{ |\nabla|\gamma_{\chi_2}^{\perp}Q_{\chi_1}\gamma_{\chi_2}^{\perp}|\nabla|}_{\mathfrak{S}_1}=\bnorm{ |\nabla|\gamma_{\chi_1}^{\perp}Q_{\chi_1}\gamma_{\chi_1}^{\perp}|\nabla|  + |\nabla| \gamma_{\chi_1}^{\perp}Q_{\chi_1}(\gamma_{\chi_1}-\gamma_{\chi_2})|\nabla| +|\nabla| (\gamma_{\chi_1}-\gamma_{\chi_2})Q_{\chi_1}\gamma_{\chi_2}^{\perp}|\nabla|  }_{\mathfrak{S}_1}\\
&\leq  \bnorm{ |\nabla|\gamma_{\chi_1}^{\perp}Q_{\chi_1}\gamma_{\chi_1}^{\perp}|\nabla|}_{\mathfrak{S}_1}  + \bnorm{|\nabla| Q_{\chi_1}(\gamma_{\chi_1}-\gamma_{\chi_2})|\nabla|}_{\mathfrak{S}_1}+\bnorm{|\nabla| \gamma_{\chi_1}Q_{\chi_1}(\gamma_{\chi_1}-\gamma_{\chi_2})|\nabla|}_{\mathfrak{S}_1}\\
&\qquad + \bnorm{|\nabla| (\gamma_{\chi_1}-\gamma_{\chi_2})Q_{\chi_1}\gamma_{\chi_2}|\nabla|  }_{\mathfrak{S}_1} +  \bnorm{|\nabla| (\gamma_{\chi_1}-\gamma_{\chi_2})Q_{\chi_1}|\nabla|  }_{\mathfrak{S}_1} \\
&\leq  \bnorm{ |\nabla|\gamma_{\chi_1}^{\perp}Q_{\chi_1}\gamma_{\chi_1}^{\perp}|\nabla|}_{\mathfrak{S}_1}  + \bnorm{|\nabla| Q_{\chi_1}}_{\mathfrak{S}_2}\bnorm{(\gamma_{\chi_1}-\gamma_{\chi_2})|\nabla|}_{\mathfrak{S}_2}+\bnorm{|\nabla| \gamma_{\chi_1}}\bnorm{Q_{\chi_1}}_{\mathfrak{S}_2}\bnorm{(\gamma_{\chi_1}-\gamma_{\chi_2})|\nabla|}_{\mathfrak{S}_2}\\
& \qquad + \bnorm{\gamma_{\chi_2}|\nabla| }\bnorm{Q_{\chi_1}}_{\mathfrak{S}_2}\bnorm{|\nabla|(\gamma_{\chi_1}-\gamma_{\chi_2})}_{\mathfrak{S}_2}+ \bnorm{Q_{\chi_1}|\nabla|}_{\mathfrak{S}_2}\bnorm{|\nabla|(\gamma_{\chi_1}-\gamma_{\chi_2})}_{\mathfrak{S}_2} <\infty,
\end{aligned}
\]
and similarly
\[
\begin{aligned}
&\bnorm{ |\nabla|\gamma_{\chi_2}Q_{\chi_1}\gamma_{\chi_2}|\nabla|}_{\mathfrak{S}_1}=\bnorm{ |\nabla|\gamma_{\chi_1}Q_{\chi_1}\gamma_{\chi_1}|\nabla|  + |\nabla| \gamma_{\chi_1}Q_{\chi_1}(\gamma_{\chi_2}-\gamma_{\chi_1})|\nabla| +|\nabla| (\gamma_{\chi_2}-\gamma_{\chi_1})Q_{\chi_1}\gamma_{\chi_2}|\nabla|  }_{\mathfrak{S}_1}\\
&\leq  \bnorm{ |\nabla|\gamma_{\chi_1}Q_{\chi_1}\gamma_{\chi_1}|\nabla|}_{\mathfrak{S}_1}  +\bnorm{|\nabla| \gamma_{\chi_1}Q_{\chi_1}(\gamma_{\chi_1}-\gamma_{\chi_2})|\nabla|}_{\mathfrak{S}_1}+ \bnorm{|\nabla| (\gamma_{\chi_1}-\gamma_{\chi_2})Q_{\chi_1}\gamma_{\chi_2}|\nabla|  }_{\mathfrak{S}_1} \\
&\leq  \bnorm{ |\nabla|\gamma_{\chi_1}Q_{\chi_1}\gamma_{\chi_1}|\nabla|}_{\mathfrak{S}_1}  +\bnorm{|\nabla| \gamma_{\chi_1}}\bnorm{Q_{\chi_1}}_{\mathfrak{S}_2}\bnorm{(\gamma_{\chi_1}-\gamma_{\chi_2})|\nabla|}_{\mathfrak{S}_2}+\bnorm{\gamma_{\chi_2}|\nabla| }\bnorm{Q_{\chi_1}}_{\mathfrak{S}_2}\bnorm{|\nabla|(\gamma_{\chi_1}-\gamma_{\chi_2})}_{\mathfrak{S}_2} <\infty,
\end{aligned}
\]
From~\eqref{estimate_gamma12} we know that 
\[
\begin{aligned}
&\bnorm{ |\nabla|\gamma_{\chi_2}^{\perp}\left(\gamma_{\chi_1}- \gamma_{\chi_2}\right)\gamma_{\chi_2}^{\perp}|\nabla|}_{\mathfrak{S}_1}= \bnorm{ |\nabla|\left(\gamma_{\chi_1}- \gamma_{\chi_2}\right)\gamma_{\chi_1}\left(\gamma_{\chi_1}- \gamma_{\chi_2}\right)|\nabla|}_{\mathfrak{S}_1} \leq \bnorm{|\nabla|(\gamma_{\chi_1}-\gamma_{\chi_2})}_{\mathfrak{S}_2}^2 <\infty,\\
&\bnorm{ |\nabla|\gamma_{\chi_2}\left(\gamma_{\chi_1}- \gamma_{\chi_2}\right)\gamma_{\chi_2}|\nabla|}_{\mathfrak{S}_1}= \bnorm{ |\nabla|\left(\gamma_{\chi_2}- \gamma_{\chi_1}\right)\gamma_{\chi_1}^{\perp}\left(\gamma_{\chi_2}- \gamma_{\chi_1}\right)|\nabla|}_{\mathfrak{S}_1} \leq \bnorm{|\nabla|(\gamma_{\chi_1}-\gamma_{\chi_2})}_{\mathfrak{S}_2}^2 <\infty.
\end{aligned}
\]
This shows that $|\nabla| \gamma_{\chi_2}^{\perp}\widetilde{Q}\gamma_{\chi_2}^{\perp}|\nabla| \in \mathfrak{S}_1$ and $|\nabla|\gamma_{\chi_2}\widetilde{Q}\gamma_{\chi_2}|\nabla| \in \mathfrak{S}_1$. In view of~\eqref{chi2Qchi2}, this allows us to conclude that $\widetilde{Q}\in\mathcal{Q}_{\chi_2}$. On the other hand, it is easy to see that $ -\gamma_{\chi_2}\leq \widetilde{Q} = \gamma_{\chi_1} +Q_{\chi_1}- \gamma_{\chi_2} \leq 1-\gamma_{\chi_2}$, which shows that $\widetilde{Q}$ belongs to the convex set $\mathcal{K}_{\chi_2}$.
\paragraph{The state $\widetilde{Q}$ is a minimizer.}
We now prove that $\widetilde{Q}$ is a minimizer of the problem~\eqref{minimizationQ} associated with the cut-off function $\chi_2$. As $\widetilde{Q}\in \mathcal{K}_{\chi_2}$, the fact that $Q_{\chi_2}$ is a minimizer implies that
\begin{equation}
\mathcal{E}_{\chi_2}\left(\widetilde{Q}\right)-\kappa \mathrm{Tr}_{\gamma_{\chi_2}}\left(\widetilde{Q}\right) \geq \mathcal{E}_{\chi_2}\left(Q_{\chi_2}\right)-\kappa\mathrm{Tr}_{\gamma_{\chi_2}}\left(Q_{\chi_2}\right).
\label{MinimizationChi2}
\end{equation}
Define $\Theta := \widetilde{Q} -Q_{\chi_2} =  Q_{\chi_1}-Q_{\chi_2} + \gamma_{\chi_1} - \gamma_{\chi_2}$. The inequality~\eqref{MinimizationChi2} can therefore also be written as
\begin{equation}
\mathcal{E}_{\chi_2}\left(\Theta\right) -\kappa \mathrm{Tr}_{\gamma_{\chi_2}}\left(\Theta\right)  +D\left(\rho_{\Theta}, \rho_{Q_{\chi_2}}\right)\geq 0.
\label{eqsup}
\end{equation}
It is easy to see that $-1\leq \Theta \leq 1 $ and $\Theta $ belongs to $\mathcal{Q}_{\chi_2} $ (but not necessarily to the convex set $\mathcal{K}_{\chi_2}$), which also implies that the density $\rho_{\Theta}$ of $\Theta$ is well defined and belongs to the Coulomb space $\mathcal{C}$ (see Section~\ref{SecPropExistMinimizer}). Therefore~\eqref{eqsup} is well defined. Introduce another state by exchanging the indices $1$ and $2$ in the definition of $\widetilde{Q}$: $$\widetilde{\widetilde{Q}}:=\gamma_{\chi_2} + Q_{\chi_2}- \gamma_{\chi_1}. $$ 
Proceeding as before, it can be shown that $\widetilde{\widetilde{Q}}\in \mathcal{K}_{\chi_1}$. By definition $ Q_{\chi_1} =\Theta +\widetilde{\widetilde{Q}}$. Since $Q_{\chi_1}$ minimizes the problem (\ref{minimizationQ}) associated with $\chi_1$ and $\widetilde{\widetilde{Q}}\in \mathcal{K}_{\chi_1}$, 
\[
\mathcal{E}_{\chi_1}\left(\widetilde{\widetilde{Q}}\right)-\kappa\mathrm{Tr}_{\gamma_{\chi_1}}\left(\widetilde{\widetilde{Q}}\right)\geq \mathcal{E}_{\chi_1}\left(\Theta +\widetilde{\widetilde{Q}}\right)-\kappa \mathrm{Tr}_{\gamma_{\chi_1}}\left(\Theta + \widetilde{\widetilde{Q}}\right). 
\]
The above equation can be simplified as
\begin{equation}
\mathcal{E}_{\chi_1}(\Theta) - \kappa \mathrm{Tr}_{\gamma_{\chi_1}}(\Theta)+D\left(\rho_{\Theta}, \rho_{\widetilde{\widetilde{Q}}}\right)\leq 0.
\label{eqinf}
\end{equation}
Let us show that the left hand sides of~\eqref{eqsup} and \eqref{eqinf} are equal. First of all as $\Theta $ belongs to $\mathcal{Q}_{\chi_2} $, we know that $\mathrm{Tr}_{\gamma_{\chi_2}}\left(\Theta\right) = \mathrm{Tr}_{\gamma_{\chi_1}}\left(\Theta\right)$ by Lemma~\ref{TracediffLemma}. Remark also that $\rho_{\widetilde{\widetilde{Q}}}= \rho_{\chi_2} -\rho_{\chi_1} + \rho_{Q_{\chi_2}}$. By Lemma~\ref{TracediffLemma}
\begin{equation}
\begin{aligned}
&\mathcal{E}_{\chi_2}\left(\Theta\right) -\kappa \mathrm{Tr}_{\gamma_{\chi_2}}\left(\Theta\right)  +D\left(\rho_{\Theta}, \rho_{Q_{\chi_2}}\right) - \left(\mathcal{E}_{\chi_1}(\Theta) -\kappa \mathrm{Tr}_{\gamma_{\chi_1}}(\Theta)+D\left(\rho_{\Theta}, \rho_{\widetilde{\widetilde{Q}}}\right)\right)\\
& = \mathrm{Tr}_{\gamma_{\chi_2}}\left((-\Delta + V_{\chi_2}) \Theta\right) - D\left(\rho_{\Theta},\nu_{\chi_2}\right)+  \frac{1}{2}D\left(\rho_\Theta,\rho_\Theta\right) -\mathrm{Tr}_{\gamma_{\chi_1}}\left((-\Delta + V_{\chi_1}) \Theta\right) + D\left(\rho_{\Theta},\nu_{\chi_1}\right)\\
& \qquad -  \frac{1}{2}D\left(\rho_\Theta,\rho_\Theta\right) -\kappa\left( \mathrm{Tr}_{\gamma_{\chi_2}}\left(\Theta\right) - \mathrm{Tr}_{\gamma_{\chi_1}}\left(\Theta\right)\right)  +D\left(\rho_{\Theta}, \rho_{Q_{\chi_2}}\right)  - D\left(\rho_{\Theta}, \rho_{\widetilde{\widetilde{Q}}}\right)  \\
&= \mathrm{Tr}_{\gamma_{\chi_2}}\left((-\Delta + V_{\chi_2}) \Theta\right)  -\mathrm{Tr}_{\gamma_{\chi_1}}\left((-\Delta + V_{\chi_1}) \Theta\right)+D\left(\rho_{\Theta}, \rho_{\chi_1}+\nu_{\chi_1}-\rho_{\chi_2}-\nu_{\chi_2}\right)\\
&= \mathrm{Tr}_{\gamma_{\chi_2}}\left(\left(V_{\chi_2}-V_{\chi_1}\right)\Theta\right) +D\left(\rho_{\Theta}, \rho_{\chi_1}+\nu_{\chi_1}-\rho_{\chi_2}-\nu_{\chi_2}\right).
\end{aligned}
\label{intermE1}
\end{equation}
We show that~\eqref{intermE1} is equal to zero by first showing that $\left(V_{\chi_2}-V_{\chi_1}\right)\Theta\in \mathfrak{S}_1\subset \mathfrak{S}_1^{\gamma_{\chi_2}}$. We start by showing that $(1-\Delta) Q_{\chi_i}\in\mathfrak{S}_2$. By Cauchy's resolvent formula,
$$Q_{\chi_i}  = \frac{1}{2\rmi \pi }\oint_{\mathfrak{C}}\left( \frac{1}{z- H_{\overline{Q}_{\chi_i} }} -\frac{1}{z- H_{\chi_i }}\right) \,d\zeta = Q_{1,i} + Q_{2,i} + Q_{3,i},$$
where 
\begin{align*}
Q_{1,i}&= \frac{1}{2\rmi \pi }\oint_{\mathfrak{C}}\frac{1}{z- H_{\chi_i} } \left(\left(\rho_{\overline{Q}_{\chi_i}}-\nu_{\chi_i}\right)\star \frac{1}{|\cdot|}\right)\frac{1}{z- H_{\chi_i }} \,d\zeta,\\
Q_{2,i}&= \frac{1}{2\rmi \pi }\oint_{\mathfrak{C}} \frac{1}{z- H_{\chi_i} } \left(\left(\rho_{\overline{Q}_{\chi_i}}-\nu_{\chi_i}\right)\star \frac{1}{|\cdot|}\right)\frac{1}{z- H_{\chi_i }}\left(\left(\rho_{\overline{Q}_{\chi_i}}-\nu_{\chi_i}\right)\star \frac{1}{|\cdot|}\right)\frac{1}{z- H_{\chi_i }} \,d\zeta\\
Q_{3,i}&= \frac{1}{2\rmi \pi }\oint_{\mathfrak{C}} \frac{1}{z- H_{\overline{Q}_{\chi_i} } } \left(\left(\rho_{\overline{Q}_{\chi_i}}-\nu_{\chi_i}\right)\star \frac{1}{|\cdot|}\right)\frac{1}{z- H_{\chi_i }}\left(\left(\rho_{\overline{Q}_{\chi_i}}-\nu_{\chi_i}\right)\star \frac{1}{|\cdot|}\right)\frac{1}{z- H_{\chi_i }}\\
&\qquad\qquad\qquad\qquad\cdot\left(\left(\rho_{\overline{Q}_{\chi_i}}-\nu_{\chi_i}\right)\star \frac{1}{|\cdot|}\right)\frac{1}{z- H_{\chi_i }} \,d\zeta.
\end{align*}
Following arguments similar to the ones used in the proof of~\cite[Proposition 2]{Cances2008} we obtain that $\left(\rho_{\overline{Q}_{\chi_i}}-\nu_{\chi_i}\right)\star \frac{1}{|\cdot|}$ belongs to $L^2(\R^3) +\mathcal{C}'$, and we can conclude that $(1-\Delta) Q_{\chi_i}\in\mathfrak{S}_2$. Remark that $ V_{\chi_1}-V_{\chi_2} = (\chi_1^2-\chi_2^2)(\vl-\vr)\in L^{\infty}(\bb R^3) \cap L^2(\bb R^3)$. By definition of $\Theta$, by the Kato--Seiler--Simon inequality and use calculations similar to~\eqref{diffGamma12}
\[
\begin{aligned}
&\bnorm{\left(V_{\chi_1}- V_{\chi_2}\right)\Theta}_{\mathfrak{S}_1}  = \bnorm{ \left(V_{\chi_1}-V_{\chi_2}\right)\left(Q_{\chi_1}-Q_{\chi_2} + \gamma_{\chi_1} - \gamma_{\chi_2}\right)}_{\mathfrak{S}_1} \\
&\leq \bnorm{\left(V_{\chi_1}-V_{\chi_2}\right)(1-\Delta)^{-1}}_{\mathfrak{S}_2}\left(\bnorm{(1-\Delta)\left(Q_{\chi_1}-Q_{\chi_2}\right)}_{\mathfrak{S}_2} +\bnorm{ (1-\Delta)(\gamma_{\chi_1} - \gamma_{\chi_2})}_{\mathfrak{S}_2}\right) \\
&\leq \frac{1}{2\sqrt{\pi}}\bnorm{V_{\chi_1}-V_{\chi_2}}_{L^2}\left(\bnorm{(1-\Delta)\left(Q_{\chi_1}-Q_{\chi_2}\right)}_{\mathfrak{S}_2} +C\norm[\bigg]{(\chi_1^2-\chi_2^2)(\vl-\vr) (1-\Delta)^{-1}}_{\mathfrak{S}_2} \right)<\infty,
\end{aligned}
\]
which proves that $\left(V_{\chi_1}- V_{\chi_2}\right)\Theta$ belongs to $\mathfrak{S}_1$, hence $$\mathrm{Tr}_{\gamma_{\chi_2}}\left(\left(V_{\chi_2}-V_{\chi_1}\right)\Theta\right)  = \mathrm{Tr}\left(\left(V_{\chi_2}-V_{\chi_1}\right)\Theta\right).$$
On the other hand, by the definition of $V_{\chi_i}$ in (\ref{ReferencePotential2}) and $\nu_i$ in (\ref{nuchi}) for $i=1,2$, we deduce that 
\[
\begin{aligned}
\mathrm{Tr}\left(\left(V_{\chi_2}-V_{\chi_1}\right)\Theta\right)&= D\left(\rho_{\Theta}, \left(\rho_{\chi_2}-\mu_{\chi_2}\right)-(\rho_{\chi_1}-\mu_{\chi_1})\right)= D\left(\rho_{\Theta}, \rho_{\chi_2}-\rho_{\chi_1}+\nu_{\chi_2}-\nu_{\chi_1}\right).
\end{aligned}
\]
The above equation implies that the quantity~\eqref{intermE1} equals to $0$. Hence, in view of~\eqref{eqsup} and~\eqref{eqinf},
\[
\mathcal{E}_{\chi_2}\left(\Theta\right) -\kappa \mathrm{Tr}_{\gamma_{\chi_2}}\left(\Theta\right)  +D\left(\rho_{\Theta}, \rho_{Q_{\chi_2}}\right) = \mathcal{E}_{\chi_1}(\Theta) -\kappa \mathrm{Tr}_{\gamma_{\chi_1}}(\Theta)+D\left(\rho_{\Theta}, \rho_{\widetilde{\widetilde{Q}}}\right) \equiv 0.
\]
We conclude with~\eqref{MinimizationChi2} that
\[\mathcal{E}_{\chi_2}(\widetilde{Q})-\kappa \mathrm{Tr}_{\gamma_{\chi_2}}(\widetilde{Q}) \equiv \mathcal{E}_{\chi_2}(Q_{\chi_2})-\kappa \mathrm{Tr}_{\gamma_{\chi_2}}(Q_{\chi_2}).\]
Therefore $\widetilde{Q}$ is a minimizer of the problem~\eqref{minimizationQ} associated with the cut-off function $\chi_2$. From Theorem~\ref{PropExistMinimizer} we know that $\rho_{\widetilde{Q}} \equiv \rho_{Q_{\chi_2}}$, which is equivalent to that $\rho_{Q_{\chi_2}}+\rho_{\chi_2}= \rho_{\chi_1}+\rho_{Q_{\chi_1}}$. By the arbitrariness of the choice of $\chi_1,\chi_2$ we deduce that $\rho_{\chi} + \rho_{Q_{\chi}}$ is independent of the cut-off function $\chi \in \mathcal{X}$.

\bibliographystyle{plain}
\bibliography{Junction}
\end{document}